\newcommand{\Z}{\mathbb{Z}}
\newcommand{\bbr}{\mathbb{R}}
\newcommand{\bbn}{\mathbb{N}}
\newcommand{\din}{\partial_{\mathrm{in}}}
\newcommand{\diam}{\mathrm{diam}}
\newcommand{\dis}{\mathrm{dist}}
\newcommand{\lab}{\mathrm{lab}}     
\newcommand{\Conv}{\sideset{}{^{(\ast)}}\prod}%
\newcommand{\D}{\mathcal{D}}
\newcommand{\T}{\mathcal{T}}
\newcommand{\I}{\mathrm{I}}
\newcommand{\Sp}{\mathrm{sp}}
\newcommand{\ssum}[1]{
	\sum_{\mathclap{\substack{#1}}}
}
\newcommand{\vertiii}[1]{{\left\vert\kern-0.25ex\left\vert\kern-0.25ex\left\vert #1 
    \right\vert\kern-0.25ex\right\vert\kern-0.25ex\right\vert}}
\def\supp{\mathop{\textrm{\rm supp}}\nolimits}            %Support
\newcommand{\be}{\begin{equation}}
	\newcommand{\ee}{\end{equation}}
\numberwithin{equation}{section}
\newtheorem*{theorem*}        {Theorem}
\newtheorem*{conjecture*}   {Conjecture}
\newtheorem{theorem}           {Theorem}[section]
\newtheorem{lemma}[theorem]{Lemma}
\newtheorem*{lemma*}          {Lemma}
\newtheorem{definition}[theorem]{Definition}
\newtheorem{example}[theorem]{Example}
\newtheorem{corollary}[theorem]{Corollary}
\newtheorem{proposition}[theorem]{Proposition}
\newtheorem{remark}[theorem]{Remark}
\renewcommand{\chaptermark}[1]{\markboth{\MakeUppercase{#1}}{}}
\begin{document}
\frontmatter 
% cabe\c{c}alho para as p\ 'aginas das se\c{c}\~oes anteriores ao cap\'itulo 1 (frontmatter)
\fancyhead[RO]{{\footnotesize\rightmark}\hspace{2em}\thepage}
\setcounter{tocdepth}{2}
\fancyhead[LE]{\thepage\hspace{2em}\footnotesize{\leftmark}}
\fancyhead[RE,LO]{}
\fancyhead[RO]{{\footnotesize\rightmark}\hspace{2em}\thepage}

\onehalfspacing  % espa\c{c}amento

% ---------------------------------------------------------------------------- %
% CAPA
% Nota: O t\'itulo para as disserta\c{c}\~oes/teses do IME-USP devem caber em um 
% orif\'icio de 10,7cm de largura x 6,0cm de altura que h\ 'a na capa fornecida pela SPG.
\thispagestyle{empty}
\begin{center}
    \vspace*{2.3cm}
    \textbf{\Large{Multidimensional Contours \`a la Fr\"{o}hlich-Spencer and Boundary Conditions for Quantum Spin Systems}}\\
    
    \vspace*{1.2cm}
    \Large{Lucas Affonso}
    
    \vskip 2cm
    \textsc{
    Thesis presented\\[-0.25cm] 
    to the \\[-0.25cm]
    Institute of Mathematics and Statistics\\[-0.25cm]
    of the\\[-0.25cm]
    University of S\~ao Paulo\\[-0.25cm]
    in partial fulfillment of the requirements\\[-0.25cm]
    for the degree\\[-0.25cm]
    of\\[-0.25cm]
    Doctor of Science}
    
    \vskip 1.5cm
    Program: Applied Mathematics\\
    Advisor: Prof. Dr. Rodrigo Bissacot\\ %verificar se usa prof. Dr.
    %Coorientador: Prof. Dr. Nome do Coorientador

   	\vskip 1cm
    \normalsize{During the development of this work the author was supported by FAPESP grants 2017/18152-2 and 2020/14563-0.}
    
    \vskip 0.5cm
    \normalsize{S\~ao Paulo, May 2023}
\end{center}

% ---------------------------------------------------------------------------- %
% P\ 'agina de rosto (S� PARA A VERS�O DEPOSITADA - ANTES DA DEFESA)
% Resolu\c{c}\~ao CoPGr 5890 (20/12/2010)
%
% IMPORTANTE:
%   Coloque um '%' em todas as linhas
%   desta p\ 'agina antes de compilar a vers\~ao
%   final, corrigida, do trabalho
%
%
%\newpage
%\thispagestyle{empty}
%    \begin{center}
%        \vspace*{2.3 cm}
%        \textbf{\Large{Thermodynamic Formalism for Generalized Countable Markov Shifts}}\\
%        \vspace*{2 cm}
%    \end{center}

%    \vskip 2cm

%    \begin{flushright}
%	This is the original version of the thesis elaborated by the\\
%	candidate Thiago Costa Raszeja, as it was \\
%	submitted to the Examiner Comission.
 %   \end{flushright}

%\pagebreak

% ---------------------------------------------------------------------------- %
% P\ 'agina de rosto (S� PARA A VERS�O CORRIGIDA - AP�S DEFESA)
% Resolu\c{c}\~ao CoPGr 5890 (20/12/2010)
%
% Nota: O t\'itulo para as disserta\c{c}\~oes/teses do IME-USP devem caber em um 
% orif\'icio de 10,7cm de largura x 6,0cm de altura que h\ 'a na capa fornecida pela SPG.
%
% IMPORTANTE:
%   Coloque um '%' em todas as linhas desta
%   p\ 'agina antes de compilar a vers\~ao do trabalho que ser\ 'a entregue
%   � Comiss\~ao Julgadora antes da defesa
%
%
\newpage
\thispagestyle{empty}
    \begin{center}
        \vspace*{2.3 cm}
        \textbf{\Large{Multidimensional Contours \`a la Fr\"{o}hlich-Spencer and Boundary Conditions for Quantum Spin Systems}}\\
        \vspace*{2 cm}
    \end{center}

    \vskip 2cm

    \begin{FlushRight}
     This is the final version of this thesis and it contains corrections \\
     and changes suggested by the examiner committee during the\\
     defense of the original work realized on July 26th, 2023.\\
     A copy of the original version of this text is available at the \\
     Institute of Mathematics and Statistics of the University of S\~ao Paulo.
    
    \vskip 2cm

   \end{FlushRight}
    \vskip 4.2cm

    \begin{quote}
    \noindent Examination Board:
    
    \begin{itemize}
		\item Prof. Dr. Walter de Siqueira Pedra - ICMC-USP (President)
		\item Prof. Dr. In\'{e}s Armend\'ariz  - Buenos Aires University
		\item Prof. Dr. Roberto Fern\'andez - NYU-Shanghai
        \item Prof. Dr. Abel Klein - University of California - Irvine
        \item Prof. Dr. Pieter Naaijkens - Cardiff University
    \end{itemize}
      
    \end{quote}
\pagebreak

\pagenumbering{roman}     % come\c{c}amos a numerar 

% ---------------------------------------------------------------------------- %
% Agradecimentos:
% Se o candidato n\~ao quer fazer agradecimentos, deve simplesmente eliminar esta p\ 'agina 
\chapter*{Acknowledgements}

\epigraph{The gatekeeper has to bend way down to him, for the great difference has changed things to the disadvantage of the man. “What do you still want to know, then?” asks the gatekeeper. “You are insatiable.” “Everyone strives after the law,” says the man, “so how is that in these many years, no one except me has requested entry?” The gatekeeper sees that the man is already dying and, in order to reach his diminishing sense of hearing, he shouts at him, “Here no one else can gain entry, since this entrance was assigned only to you. I’m going now to close it."
}{\textbf{Franz Kafka}\\ \textit{Before the Law}}

The quote opening this section is the final paragraph of the text \textit{Before the Law} from Franz Kafka. In that story, our hero encounters the opportunity of his life: to cross the gate and find what he thinks he wants. The object of his desire, in his belief, will then make him happy. But the story is not one with a happy ending. Although the gate is open and our hero can glimpse what is inside, he encounters an obstacle, a gatekeeper, who states that he is powerful and that he cannot grant our hero entrance at that moment. Our hero accepts passively his misfortune, and just sits and waits until the gatekeeper grants him permission to enter. The reader can verify by the quote, this never happens. But the passage was made for him! Sometimes in life, we glimpse an opportunity to go to the other side, we just need to be brave enough to do it. There will always be gatekeepers, but this does not mean the path we chose is not made for us. Difficulties abound, and somehow they are necessary for the proper transformation into what you want to be. The Ph.D. years were, in many ways, pretty much like the process described by the gatekeeper in Kafka's story: Too many gatekeepers, each one more powerful than the other. Sometimes you don't know what to expect and, as my therapist Marco Chiusano said to me many times, sometimes one needs to lose control in order to gain some. To forget the old ways of a simple undergrad student and become a researcher is challenging and a process that none of us can pass unharmed. Usually one spends many of their youth years in this pursuit, entangling personal life and career, working hard towards a goal that is not clear they will be able to achieve. The gatekeeper is indeed very powerful. I can only say that I feel very happy to have had this experience, ending this part of my life in a very positive way: Among friends and family, during a huge Mathematical Physics conference and having the opportunity to show them what I found during these years. I would like to start by thanking my mother and father, Valéria Affonso Silva, and Carlos Eduardo Pereira for being so supportive during these years. My family does not come from a privileged background, we are from the Zona Leste of São Paulo, and they decided to bet on a good education for me and my sister. Speaking of her, I would also like to thank Karoline Affonso Silva Pereira for being supportive and taking good care of me. She is, in a sense, my oldest friend (almost 30 years and still counting!). I would also like to thank my grandmother and grandfather Sueli Affonso Silva and Paulo Antônio Silva, and my father-in-law and mother-in-law Luciene de Morais Rebechi and Antônio Donizetti Rebechi for their love and support. 
During my undergrad years, I made many good friends that are still with me today. They are Denis Assis Pinto Garcia, Leonardo Barbosa, Geovane Grossi, Henrique Corsini, Nickolas Kokron, Matheus Prado, Jean Lazarotto, Pedro Mendes, and Christian Táfula. I am very happy to have shared a period of my life with you. Pedro helped me a lot by inviting me to stay at his home during the summer of 2016 in Rio, where I could take the Functional Analysis course at IMPA. This was very important to me and I would like to thank his kindness (and also the kindness of his father!). I would like to thank the new friends I made during my Ph.D. years, Lucas Garcia, João Fernando Nariyoshi, Thiago Raszeja, who kindly made many of the figures you will see at this thesis, Rodrigo Frausino, Eric Endo, Rodrigo Cabral, Kelvyn Welsch, João Maia, João Rodrigues. I had many interesting discussions with all of you during these years.

I would like to give special thanks to my love Vitória de Morais Rebechi. We have been together since my early undergrad years and we changed a lot during this period. I am very happy to share my life with you. I strongly believe that without your support, love, care, and faith in me, I would not be able to be here. We've experienced moments of joy and sadness together, but through it all, we've stood strong. I'm grateful for our journey, and I eagerly look forward to many more years of being by your side. 

Special thanks also to the members of the Kings of August group: Christian Táfula, João Nariyoshi, and Lucas Garcia. I had many good laughs and thoughtful discussions with you. Christian was in Canada during the year I spent there, under the supervision of Marcelo Laca which helped me a lot, and we spoke a lot during this period, sharing laughs and having deep conversations, I can safely say that you had a deep impact on what I am now.  

I consider myself to be very lucky to have had many mentors during this period. The first one is Walter Pedra. We met when I was a naive undergrad, and through his guidance and patience, I took my first steps in mathematical physics, more specifically on applications of the theory of C*-algebras to quantum statistical mechanics. I would also like to thank Eric Endo, one of my coauthors and a very good friend, from whom I learned a lot. I thank Aernout van Enter for reading a preliminary version of this thesis and for many discussions we had through the years, I learned a lot of classical statistical mechanics and many references on relevant problems. I think this work would be very impoverished without your help. 

A special and very important place in this history is reserved for my advisor, Rodrigo Bissacot. During the first year, when I visited UFMG, I took a look at his doctoral thesis where I could find the words 
\begin{quote}
In the absence of a quote from a famous writer, I end with a sentence that summarizes well the current situation in the country and my journey to this point. This was possible thanks to those I mentioned above and many other colleagues and friends who, even when they saw that the situation was complicated, didn't try to destroy a dream: "In a country where a metallurgist can be President, a welder's son can have a Ph.D."    
\end{quote}

I would say that many outside of academia may not understand the power this quote has. It gave me the certainty that I could continue on this path. Rodrigo has a huge heart and works for at least 5 people (just a low estimate, some experts believe that number must be higher!) and is relentless in his love for science and the university.
I will never forget the late nights we spent working at IME along with Joao Vitor Maia and many others. I also learned from you how to be a scientist during this great period we spent together and I was able to witness you create a great mathematical physics group. I feel honored to have participated in this period of construction and I hope to be present when many more victories and successes happen, in the many years to come (Although maybe, as a plan B, we should call Thiago Raszeja and open that store...)

Research... is hard! 

I would also like to thank the Examination Board for accepting to come to Brazil during this very special moment and for the improvements suggested to this thesis. Finally, I want to thank the Conselho Nacional de Desenvolvimento Científico e Tecnológico (CNPq) processo 132966/2017-4 and the Fundação ao Amparo à Pesquisa do Estado de São Paulo (FAPESP) for the financial support during the development of the work presented in this thesis through the grants 2017/18152-2 and 2020/14310-5. 

I think I still haven't reached the other side, so there are many more gatekeepers to meet. This will not be a problem, for as long as I can share the journey with the people I mentioned here (and the many that I will meet along the way). 

% ---------------------------------------------------------------------------- %
% Resumo
\chapter*{Resumo}

    \noindent AFFONSO, L. \textbf{Contornos multidimensionais \`a la Fr\"ohlich-Spencer e Condi\c{c}\~{o}es de Fronteira em Sistemas de Spin Qu\^{a}ntico}. 
2023. 119 f.
Tese (Doutorado) - Instituto de Matem\'atica e Estat\'istica,
Universidade de S\~ao Paulo, S\~ao Paulo, 2023.
\\

Nesta tese, apresentamos resultados advindos da investiga\c{c}\~ao de dois problemas: um deles, relacionado a transi\c{c}\~ao de fase de modelos de Ising de longo-alcance e o outro, está relacionado com a caracterização de estados de equilíbrio em sistemas de spin quântico. 

Devido ao caráter longo-alcance das intera\c{c}\~oes do tipo $J|x-y|^{-\alpha}$, estimativas usando contornos usualmente encontrados na literatura apresentam restri\c{c}\~oes no alcance das intera\c{c}\~oes ($\alpha>d+1$ em Ginibre, Grossmann e Ruelle em 1966 e subsequentemente Park em 1988 para sistemas com spin discreto possivelmente não simétricos porém $\alpha>3d+1$). Conseguimos estender o argumento de transi\c{c}\~ao de fase para modelos tipo Ising de longo-alcance ferromagnéticos para toda regi\~ao $\alpha>d$ utilizando os argumentos multiescala apresentados nos artigos de Fr\"ohlich e Spencer. 

 Em mec\^anica estat\'istica qu\^antica, a condi\c{c}\~ao KMS \'e utilizada como caracteriza\c{c}\~ao dos estados de equilibrio do sistema. Amplamente estudada hoje em dia, sabe-se que esta condição é equivalente a outras no\c{c}\~oes de equil\'ibrio tal como a de satisfazer o princ\'ipio variacional para sistemas invariantes por transla\c{c}\~ao. Apresentamos uma outra poss\'ivel caracteriza\c{c}\~ao de estados de equilíbrio para sistemas de spin quântico atrav\'es de uma generaliza\c{c}\~ao das equações DLR para o contexto qu\^antico utilizando representa\c{c}\~oes com processos de Poisson. Tamb\'em discutimos a rela\c{c}\~ao destas equa\c{c}\~oes DLR qu\^anticas com os estados KMS de uma subclasse de intera\c{c}\~oes que cont\'em o modelo de Ising quântico com campo transversal.
\\
\noindent \textbf{Palavras-chave:} transi\c{c}\~ao de fase, modelo de Ising longo-alcance, contornos, análise multiescala, Fr\"ohlich-Spencer, mec\^anica estat\'istica cl\'assica, estados KMS, algebras-C* de grup\'oides, processos de Poisson, mecânica estatística quântica.

% ---------------------------------------------------------------------------- %
% Abstract
\chapter*{Abstract}
\noindent AFFONSO, L. \textbf{Multidimensional Contours \`a la Fr\"{o}hlich-Spencer and Boundary Conditions for Quantum Spin Systems}. 
2023. 119 pages.
PhD Thesis - Institute of Mathematics and Statistics,
University of S\~ao Paulo, S\~ao Paulo, 2023.
\\
In this thesis, we present results from the investigation of two problems, one related to the phase transition of long-range Ising models and the other one associated with the characterization of equilibrium states in quantum spin systems. 
Due to the long-range nature of the interactions, $J|x-y|^{-\alpha}$, estimates using contours usually found in the literature have restrictions on the range of interactions ($\alpha>d+1$ in Ginibre, Grossmann, and Ruelle in 1966 and Park in 1988 for discrete spin systems and possibly non-symmetric situations but with the restrictions $\alpha>3d+1 $). We were able to extend the phase transition argument for long-range Ising-type models to the entire region $\alpha>d$ using the multi-scale arguments presented in the articles by Fr\"ohlich and Spencer.

In quantum statistical mechanics, the KMS condition is used as a characterization for the equilibrium states of the system. Widely studied today, it is known to be equivalent to other equilibrium notions such as satisfying the variational principles. We present another possible characterization of equilibrium states in quantum spin systems by generalizing the DLR equations to the quantum context using Poisson point process representations. We also discuss the relationship of these quantum DLR equations with the KMS states of a subclass of interactions that contains the Ising model with a transverse field. 
\\
\noindent \textbf{Keywords:} phase transition, long-range Ising model, contours, multiscale analysis, Fr\"ohlich-Spencer, classical statistical mechanics, KMS states, groupoid C*-algebras, Poisson point process, quantum statistical mechanics.

% ---------------------------------------------------------------------------- %
% Sum\'ario
\tableofcontents    % imprime o sum\ 'ario

% ---------------------------------------------------------------------------- %
%\chapter{List of Abbreviations}
%\begin{tabular}{ll}
%         GMS         & Generalized Markov Shift\\
%         IFS         & Iterated Function System\\
%         KMS         & Kubo-Martin-Schwinger\\
%         LCH         & locally Compact Hausdorff\\
%         MS          & Markov Shift\\
%         SGDS        & Singly Generated Dynamical System
%\end{tabular}

% ---------------------------------------------------------------------------- %
%\chapter{List de Symbols}
%\begin{tabular}{ll}
%        $\mathcal{O}_A$ & Cuntz-Krieger/Exel-Laca algebra\\
%        $\Sigma_A$      & Classical Markov shift space\\
%        $X_A$ or $\widetilde{X}_A$ & Generalized Markov shift space\\
%\end{tabular}

% ---------------------------------------------------------------------------- %
% Listas de figuras e tabelas criadas automaticamente
\listoffigures            
%\listoftables            

% ---------------------------------------------------------------------------- %
% Cap\'itulos do trabalho
\mainmatter

% cabe\c{c}alho para as p\ 'aginas de todos os cap\'itulos
\fancyhead[RE,LO]{\thesection}

\singlespacing              % espa\c{c}amento simples
%\onehalfspacing            % espa\c{c}amento um e meio

\chapter*{Introduction}
\addcontentsline{toc}{chapter}{Introduction}
\markboth{INTRODUCTION}{}
\epigraph{Slowly, slowly to become hard like a precious stone - and at last to lie there, silent and a joy to eternity.}{\textbf{Friedrich Nietzsche}\\ \textit{The Dawn of the Day}}

Statistical mechanics consists of the study of thermodynamic properties of materials through the statistical analysis of their microscopic behavior. Since its early days, many successes have been achieved in the study of general aspects of systems, with a rigorous theory of variational principles for quite general interactions \cite{Is}, stability properties for equilibrium states \cite{Bra2}. Nonetheless, To gain a deeper understanding of statistical mechanics and make significant contributions to the theory, mathematicians and physicists often focus on studying specific models. These models serve as effective representations of real-world materials or phenomena, allowing researchers to investigate the underlying principles and behaviors of them. One model that stands out is the \emph{Ising model}.

Introduced by Wilhelm Lenz in 1920, he gave the task to his Ph.D. student Ernst Ising of investigating if the model could present a feature known as \emph{phase transition}, giving a potential model for explaining the phenomena known as \emph{ferromagnetism}. At the time, phase transition was characterized by a lack of analyticity on the free energy of the model and Ising, by calculating explicitly the free energy for the model on the integer lattice $\Z$, found that the model did not present a phase transition, i.e., the free energy was an analytic function of its parameters. The negative result obtained by Ising was an inspiration to Heisenberg to put forward his own model for explaining ferromagnetism\footnote{See \cite{BRUSH1967, Niss2004} for more information on the early history of the model.}.

In 1934, Rudolf Peierls in \cite{Pei} gave an argument suggesting that the behavior of the Ising model would be drastically different already in dimension $d=2$. However, the argument was later found to be incomplete. During the 1940s renewed attention was given to the problem mainly due to the new theory of duality by Kramers and Wannier, where they could predict a phase transition and calculate explicitly the temperature where it would occur, and later the rigorous computations of the free energy by Lars Onsager, confirming the picture predicted by Peierls. Many years later, Peierls argument was made rigorous independently by Griffiths \cite{Griffiths1964}\footnote{In this paper, the author uses the Appendix to explain the problem with the original argument.} and Dobrushin \cite{Dobrushin1965}.

Nowadays, the Ising model is the most widely studied model in statistical mechanics, not only through rigorous methods and includes also extensive numerical and theoretical investigations. Although other models are used to explain ferromagnetism in different materials, the Ising model can still be applied as an effective model for many different collective phenomena in applied sciences\footnote{One can read more about other applications \href{https://mathoverflow.net/questions/413767/interesting-and-surprising-applications-of-the-ising-model}{here}.}. The model can be formally described by its Hamiltonian function
\[
H = -\sum_{x,y \in \Z^d}J_{xy}\sigma_x\sigma_y - \sum_{x \in \Z^d}h_x \sigma_x ,
\] 
where $J_{xy}$ are the coupling constants. The model is called \emph{ferromagnetic} when $J_{x,y}\geq 0$. For the nearest neighbor, one takes $J_{x,y}=J$  if $x$ and $y$ as nearest neighbors and $J_{x,y}=0$ otherwise. The terms $h_x$ represent the magnetic field acting in each site of the lattice and in the previous paragraph we were discussing the case where $h_x=0$ for every $x \in \Z^d$. We are considering in the following paragraphs and this thesis ferromagnetic Ising models where $J_{xy} = J|x-y|^{-\alpha}$, i.e. with \emph{polynomial decay}, and $h_x = h^*|x|^{-\delta}, \delta >0$.

The results of Griffiths and Dobrushin were promptly generalized in 1966 by Ginibre, Grossmann, and Ruelle \cite{GGR}. They extended the Peierls argument from the nearest neighbor case, with the usual Peierls contours, to an arbitrary two-body long-range perturbation, as long as it decays at least polynomially $\alpha > d+1$. It was between 1967-1969 that the correlation inequalities by Griffiths \cite{Griffiths1967,Griffiths1969} and Kelly and Sherman \cite{Kelly1968} were proved, known nowadays as Griffiths inequalities or GKS inequalities (see chapter 3 of \cite{Vel}), implying the existence of a phase transition for ferromagnetic long-range interactions whenever the nearest-neighbor Ising model has a phase transition for dimension $d$, including the case for $\alpha > d$, an important region for the decay of long-range deterministic models since it will imply that each point of the lattice can feel at most a finite amount of energy. This extends the results of Ginibre, Grossmann, and Ruelle, but only for the case where the long-range perturbation is \emph{ferromagnetic}.

Negative results were established when one has the nearest-neighbor ferromagnetic Ising model together with a long-range antiferromagnetic interaction by van Enter \cite{vanEnter1981}, which showed that an arbitrarily small, in an appropriate sense, antiferromagnetic long-range perturbation can set the magnetization of the system to 0. The nearest-neighbor ferromagnetic Ising model perturbed by an antiferromagnetic long-range interaction is a model that has seen intense research in the past decades and even its ground state picture is not fully understood yet (see \cite{Bisk} for the positive temperature results and \cite{Fermi2022} and references therein for more information on what is known for the ground states).

For more general discrete state spaces, a generalization of the Peierls argument for possible nonsymmetric systems is available in the scope of the \emph{Pirogov-Sinai theory} for phase transitions \cite{Vel, Pirogov, Sinaibook}. One of the shortcomings of the theory, as presented in \cite{Vel}, is that the interactions must be short-range for it to apply. It was Park in 1988 \cite{Park1, Park2} who extended the Pirogov-Sinai theory to long-range interactions. In his arguments, Park considers the two-body long-range interaction as a perturbation of a short-range interaction, yielding a restriction on the decay of the interaction (at least $\alpha > 3d+1$). The careful reader will discern a recurring pattern in all these findings: the long-range term is consistently handled as a perturbation of an auxiliary short-range model. While this approach yields robust results, it is essential to acknowledge that such a procedure may inherently possess certain limitations.

The situation described above is different in one-dimensional long-range systems. The conclusion of Ising in his Ph.D. thesis can be further generalized and one-dimensional systems can be shown to have no phase transition when the interaction decays faster than $1/r^{2+\varepsilon}$ (see \cite{10.1007/BFb0013371}). Kac and Thompson conjectured in \cite{KT} that a one-dimensional long-range model exhibits a phase transition at low temperatures when  $\alpha \in (1,2]$. The conjecture was proved in 1969 by Freeman Dyson in \cite{Dyson} when $\alpha \in (1,2)$. Dyson used the Griffiths inequalities to compare the long-range one-dimensional Ising model with another one that he introduced, known nowadays as \emph{hierarchical model}, letting just the case $\alpha=2$. In \cite{Dyson}, Dyson even reports a private communication with Thompson, saying that the latter believed that there was no phase transition for $\alpha=2$. The model $\alpha=2$ is special because of its connection with the \emph{Kondo effect}, described Yuval and Anderson in \cite{Anderson1969} and also by the presence of the \emph{Thouless effect}, predicted by Thouless in \cite{Thouless} and proved rigorously by Aizenman, Chayes, Chayes, and Newman in \cite{ACCN}. In 1982, 13 years after Dyson's results, Fr\"{o}hlich and Spencer \cite{Fro1} introduced a notion of one-dimensional contour and proved the phase transition using a Peierls-type argument.  
Their idea to construct such contours came from techniques introduced by the same authors in \cite{Fro3} called \emph{multiscale analysis}. The method for the one-dimensional system consists in organizing the spin flips into contours, which may or may not be connected while ensuring that a specific condition related to their distance from each other is satisfied. In the referenced work \cite{Fro2}, this condition is referred to as \emph{Condition D}.

The introduction of the contours allowed many other questions to be subsequently investigated. For instance, Imbrie \cite{Imbr1} and Imbrie and Newman \cite{Imbr2}  using cluster expansions were able to study the decay of correlations, showing a varying decay exponent of the correlations with respect to the temperature. Regarding the Peierls argument, Cassandro, Ferrari, Merola, and Presutti \cite{Cass} extended the contour argument to different exponents $\alpha\in (1,2)$. They manage to show the phase transition assuming the interaction decay to satisfy $\alpha \in (2-\alpha^+,2]$, where ${\alpha^+ =\log(3)/\log(2) - 1}$. They introduced a more geometric approach to the problem of the phase transition. Unfortunately, their results do not extend to the whole region $(1,2]$ since as shown by Littin and Picco \cite{LP} the quasi-additive bound for the energy of the subtraction of a contour does not hold below some value $2-\alpha^+$. Their argument also needs that the coupling for nearest-neighbors $J(1)$ to be large. This condition seems not optimal since, at least for ferromagnetic systems, the couplings $J_{x,y}$ should favor the alignment of the spins. Therefore, asking for strong nearest-neighbor interactions seems to treat it as a small perturbation of a short-range model, not fully exploiting the ferromagnetic nature of the model. Actually, the condition on the nearest neighbor's coupling $J(1)\gg 1$ was removed by Bissacot, Endo, van Enter, Kimura, and Ruszel \cite{Eric2} but with further restrictions on the decay, now having to satisfy $\alpha >2-\alpha^*$, where $\sum_{n\geq 1}1/n^{\alpha^*}=2$. Nonetheless, their contour argument allowed them to investigate important questions regarding the model such as phase separation \cite{Cass3} and phase transition for the one-dimensional long-range model with a random field \cite{Cass1}.

Our investigation of the phase transition problem by a contour argument in the multidimensional long-range Ising models was motivated by a result in 2018 by Bissacot, Endo, van Enter, Kimura, and Ruszel \cite{Eric2}. The authors, based on the contour argument in \cite{Cass}, considered the model with the presence of the decaying magnetic field $(h_x)_{x\in \Z}$ given by $h_x = h^*|x|^{-\delta}, \delta >0$, for $x\neq 0$. Our first result is related to a generalization of this phase transition result to multidimensional models ($d\geq 2$). Our main classical result is the following 
\begin{theorem*}
		For a fixed $d\ge 2$, suppose that $\alpha>d$ and $\delta>0$.
		There exists $\beta_c\coloneqq \beta_c(\alpha,d)>0$ such that, for every $\beta>\beta_c$, the long-range Ising model with coupling constant  (\ref{long}) and magnetic field (\ref{magfield}) undergoes a phase transition at inverse of temperature $\beta$ when
		\begin{itemize}
			\item $d<\alpha<d+1$ and $\delta >\alpha -d$; $\delta=\alpha-d$ if $h^*$ is small enough;
			\item $\alpha \geq d+1$ and $\delta>1$; $\delta=1$ if $h^*$ is small enough.
		\end{itemize}.
	\end{theorem*}
The Theorem above can be summarized by the following picture

\begin{figure}[ht]
	\centering
	
	\tikzset{every picture/.style={line width=0.75pt}} 
	
	\begin{tikzpicture}[scale=1]
		
		%Axes
		
		\draw[-{Stealth}, black] (-1,0)--(5,0);
		\draw[-{Stealth}, black] (0,-1)--(0,4);
		\draw[-{Stealth}, black] (6,2.6)to[out=180, in=0](5,2);
		
		%Graph
		\draw[-, dashed] (0,0)--(2.5,2)--(5,2);
		\fill[color= black, opacity= 0.2] (0,0)--(2.5,2)--(5,2)--(5,4)--(0,4)-- cycle;	
		
		%Boxes
		\fill[white, rounded corners, thick] (1.5,3) rectangle (3.7,3.4);
		\draw[fill=white, draw=black, rounded corners, thick] (1.5,3) rectangle (3.7,3.4);
		\draw[black, rounded corners, thick] (3,1) rectangle (4.6,1.5);
		\draw[black, rounded corners, thick] (6,2.3) rectangle (8.1,2.9);
		
		%Nodes
		
		\draw (0,0) node[anchor=north east, scale=1] {$0$};
		\draw (5,0) node[anchor=north east, scale=1] {$\alpha -d$};
		\draw (0,4) node[anchor=north east, scale=1] {$\delta$};
		\draw (0.1,2)--(-0.1,2);
		\draw (0,2) node[anchor=north east, scale=1] {$1$};
		\draw (2.5,0.1)--(2.5,-0.1);
		\draw (2.5,0) node[anchor=north west, scale=1] {$1$};
		\draw(1.45,3.05) node[anchor=south west, scale=0.7] {\;\footnotesize\textbf{Phase Transition}};
		\draw(2.9,1.05) node[anchor=south west, scale=0.7] {\;\footnotesize\textbf{Uniqueness?}};
		\draw(5.9,2.6) node[anchor=south west, scale=0.7] {\;\footnotesize\textbf{Phase Transition}};
		\draw(6.3,2.3) node[anchor=south west, scale=0.7] {\;\footnotesize\textbf{for small $h^*$}};

	\end{tikzpicture}
	\caption{The phase diagram for the long-range Ising model at low temperatures depends on $\alpha$ and $\delta$.}
\end{figure}
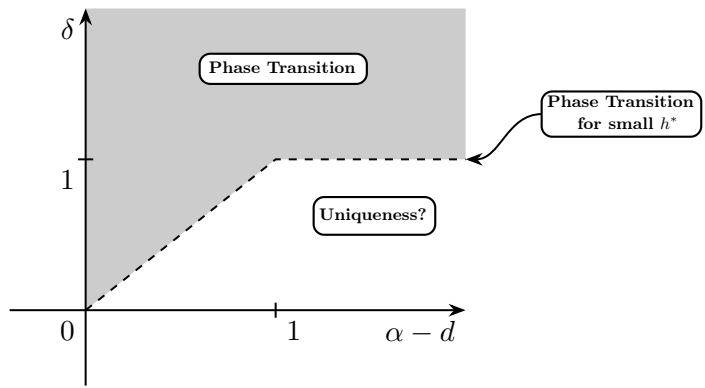

In quantum statistical mechanics, the {\it KMS condition} was used by Kubo \cite{Kubo} and Martin and Schwinger \cite{MS} as a technique to study multiparticle systems and it is related to a certain analytic continuation property of the thermodynamic Green functions for equilibrium systems. The KMS condition was proposed as a rigorous characterization of equilibrium in a seminal paper by Haag, Hugenholtz and Winnink \cite{HHW} that noticed that if one could make sense of the thermodynamic limit of the dynamics, although the state could not be of the usual Gibbs form, they would still satisfy the KMS condition. Restricting to classical interactions, Ruelle-Lanford \cite{Lan} and Dobrushin \cite{Dob1} proposed a set of equations, nowadays called the {\it DLR equations}, as a characterization of equilibrium for states in classical statistical mechanics, and it consists of describing the conditional expectations of the state concerning the $\sigma$-algebra of the events localized outside a finite region $\Lambda$ of $\Z^d$. An intriguing historical fact is that the KMS condition actually came first even as a proposal for a characterization of equilibrium, in 1967, while Dobrushin and Lanford and Ruelle's papers are from 1968 and 1969 respectively.

The DLR equations and the KMS condition are different in nature. For the formulation of the KMS condition one needs a well-defined dynamics on the observable algebra, while the DLR equations need a notion of conditional expectations for the states and are basically static. Some critical no-go theorems proved after their proposal put further constraints on the understanding of the relationship between these two characterizations for equilibrium and we proceed to explain them briefly.

 One could try to use a generalization to C*-algebras of conditionals expectations as in \cite{Kad}. In this case, a conditional expectation between two unital C*-algebras $\mathfrak{B}\subset\mathfrak{A}$ is a map 
 $E:\mathfrak{A}\rightarrow \mathfrak{B}$ that is completely positive linear, $E(\mathbbm{1})=\mathbbm{1}$ and for each $b_1,b_2 \in \mathfrak{B}$ it holds $E(b_1 ab_2) = b_1 E(a)b_2$ for all $a \in \mathfrak{A}$. But a theorem proved by Takesaki (see \cite{Acc1} for proof and a generalization) has as one of its consequences that if $\mathfrak{A}$ is the quasilocal algebra of observables and $\mathfrak{A}_{\Lambda^c}$ is the subalgebra of observables localized outside $\Lambda$, then if for state $\mu:\mathfrak{A}\rightarrow \mathbb{C}$ one has a conditional expectation $E:\mathfrak{A}\rightarrow \mathfrak{A}_{\Lambda^c}$ such that $\mu \circ E = \mu$, then the state is actually a product state. This means that the subsystem can have no interaction with the outside system. There is a corresponding generalized conditional expectation \cite{Acc1}, where the usual non-commutative conditional expectations in \cite{Kad} are a subclass. Accardi then used them to introduce a notion of a Quantum Markov Field (see \cite{Acc2} and references therein). 
 
 Besides the restriction on the existence of conditional expectations, there is also a restriction on the dynamics for classical systems. Let a C*-algebra $\mathfrak{A}$ be equipped with a strongly continuous one-parameter group of *-automorphisms $\tau_t:\mathfrak{A}\rightarrow \mathfrak{A}$, and a KMS state $\mu$ for the corresponding dynamics. Let also $\mathfrak{A}_{\tau}$ the subalgebra of $\mathfrak{A}$ of invariant observables by dynamics $\tau_t$. A well-known result in operator algebras (Proposition 5.3.28 in \cite{Bra2}) says that $\mathfrak{A}_\tau$ is equal to the centralizer subalgebra, i.e., the subalgebra of operators $a \in \mathfrak{A}$ such that $\mu([a,b])=0$, where $[a,b]=ab-ba$, for every $b \in \mathfrak{A}$. This theorem implies that if your algebra of observables is commutative, then every element must be invariant by the dynamics\footnote{In classical \emph{continuum} statistical mechanics or in lattice systems where the state space is a symplectic manifold, 
a notion of classical KMS can be given using the Poisson bracket (see, for instance, \cite{Aiz4, Drago} and references therein). The notion of a Poisson bracket structure is absent in finite-spin lattice systems}. One way to circumvent this problem is to embed the algebra of continuous functions into a larger noncommutative C*-algebra. This is one of the main ideas of Brascamp in \cite{Bras}, where the author shows that for Ising spin systems, the states that satisfy the DLR equations are exactly the KMS states for classical interactions. For general finite spin systems, the relationship between the DLR equations and KMS conditions was clarified by Araki and Ion \cite{Ara1}, which studied the problem for one-dimensional systems and the high-temperature case for all dimensions, and Araki \cite{Ara2}, which solved the problem completely for quantum spin systems. To this end, Araki and Ion introduced what we call here the \emph{Gibbs-Araki-Ion condition}, which is equivalent to the KMS condition (see Theorem 6.2.18 from \cite{Bra2}) and reduces to the DLR equations if the interaction is classical.\footnote{See Chapter \ref{ch:quantstatmech} for more details and Theorem \ref{t2} for the proof.}

 The definition of the Gibbs-Araki-Ion condition relies on the perturbation theory for bounded operators developed by Araki (see Chapter 5.4 of \cite{Bra2} for a detailed account), posing some difficulties that are absent in the classical statistical mechanics setting. These difficulties were best explained by Matsui in \cite{Matsui}, and we quote here

\begin{quotation}
	"One mathematically interesting question is whether any KMS state is obtained in this procedure; namely, one may ask whether any KMS state is a thermodynamic limit of finite-volume Gibbs states with suitable boundary conditions for Hamiltonians as is described here. Theorem 3.3 may be taken as an answer to this question; however, this is not what we want. We are asking the effect of \emph{the boundary condition of our Hamiltonian} in a large system while the Gibbs condition is \emph{the boundary condition imposed on the states}. From a practical point of view, the [Gibbs-Araki-Ion condition] is cumbersome to handle because in it the modular automorphism group is used which is state-dependent and the imaginary time evolution is also difficult to compute."
\end{quotation}

Above, Matsui was asking if there was a family of boundary conditions in the form of operators $B_{\partial\Lambda}$, localized near the boundary $\partial \Lambda$ such that the thermodynamic limits of the perturbed Hamiltonians $H_\Lambda+B_{\partial\Lambda}$ would be able to describe all the KMS states. Indeed, if the interaction is classical this is already true for a very specific type of $B_{\partial\Lambda}$ (see Theorem 7.12 in \cite{Geo}), but only for pure phases since it is already known that there are examples of \emph{non-extremal} DLR measures which cannot be obtained via a thermodynamic limit procedure, see \cite{Co, Mi}. There are even proposals by Israel \cite{Is} and Simon \cite{Simon}, where they condition the Hamiltonian to a state of the C*-algebra outside the box (more on this in Chapter \ref{ch:groupoids}).
In \cite{Wer}, M. Fannes and R. F. Werner raised concerns that maybe the use of boundary conditions as proposed earlier in this paragraph would not generate all possible extremal KMS states for spin systems, a phenomenon that they called the \emph{failure of the DLR inclusion}. 

Despite the seemingly negative results of Fannes and Werner, adding boundary condition terms is an important procedure to generate KMS states. For instance, Datta, Fernand\'{e}z, and Fr\"{o}hlich \cite{Datta1}, and also Borgs, Koteck\'{y}, and Ueltschi \cite{BKU} extended the Pirogov-Sinai theory for quantum lattice systems. In their work, the models are treated as a small perturbation of a classical Hamiltonian. The authors even construct infinite volume states depending on the classical boundary conditions, coming from ground states of the classical Hamiltonian. Even for ground states, a recent result by Cha, Naaijkens, and Nachtergaele \cite{Pieter} have characterized the ground states of a class of Kitaev's quantum double models using a suitable notion of boundary operators. Nevertheless, the objection made by Fannes and Werner cannot be completely rejected by these results just cited; they \emph{do not say that boundary conditions would not produce new states}, what they mean is that this procedure \emph{should not generate all the extremal KMS states for all interactions}.

Compelling evidence supporting the development of a DLR theory for quantum spin systems comes from a number of results that we describe now. First, there are the results from Fichtner and Freudenberg for bosonic systems \cite{Fich1, Fich2}, which are closely related to ours in spirit since they use random representations. Fichtner and Freudenberg managed to construct a family of conditional expectations using point processes, which they called reduced density matrices, in order to describe locally normal states of bosonic states. Since the construction of infinite-volume dynamics for general bosonic interactions is still not settled, they could not relate their results to KMS states.

There is also the book by Albeverio, Kondratiev, Kozitsky, and R\"{o}ckner (see \cite{AKKR} and references therein) for a DLR approach to anharmonic crystals. These systems also do not have, as far as we know, a corresponding KMS theory, so the approach consists of studying the thermodynamic Green functions directly. Their basic strategy is to use Feynman-Kac representations, a rigorous version of the path integral, to construct the Euclidean Gibbs measures. 

For general systems, Klein and Landau in \cite{Klein} unveiled some deep connections between some stochastic processes and a specific class of KMS states. Klein and Landau also have a probabilistic interpretation of the Tomita-Takesaki theory, perturbation theory, and also study systems where the equilibrium states can be described using density matrices. For quantum spin systems, there are random representations using Poisson point processes \cite{Aiz1, Aiz2}. Aizenman and Nachtergaele proposed the notion of \emph{quasi-states}, a linear functional that has a positive restriction to an abelian sub-C*-algebra. To our knowledge, no relationship with KMS states has been further investigated using the quasi-state notion, but many important results were derived using these random representations, such as the continuity of the magnetization and the sharpness of the phase transition \cite{Bjorn1, Bjorn2}. The interested reader can check \cite{Ueltschi} (and references therein) for further details.

At last, recent successes in characterizing KMS states in groupoid C*-algebras \cite{Neshveyev2013, Thomsen2016, BEFR} arising from dynamical systems theory raised the question if quantum statistical mechanics could not profit from groupoid methods. When studying the monograph by Gruber, Hintermann, and Merlini \cite{Gruber}, I noticed that the transformation group studied there had the usual spin algebra as its C*-algebra, giving us a candidate for "quantum space" to act as a substitute for the configuration space. Combining random representations and the groupoid model for the C*-algebra of the quantum spin system, we could find a suitable generalization of the DLR equations for the quantum setting. The theory can be developed into a subclass of interactions that we called \emph{admissible}. An important subclass of admissible interactions is the ones called \emph{stoquastic} \cite{Klassen}. These interactions have the important property that the exponentials of their Hamiltonians, with the convolution product, are pointwise positive functions in the groupoid. Since quantum spin systems have a well-developed KMS theory, we could relate it to our DLR theory for a specific class of models including the transverse field Ising model, XY model, and the Toric Code. The results are summarized in Figure \ref{fig:admissible}. 

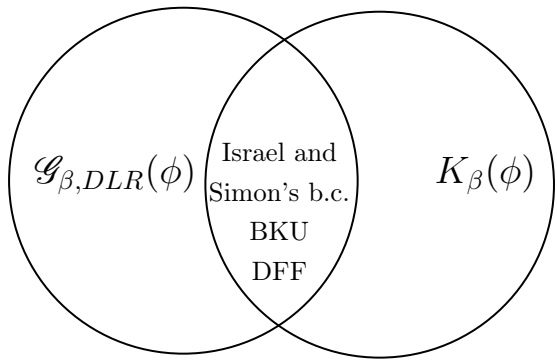
\begin{figure}[ht]
    \centering

\tikzset{every picture/.style={line width=0.75pt}} %set default line width to 0.75pt        

\begin{tikzpicture}[x=0.75pt,y=0.75pt,yscale=-1,xscale=1]
%uncomment if require: \path (0,300); %set diagram left start at 0, and has height of 300

%Shape: Circle [id:dp7234778312741303] 
\draw   (140,157) .. controls (140,108.95) and (178.95,70) .. (227,70) .. controls (275.05,70) and (314,108.95) .. (314,157) .. controls (314,205.05) and (275.05,244) .. (227,244) .. controls (178.95,244) and (140,205.05) .. (140,157) -- cycle ;
%Shape: Circle [id:dp9361821191932178] 
\draw   (238,159) .. controls (238,110.95) and (276.95,72) .. (325,72) .. controls (373.05,72) and (412,110.95) .. (412,159) .. controls (412,207.05) and (373.05,246) .. (325,246) .. controls (276.95,246) and (238,207.05) .. (238,159) -- cycle ;

% Text Node
\draw (235,142.4) node [anchor=north east][inner sep=0.75pt]    {\Large $\mathscr{G}_{\beta,DLR}(\phi) $};
% Text Node
\draw (351,142.4) node [anchor=north west][inner sep=0.75pt]    {\Large$K_\beta(\phi)$};
\draw (275,142.4) node [inner sep=0.75pt]    {\small Israel and };
\draw (275,162.4) node [inner sep=0.75pt]    {\small Simon's b.c.};
\draw (275,182.4) node [inner sep=0.75pt]    {\small BKU};
\draw (275,202.4) node [inner sep=0.75pt]    {\small DFF};

\end{tikzpicture}
    \caption{The set of quantum DLR states and KMS states have a non-empty intersection for admissible interactions. At the intersection, there are proposals for boundary conditions from Israel and Simon \cite{Is,Simon}. The acronyms BKU and DFF stand for, respectively, the boundary conditions present in the papers by Borgs, Koteck\'y and Ueltschi \cite{BKU} and Datta, Fernand\'ez and Fr\"ohlich \cite{Datta1} for quantum spin systems.}
    \label{fig:admissible}
\end{figure}

\pagebreak

\subsubsection{Chapter \ref{ch:classtatmech}}

We recall some basic definitions and facts about Gibbs measures as found in \cite{Vel, Geo}, with emphasis on the DLR approach. We also discuss the constructions of the contours, closely related to the ones appearing in Pirogov-Sinai theory, and rigorously prove phase-transition for the nearest-neighbor Ising model with a decaying field $h_{x} = h^*|x|^{-\delta}$ when $\delta>1$ and $h^*> 0$. This result appeared first in \cite{Bis2}, and the main ideas for the generalization to the long-range case can be seen in this proof.

\subsubsection{Chapter \ref{ch:longrange}}
 We explain the results in \cite{Aff}. We start by giving a heuristic argument à la Imry-Ma for the relation between the parameters of the model. We will introduce the concept of $(M,a,r)$-partition, a construction that uses ideas from the multiscale analysis, as in the papers of Fr\"ohlich and Spencer \cite{Fro1, Fro2, Fro3}. Finally, we prove 
a quasi-additive lower bound for the energy of the system when we erase a contour. We stress that this bound holds for all the region $\alpha>d$. We end the chapter with a proof by a contour argument of the phase transition for the long-range Ising model with a decaying field.

\subsubsection{Chapter \ref{ch:quantstatmech}}
We present the basics of quantum statistical mechanics which is relevant to the study of statistical mechanics of finite quantum spin systems à la Bratteli-Robinson using the language of transformation groupoids and their C* algebras. We also introduce in the groupoid setting the $d$-dimensional Jordan-Wigner transform, inspired by \cite{kochmanski1998jordanwigner}. We give some details about the construction of the dynamics, introduce the KMS condition, the Gibbs-Araki-Ion condition, and show that they are equivalent. In the end, we show that the KMS states of classical interactions satisfy the DLR equations of Chapter \ref{ch:classtatmech}. We follow closely the expositions of \cite{Bra2, putnam, Renault1980, SimsSzaboWilliams2020}.
\subsubsection{Chapter \ref{ch:groupoids}}
We derive a random representation for the Gibbs density operator using a Poisson Point process on the groupoid. After that, we use this random representation to introduce the finite volume functionals with boundary conditions and the quantum Gibbsian specification. We prove that the closed convex hull of these functionals is exactly the set of all quantum DLR states. 

\subsubsection{Chapters \ref{ch:conclusion2} and \ref{ch:conclusion}}
We make some comments about the results and also discuss some directions for further research.

Chapters 1 and 4 are essentially reviews of the basic theory of statistical mechanics. The new results are contained in Chapters 2 and 5.

\chapter{Classical Statistical Mechanics}
\label{ch:classtatmech}
\epigraph{For the sake of persons of different types, scientific truth should be presented in different forms and should be regarded as equally scientific, whether it appears in the robust form and the vivid coloring of a physical illustration, or in the tenuity and paleness of a symbolical expression}{\textbf{J. C. Maxwell} \\ \textit{As quoted by R. B. Lindsay in "On the Relation of Mathematics and Physics"}}

In classical statistical mechanics, one way to characterize equilibrium states was introduced by Dobrushin \cite{Dob} and Lanford and Ruelle \cite{Lan} and nowadays is called the \emph{DLR equations}. In this chapter, we will introduce the basic formalism related to discrete spin systems. Further details can be found in \cite{Geo, Vel}.

\section{The DLR equations and the Ising model}

    Consider a system defined on the lattice $\Z^d$ with a discrete set of possible values for its spins. The distance between two points $x,y\in\Z^d$ will be given by the $\ell^1$-norm and we will write it as $|x-y|$. Given this, the diameter of a set $\Lambda \subset \Z^d$ is defined as being 
    \[
        \diam(\Lambda) = \sup_{x,y\in \Lambda}|x-y|.
    \]
    We will also write $|\Lambda|$ for the number of points of $\Lambda$. We will write $\Lambda \Subset \Z^d$ to denote the fact that $\Lambda$ is a subset of the lattice and is finite, and the set of all finite subsets of $\Z^d$ is $\mathcal{F}(\Z^d)$. For each subset $\Lambda \Subset \Z^d$, we call $\Lambda^{(0)}$ the unique unbounded connected component of $\Lambda^c$ and use $\Lambda^{(k)}$ for connected components of $\Lambda$, $k\geq 1$. Then, we define the \emph{volume} by $V(\Lambda)= (\Lambda^{(0)})^c$. Note that the set $V(\Lambda)$ is a union of simply connected sets that contains $\Lambda$, and it is the smallest one in the partial order given by the inclusion. The \emph{interior} is defined by $\I(\Lambda)= \Lambda^c \setminus \Lambda^{(0)}$. 
    
     For each $\Lambda \subset \Z^d$, local configuration spaces $\Omega_\Lambda \coloneqq E^{\Lambda}$,  where $E$ is a discrete set, with the product topology. Let $\Omega = E^{\Z^d}$ the {\it configuration space}. This choice of topology makes the configuration space $\Omega$ a compact metrizable space. We can define for each $\Lambda \subset \Z^d$, the projections $\pi_\Lambda:\Omega \rightarrow \Omega_\Lambda$, defined as, given $\sigma\in \Omega$, being the unique configuration in $\Omega_\Lambda$ such that the values coincide with the values of $\sigma$ inside $\Lambda$. We will write this as $\pi_\Lambda(\sigma)\coloneqq \sigma_\Lambda$. Let also $C(\Omega_\Lambda)$ be the set of all continuous functions $f:\Omega_\Lambda\rightarrow \mathbb{C}$. An important definition will be that of an \emph{interaction}.
    \begin{definition}
        An interaction $\phi$ is a function $\phi:\mathcal{F}(\Z^d)\rightarrow C(\Omega)$ that satisfies
        \begin{itemize}
            \item [(i)] For each $\Lambda \Subset \Z^d$, the function $\phi(\Lambda)\coloneqq \phi_\Lambda$ is real-valued.
            \item [(ii)] The function is local, i.e., for each $\sigma,\omega \in \Omega$ it holds
            \[
            \text{ if } \omega_{\Lambda} = \sigma_{\Lambda} \Rightarrow \phi_\Lambda(\sigma) = \phi_\Lambda(\omega).
            \]
            Moreover, an interaction is said to be \textbf{short-range} if there is a $R>0$ such that whenever $X$ has $\diam(X)>R$ then $\phi_X = 0$. Otherwise, the interaction will be called \emph{long-range}. 
        \end{itemize}
    \end{definition}
     Given an interaction $\phi$, a configuration $\omega \in \Omega$, and $\Lambda \in \Subset \Z^d$, the local Hamiltonian with is a function $H_\Lambda^\omega(\phi):\Omega_\Lambda \rightarrow \mathbb{R}$ is
\[
H_\Lambda^\omega(\phi)(\sigma_\Lambda) = \sum_{X\cap \Lambda\neq \emptyset} \phi_X(\sigma_\Lambda\omega_{\Lambda^c}),
\]
where $\sigma_\Lambda\omega_{\Lambda^c}$ is the \emph{spin concatenation} of two configurations defined by
\[
(\sigma_\Lambda\omega_{\Lambda^c})_x = \begin{cases}
    \sigma_x & x \in \Lambda \\
    \omega_x & x \in \Lambda^c.
\end{cases}
\]
The configuration $\omega$ is referred to as the \emph{boundary condition}. We will also write $\Omega_\Lambda^\omega$ to denote the subset of $\Omega$ where every configuration is of the form $\sigma_\Lambda\omega_{\Lambda^c}$ for some $\sigma_\Lambda \in \Omega_\Lambda$. The Hamiltonian $H_\Lambda^\omega(\phi)$ is a local function that depends only on the spins located at the set $\Lambda$, with the difference that the system can be influenced by the configuration $\omega_{\Lambda^c}$ on the outside. 
An important class of interactions with two-body interactions\footnote{$|X|>2$ implies $\phi_X=0$.} is when the state space $E=\{-1,+1\}$ is the \emph{Ising models}. They can be defined directly by their Hamiltonians $H^\omega_{\Lambda,\textbf{h}}:\Omega^{\omega}_\Lambda \rightarrow \mathbb{R}$, for each $\Lambda \in \mathcal{F}(\Z^d)$ and $\omega \in \Omega$, by
	\begin{equation}\label{Isingsys}
		H_{\Lambda,\textbf{h}}^\omega(\sigma_\Lambda) = -\sum_{\{x,y\} \subset \Lambda} J_{xy} \sigma_x\sigma_y - \sum_{\substack{x \in \Lambda \\ y \in \Lambda^c}}J_{xy} \sigma_x \omega_y - \sum_{x \in \Lambda} h_x \sigma_x.
	\end{equation}
	Important examples are the nearest neighbor Ising model and the long-range Ising model, defined through its coupling constants respectively
	\begin{equation}\label{long}
J_{xy} = \begin{cases}
			J &\text{ if } |x -y|=1 \\
			0 & \text{otherwise}
		\end{cases}
	\quad \text{ and } \quad		 
  J_{xy} = \begin{cases}
			\frac{J}{|x-y|^\alpha} &\text{ if } x\neq y \\
			0 & \text{otherwise}.
		\end{cases}
	\end{equation}
 For the long-range Ising model, we assume that $\alpha>d$(see Remark \ref{remark_abssum}). The case $J>0$ is called the ferromagnetic Ising model, and $J<0$ is the antiferromagnetic Ising model. In \cite{Bis1}, Bissacot and Cioletti introduced a modification of the nearest neighbor Ising model by a spatially dependent magnetic field,
	\begin{equation}\label{magfield}
		h_x = \begin{cases}
			\frac{h^*}{|x|^\delta}& x \neq 0 \\
			h^*& x=0,
		\end{cases}
	\end{equation}
	where $h^*,\delta$ are positive constants. This model is not translation invariant and yet presents very interesting properties. The first of them is that the pressure of the model is equal to the pressure of the translation invariant model, i.e.,
	\[
	\lim_{n\rightarrow \infty}\frac{\log Z^\omega_{\beta,\Lambda_n}}{|\Lambda_n|} = p_{\beta},
	\]
	where $p_\beta$ is the pressure of the model with $h^*=0$ and $\Lambda_n$ is a sequence of finite subsets invading the lattice $\Z^d$. This result is a strong argument in favor of the belief that the Ising model with a decaying magnetic field should present the same thermodynamic behavior as the Ising model with a zero magnetic field. However, Bissacot, Cassandro, Cioletti, and Presutti in \cite{Bis2} showed it to be false, for surface terms are relevant in the analysis of the Gibbs states. These states are probability measures on the configuration space $\Omega$ that captures the properties of the state of the macroscopic system. Let us proceed with their definition. For each continuous function $f\in C(\Omega)$ we define
\be\label{finiteclassical}
\mu_{\beta,\phi,\Lambda}^\omega(f) = \frac{1}{Z_{\beta,\phi,\Lambda}^{\omega}}\sum_{\sigma_\Lambda \in \Omega_\Lambda}f(\sigma_\Lambda\omega_{\Lambda^c})e^{-\beta H_\Lambda^\omega(\phi)(\sigma_\Lambda)},
\ee
where the normalization is given by the partition function 
\[
Z_{\beta,\phi,\Lambda}^\omega = \sum_{\sigma_\Lambda \in \Omega_\Lambda} e^{-\beta H_\Lambda^\omega(\phi)(\sigma_\Lambda)}.
\]
It is not hard to see that Equation \eqref{finiteclassical} defines positive linear functionals in $C(\Omega)$ and, due to the Riesz-Markov theorem, are in one-to-one correspondence with probability measures in $\Omega$, with the Borel $\sigma$-algebra. For this reason, we will refer to $\mu_{\beta,\phi,\Lambda}^\omega$ as \emph{finite volume Gibbs measure}. Since we can have any $\Lambda \Subset \Z^d$ and $\omega \in \Omega$ in Equation \eqref{finiteclassical}, we actually have a family of finite volume Gibbs measures that satisfies the following properties (see \cite{Geo, Vel})
\begin{itemize}
    \item[(i)] \textbf{(Consistency Condition)} For any $\Delta \subset \Lambda$ and continuous function $f\in C(\Omega)$ we have
    \[
    \mu_{\beta,\phi,\Lambda}^{\omega}(\mu_{\beta,\phi,\Delta}^{(\cdot)}(f))=\mu_{\beta,\phi,\Lambda}^{\omega}(f).
    \]
    \item[(ii)] \textbf{(Proper)} For any continuous functions $f_1,f_2\in C(\Omega)$ such that $f_2 \in C(\Omega_{\Lambda^c})$ then
    \[
    \mu_{\beta,\phi,\Lambda}^\omega(f_1 f_2) = \mu_{\beta,\phi,\Lambda}^\omega(f_1) f_2(\omega).
    \]
    \item[(iii)] \textbf{(Feller Continuity)} For each $\Lambda \Subset \Z^d$ and continuous function $f\in C(\Omega)$, the function $\omega \mapsto \mu_{\beta,\phi,\Lambda}^\omega(f)$ is continuous.
\end{itemize}
\vspace{0.1cm}

The family $\{\mu_{\beta,\phi,\Lambda}^{(\cdot)}\}_{\Lambda\in\mathcal{F}(\Z^d)}$ of all finite volume Gibbs measures satisfying the conditions above is called a \emph{Gibbsian specification}. 
\begin{remark}\label{remark_abssum}
  The definition of finite volume Gibbs states and the Gibbsian specification theory holds with much more generality. Indeed, one just needs the interaction $\phi$ to satisfy
    \[
    \|\phi\| \coloneqq \sup_{x\in \Z^d} \sum_{X \ni x}\|\phi_X\| <\infty,
    \]
    where $\|\phi_X\|$ is the supremum norm on $C(\Omega)$. The interactions satisfying this condition are called \textbf{absolutely summable}. Notice that for the long-range Ising model, the finiteness of the norm $\|\phi\|$ imposes the restriction $\alpha>d$.
\end{remark}
\begin{remark}
    A theory of general specifications, defined only using measurable spaces, is also available and is presented, for instance, in \cite{Geo, LeNy}, and not everything is Gibbsian \cite{vanEnter1993}. In these more general situations, such as in the one treated on \cite{Geo}, one does not require the Feller continuity property to hold on the definition of a Gibbsian specification. We do it here for two reasons. The first one is that, for finite state spaces, the Feller property equivalent to the more general notion of \emph{quasilocality} that always holds for absolutely summable interactions (see \cite{vanEnter1993} and lemma 6.28 in \cite{Vel} for a proof). The second reason is that in the C*-algebra setting, which will be developed later in Chapters \ref{ch:quantstatmech} and \ref{ch:groupoids}, it is natural to consider continuous functions, so we chose to add it to the definition of Gibbsian specification. 
\end{remark}
\begin{definition}
A state $\mu$ on $C(\Omega)$ is said to satisfy the \textbf{DLR equations} if, for any continuous function $f\in C(\Omega)$ if, for any $\Lambda \subset \Z^d$ finite we have
\[
\mu(f) = \mu(\mu_{\beta,\phi,\Lambda}^{(\cdot)}(f)).
\]
The set of all the $DLR$ states $\mu$ for an interaction $\phi$ is denoted by $\mathscr{G}_\beta^{DLR}(\phi)$.
\end{definition}
A Gibbsian specification can be understood as a way to prescribe conditional expectations and the DLR equations are just the usual invariance property that conditional expectations satisfy by construction. The DLR theory is much more general and works for systems with not only discrete spins but also, for example, $\mathbb{S}^1,\bbn$ or $\bbr$, or even when the lattice $\Z^d$ is replaced by more general graphs or $\bbr^d$ (For more details see \cite{Geo, LeNy, Jan, Vel}). 

Boundary conditions can also be used to define equilibrium states using the notion of the \emph{thermodynamic limit}. We say that a sequence of boxes $\Lambda_n$ is invading $\Z^d$ if for every $\Delta \Subset \Z^d$ there is $N\coloneqq N(\Delta)$ such that for any $n>N$ we have $\Delta \subset \Lambda_n$. For a sequence $\Lambda_n$ invading $\Z^d$ and $\omega$ a boundary condition, we can define 
\[
\mu_{\beta,\phi}^\omega = w^*-\lim_{n\rightarrow \infty} \mu_{\beta,\phi,\Lambda_n}^\omega.
\]
whenever the $w^*$-limit above exists, i.e., the limit $\mu_{\beta,\phi,\Lambda_n}^\omega(f)$ exists and converge to $\mu_{\beta,\phi}^\omega(f)$ for every continuous function $f$. The weak* limit will exist at least for one sequence since the space of states on $C(\Omega)$ is compact in the weak* topology, by the Banach-Alaoglu theorem. The set of \emph{Gibbs measures} is defined as 
\[
\mathscr{G}_{\beta}(\phi) = \overline{\text{co}}\{\mu_{\beta,\phi}^\omega: \mu_{\beta,\phi}^\omega = w^*-\lim_{n\rightarrow \infty} \mu_{\beta,\phi,\Lambda_n}^\omega\},
\]

where $\overline{\text{co}}$ is the closed convex hull. We say that the model has \emph{uniqueness at $\beta$} if $|\mathscr{G}_\beta|= 1$ and it undergoes to a \emph{phase transition at $\beta$} if $|\mathscr{G}_\beta|> 1$. An important result is that the set of all Gibbs measures is equivalent to the DLR equations. 

\begin{proposition}\label{classical_gibbs=dlr}
    It holds that $\mathscr{G}_\beta(\phi) = \mathscr{G}_\beta^{DLR}(\phi)$
\end{proposition}
\begin{proof}
    We follow Theorem III.2.6 of \cite{Simon}. Let $X$ be a Banach space and $\{C_n\}_{n\geq 1}$ be a sequence of sets of continuous linear functionals $\mu:X\rightarrow \mathbb{C}$ contained in some weak* compact set $B$. Define
    \[
    L(\{C_n\}_{n\geq 1}) = \{\mu: \exists \mu_{n_k} \in C_{n_k}, k\geq 1 \;\; \text{s.t. }\;\; \mu = w^*-\lim_{k\geq\infty}\mu_{n_k}\},
    \]
    the set of all weak* limit points of the sequence $\{C_n\}_{n\geq 1}$. We claim that 
    \be\label{gibbs=dlr_classical_eq1}
    L(\{\overline{\text{co}}(C_n)\}_{n\geq 1}) \subset \overline{\text{co}}(L(\{C_n\}_{n\geq 1})).
    \ee
     Suppose that there is a linear functional  $$\mu \in  L(\{\overline{\text{co}}(C_n)\}_{n\geq 1}) \setminus \overline{\text{co}}(L(\{C_n\}_{n\geq 1})).$$ Since the sequence $\{C_n\}_{n\geq 1}$ is contained in a weak* compact set, the set $\overline{\text{co}}(L(\{C_n\}_{n\geq 1}))$ is weak-* compact also, by the Krein-Smulian theorem (see Theorem 13.4 in \cite{Conway2007}). Hence we can apply the geometric form of the Hahn-Banach theorem and the fact that all the continuous linear functionals in the weak*-topology are the evaluation functionals \footnote{See Proposition 3.14 in \cite{Brezis2011}.}, there must exist $x \in X$ and constants $a$ and $b$ such that $$\mathrm{Re}(\mu(x)) \leq a < b \leq \mathrm{Re}(\nu(x)),$$ for all $\nu \in \overline{\text{co}}(L(\{C_n\}_{n\geq 1}))$. Let $b_n = \inf_{\nu \in C_n}\mathrm{Re}(\nu(x))$. The real numbers $b_n$ are finite since $C_n$ is contained in a weak*-compact set and $\nu \mapsto \mathrm{Re}(\nu(x))$ is a continuous function. Since the linear functional $\mu \in L(\{\overline{\mathrm{co}}(C_n)\}_{n\geq 1})$, there is a sequence $\mu_{n_k} \in \overline{\mathrm{co}}(C_{n_k})$ converging in the weak* topology to $\mu$. For each $n_k$, there is a sequence of positive real numbers $\lambda_{k,m}$, $m=1,\dots,\ell_k$, whose sum is unity and linear functionals $\nu_{k,m} \in C_{n_k}$ such that
    \[
        w^*-\lim \sum_{m=1}^{\ell_k}\nu_{k,m}=\mu_{n_k}.
    \]
    Since each $\nu_{k,m}$ satisfies $\mathrm{Re}(\nu_{k,m}(x))\geq b_{n_k}$ by the definition of infimum and the real part is an $\mathbb{R}$-linear functional, we get that $\mathrm{Re}(\mu_{n_k}(x))\geq b_{n_k}$, therefore $\mathrm{Re}(\mu(x))\geq \limsup b_n \geq b$. But we had that $\mathrm{Re}(\mu(x))\leq a < b$, yielding us a contradiction. Returning to the Gibbs measures, notice that the consistency condition 
    \[
    \mu_{\beta,\phi,\Lambda}^{\omega}(\mu_{\beta,\phi,\Delta}^{(\cdot)}(f))=\mu_{\beta,\phi,\Lambda}^{\omega}(f),  
    \]
    imply that $\mathscr{G}_{\beta}(\phi)\subset\mathscr{G}_{\beta,DLR}(\phi)$.
    Consider $\Lambda_n$ a sequence of finite sets invading the lattice, and define the sets
    \[
    C_n \coloneqq \{\mu_{\beta,\Lambda_n}^\omega: \omega \in \Omega\}.
    \]
    These are contained in a weak* compact set, namely, the unit ball. The DLR equations imply that each $\mu \in \mathscr{G}_{\beta,DLR}(\phi)$ is in $\overline{\mathrm{co}}(C_n)$, for every $n$. Then by \eqref{gibbs=dlr_classical_eq1} we get that $\mathscr{G}_{\beta,DLR}(\phi) \subset \mathscr{G}_{\beta}(\phi)$. 
\end{proof}

Important Gibbs measures for ferromagnetic Ising models are the ones with $+$ (resp. $-$) boundary condition, i.e., $\omega$ satisfies $\omega_x=+1$ (resp. $-1$) for every $x \in \Z^d$. The weak* limits of $\mu_{\beta,\Lambda_n}^+$ and $\mu_{\beta,\Lambda_n}^-$ can be shown to exist using the FKG inequality (see Chapter 3 of \cite{Vel} for details). For the Gibbs measures for the system with a decaying field we add a subscript $\textbf{h}$ to the measure, so for instance $\mu_{\beta,\textbf{h},\Lambda}^+$ is the finite volume Gibbs measure with plus boundary condition and decaying field. Notice that the addition of a decaying field makes the system lose spin-flip symmetry, so we will explain what is the strategy to show phase transition. It consists in showing that $\mu^+_{\beta,\textbf{h}}\neq \mu^-_{\beta,\textbf{h}}$, so we just need to find a special observable $f\in C(\Omega)$ such that $\mu^+_{\beta,\textbf{h}}(f)\neq \mu^-_{\beta,\textbf{h}}(f)$. Again, since we know that the finite volume Gibbs measures with the plus and minus boundary conditions converge, it is only necessary to find an observable $f$ and show that the sequences $\mu_{\beta,\widehat{h},\Lambda}^+(f)$ and $\mu_{\beta,\widehat{h},\Lambda}^-(f)$ converge to different values. For ferromagnetic systems, the most natural candidate is the observable $f(\sigma) = \sigma_0$, which we simply denote by $\sigma_0$. First, by the FKG inequality, we know that
\[
\mu_{\beta,\textbf{h},\Lambda}^+(\sigma_0)\geq\mu_{\beta,\textbf{h},\Lambda}^+(\sigma_0).
\]
The inequality above will hold also for the limit, therefore if the measure with zero field has $\mu_{\beta}^+(\sigma_0)>0$, so it will be the sign of the magnetization for the system with a decaying field. The same argument does not hold for the minus boundary condition, so our strategy will be to show that $\mu_{\beta,\textbf{h}}^-(\sigma_0)$ will still be negative. This will follow once we show that $$\mu_{\beta,\textbf{h},\Lambda}^-(\sigma_0=+1)<1/2,$$ since $\mu_{\beta,\textbf{h}}^-(\sigma_0) = 2\mu_{\beta,\textbf{h}}(\sigma_0=+1)-1$. In order to show this, we will follow a Peierls-type argument using objects called \emph{contours}, that we elaborate on in the next section.

	\section{Contours}
	
	 Contours are geometric objects arising from the deviations of a given configuration from the most likely configurations to occur, the ground states. Many extensions of the Peierls argument are available to other systems, (see for example \cite{Datta1, Park1, Park2, Pirogov, Zaradnik}; this is by no means an exhaustive list). One important extension of the Peierls argument was made by S. Pirogov and Y. Sinai in \cite{Pirogov}, and later improved by Zarahdnik \cite{Zaradnik}. Their work is known as \emph{Pirogov-Sinai} theory and can be applied to systems where the interactions are always short-range and have no symmetry. Pirogov-Sinai theory not only has a more robust version of the Peierls argument but also can be used to show the stability of the phase diagram for the models where it applies.
	In the Pirogov-Sinai theory, the contours are deviations from the ground states of the system under consideration, and in this section, we will follow the definition of contours, with subtle changes, usually encountered in presentations of Pirogov-Sinai theory, as in \cite{Vel, Park1, Park2}. For $s \in [1,\infty)$ and $x\in \Z^d$, we define
	\[
	B_s(x)=\{ y\in \Z^d: |x-y|\leq  s \}
	\]
	be the ball in the $\ell_1$-norm centered in $x$ with radius $s$. 
	
	\begin{definition}\label{def1}
		Given $\sigma \in \Omega$, a point $x \in \Z^d$ is called \emph{+ (or - resp.)} correct if $\sigma_y = +1$, (or $-1$, respectively) for all points $y$ in $B_s(x)$. The \emph{boundary} of $\sigma$, denoted by $\partial \sigma$, is defined as the set of all points in $\Z^d$ that are neither $+$ nor $-$ correct.
	\end{definition}
	
	For what follows, we will always consider $s=1$. The boundary can be an infinite subset of $\Z^d$. Indeed, if we take $\sigma \in \Omega$ defined by
	\[
	\sigma_x = \begin{cases}
		+1& |x|\text{ is even} \\
		-1& \text{otherwise},
	\end{cases}
	\]
	and $s=1$, it is easy to see that every point in $\Z^d$ is incorrect with respect to $\sigma$,and thus $\partial\sigma = \Z^d$. To avoid this situation, we will deal only with configurations such that $\partial \sigma$ is a finite set of $\Z^d$. This happens, for example, when for a configuration $\sigma$ there exists $\Lambda \Subset \Z^d$ such that $\sigma_{\Lambda^c}= +1$ (or $-1$).
	
	\begin{figure}[ht]
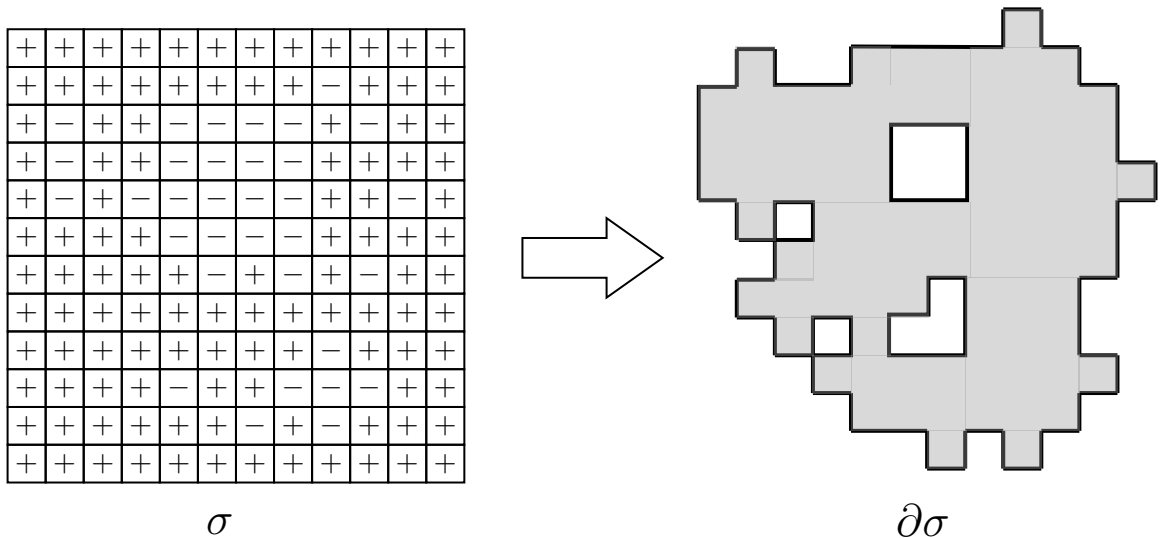

		\centering
		
		\tikzset{every picture/.style={line width=0.75pt}} %set default line width to 0.75pt        
		
		% [inline block 0: 1 envs, 35157 chars -> data_tex | \begin{tikzpicture}[x=0.75pt,y=0.75pt,yscale=-1,xscale=1] 			%uncomment if require: \path (0,300); %set diagram left sta...]

		\caption{A configuration $\sigma$ with $+$-boundary condition and its respective $\partial\sigma$ set.}
		
	\end{figure}

    Fix a configuration $\sigma \in \Omega$ with boundary $\partial\sigma$ finite. We can decompose $\partial \sigma$ into a finite number of maximally connected components\footnote{a set $\Lambda \Subset \Z^d$ is maximally connected if is not included in any other connected finite set.}. Thus, we can write 
	\[
	\partial \sigma = \bigcup_{k=1}^n \overline{\gamma}_k.
	\]
We will denote by $\Gamma(\sigma)=\{\overline{\gamma}_1,\dots, \overline{\gamma}_n\}$.
\begin{definition}
		For $\Lambda \subset \Z^d$, we define the \emph{inner boundary} $\din \Lambda = \{x \in \Lambda : \inf_{y \in \Lambda^c} |x-y| =1\}$,
		and the \emph{edge boundary} as $\partial\Lambda = \{\{x,y\} \subset \Z^d: |x-y|=1, x \in \Lambda, y \in \Lambda^c\}$.
	\end{definition}
	\begin{remark}\label{isoperimetric}
		The usual isoperimetric inequality says $2d|\Lambda|^{1- \frac{1}{d}} \leq |\partial \Lambda|$\footnote{the reader can check Chapter 6 of \cite{Peres}.}. The inner boundary and the edge boundary are related by  $|\din \Lambda| \leq |\partial \Lambda| \leq 2d |\din \Lambda|$, yielding us the inequality $|\Lambda|^{1-\frac{1}{d}} \leq |\din\Lambda|$, which we will use in the rest of the chapter. 
	\end{remark}

	Another important concept for our analysis of phase transition is the interior of a contour. The following sets will be useful 
	\[
	\I_\pm(\gamma) = \hspace{-1cm}\bigcup_{\substack{k \geq 1, \\ \lab_{\overline{\gamma}}(\I(\Sp(\gamma))^{(k)})=\pm 1}}\hspace{-1cm}\I(\Sp(\gamma))^{(k)} , \;\;\;
	\I(\gamma) = \I_+(\gamma) \cup \I_-(\gamma), \;\;\;
	V(\gamma) = \Sp(\gamma) \cup \I(\gamma),
	\]
	where $\I(\Sp(\gamma))^{(k)}$ are the connected components of $\I(\Sp(\gamma))$. 	The \textit{label} of $\overline{\gamma}$ is defined as the function $\lab_{\overline{\gamma}} :\{(\overline{\gamma})^{(0)}, \I(\overline{\gamma})^{(1)}\dots, \I(\overline{\gamma})^{(n)}\} \rightarrow \{-1,+1\}$ defined as: $\lab_{\overline{\gamma}}(\I(\overline{\gamma})^{(k)})$ is the sign of the configuration $\sigma$ in $\din V(\I(\overline{\gamma})^{(k)})$, for $k\geq 1$, and $\lab_{\overline{\gamma}}((\overline{\gamma})^{(0)})$ is the sign of $\sigma$ in $\din V(\overline{\gamma})$.
	
	\begin{definition}\label{definition_short_range}
		Given a configuration $\sigma$ with finite boundary, its \textbf{contours} $\gamma$ are pairs $(\overline{\gamma},\lab_{\overline{\gamma}})$,  where $\overline{\gamma} \in \Gamma(\sigma)$ and $\lab_{\overline{\gamma}}$ is the label function defined previously. The \textbf{support of the contour}  $\gamma$ is defined as $\Sp(\gamma)\coloneqq \overline{\gamma}$ and its \emph{size} is given by $|\gamma| \coloneqq |\Sp(\gamma)|$.
	\end{definition}
	
    A contour is called a \emph{$+$- contour} (resp. \emph{- contour}) if the label of $\lab_{\overline{\gamma}}(\overline{\gamma}^{(0)})=+1$ (respectively $-1$).
	
	\begin{figure}[ht]
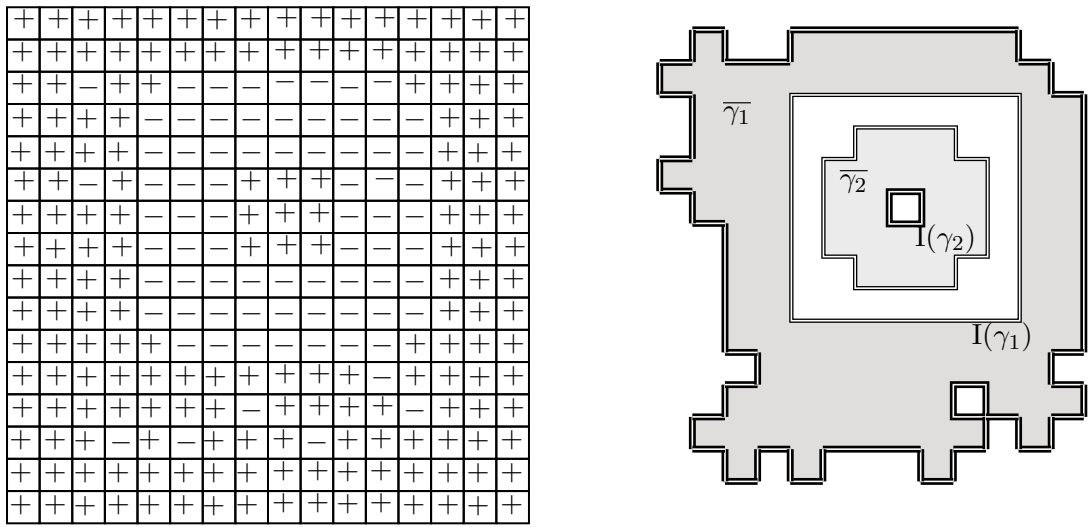

		
		\centering 
		
		\tikzset{every picture/.style={line width=0.75pt}} %set default line width to 0.75pt        
		
		% [inline block 1: 1 envs, 61007 chars -> data_tex | \begin{tikzpicture}[x=0.75pt,y=0.75pt,yscale=-1,xscale=1] 			%uncomment if require: \path (0,349); %set diagram left sta...]

		\caption{A configuration and the contours associated with it. The darker line is used for the label $+$ and the other to the label $-$.}
		\label{incorrect}
	\end{figure}
	
	The contours in Definition \ref{definition_short_range} have more than only the geometrical information; they carry the labels of its interior and exterior. This is in high contrast with the usual contours found in the standard Peierls argument (see \cite{Vel}), mostly because there is \emph{no bijection} between the configurations and the contours. Given a family of contours $\gamma_1,...,\gamma_n$, they may not have a configuration associated with it. We say that a family of contours $\gamma_1, \gamma_2, ..., \gamma_n$ is \emph{compatible} if there exists $\sigma$ a configuration $\Gamma(\sigma)=\{\gamma_1,\dots,\gamma_n\}$. Compatibility is a global condition, meaning it can hold for the family $\gamma_1, \gamma_2,..., \gamma_n$ does not imply compatibility holds for all of its subfamilies. We say that a contour $\gamma$ in a family $\Gamma$ is \emph{external} if its external connected components are not contained in any other $V(\gamma')$, for $\gamma' \in \Gamma\setminus\{\gamma\}$. For each $\Lambda \Subset \Z^d$, let us define the set of all external compatible families of contours $\Gamma$ with label $\pm$ contained in $\Lambda$ by
	\[
	\mathcal{E}^\pm_\Lambda \coloneqq\{\Gamma= \{\gamma_1, \ldots, \gamma_n\}: \Gamma \text{ is compatible,} \gamma_i \text{ is external}, \lab(\gamma_i)=\pm1, V(\Gamma) \subset \Lambda\},
	\]
	where $V(\Gamma)=\bigcup_{1 \leq i\leq n}V(\gamma_i)$. When we write $\gamma \in \mathcal{E}^\pm_\Lambda$ we mean $\{\gamma\} \in \mathcal{E}^\pm_\Lambda$. We finish this section with a Lemma stating an upper bound to the set of all contours with the same size.

\begin{lemma}\label{pirogov-sinai-entropy}
Let $m\geq 1$, $d\ge 2$, and $\Lambda \Subset \mathbb{Z}^d$. Consider the set $\mathcal{C}_0(m)$ given by
		\[
		\mathcal{C}_0(m) = \{\overline{\gamma} \Subset \Z^d:\exists\gamma \in \mathcal{E}_\Lambda^- \text{ s.t. } \Sp(\gamma)=\overline{\gamma} , 0 \in V(\gamma), |\gamma|=m\}.
		\]
		There exists an $c'\coloneqq c'(d)>0$ such that 
		\[
		|\mathcal{C}_0(m)| \leq e^{c' m}.
		\] 
\end{lemma}

\begin{proof}
The proof is in Lemma 2.7  \cite{Sinaibook}. For a given contour $\gamma$, define the set $\mathcal{C}_\gamma$ by
		\be
		\mathcal{C}_\gamma \coloneqq \{\Sp(\gamma') \in \mathcal{C}_0(m): \exists \; x \in \Z^d \text{ s.t. } \Sp(\gamma') = \Sp(\gamma)+x\}.
		\ee
		Thus, we can partition the set $\mathcal{C}_0(m)$ into
		\[
		\mathcal{C}_0(m) = \bigcup_{\substack{0 \in \Sp(\gamma) \\ |\gamma|=m}}C_\gamma.
		\]	
		Given a contour $\gamma \in \mathcal{E}_\Lambda$, there are at most $|V(\gamma)|$ possibilities for the position of the point $0$.
		Then,
		\be\label{eq_cont_1}
		|\mathcal{C}_0(m)| \leq \sum_{\substack{0 \in \Sp(\gamma) \\ |\gamma|=m}}|\mathcal{C}_\gamma| \leq \sum_{\substack{0 \in \Sp(\gamma) \\ |\gamma|=m}}|V(\gamma)|.
		\ee
		Using the isoperimetric inequality and the fact $\din V(\gamma) \subset \Sp(\gamma)$ we obtain,
		\be\label{eq_cont_2}
		\sum_{\substack{0 \in \Sp(\gamma) \\ |\gamma|=m}}|V(\gamma)| \leq m^{1+ \frac{1}{d-1}}|\{\overline{\gamma} \Subset \Z^d:\exists\gamma \in \mathcal{E}_\Lambda^- \text{ s.t. } \Sp(\gamma)=\overline{\gamma}, 0 \in \Sp(\gamma), |\gamma|=m \}|.
		\ee
  The set above is less than the number of connected subsets of $\Z^d$ that contain 0. For a given connected subset $\Lambda \Subset \Lambda$ that contains zero, let $L_k(\Lambda) = |\Lambda \cap S_k(0)|$, where $S_k(0) = \{x\in \Z^d:|x|=k\}$. Since the size of $\Lambda$ is $m$, we have
  \[
  \sum_{k=1}^m L_k(\Lambda)=m-1.
  \]
  Then, summing over all solutions of the equation above, we have
  \begin{equation*}
  \begin{split}
  \{\Lambda\Subset\Z^d:\Lambda \text{ is connected}, &|\Lambda|=m, 0\in \Lambda\}\leq \\ &\sum_{\sum_{k=1}^m L_k=m-1}\{\Lambda\Subset\Z^d:\Lambda \text{ is connected}, L_k(\Lambda)=L_k, k=1,\dots,m, 0\in \Lambda\}.
  \end{split}
  \end{equation*}
  Since the set $\Lambda$ is connected, the number, given $L_k$ and a set $\Lambda\cap S_k(0)$, the number of possibles sets $\Lambda\cap S_{k+1}(0)$ is bounded by $d^{L_k}$, for $k\geq 2$. For $k=1$, since we are fixing the number $0$, there will be at most $\binom{2d}{L_1}$ ways of choosing the set $\Lambda\cap S_1(0)$. Hence,
  \[
  \{\Lambda\Subset\Z^d:\Lambda \text{ is connected}, L_k(\Lambda)=L_k, k=1,\dots,m, 0\in \Lambda\}\leq (2d)^{L_1}\prod_{k=2}^{m} d^{L_k} = 2^{L_1}d^{m-1-L_1}.
  \]
  Since the number of solutions for $\sum_{k=1}^m L_k=m-1$ is less than $2^{m-1}$, we get $|\mathcal{C}_0(m)|\leq e^{c'(d)m}$ where $c'(d) = \max\{\log(2),\log(d)\}+1 + (d-1)^{-d+1}$.
\end{proof}

\section{Phase Transition}
 
 For $x \in \Z^d$, let  $\Theta_x : \Omega \rightarrow \mathbb{R}$ be defined as
	\[
	\Theta_x(\sigma) = \prod_{\substack{y \in \Z^d \\ |x-y| \leq 1}}\mathbbm{1}_{\{\sigma_y = +1\}}- \prod_{\substack{y \in \Z^d \\ |x-y| \leq 1}}\mathbbm{1}_{\{\sigma_y = -1\}}. 
	\]
	
	The function $\Theta_x$ yields $+1$ if the point $x$ is $+$ correct, $-1$ if the point is $-$ correct, and $0$ if $x$ is incorrect for $\sigma$. By the definition of contours, given a finite $\Lambda \Subset \Z^d$ and a configuration $\sigma \in \Omega^-_\Lambda$ it may happen that a contour $\gamma$ associated with it has volume outside $\Lambda$. To avoid this problem consider the probability measure 
	\be
	\nu^-_{\beta,\textbf{h},\Lambda}(A) \coloneqq \mu^-_{\beta,\textbf{h},\Lambda}(A|\Theta_x =  - 1, x \in \din \Lambda),
	\ee
	for every measurable set $A$. The spatial Markov property\footnote{Equation (3.26) in Chapter 3 of \cite{Vel}.} implies that the probability measures $\nu^-_{\beta,\textbf{h},\Lambda}$ are the finite volume Gibbs measure in a subset of $\Lambda$ and to work with them is advantageous since we can study important quantities in terms of contours. We also will assume that $\Lambda$ is simply connected, in this way guaranteeing that for every contour $\gamma$ whose support is inside $\Lambda$ we will automatically get  $V(\gamma) \subset \Lambda$. Fixed $\Lambda \Subset \Z^d$, for each $\Lambda' \subset \Lambda$, the restricted partition functions is
	\be
	Z_{\beta,\textbf{h}}^-(\Lambda') = \sum_{\substack{\Gamma \in \mathcal{E}_\Lambda^-\\ V(\Gamma)\subset \Lambda'}}\sum_{\sigma \in \Omega(\Gamma)}e^{-\beta H_{\Lambda, \textbf{h}}^-(\sigma)},
	\ee
	
where the space of configurations compatible with a given family $\Gamma \in \mathcal{E}^-_\Lambda$ is $$\Omega(\Gamma)\coloneqq \{\sigma \in \Omega^-_\Lambda: \Gamma \subset \Gamma(\sigma)\}.$$ The following two Lemmas are necessary to control the term corresponding to the magnetic field. The first one is just a combinatorial lemma bounding the number of integer points in the $\ell^1$-sphere and the second one will give us control over the sum of the truncated magnetic field in finite regions of $\Z^d$.
	
	\begin{lemma}\label{comblema}
		Let $s_d(n)$ be the cardinality of integer points in the $\ell^1$ sphere, centered at the origin and with radius $n$. Then, for any $n\geq d$, we have,
		\[
		s_d(n) = \sum_{k=0}^{d-1}2^{d-k}\binom{d}{k}\binom{n-1}{d-k-1}.
		\]
		If $n<d$, the sum above starts in $k=d-n$. 
	\end{lemma}
	\begin{proof}
		We need to count the number of integer solutions to the equation
		\[
		|x_1|+|x_2|+...+|x_d| = n.
		\]
		Suppose, first, that $n \geq d$ and assume that $|x_k| \neq 0$, for $k=1,2,...,d$. Then, the number of solutions is $\binom{n-1}{d-1}$. Since we are summing the absolute value, we need to count the parity, yielding us $2^d$ different solutions. Assume that there is $k$ such that $x_k=0$. Then, this is the same as counting the number of solutions to the equation
		\[
		|x_1|+|x_2|+...+|x_{k-1}|+|x_{k+1}|+...+|x_d| = n.
		\]
		This number is $\binom{n-1}{d-2}$. The parity, again, gives us $2^{d-1}$ solutions. Since we have $d$ possibilities where $x_k =0$, we have $d= \binom{d}{1}$ possibilities. It is easy to see that for the general case with $k$ zero entries the number of solutions is
		\[
		2^{d-k}\binom{d}{k}\binom{n-1}{d-k-1}.
		\] 
		Summing over $k$ yields the desired result. Consider the case $n< d$. Then necessarily we have $|x_k|=0$ for at least $d-n$ indices. Nonetheless, the same reasoning as before applies to this case and we get the desired result. 
	\end{proof}
\begin{corollary}
For every $d\geq 2$ and $n\geq 1$ it holds that 
     \be\label{eq:desi_sd}
	c_d n^{d-1}\leq s_d(n) \leq e^{-1}(2e+1)^d n^{d-1}, 
	\ee
 where $c_d = 2(d-1)^{d-1}$.
\end{corollary}
\begin{proof}
We start by the upper bound. Previous Lemma together with the fact that $\binom{n-1}{d-k-1}\leq ((n-1)e/(d-k-1))^{d-k-1}$, for $k<d-1$ implies that 
\begin{equation*}
\begin{split}
s_d(n) &\leq \sum_{k=\min\{0,d-n\}}^{d-1}2^{d-k}\binom{d}{k}\frac{(n-1)^{d-1-k}}{(d-k-1)^{d-1-k}}e^{d-1-k} \\
&\leq n^{d-1}e^{-1} \sum_{k=0}^{d}(2e)^{d-k}\binom{d}{k} = e^{-1}(2e+1)^d n^{d-1}
\end{split}
\end{equation*}
 For the lower bound, we need to split it into two cases, depending on $n$. Regardless, one will need to use the Chu-Vandermonde identity for the binomial coefficients that we write here for completeness
 \[
 \binom{n+m}{r}=\sum_{k=0}^r\binom{m}{k}\binom{n}{r-k},
 \]
 for $m,n,r$ positive integers. For the first case, consider $n\geq d$. Then, we have 
 \[
 s_d(n) \geq 2\sum_{k=0}^{d-1}\binom{d}{k}\binom{n-1}{d-1-k} = 2\binom{n+d-1}{d-1} \geq \frac{2(n+d-1)^{d-1}}{(d-1)^{d-1}}\geq 2(d-1)^{d-1}n^{d-1}.
 \]
 For the case $1<n<d$, we need to make a change of variables and write $j = k-d+n$, thus
 \begin{equation*}
     \begin{split}
         s_d(n)&=\sum_{k=d-n}^{d-1}2^{d-k}\binom{d}{k}\binom{n-1}{d-1-k}=\sum_{j=0}^{n-1}2^{n-j}\binom{d}{d+j-n}\binom{n-1}{n-1-j} \\
         &=\sum_{j=0}^{n-1}2^{n-j}\binom{d}{n-j}\binom{n-1}{j} = \sum_{j=0}^{n-1}2^{n-j}\Bigg(\binom{d}{n-1-j}+\binom{d-1}{n-1-j}\Bigg)\binom{n}{j} \\
         &\geq 2\Bigg(\binom{n+d-1}{n-1}+\binom{n+d}{n-1}\Bigg) = 2\Bigg(\frac{n+2d+1}{d+1}\Bigg)\binom{n+d-1}{d} \geq 4 \frac{(n+d-1)^d}{d^d}.
         \end{split}
 \end{equation*}
 where in the inequality above we used $2^{n-j}\geq 2$ in the range of summation and the Chu-Vandermonde identity. To get the desired bound, just divide and multiply by $c_d n^{d-1}$ and use the fact that $1<n<d$ to get the desired bound. Note that $s_d(1) = 2d$, satisfying the inequalities in \eqref{eq:desi_sd}. 
 \end{proof}
 \pagebreak
	\begin{lemma}\label{lema1}
		Let $(h_x)_{x \in\Z^d}$ be the magnetic field as in (\ref{magfield}) with $\delta<d$. Then, there exists $c_5\coloneqq c_5(d,\delta,h^*)>0$ such that for any $\Lambda \Subset \Z^d$ 
		\begin{equation}\label{eq4}
			\sum_{x \in \Lambda} h_x \leq c_5 |\Lambda|^{1-\frac{\delta}{d}}.
		\end{equation} 
	\end{lemma}
        \begin{proof}
		Fix a set $\Lambda \Subset \Z^d$. In order to prove the inequality (\ref{eq4}), we will show that the sum in the l.h.s is always upper bounded by the sum of the magnetic field $h_x$ in some ball $B_R(0)$ with $R$ large enough. In fact, the magnetic field satisfies $h_x\geq h^*/R^\delta$ for $x \in B_R(0)$, and $h_x< h^*/R^\delta$ for $x \in \Lambda\setminus B_R(0)$. Then, we have
		\be\label{eq:diff_field}
		\sum_{x\in B_R(0)}h_x -\sum_{x \in \Lambda} h_x \geq  \frac{h^*}{R^\delta}\left( |B_R(0)|-|\Lambda|\right).
		\ee
  Using the lower bound in \eqref{eq:desi_sd}, we deduce that the ball satisfies, for every $R \in \mathbb{N}$,
		\begin{align*}
			|B_R(0)| = \sum_{n=0}^R s_d(n) &\geq c_d\sum_{n=1}^R n^{d-1}  \geq \frac{c_d}{d} R^d,
		\end{align*}
		where the last inequality follows by lower bounding the sum by the integral. Thus, if we choose $R \coloneqq \lceil (dc_d^{-1}|\Lambda| )^{\frac{1}{d}}\rceil$, the right-hand side of Inequality \eqref{eq:diff_field} is nonnegative. We can bound the sum of the magnetic field in a ball $B_R(0)$ in the following way,
		\be
        \begin{split}
		\ssum{x \in B_R(0)} h_x \leq h^*e^{-1}(2e+1)^d\sum_{n=1}^{R} n^{d-1-\delta} 
  \end{split}
        \ee
		The result is obtained taking $c_5 = h^*e^{-1}(2e+1)^d(d-\delta)^{-1}((dc_d^{-1})^{1/d}+2)^{d-\delta}$ after bounding the sum above by an integral.
		
	\end{proof}
	
We are ready to prove the main result of this chapter.

 \begin{proposition}
		For $\beta$ large enough, it holds that 
		\be
		\nu^-_{\beta,\widehat{\textbf{h}},\Lambda}(\sigma_0= + 1) < \frac{1}{2},
		\ee
		for every $\Lambda \Subset \Z^d$.
		\end{proposition}
	\begin{proof}
		
	Let $R>0$ and $(\widehat{h}_x)_{x \in\Z^d}$ be the truncated magnetic field 
	\begin{equation}\label{magfield2}
		\widehat{h}_x=\begin{cases}
			0 & |x|<R, \\
			h_x& |x| \geq R.
		\end{cases}
	\end{equation}
         The constant $R$ will be chosen later. The existence of phase transition under the presence of the truncated field implies phase transition for the model with the decaying field (see Theorem 7.33 of \cite{Geo} for a more general statement). If $\sigma_0=+1$ there must exist a contour $\gamma$ such that $0 \in V(\gamma)$. Hence
		\[
		\nu^-_{\beta,\widehat{\textbf{h}},\Lambda}(\sigma_0= + 1) \leq \sum_{\substack{\gamma \in \mathcal{E}_\Lambda^- \\ 0 \in V(\gamma)}} \nu^-_{\beta,\widehat{\textbf{h}},\Lambda}(\Omega(\gamma)). 
		\]
		
	Consider the Hamiltonian function (\ref{Isingsys}) for the set $\Lambda = V(\Gamma)$, where $\Gamma$ is a family of external $-$-contours. If $\sigma \in \Omega(\gamma)$ is such that $\Gamma(\sigma)=\Gamma$, we can split the Hamiltonian as
	\begin{equation}
    \begin{split}
		H_{\Lambda,\widehat{\textbf{h}}}^-(\sigma) =& H_{\Lambda\setminus V(\Gamma),\widehat{\textbf{h}}}^-(\sigma^-)+\sum_{\gamma'\subset \Gamma}\Bigg[H_{\I_+(\gamma'),\widehat{\textbf{h}}}^+(\sigma) + H_{\I_-(\gamma'),\widehat{\textbf{h}}}^-(\sigma
		) + H_{\Sp(\gamma'),\widehat{\textbf{h}}}^\eta(\sigma)\Bigg] ,
  \end{split}
	\end{equation}
	where the boundary condition $\eta$ is defined as
	\[
	\eta_x = \begin{cases}
		+1 & x \in \I_+(\Gamma), \\
		-1 & x \in o.w.
	\end{cases}
	\]
    For each $\sigma \in \Omega(\gamma)$, it holds
	\begin{equation}\label{nnIsing_eq1}
 \begin{split}
	&\nu^-_{\beta,\widehat{\textbf{h}},\Lambda}(\Omega(\gamma)) = \sum_{\Gamma \supset \gamma}\nu^-_{\beta,\widehat{\textbf{h}},\Lambda}(\sigma\in:\Omega(\gamma):\Gamma(\sigma)=\Gamma) \\
 &= \frac{1}{Z^-_{\beta,\widehat{\textbf{h}}}(\Lambda)}\sum_{\Gamma \supset \gamma}e^{-\beta H_{\Lambda\setminus V(\Gamma),\widehat{\textbf{h}}}^-(\sigma^-)}\prod_{\gamma' \in \Gamma}Z^+_{\beta,\widehat{\textbf{h}}}(\I_+(\gamma')) Z^-_{\beta,\widehat{\textbf{h}}}(\I_- (\gamma'))e^{-\beta H_{\Sp(\gamma'),\widehat{\textbf{h}}}^{\eta}(\sigma)}.
 \end{split}
	\end{equation}
    Since every family of contours $\Gamma$ contains the fixed contour $\gamma$ we can factorize this term and we can rewrite the difference between the Hamiltonians as
	\[
	H^\eta_{\Sp(\gamma),\widehat{\textbf{h}}}(\sigma) - H^-_{\Sp(\gamma),,\widehat{\textbf{h}}}(\sigma^-)= 2\sum_{\mathclap{\{x,y\} \subset \Sp(\gamma)}} J_{xy} \mathbbm{1}_{\{\sigma_x \neq \sigma_y\}} + 2\sum_{\mathclap{\substack{x \in \Gamma \\ y \in \Gamma^c}}}J_{xy}\mathbbm{1}_{\{\sigma_y = +1\}}-\sum_{x\in \Sp(\gamma)}\widehat{h}_x. 
	\]
	 The definition of an incorrect point implies that for every $x \in \Sp(\gamma)$, there is a nearest neighbor point $y$ such that $\sigma_x \neq \sigma_y$. Thus,
	\[
	2\sum_{\mathclap{\{x,y\} \subset \Sp(\gamma)}} J_{xy} \mathbbm{1}_{\{\sigma_x \neq \sigma_y\}} \geq J|\gamma|.
	\]
  Notice that $\sum_{x \in \Sp(\gamma)}\hat{h}_x \leq \frac{h^* |\gamma|}{R^\delta}\leq J|\gamma|/2$
 if $R> (2h^*)^{1/\delta}$. The Hamiltonian function for the plus and minus boundary conditions satisfies
	\begin{align*}
		H^+_{\Lambda,,\widehat{\textbf{h}}}(\sigma) &= -\sum_{\{x,y\} \subset \Lambda} J_{xy} \sigma_x\sigma_y - \sum_{\substack{x \in \Lambda \\ y \in \Lambda^c}}J_{xy} \sigma_x -  \sum_{x \in \Lambda}\widehat{h}_x \sigma_x \\
		&=  -\sum_{\{x,y\} \subset \Lambda} J_{xy} (-\sigma_x)(-\sigma_y) - \sum_{\substack{x \in \Lambda \\ y \in \Lambda^c}}J_{xy} (-\sigma_x)(-1) - \sum_{x \in \Lambda}h_x (-\sigma_x) - 2\sum_{x \in \Lambda}\widehat{h}_x\sigma_x \\
		&= H^-_{\Lambda,,\widehat{\textbf{h}}}(-\sigma) - 2\sum_{x\in \Lambda} \widehat{h}_x\sigma_x \geq H^-_{ \Lambda,\widehat{\textbf{h}}}(-\sigma) - 2\sum_{x \in \Lambda} h_x .
	\end{align*}
	Hence the restricted partition function satisfies the inequality
	\begin{equation}\label{partition2}
		Z^+_{\beta,\widehat{\textbf{h}}}(\I_+(\gamma)) \leq \exp\left(2\beta  \sum_{x \in \I_+(\gamma)} \widehat{h}_x\right) Z^-_{\beta,\widehat{\textbf{h}}}(\I_+(\gamma)).
	\end{equation}
	   Equations \eqref{nnIsing_eq1} and \eqref{partition2} yield,
    \be
    \nu_{\beta,\widehat{\textbf{h}},\Lambda}(\Omega(\gamma)) \leq \exp\Bigg(\left(\log(2)-\beta \frac{J}{2}\right)|\gamma|+2\beta\sum_{x \in \I_+(\gamma)} \widehat{h}_x\Bigg)\frac{Z^-_{\beta,\widehat{h}}(\Lambda\setminus\Sp(\gamma))}{Z^-_{\beta,\widehat{h}}(\Lambda)}.
    \ee
    where the $\log(2)$ factor comes from the fact that there are at most $2^{|\gamma|}$ incorrect points. By Lemma \ref{lema1}, there exists a constant $c_5>0$ such that
	\[
	\sum_{x \in \I_+(\gamma)}\widehat{h}_x \leq c_5|\I_+(\gamma)|^{1-\frac{\delta}{d}}.
	\]
	Also, if $|\I_+(\gamma)|\leq C$, the truncated magnetic field satisfies $\sum_{x \in \I_-\Gamma} \widehat{h}_x \leq \frac{h^* C}{R^\delta}$. The Isoperimetric inequality stated in Remark \ref{isoperimetric} implies that for $d \geq 2$ and $\Lambda \in \mathcal{F}(\Z^d)$, we have $|\Lambda|^{1 - \frac{1}{d}} \leq |\partial\Lambda|$. Therefore, the following holds
	\[
	|\I_-(\gamma)|^{1-\frac{\delta}{d}}\leq |\partial\I_-(\gamma)|^{\frac{d-\gamma}{d-1}} \leq |\gamma|^{\frac{d-\delta}{d-1}}.
	\]
	If $R$ is large enough, the truncated field can be bounded above by $J|\gamma|/4$. The case $\delta = 1$ can also be carried out in a similar fashion. However, since the ratio $\frac{d-\delta}{d-1}=1$, we must assume that the field $h^*$ is small. 
		Summing over all contours yields, together with Proposition \ref{pirogov-sinai-entropy},
		\begin{align*}
			\nu_{\beta,\bm{\widehat{h}},\Lambda}^-(\sigma_0 = +1) &\leq \sum_{\substack{\gamma \in \mathcal{E}_\Lambda^- \\ 0 \in V(\gamma)}}e^{(\log(2)-\beta \frac{J}{4})|\gamma|}\frac{Z^-_{\beta, \bm{\widehat{h}}}(\Lambda\setminus\Sp(\gamma))}{Z^-_{\beta, \bm{\widehat{h}}}(\Lambda)} \nonumber \\
			&\leq \sum_{m \geq 1}|\mathcal{C}_0(m)|e^{(\log(2)-\beta \frac{c_2}{2})m} \nonumber\\
			&\leq \sum_{m \geq 1}e^{(c_1 +\log(2) - \beta \frac{c_2}{2})m}< \frac{1}{2},
		\end{align*}
		for $\beta$ large enough. 
	\end{proof}

 Together with the discussion at the end of Section 1.1, the proposition above implies phase transition for the model. The uniqueness for $\delta<1$ can be found in \cite{Bis2}.

\chapter{Multidimensional Fr\"{o}hlich-Spencer Contours}
\label{ch:longrange}
	
	In this chapter, we extend the analysis in \cite{Bis2}, also presented in the previous chapter, from the nearest-neighbor to the long-range Ising model considering decaying external fields in the Hamiltonian \eqref{hamiltonian}. The contents are edited from the paper \cite{Aff}. By a similar approach as Fr\"{o}hlich and Spencer, we define a notion of contour for the model to show the phase transition at low temperature when $d<\alpha<d+1$ and $\delta>\alpha-d$, and when $\alpha\ge d+1$  and $\delta>1$. We begin with an heuristics for this result. Consider the configurations
	\be
	\sigma_x = \begin{cases}
		+1& \text{if } x \in B_R(z), \\
		-1& \text{otherwise},
	\end{cases}
	\ee
	where $B_R(z)\subset \Z^d$ is the closed ball in the $\ell_1$-norm centered in $z$ with radius $R\geq 0$. Let $\Omega_c$ be the collection of all such configurations and, for fixed $\Lambda \Subset \Z^d$, let $\Omega_{c,\Lambda}\subset \Omega_c$ be the subset of configurations where $B_R(z)\subset \Lambda$. Then, we have
	\be\label{contaheuristica}
	\mu_{\beta,\Lambda}^-( \sigma_0 = +1|\Omega_{c,\Lambda}) \leq \sum_{R\geq 1}R^d \exp\left(-\beta (cR^{d-1}+ F_{B_R}- \sum_{x \in B_R(0)}h_x)\right),
	\ee
	where the quantity $F_{B_R}$ is defined as
	\be
	F_{B_R} \coloneqq \sum_{\substack{x \in B_R(0)\\ y \in B_R(0)^c}}J_{xy}.
	\ee
	One can understand this quantity as a surface energy term, and it has different asymptotics depending on the parameters $\alpha$ and $d$.  Denoting by $f\approx g$  the fact that given two functions $f,g:\mathbb{R}^{+} \rightarrow \mathbb{R}^{+}$, there exist positive constants $A'\coloneqq A'(\alpha, d),  A\coloneqq A(\alpha, d)$ such that $A'f(x) \leq g(x) \leq Af(x)$ for every $x>0$ large enough, we have
	\be
	F_{B_R} \approx \begin{cases}
		R^{2d-\alpha}&\text{if } d<\alpha<d+1, \\
		R^{d-1}\log(R)&\text{if } \alpha = d+1, \\
		R^{d-1}&\text{if } \alpha>d+1.
	\end{cases}
	\ee
	For the proof, see Propositions 3.1 and 3.4 in  \cite{BBCK}.  For our purposes, we will need estimates for more general subsets than balls, see Lemma \ref{lema3}. The inequality \eqref{contaheuristica} together with Lemma \ref{lema1} shows us that phase transition occurs when $\delta> \alpha - d$, by comparing the exponents, since the field influence in the system is given by
	\be
	\sum_{x \in B_R(0)}h_x = O(R^{d-\delta}).
	\ee
	
	\begin{comment}
	As an application of our construction,  we prove the phase transition for multidimensional ferromagnetic long-range Ising models in a presence of the decaying magnetic field. Here, we consider the coupling constant $J_{xy}=|x-y|^{-\alpha}$ with $\alpha>d$.  
	\end{comment}

 In the following sections, we elaborate on the construction of the contours.
 
	\section{The \texorpdfstring{$(M,a,r)$}{TEXT} -partition}
	
	Park, in \cite{Park1, Park2} extended the theory of Pirogov-Sinai to systems with two-body long-range interactions that satisfy a condition equivalent to \eqref{long} having decay $\alpha > 3d+1$. Inspired by \cite{Fro1}, in this section, we will introduce new contours more suitable for studying long-range two-body interactions. 
	
	In Pirogov-Sinai theory, the construction of the contours starts by considering first the connected subsets of the boundary $\partial\sigma$, as we did in the previous chapter.  However, this approach presents certain challenges when applied to long-range models. In these particular systems, every lattice point interacts with all other lattice points, resulting in the emergence of non-negligible interactions within any proposed contour definition. To avoid this problem, we will divide the boundary of a configuration in a way where the interaction between them will be manageable as we will elaborate on later. 
	\begin{definition}\label{def:d_condition}
		Fix real numbers $M,a,r>0$. For each configuration $\sigma \in \Omega$ with finite boundary $\partial\sigma$, a set $\Gamma(\sigma) \coloneqq \{\overline{\gamma} : \overline{\gamma} \subset \partial\sigma\}$ is called an $(M,a,r)$-\emph{partition} when the following conditions are satisfied:
		\begin{enumerate}[label=\textbf{(\Alph*)}, series=l_after] 
			\item They form a partition of $\partial\sigma$, i.e.,  $\bigcup_{\overline{\gamma} \in \Gamma(\sigma)}\overline{\gamma}=\partial\sigma$ and $\overline{\gamma} \cap \overline{\gamma}' = \emptyset$ for distinct elements of $\Gamma(\sigma)$. Moreover, each $\overline{\gamma}'$ is contained in only one connected component of $(\overline{\gamma})^c$. 
			
			\item For all $\overline{\gamma} \in \Gamma(\sigma)$ there exist $1\leq n \leq 2^r-1$ and a family of subsets $(\overline{\gamma}_{k})_{1\leq k \leq n}$ satisfying 
			
			\begin{enumerate}[label=\textbf{(B.\arabic*)}]
				\item  $\overline{\gamma} = \bigcup_{1\leq k \leq n}\overline{\gamma}_{k}$,
				\item For all distinct $\overline{\gamma},\overline{\gamma}' \in \Gamma(\sigma)$,
				\be\label{B_distance}
				\dis(\overline{\gamma},\overline{\gamma}') > M \min\left \{\underset{1\leq k \leq n}{\max}\diam(\overline{\gamma}_k),\underset{1\leq j\leq n'}{\max}\diam(\overline{\gamma}'_{j})\right\}^a,
				\ee
				where $(\overline{\gamma}'_j)_{1\leq j \leq n'}$ is the family given by item $\textbf{(B1)}$ for $\overline{\gamma}'$.
			\end{enumerate} 
		\end{enumerate}
	\end{definition}
	Note that the sets $\overline{\gamma} \in \Gamma(\sigma)$ may be disconnected. In Condition \textbf{(A)}, $\overline{\gamma}'$ is contained in the unbounded component of $\overline{\gamma}^c$ if and only if $V(\overline{\gamma})\cap V(\overline{\gamma}')= \emptyset$. Some results are true for any $M,a,r>0$, as the existence of $(M,a,r)$-partition for any configuration $\sigma$ with finite boundary $\partial\sigma$,  see Proposition \ref{prop1}.  However, for the main purposes of this chapter, which is the proof of the phase transition, the constant $a$ is chosen as $a \coloneqq a(\alpha, d,\varepsilon) = \max\left\{\frac{d+1+\varepsilon}{\alpha-d}, d+1+\varepsilon\right\}$, for some $\varepsilon>0$ fixed and $r$ given by $r= \lceil \log_2(a+1) \rceil + d +1$, where $\lceil x \rceil$ is the smallest integer greater than or equal to $x$. The motivation for these choices will be clear in the proofs. The constant $M\coloneqq M(\alpha, d)$ will be chosen later. 	
 
 For a fixed configuration $\sigma$ with finite boundary $\partial \sigma$, the $(M,a,r)$-partitions will be the \textit{support of the contours},  subsets of $ \Z^d$ where every point is incorrect. Although we are strongly inspired by the papers of Fr\"{o}hlich and Spencer \cite{Fro3, Fro1, Fro2} (see also \cite{Imbr1, Imbr2}), we implemented modifications that allow us to cover all the region $\alpha > d$. 
\\
	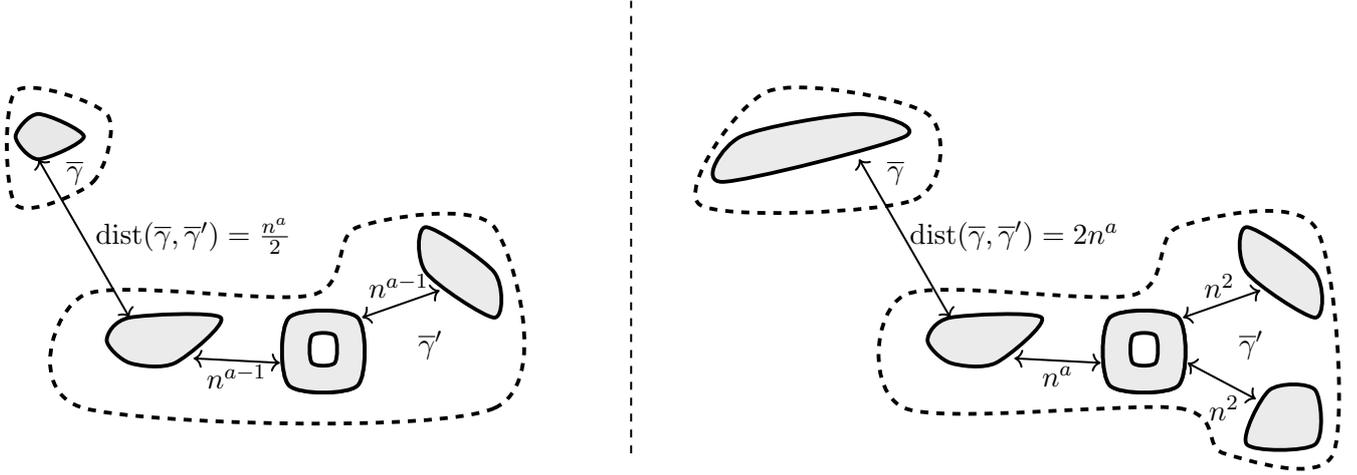
\begin{figure}[H]
		\centering
		
		\tikzset{every picture/.style={line width=0.75pt}}

		\begin{tikzpicture}[scale=1.2]
			\begin{scope}[shift={(-5,0)}]
				%gamma_2
				\draw [ line width=0.5mm, draw=black, dashed] plot [smooth cycle] coordinates {(0.25,1.75) (1.1,2) (1.25,2.75) (0.25,3) }--cycle;
				\draw [line width=0.5mm, fill=black!9!white, draw=black] plot [smooth cycle] coordinates {(0.5,2.25) (1,2.5) (0.5,2.75) (0.25,2.5)}--cycle;
				
				%gamma_3
				\draw [line width=0.5mm, fill=white, draw=black, dashed] plot [smooth cycle] coordinates {(1,-0.5) (5.5,-0.5) (5.5,1.5) (4,1.5) (3.5,0.75) (1,0.75) }--cycle;
				\draw [line width=0.5mm, fill=black!8!white, draw=black] plot [smooth cycle] coordinates {(1.5,0) (2,0) (2.5,0.5) (1.5,0.5) (1.25,0.25) }--cycle;
				\draw [line width=0.5mm, fill=black!8!white, draw=black] plot [smooth cycle] coordinates {(3.25,0.5) (3.25,-0.25) (4,-0.25) (4,0.5) }--cycle;
				\draw [line width=0.5mm, fill=black!8!white, draw=black] plot [smooth cycle] coordinates {(3.25,0.5) (3.25,-0.25) (4,-0.25) (4,0.5) }--cycle;
				\draw [line width=0.5mm, fill=white, draw=black] plot [smooth cycle] coordinates {(3.75,0.3) (3.75,0) (3.5,0)  (3.5,0.3) }--cycle;
				\draw [line width=0.5mm, fill=black!8!white, draw=black] plot [smooth cycle] coordinates {(4.75,1.5) (4.75,1) (5.5,0.5) (5.5,1) }--cycle;
				%nodes		
				\draw (0.9,2.1) node[scale=1] {$\overline{\gamma}$};
				\draw (4.8,0.2) node[scale=1] {$\overline{\gamma}'$};
				\draw[<->] (0.5,2.25)--(1.5,0.5);
				\draw (2.2,1.4) node[scale=1] {$\dis(\overline{\gamma},\overline{\gamma}') = \frac{n^a}{2}$};
				\draw[<->] (2.2,0.05)--(3.15,0);
				\draw (2.675,-0.15) node[scale=1] {$n^{a-1}$};
				\draw[<->] (4.05,0.5)--(4.9,0.8);
				\draw (4.45,0.85) node[scale=1] { $n^{a-1}$};
			\end{scope}
			
			\draw [-, dashed] (2,-1)--(2,4);
			
			\begin{scope}[shift={(4,0)}]
				%gamma_2
				\draw [ line width=0.5mm, draw=black, dashed] plot [smooth cycle] coordinates {(-1.25,1.75) (1.1,1.8) (1.25,2.75) (-0.5,3) }--cycle;
				\draw [line width=0.5mm, fill=black!8!white, draw=black] plot [smooth cycle] coordinates {(-1,2) (1,2.5) (0.5,2.75) (-0.8,2.5)}--cycle;
				
				%gamma_3
				\draw [line width=0.5mm, fill=white, draw=black, dashed] plot [smooth cycle] coordinates {(1,-0.5) (4,-0.5) (4.5,-1) (5.7,-1) (5.5,1.5) (4,1.5) (3.5,0.75) (1,0.75) }--cycle;
				\draw [line width=0.5mm, fill=black!8!white, draw=black] plot [smooth cycle] coordinates {(1.5,0) (2,0) (2.5,0.5) (1.5,0.5) (1.25,0.25) }--cycle;
				\draw [line width=0.5mm, fill=black!8!white, draw=black] plot [smooth cycle] coordinates {(3.25,0.5) (3.25,-0.25) (4,-0.25) (4,0.5) }--cycle;
				\draw [line width=0.5mm, fill=black!8!white, draw=black] plot [smooth cycle] coordinates {(3.25,0.5) (3.25,-0.25) (4,-0.25) (4,0.5) }--cycle;
				\draw [line width=0.5mm, fill=white, draw=black] plot [smooth cycle] coordinates {(3.75,0.3) (3.75,0) (3.5,0)  (3.5,0.3) }--cycle;
				\draw [line width=0.5mm, fill=black!8!white, draw=black] plot [smooth cycle] coordinates {(4.75,1.5) (4.75,1) (5.5,0.5) (5.5,1) }--cycle;
				\draw [line width=0.5mm, fill=black!8!white, draw=black] plot [smooth cycle] coordinates {(5,-0.3) (4.75,-0.9) (5.5,-0.9) (5.5,-0.3) }--cycle;
				
				%nodes
				\draw (0.9,2.1) node[scale=1] {$\overline{\gamma}$};
				\draw (4.8,0.2) node[scale=1] {$\overline{\gamma}'$};
				\draw[<->] (0.5,2.25)--(1.5,0.5);
				\draw (2.2,1.4) node[scale=1] {$\dis(\overline{\gamma},\overline{\gamma}') = 2n^a$};
				\draw[<->] (2.2,0.05)--(3.15,0);
				\draw (2.675,-0.15) node[scale=1] { $n^a$};
				\draw[<->] (4.05,0.5)--(4.9,0.8);
				\draw (4.45,0.85) node[scale=1] { $n^2$};
				\draw[<->] (4.1,0)--(4.85,-0.4);
				\draw (4.5,-0.5) node[scale=1] { $n^2$};
			\end{scope}	
		\end{tikzpicture}
		\caption{Consider $M=1$, $r=2$. For the image on the left, consider that all the connected components (grey regions) have a diameter equal to $n$. In this case, there is no partition of $\overline{\gamma}'$ satisfying condition \textbf{(B)}. The correct $(M,a,r)$-partition for this case is $\Gamma(\sigma)=\{\overline{\gamma}\cup \overline{\gamma}'\}$. For the figure on the right, consider that all the connected components of $\overline{\gamma}'$ have diameter $n$ and $\diam(\overline{\gamma})=n^2$. Notice that, in this case, the families of subsets of $\overline{\gamma'}$ satisfying Inequality \eqref{B_distance} must have $n' >2^r-1$.}
	\end{figure}
	
	\begin{figure}[H]
		\centering
		\tikzset{every picture/.style={line width=0.75pt}} 
		\begin{tikzpicture}[scale=1.5]
			%gamma
			\draw [ line width=0.5mm, fill=black!8!white, draw=black] plot [smooth cycle] coordinates {(-2,1) (-2,-1) (0,-2) (2,-2) (6,0) (4,2) (0,2) }--cycle;
			\draw [ line width=0.5mm, fill=white, draw=black] plot [smooth cycle] coordinates {(-1.5,1) (-1,-1.2) (1,-1.5) (5.5,0) (3.5, 1.8) }--cycle;
			\draw [ line width=0.5mm, fill=black!8!white, draw=black] plot [smooth cycle] coordinates {(-0.75,-1) (1,-1) (2.8,0) (1.5, 1.2) (-1,1) }--cycle;
			\draw [ line width=0.5mm, fill=white, draw=black] plot [smooth cycle] coordinates {(-0.55,-0.7) (0.75,-0.75) (1,0)  (-0.5,0.5) }--cycle;
			\draw [ line width=0.5mm, fill=white, draw=black] plot [smooth cycle] coordinates {(1.5,-0.4)  (2.5, 0.2) (1.7,0.8) (0,0.7) (1, 0.4) }--cycle;
			%\gamma'
			\begin{scope}[shift={(0,-0.3)}]
				\draw [ line width=0.5mm, pattern=dots, draw=black] plot [smooth cycle] coordinates {(0,-0.2) (0.25,-0.25) (0.5,0) (0.3,0.3) (-0.1,0.2) }--cycle;
				\draw [ line width=0.5mm, pattern=dots, draw=black] plot [smooth cycle] coordinates {(3.5,0) (3.75,0.2) (4,0.8) (3.5, 1)  }--cycle;
				%nodes	
				\draw (3.7,0.6) node[scale=1.5] {\large $\overline{\gamma}'$};
				\draw (0.2,0) node[scale=1.5] {\large $\overline{\gamma}'$};
			\end{scope}
			
			\begin{scope}[shift={(1.5,0.2)}]
				\draw [ line width=0.5mm, pattern=dots, draw=black] plot [smooth cycle] coordinates {(0,-0.2) (0.25,-0.25) (0.5,0) (0.3,0.3) (-0.1,0.2) }--cycle;
				\draw (0.2,0) node[scale=1.5] {\large $\overline{\gamma}'$};
			\end{scope}
			%node
			\draw (0.9,1) node[scale=1.5] {\large $\overline{\gamma}$};
			\draw (1.1,-1.8) node[scale=1.5] {\large $\overline{\gamma}$};
		\end{tikzpicture}
		
		\caption{To illustrate how Condition \textbf{(A)} works, consider the figure above. In this case, the connected components of $\overline{\gamma}'$ are dotted and the connected components of $\overline{\gamma}$ are grey. One can readily see that there is a connected component   of $\overline{\gamma}$ that has a nonempty intersection with $V(\overline{\gamma})$ but does not fully contain it. In order to fix such problem, one should separate $\overline{\gamma}'$ in three different sets.}
	\end{figure}
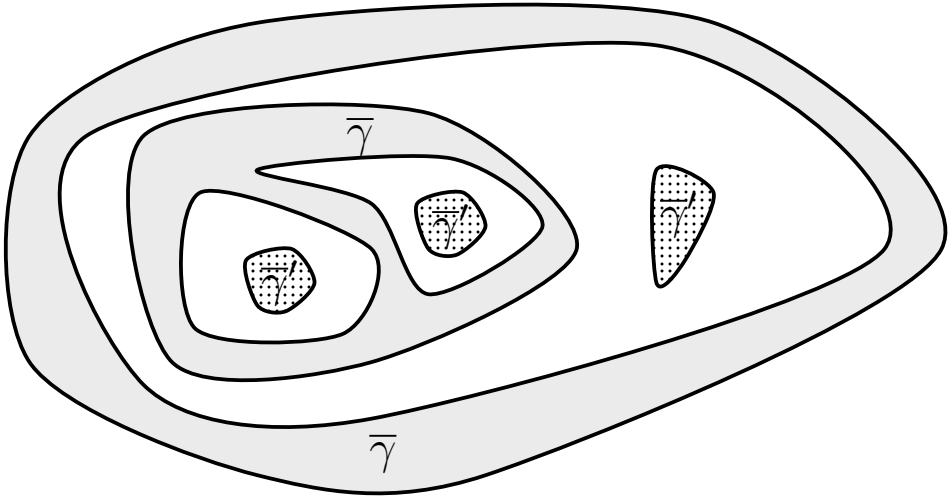
	%\begin{remark}
	%Notice that the sets $\overline{\gamma} \in \Gamma(\sigma)$ may be disconnected. Our construction will imply that, if $V(\overline{\gamma}') \subset V(\overline{\gamma})$, then $V(\overline{\gamma}') \cap I(\overline{\gamma}_*) = \emptyset$ or $V(\overline{\gamma}') \cap I(\overline{\gamma}_*) = V(\overline{\gamma}')$ for every connected component $\overline{\gamma}_*$ of $\overline{\gamma}$. This fact is proved in Proposition \ref{prop1}.
	%\end{remark}
	
	In general, there are many possible $(M,a,r)$-partitions given fixed $M,a,r>0$. If we have two $(M,a,r)$-partitions $\Gamma$ and $\Gamma'$, we say that $\Gamma$ \emph{is finer than} $\Gamma'$ if for every $\overline{\gamma} \in \Gamma$ there is $\overline{\gamma}' \in \Gamma'$ with $\overline{\gamma} \subseteq \overline{\gamma}'$.
	\begin{proposition}
		For each finite $\partial \sigma$ there is a finest $(M,a,r)$-partition. 
	\end{proposition}
	\begin{proof}
		The existence of non-trivial $(M,a,r)$-partitions will be given in the next section. Consider two $(M,a,r)$-partitions for $\partial\sigma$, namely $\Gamma$ and $\Gamma'$. We will show that we can build a $(M,a,r)$-partition that is finer than $\Gamma$ and $\Gamma'$. Define the set
		\[
		\Gamma\cap\Gamma' \coloneqq \{\gamma \cap \gamma': \gamma \in \Gamma, \gamma' \in \Gamma', \gamma \cap \gamma' \neq \emptyset\}.
		\]
		We will prove that $\Gamma \cap \Gamma'$ is a $(M,a,r)$-partition. It is easy to see that $\Gamma\cap \Gamma'$ forms a partition of $\partial \sigma$. Let $\overline{\gamma}_i \in \Gamma$ and $\overline{\gamma}'_i \in \Gamma'$ such that $\overline{\gamma}_i\cap \overline{\gamma}_i' \in \Gamma \cap \Gamma'$ for $i=1,2$. Then, by Condition \textbf{(A)}, there is a connected component $A$ of $\overline{\gamma}_2^c$ that contains $\overline{\gamma}_1$. Hence,
		\[
		\overline{\gamma}_1 \cap \overline{\gamma}'_1\subset A \subset (\overline{\gamma}_2 \cap \overline{\gamma}'_2)^c.
		\]  
		For condition \textbf{(B)}, consider $\overline{\gamma}_i\cap \overline{\gamma}'_i \in \Gamma \cap \Gamma'$, $i=1,2$. For the family $(\overline{\gamma}'_{i,k})_{1 \leq k \leq n'}$ given by condition \textbf{(B)} take $(\overline{\gamma}_i\cap\overline{\gamma}_i')_k \coloneqq \overline{\gamma}_i\cap \overline{\gamma}'_{i,k}$ when the intersection is not empty. We have,
		\begin{align*}
			d(\overline{\gamma}_1\cap\overline{\gamma}'_1,\overline{\gamma}_2\cap\overline{\gamma}'_2) \geq d(\overline{\gamma}'_1, \overline{\gamma}'_2) &> M \min\left \{\underset{1\leq k \leq n'}{\max}\diam(\overline{\gamma}'_{1,k}),\underset{1\leq j\leq m'}{\max}\diam(\overline{\gamma}'_{2,j})\right\}^a \\
			&> M \min\left \{\underset{1\leq k \leq n'}{\max}\diam((\overline{\gamma}_1\cap\overline{\gamma}_1')_k),\underset{1\leq j\leq m'}{\max}\diam((\overline{\gamma}_2\cap\overline{\gamma}_2')_j)\right\}^a.
		\end{align*}
		
		Since there are a finite number of $(M,a,r)$-partitions, we can construct the finest one by intersecting all of them, following the above construction. If we had two $(M,a,r)$-partitions that have the property of being the finest, we could produce another $(M,a,r)$-partition finer than both, yielding a contradiction. 
	\end{proof}
	\begin{remark}
		In their one-dimensional paper \cite{Fro1}, Fr\"{o}hlich and Spencer assumed that they could choose the finest partition of spin flips satisfying their Condition D, but the uniqueness of this partition would not be a problem, since you just need to fix a partition for the phase transition argument to hold. It was in \cite{Imbr1} where Imbrie settled this question and our proof is inspired by his argument.
	\end{remark}
 
	\subsection{Discussion about contours on long-range Ising models}
		
	Differently from the original papers of Fr\"{o}hlich and Spencer, we have no arithmetic condition over the $(M,a,r)$-partitions, which means we do not ask the sum of the spins should be zero over the support of the contours. In particular, there are no constraints over the size of the contours they could have an even or odd number of points of $\Z^d$. The definition is stated only in terms of the distances among subsets of $\Z^d$. In order to control the entropy of contours, we needed to introduce a parameter $r$. It is worthwhile to stress that in the original works of Fr\"{o}hlich and Spencer \cite{Fro3, Fro1, Fro2} the parameter $r=1$.  
	
	The exponent $a$ plays an important role in our arguments.  When $\alpha$ is close to $d$. Consequently, it becomes necessary for the contours to maintain a greater distance from one another to ensure that their mutual influence remains negligible. In the original papers of Fr\"{o}hlich and Spencer, $a$  is chosen as a fixed number or to belong in a finite interval as follows.  In the first paper \cite{Fro3}, for continuous spin bidimensional models, they choose $\frac{3}{2} < a <2$, while in the one dimensional paper \cite{Fro2} for the long-range Ising model with $\alpha =2$, they fix $a=\frac{3}{2}$.  In section 4 of the paper on multidimensional random Schr\"{o}dinger operators \cite{Fro2},  they assumed $1 \leq a <2$. These papers use the idea of multiscale analysis to study different problems, which strongly inspired us in the definition and construction of our contours. A brief summary mentioning the multiscale methods and their power is in \cite{Simon} and at the references therein.
	
	 Cassandro, Ferrari, Merola, Presutti \cite{Cass} defined the contours without any arithmetic condition that depends only on the distance among the subsets of  $\Z^d$.  They choose $a = 3$ for unidimensional long-range Ising models with $\alpha \in (2-\alpha^+, 2]$, where $\alpha^+=\log(3)/\log(2) - 1$.  This is an important point; while our arguments work for any $2\leq d<\alpha$,  Littin and Picco proved that in the unidimensional case, it is impossible to produce a direct proof of the phase transition using Peierls' argument and definition of contour in \cite{Cass}, although they prove the phase transition for the entire region with $\alpha \in (1,2]$. The papers  \cite{Cass} and \cite{LP} also make the extra assumption $J(1) \gg 1$, which means that the nearest neighbor interaction should be large. Recently it was proved that this extra assumption can be relaxed \cite{Eric2, Kim}.

	The following proposition guarantees the existence of a $(M,a,r)$-partition for each configuration $\sigma$ with a finite boundary.
	\begin{proposition}\label{prop1}
		Fix real numbers $M,a,r>0$. For every $\sigma \in \Omega$ with finite boundary there is a $(M,a,r)$-partition $\Gamma(\sigma)$. 
	\end{proposition}
	\begin{proof}
		For each $x \in \Z^d$ and $n\geq 0$ we define a $n$-cube $C_n(x) \subset \Z^d$ as
		\begin{equation}\label{cubes}
			C_n(x) \coloneqq \left(\prod_{i=1}^d[2^{n-1}x_i-2^{n-1},2^{n-1}x_i+2^{n-1}]\right)\cap \Z^d.
		\end{equation}
		These cubes have side length $2^n$ and center at the point $2^{n-1} x$. For $n=0$, we adopt the convention that $C_{0}(x) = x$, for any point $x \in \Z^d$. For each $\Lambda \Subset \Z^d$ and $n\geq 0$, we define $\mathscr{C}_n(\Lambda)$ as a minimal cover of $\Lambda$ by $n$-cubes. For each cover $n\geq 0$, we define the graph $G_n(\Lambda) = (v(G_n(\Lambda)),e(G_n(\Lambda)))$ by $v(G_n(\Lambda))\coloneqq\mathscr{C}_n(\Lambda)$ and $e(G_n(\Lambda))\coloneqq \{(C_n(x),C_n(y)): \dis(C_n(x),C_n(y))\leq Md^a2^{an}\}$.
		
		Note that $d2^n$ is the diameter in the $\ell^1$-norm of a $n$-cube. Let $\mathscr{G}_n(\Lambda)$ be the set of all connected components of the graph $G_n(\Lambda)$ and, for each $G \in \mathscr{G}_n(\Lambda)$, define
		\[
		\gamma_G = \bigcup_{C_n(x) \in v(G)} (\Lambda \cap C_n(x)).
		\]
		We are ready to establish the existence of an $(M,a,r)$-partition for the boundary of a configuration $\partial \sigma$. Set $\partial \sigma_0:=\partial\sigma$ and 
		\[
		\mathscr{P}_0:=\{G \in \mathscr{G}_0(\partial \sigma_0) : |v(G)|\leq 2^r-1\}.
		\]
		Notice that this set separates all points that are distant by at least $M d^a$. Define inductively, for $n\geq 1$,  
		\[
		\mathscr{P}_n:=\{G \in \mathscr{G}_n(\partial \sigma_n): |v(G)|\leq 2^r-1\},
		\]
		where $\partial\sigma_n:=\partial\sigma_{n-1}\setminus \underset{G \in \mathscr{P}_{n-1}}{\bigcup}\gamma_G$, for $n\geq1$. Since the $n$-cubes invade the lattice, when we continue increasing $n$, there exists $N\geq 0$ such that $\partial\sigma_n = \emptyset$ for every $n \geq N$. In this case, we define $\mathscr{P}_n = \emptyset$. Let $\mathscr{P} = \underset{n\geq 0}{\bigcup}\mathscr{P}_n$. We are going to show that the family $\Gamma(\sigma)\coloneqq \{\gamma_G: G \in \mathscr{P}\}$ is a  $(M,a,r)$-partition. 
		
		In order to show that Condition \textbf{(B)} is satisfied, we will construct families of subsets, with at most $2^r-1$ elements, where Inequality \eqref{B_distance} is verified. We will write $G_n \coloneqq G_n(\partial \sigma_n)$ to simplify our notation. Take distinct $\gamma_G, \gamma_{G'} \in \Gamma(\sigma)$. There are positive integers $n,m$, with $n \leq m$, such that $G \in \mathscr{P}_n$ and $G' \in \mathscr{P}_m$. Let $G''$ be the subgraph of $G_n$ such that the cubes of $v(G'')$ covers $\gamma_{G'}$ and it is minimal in the sense that all other subgraphs $G'''$ of $G_n$ satisfying this property have $v(G''')\geq v(G'')$. Thus, defining $\gamma_k = \gamma_G\cap C_n(x_k)$, for each $C_n(x_k) \in v(G)$, we have,
		\be\label{eq_distance}
		\dis(\gamma_G, C_n(z)) = \min_{1 \leq k \leq |v(G)|}\dis(\gamma_k, C_n(z)) \geq \min_{1 \leq k \leq |v(G)|} \dis(C_n(x_k),C_n(z)),
		\ee
		for each $C_n(z)\in v(G'')$. There is no edge between the subgraph $G''$ and the connected component $G$, by construction. 
		%Separate in cases, if $G' \in \mathscr{P}_n$, then $G'' = G'$ and if they intersect, they must be the same connected component. This is in contradiction with the fact that $G$ and $G'$ are distinct. Assume $n>m$. Then, we have \gamma_{G'} \subset \partial \sigma_{n+1} \subset \partial \sigma_n \setminus \gamma_G$. Then, the cubes in $G_n$ used to cover $\gamma_{G'}$ are  part of the cubes used to cover $\partial\sigma_n \setminus \gamma_G$ that are disconnected from $G$.
		Thus, 
		\[
		\dis(C_n(x_k),C_n(z)) > Md^a 2^{an}.
		\]
		Consider, also, the sets $\gamma'_j = \gamma_{G'}\cap C_m(y_j)$, where $C_m(y_j) \in v(G')$. Then
		\be
		\dis(\gamma_G, \gamma_{G'})= \min_{1 \leq j \leq |v(G')|} \min_{\substack{C_n(z) \in v(G'') \\ C_n(z) \cap \gamma'_j \neq \emptyset}}\dis(\gamma_G,\gamma'_j \cap C_n(z)) \geq \min_{ C_n(z) \in v(G'')} \dis(\gamma_G, C_n(z)).
		\ee
		Using Inequality \eqref{eq_distance}, we arrive at the inequality $\dis(\gamma_G,\gamma_{G'})> Md^a 2^{an}$. Note that, by construction, both $\{\gamma_k\}_{1\leq k \leq |v(G)|}$ and $\{\gamma'_j\}_{1\leq j \leq |v(G')|}$ satisfy, respectively, 
		
		%Since $\gamma_k \subset C_n(x_k)$, we know that the diameter $\diam(\gamma_k) \leq \diam(C_n(x_k))= d2^n$. Hence
		\[
		\max_{1\leq k \leq |v(G)|}\diam(\gamma_k) \leq d2^n \quad \text{and} \quad \max_{1\leq j \leq |v(G')|}\diam(\gamma'_j) \leq d2^m.
		\] 
		Since we assumed that $n \leq m$, we get
		\[
		\min\left\{\max_{1\leq k \leq |v(G)|}\diam(\gamma_k) ,\max_{1\leq j \leq |v(G')|}\diam(\gamma'_j)\right\}^a \leq d^a 2^{an},
		\]
		and we proved that the family $\Gamma(\sigma)$ satisfy Condition \textbf{(B)}. 
		
		In order to establish Condition \textbf{(A)}, we first note that the equality $\partial \sigma = \bigcup_{G \in \mathscr{P}}\gamma_G$ follows by construction. The elements of $\Gamma(\sigma)$ are pairwise disjoint since Inequality \eqref{B_distance} is satisfied.
		
		Let $\gamma_G, \gamma_{G'} \in \Gamma(\sigma)$, with $V(\gamma_G)\cap V(\gamma_{G'})\neq \emptyset$. There are positive integers $n,m$ satisfying $n\leq m$ such that $G \in \mathscr{P}_n$ and $G' \in \mathscr{P}_m$. Consider $G''$, as before, the minimal subgraph of $G_n$ that covers $\gamma_{G'}$. Since $\gamma_G \cap \gamma_{G'}=\emptyset$ it holds $\gamma_G \subset (\gamma_{G'})^c$. We will show that $\gamma_G$ must be contained in only one connected component of $(\gamma_{G'})^c$. Every $n$-cube $C_n(x) \in v(G)$ cannot have a nonempty intersection with $\gamma_{G'}$, since the latter is covered by the $n$-cubes in $v(G'')$ and there is no edge between $G$ and $G''$. This is sufficient to conclude that each $n$-cube in $v(G)$ is in only one connected component of $(\gamma_{G'})^c$.
		
		%Since the $n$-cubes are themselves connected, this is sufficient to conclude that each $n$-cube in $v(G)$ is in only one connected component of $(\gamma_{G'})^c$. 
		
		If $(\gamma_{G'})^c$ have only one connected component or $|v(G)|=1$, there is nothing to prove. Suppose, by contradiction, that there exist two $n$-cubes $C_n(x), C_n(x') \in v(G)$ in different connected components of $(\gamma_{G'})^c$. We claim 
		\be\label{eq1111}
		\dis(C_n(x), C_n(x')) \geq 2M d^a 2^{an}.
		\ee
		Indeed, take two points $z \in C_n(x)$ and $z' \in C_n(x')$ such that $\dis(C_n(x), C_n(x'))=|z-z'|$. Let $\lambda_{z,z'}$ be a minimal path in $\Z^d$ starting at $z$ and ending at $z'$. Note that $|\lambda_{z,z'}| = |z-z'|$. Since $C_n(x)$ and $C_n(x')$ are in different connected components of $(\gamma_{G'})^c$ there must exist $y \in \lambda_{z,z'} \cap \gamma_{G'}$. We can break $\lambda_{z,z'}$ as the union of minimal paths $\lambda_{z,y}$ and $\lambda_{y,z'}$. This fact implies
		\begin{align*}
			\dis(C_n(x), C_n(x')) &= |z-y| + |y-z'|\nonumber \\
			&\geq \min_{y' \in \gamma_{G'}}[|z-y'| + |y'-z'|] \nonumber\\
			&\geq \min_{y' \in \gamma_{G'}}[\dis(C_n(x),y') + \dis(C_n(x'),y')] \nonumber\\
			&\geq \min_{C_n(z) \in v(G'')}[\dis(C_n(x),C_n(z)) + \dis(C_n(x'),C_n(z))]\\
			& \geq 2M d^a 2^{an},
		\end{align*}
		where the last inequality is due to the fact that the subgraphs $G$ and $G''$ have no edge between them. Inequality \eqref{eq1111} is valid for any pair of $n$-cubes in different connected components of $(\gamma_{G'})^c$, thus our discussion implies that $C_n(x)$ and $C_n(x')$ are vertices of two different connected components. This cannot happen since $G$ is connected, arriving at a contradiction. 
	\end{proof}
	In order to define the label of a contour, we must be careful since the inner boundary of a set $\overline{\gamma}$ may have different signs; see the figure below. 
	%Similar to Pirogov-Sinai theory, we can define a label function for the sets $\overline{\gamma} \in \Gamma(\sigma)$, but we need to take care since, as the contours are not connected, it is possible that the outer boundary of $\overline{\gamma}$ may have different signs (see the figure below).
	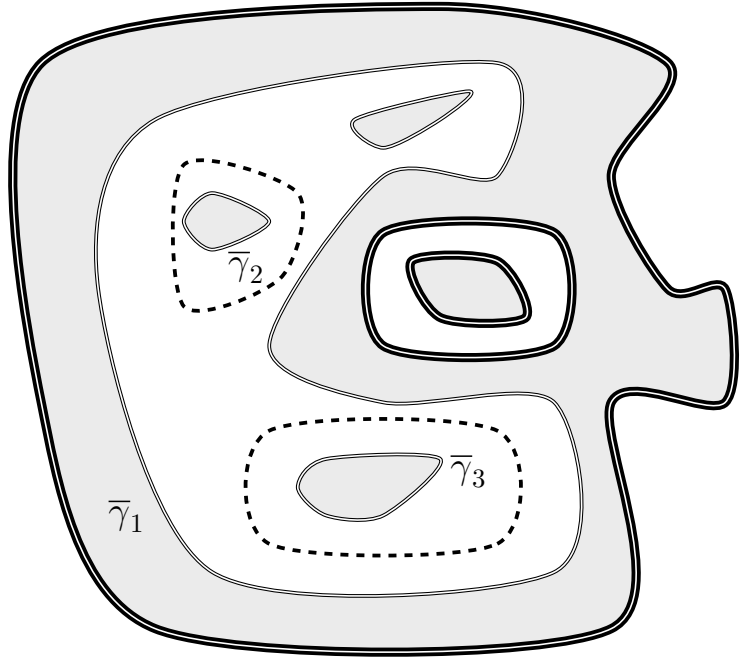
\begin{figure}[H]
		\centering
		
		\tikzset{every picture/.style={line width=0.75pt}} 
		
		\begin{tikzpicture}[scale=1.5]
			%\draw [gray!50]  (0,0) -- (1,1) --(2,1.5)-- (3,1) -- (3,0)  -- (2,-1) -- cycle;
			%gamma_1
			\draw [line width=0.5mm,fill=black!8!white,double] plot [smooth cycle]  coordinates {(0,-1) (4,-1) (4,1) (5,1) (5,2) (4.5,2) (4,3) (4.5,4) (2.5,4.5) (-1,4) (-1,1)}--cycle;
			\draw [line width=0.5mm,fill=white, thin, double] plot [smooth cycle] coordinates {(0.5,-0.5) (3.5,-0.5) (3.5,1) (2,1) (1, 1.5) (2,3) (3,3) (3,4) (0,3.5) (-0.5,2)}--cycle;
			\draw [line width=0.5mm,fill=white, double] plot [smooth cycle] coordinates {(2,1.5) (3.5,1.5) (3.5,2.5) (2,2.5)}--cycle;
			\draw [line width=0.5mm,fill=black!8!white, double] plot [smooth cycle] coordinates {(2.5,1.75) (3.25,1.75) (3,2.25) (2.25,2.25)}--cycle;
			\draw [line width=0.5mm,fill=black!8!white, thin, double] plot [smooth cycle] coordinates {(2,3.25) (2.5,3.5) (2.75,3.75) (1.75,3.5)}--cycle;
			
			%gamma_2
			\begin{scope}[shift={(0,0.1)}]
				\draw [ line width=0.5mm, draw=black, dashed] plot [smooth cycle] coordinates {(0.25,1.75) (1.1,2) (1.25,2.75) (0.25,3) }--cycle;
				\draw [line width=0.5mm, fill=black!8!white, thin, double] plot [smooth cycle] coordinates {(0.5,2.25) (1,2.5) (0.5,2.75) (0.25,2.5)}--cycle;
			\end{scope}
			%gamma_3
			\draw [line width=0.5mm, draw=black, dashed] plot [smooth cycle] coordinates {(1,-0.25) (3,-0.25) (3,0.75) (1,0.75) }--cycle;
			\draw [line width=0.5mm, fill=black!8!white, thin, double] plot [smooth cycle] coordinates {(1.5,0) (2,0) (2.5,0.5) (1.5,0.5) (1.25,0.25) }--cycle;
			
			%nodes
			\draw (-0.25,0) node[scale=1] {\Large $\overline{\gamma}_1$};
			\draw (0.8,2.2) node[scale=1] {\Large $\overline{\gamma}_2$};
			\draw (2.75,0.4) node[scale=1] {\Large $\overline{\gamma}_3$};
			
		\end{tikzpicture}
		\caption{An example of $\Gamma(\sigma)=\{\overline{\gamma}_1,\overline{\gamma}_2, \overline{\gamma}_3\}$, with $\overline{\gamma}_1$ having regions in the  inner boundary with different signs. In the figure, the grey region represents the incorrect points, and the thin and thick border corresponds to, respectively, $+1$ and $-1$ labels.}
	\end{figure}
	
 A connected component $\overline{\gamma}^{(k)}$ of $\overline{\gamma}\in \Gamma(\sigma)$ is called \emph{external} if for any other connected component $\overline{\gamma}^{(k')}$ with $V(\overline{\gamma}^{(k')})\cap V(\overline{\gamma}^{(k)}) \neq \emptyset$ we have $V(\overline{\gamma}^{(k')})\subset V(\overline{\gamma}^{(k)})$.
	\begin{lemma}
		Any configuration $\sigma$ is constant on $\din V(\overline{\gamma})$, for each $\overline{\gamma} \in \Gamma(\sigma)$. 
	\end{lemma}
	\begin{proof}
		Note that $V(\overline{\gamma}) $ is the union of $V(\overline{\gamma}^{(k)})$ for all its external connected components $ \overline{\gamma}^{(k)}$. Suppose there are $\overline{\gamma}^{(k)},\overline{\gamma}^{(j)}$ external connected components of $\overline{\gamma}$ such that the sign of $\sigma$ on  $\din V(\overline{\gamma}^{(k)})$ is different from the sign on $\din V(\overline{\gamma}^{(j)})$. Then, the configuration $\sigma$ must change its sign inside the set $V(\overline{\gamma}^{(k)})^c\cap V(\overline{\gamma}^{(j)})^c$. Since $\sigma$ is constant outside some finite set $\Lambda$, either $\overline{\gamma}^{(k)}$ or $\overline{\gamma}^{(j)}$ must be surrounded by a different region of incorrect points, let us call it $\overline{\gamma}^{(l)}$. We can assume that the connected component surrounded by $\overline{\gamma}^{(l)}$ is $\overline{\gamma}^{(k)}$. The set $\overline{\gamma}^{(l)}$ cannot be a connected component  of $\overline{\gamma}$, otherwise the set $\overline{\gamma}^{(k)}$ would not be external. If $\overline{\gamma}^{(l)}$ is a connected component of another element $\overline{\gamma}' \in \Gamma(\sigma)$, then $\overline{\gamma}$ has a nonempty intersection with at least two connected components of $(\overline{\gamma}')^c$, contradicting Condition \textbf{(A)}. 
	\end{proof}
	
	The label of $\overline{\gamma}$ is defined similarly as in Chapter 1.
 
	\begin{definition}
		Given a configuration $\sigma$ with finite boundary, its \emph{contours} $\gamma$ are pairs $(\overline{\gamma},\lab_{\overline{\gamma}})$,  where $\overline{\gamma} \in \Gamma(\sigma)$ and $\lab_{\overline{\gamma}}$ is the label function defined previously. The \emph{support of the contour}  $\gamma$ is defined as $\Sp(\gamma)\coloneqq \overline{\gamma}$ and its \emph{size} is given by $|\gamma| \coloneqq |\Sp(\gamma)|$.
	\end{definition}

	Another important concept for our analysis of phase transition is the interior of a contour. The following sets will be useful 
	\[
	\I_\pm(\gamma) = \hspace{-1cm}\bigcup_{\substack{k \geq 1, \\ \lab_{\overline{\gamma}}(\I(\Sp(\gamma))^{(k)})=\pm 1}}\hspace{-1cm}\I(\Sp(\gamma))^{(k)} , \;\;\;
	\I(\gamma) = \I_+(\gamma) \cup \I_-(\gamma), \;\;\;
	V(\gamma) = \Sp(\gamma) \cup \I(\gamma),
	\]
	where $\I(\Sp(\gamma))^{(k)}$ are the connected components of $\I(\Sp(\gamma))$. Notice that the interior of contours in Pirogov-Sinai theory are at most unions of simple connected sets. In our case, they are only connected, i.e., they may have holes. There is no bijective correspondence between our contours and configurations. Usually, there is more than one configuration giving the same boundary set. Also, it is not true that for all families of contours $\Gamma \coloneqq \{\gamma_1, \gamma_2, \dots, \gamma_n\}$ there is a configuration $\sigma$ whose contours are exactly $\Gamma$. This happens because they may not form a $(M,a,r)$-partition, and, even if this is the case, their labels may not be compatible. When such a configuration exists, we say that the family of contours $\Gamma$ is \emph{compatible}. 
	
	\begin{figure}[H]
		\centering
		
		\tikzset{every picture/.style={line width=0.75pt}} 
		
		\begin{tikzpicture}[scale=0.6]
			%\draw [gray!50]  (0,0) -- (1,1) --(2,1.5)-- (3,1) -- (3,0)  -- (2,-1) -- cycle;
			%gamma_1
			\draw [line width=0.5mm,fill=black!8!white,  thin, double] plot [smooth cycle] coordinates {(0,-1) (4,-1) (4,1) (2.5,2) (2,3.5) (-1,4) (-1,1)}--cycle;
			\draw [line width=0.5mm,fill=white, double] plot [smooth cycle] coordinates {(1.5,0) (2,2) (0.5,2.5) (0,1)}--cycle;
			\draw [line width=0.5mm,fill=black!8!white, double] plot [smooth cycle] coordinates {(1,1) (1.5,1.5) (1,2) (0.5,1.5)}--cycle;
			
			%gamma_2
			\draw [line width=0.5mm,fill=black!8!white,  thin, double] plot [smooth cycle] coordinates {(5,4) (7,4) (7,8) (8,8) (8,10) (6,10) (6,11) (3,11) (3, 7)}--cycle;
			
			%gamma_3
			\draw [line width=0.5mm,fill=black!8!white,  thin, double] plot [smooth cycle] coordinates {(22,4) (24,4) (24,8) (25,8) (25,10) (23,10) (23,13) (19,13) (18,10) (16,6) (16,5)}--cycle;
			\draw [line width=0.5mm,fill=white, double] plot [smooth cycle] coordinates {(17,6) (23,5) (23,7) (20,8) }--cycle;
			\draw [line width=0.5mm,fill=black!8!white,  thin, double] plot [smooth cycle] coordinates {(20,6.25) (21,6.25) (21,6.9) (20,6.9) }--cycle;
			\draw [line width=0.5mm,fill=white, thin, double] plot [smooth cycle] coordinates {(19,10) 
				(22,10) (22,12) (20,12) }--cycle;
			\draw [line width=0.5mm,fill=black!8!white, double] plot [smooth cycle] coordinates {(20.25,10.75) (21.25,10.75) (21.25,11.4) (20.25,11.4) }--cycle;
			
			%Nodes
			\draw (0,0) node[scale=0.75] {\Huge$\gamma_1$};
			\draw (1,1.5) node[scale=0.75] {\large$\gamma_2$};
			\draw (5,7) node[scale=0.75] {\Huge$\gamma_3$};
			\draw (22,8.5) node[scale=0.75] {\Huge$\gamma_4$};
			\draw (20.5,6.5) node[scale=0.75] {\Large$\gamma_5$};
			\draw (20.8,11) node[scale=0.75] {\Large$\gamma_6$};
		\end{tikzpicture}
		\caption{Above we have two situations where incompatibility happens. In the first case, we have $\gamma_1$ and $\gamma_2$ two contours that are close, thus they should not be separated. In the case $\gamma_4,\gamma_5,\gamma_6$ we have the usual problem of labels not matching.}
	\end{figure}
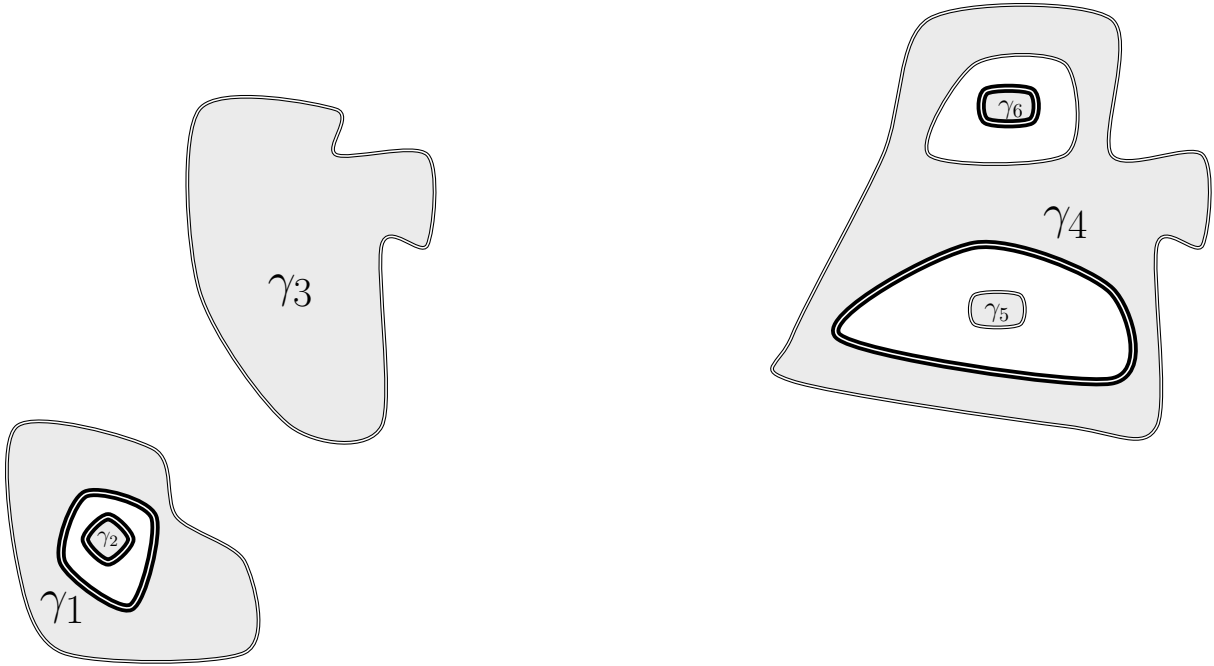
	
	\subsection{Entropy Bounds}
	
	The proofs in this section are highly inspired by section 4 of Fr\"{o}hlich-Spencer \cite{Fro2}, the one-dimensional case studied in Fr\"{o}hlich-Spencer \cite{Fro1} and by Cassandro, Ferrari, Merola, Presutti \cite{Cass}. We say that a contour $\gamma$ in a family $\Gamma$ is \emph{external} if its external connected components are not contained in any other $V(\gamma')$, for $\gamma' \in \Gamma\setminus\{\gamma\}$. As in the previous Chapter, we say that contour is a \emph{$+$- contour} (resp. \emph{- contour}) if the label of $\lab_{\overline{\gamma}}(\overline{\gamma}^{(0)})=+1$ (respectively $-1$). For each $\Lambda \Subset \Z^d$, define the set of all external compatible families of contours $\Gamma$ with label $\pm$ contained in $\Lambda$ by
	\[
	\mathcal{E}^\pm_\Lambda \coloneqq\{\Gamma= \{\gamma_1, \ldots, \gamma_n\}: \Gamma \text{ is compatible,} \gamma_i \text{ is external}, \lab(\gamma_i)=\pm1, V(\Gamma) \subset \Lambda\},
	\]
	where $V(\Gamma)=\bigcup_{1 \leq i\leq n}V(\gamma_i)$. When we write $\gamma \in \mathcal{E}^\pm_\Lambda$ we mean $\{\gamma\} \in \mathcal{E}^\pm_\Lambda$. To hold a Peierls-type argument, we need to find an upper bound of the number of contours with fixed size $|\gamma|$ containing a given point, meaning $0\in V(\gamma)$. To do so, we will need some auxiliary results. 
	
	To better estimate the entropy of the contours coming from a $(M,a,r)$-partition we must change the combinatorial arguments accordingly, since now they may be disconnected. Let us denote by $\mathscr{C}_n$ an arbitrary collection of $n$-cubes. For $n,m\ge 0$ with $n\le m$, we say that  
	$\mathscr{C}_{n}$ is \emph{subordinated} to $\mathscr{C}_{m}$, denoted by $\mathscr{C}_n \preceq \mathscr{C}_m$, if $\mathscr{C}_m$ is a minimal cover of $\cup_{C_n(x)\in \mathscr{C}_n}C_n(x)$.
	% when the first cover is subordinated to the second one. 
	For each $n,m\geq 1$, with $n\leq m$, define
	\[
	N(\mathscr{C}_m, V_n) \coloneqq |\{\mathscr{C}_{n}: \mathscr{C}_n \preceq \mathscr{C}_m, |\mathscr{C}_n|=V_n, \mathscr{C}_m \}|
	\]
	to be the number of collections of $n$-cubes $\mathscr{C}_n$ subordinated to a given $\mathscr{C}_m$ such that $|\mathscr{C}_n| = V_n$. 
	The following two propositions are straightforward generalizations of Theorem 4.2 and Proposition 4.3 of \cite{Fro3}. The original proofs correspond to our case when $r=1$.
	
	\begin{proposition}\label{appB:prop1}
		Let $r,n\geq 1$ be integers, and $\mathscr{C}_{rn}$ be a fixed collection of $rn$-cubes. Then there exists a constant $c\coloneqq c(d,r)>0$ such that
		\be\label{eq_appB}
		N(\mathscr{C}_{rn}, V_{r(n-1)}) \leq e^{c V_{r(n-1)}}.
		\ee
	\end{proposition}
	\begin{proof}
		For each $rn$-cube $C_{rn}(x) \in \mathscr{C}_{rn}$, let $N_{C_{rn}(x)}$ be the number of cubes in a collection of $r(n-1)$-cubes $\mathscr{C}_{r(n-1)}$ that are covered by $C_{rn}(x)$. Fix $(n_{C_{rn}(x)})_{C_{rn}(x) \in \mathscr{C}_{rn}}$, with $n_{C_{rn}(x)}\geq 1$, an integer solution to the inequality
		\be\label{100}
		V_{r(n-1)}\leq \sum_{C_{rn}(x) \in \mathscr{C}_{rn}}n_{C_{rn}(x)}
		\leq 2dV_{r(n-1)},
		\ee
		and define $N(\mathscr{C}_{rn}, V_{r(n-1)}, (n_{C_{rn}(x)})_{C_{rn}(x) \in \mathscr{C}_{rn}})$ be the number of collections $\mathscr{C}_{r(n-1)}$ of $r(n-1)$-cubes subordinated to $\mathscr{C}_{rn}$ such that $|\mathscr{C}_{r(n-1)}|=V_{r(n-1)}$ and $N_{C_{rn}(x)}=n_{C_{rn}(x)}$ for each $C_{rn}(x) \in \mathscr{C}_{rn}$. We get
		\[
		N(\mathscr{C}_{rn}, V_{r(n-1)}) = \sum_{(n_{C_{rn}(x)})_{C_{rn}(x) \in \mathscr{C}_{rn}}}N(\mathscr{C}_{rn}, V_{r(n-1)}, (n_{C_{rn}(x)})_{C_{rn}(x) \in \mathscr{C}_{rn}}).
		\]
		%To proceed, we need to count the number of positions that a $r(n-1)$-cube can sit inside a $rn$-cube. 
		The number of positions that a $r(n-1)$-cube can sit inside a $rn$-cube is at most $(2^{r+1}-1)^d$, and the number of subordinated minimal coverings $\mathscr{C}_{r(n-1)}$ with a given $N_{C_{rn}(x)}=n_{C_{rn}(x)}$ is at most $\binom{(2^{r+1}-1)^d}{n_{C_{rn}(x)}}$. Hence 
		\[
		N(\mathscr{C}_{rn}, V_{r(n-1)}, (n_{C_{rn}(x)})_{C_{rn}(x) \in \mathscr{C}_{rn}}) \leq \prod_{C_{rn}(x) \in \mathscr{C}_{rn}}\binom{(2^{r+1}-1)^d}{n_{C_{rn}(x)}}.
		\]
		The number of solutions to \eqref{100} is bounded by $2^{2dV_{r(n-1)}+1}$, concluding that Inequality (\ref{eq_appB}) holds for $c = (2d+1)\log(2)+ d\log(2^{r+1}-1)$.
	\end{proof}
	Given a subset $\Lambda \Subset \Z^d$ and integers $r\geq 1$ and $n\geq 0$, define the \emph{total volume} by
	\be
	V_r(\Lambda)= \sum_{n=0}^{n_r(\Lambda)} |\mathscr{C}_{rn}(\Lambda)|,
	\ee  
	where $n_r(\Lambda) = \lceil \log_{2^r}(\diam(\Lambda))\rceil$ and $\mathscr{C}_{rn}(\Lambda)$ is a \emph{minimal cover} of $\Lambda$ with $rn$-cubes $C_{rn}(x)$ defined in (\ref{cubes}). Observe that $|\mathscr{C}_{0}(\Lambda)|=|\Lambda|$. 
	Let $V\geq 1$ be a positive integer and $\mathscr{F}_V$ be the set defined by
	\be
	\mathscr{F}_V \coloneqq \{\Lambda \Subset \Z^d : V_r(\Lambda) =V, 0 \in \Lambda\}.
	\ee
	%Proposition \ref{appB:prop1} will be important in the following proposition, where we give an upper bound to the number of elements in $\mathscr{F}_V$. 
	By using Proposition \ref{appB:prop1}, let us show that the number of elements in $\mathscr{F}_V$ is exponentially bounded by $V$.
	\begin{proposition}\label{appB:prop2}
		There exists $b\coloneqq b(d,r) >0$ such that,
		\be\label{bound_F_V}
		|\mathscr{F}_V| \leq e^{bV}.
		\ee
	\end{proposition}
	\begin{proof}
		For each $\Lambda \in \mathscr{F}_V$, there is a unique family of minimal covers $(\mathscr{C}_{rn}(\Lambda))_{0 \leq n \leq n_r(\Lambda)}$, since the minimal cover $\mathscr{C}_0$ characterizes the set $\Lambda$. Moreover, the minimal covers $\mathscr{C}_{rn}(\Lambda)$ can always be chosen in a way that $\mathscr{C}_{rn_1}$ is subordinated to $\mathscr{C}_{rn_2}$ whenever $n_1 \leq n_2$, since, in order to compute the total volume $V_r(\Lambda)$, we only need to know the size of each minimal cover $\mathscr{C}_{rn}(\Lambda)$. 
		%Indeed, start with $\mathscr{C}_{rn_r(\Lambda)}$. By definition, the minimal cover $\mathscr{C}_{rn_r(\Lambda)}$ consists of only one $n_r(\Lambda)$-cube $C_{n_r(\Lambda)}$ that contains the set $\Lambda$. We can break the cube $C_{n_r(\Lambda)}$ into $(n_r(\Lambda)-1)$-cubes. The collection of all the $n_r(\Lambda)-1$-cubes cover $\Lambda$. Then, consider the minimal cover $\mathscr{C}_{n_r(\Lambda)-1}$ with these cubes. We can iterate this construction for every $1\leq n <n_r(\Lambda)-1$ by dividing each $rn$-cube into $r(n-1)$-cubes and taking the minimal cover. When we reach $n=1$, since the cover $\mathscr{C}_0$ are just the points contained in $\Lambda$, it is clear that it must be subordinated to $\mathscr{C}_r$. 
		Fix $(V_{rn})_{0 \leq n \leq n_r(\Lambda)-1}$ a solution to the equation
		\be\label{constraint}
		\sum_{n=0}^{n_r(\Lambda)-1} V_{rn}= V-1. 
		\ee
		We can estimate $|\mathscr{F}_V|$ by counting the number of families $(\mathscr{C}_{rn}(\Lambda))_{0 \leq n \leq n_r(\Lambda)}$ where the last cover $\mathscr{C}_{rn_r(\Lambda)}$ consists of a unique cube $C_{rn_r(\Lambda)}(x)$ containing $0$. Let $\mathscr{F}_{V,m}=\{\Lambda \in \mathscr{F}_V:  n_r(\Lambda)=m\}.$ Then,
		\[
		|\mathscr{F}_V| \leq \sum_{m=1}^{V}|\mathscr{F}_{V,m}|,
		\]
		since $n_r(\Lambda)\leq V_r(\Lambda)=V$. Now,
		\begin{align*}
			|\mathscr{F}_{V,m}| &= \sum_{(V_{rn})_{0 \leq n \leq m-1}}|\{(\mathscr{C}_{rn}(\Lambda))_{0 \leq n \leq m}: \mathscr{C}_{rn}\preceq \mathscr{C}_{r(n+1)}, |\mathscr{C}_{rn}|=V_{rn}, 0 \in C_{rm}(x)\}| \\
			&=\sum_{(V_{rn})_{0 \leq n \leq m-1}}\sum_{C_{rm}(x) \ni 0}|\{\mathscr{C}_{rn}(\Lambda)\}_{0 \leq n \leq m-1}: \mathscr{C}_{rn}\preceq \mathscr{C}_{r(n+1)}, |\mathscr{C}_{rn}|=V_{rn}| \\
			&= \sum_{(V_{rn})_{0 \leq n \leq m-1}m}\sum_{C_{rm}(x) \ni 0}\sum_{\substack{\mathscr{C}_{r(m-1)} \\ |\mathscr{C}_{r(m-1)}|=V_{r(m-1)}\\ \mathscr{C}_{r(m-1)} \preceq \mathscr{C}_{rm}}}|\{\mathscr{C}_{rn}(\Lambda)\}_{0 \leq n \leq m-2}: \mathscr{C}_{rn}\preceq \mathscr{C}_{r(n+1)}, |\mathscr{C}_{rn}|=V_{rn}| \\
			&=\sum_{(V_{rn})_{0 \leq n \leq m-1}}\sum_{C_{rm}(x) \ni 0}\sum_{\substack{\mathscr{C}_{r(m-1)} \\ |\mathscr{C}_{r(m-1)}|=V_{r(m-1)}\\ \mathscr{C}_{r(m-1)} \preceq \mathscr{C}_{rm}}}\dots \sum_{\substack{\mathscr{C}_{r} \\ |\mathscr{C}_r|=V_r\\ \mathscr{C}_r \preceq \mathscr{C}_{2r}}} N(\mathscr{C}_r, V_0).
		\end{align*}
		Iterating Inequality \eqref{eq_appB} we get 
		\begin{equation}\label{101}
        \begin{split}	
   |\mathscr{F}_{V,m}|&\leq
			e^{c V_0}\sum_{(V_{rn})_{0 \leq n \leq n_r(\Lambda)-1}}\sum_{C_{rm}(x) \ni 0}\sum_{\substack{\mathscr{C}_{r(m-1)} \\ |\mathscr{C}_{r(m-1)}|=V_{r(m-1)}\\ \mathscr{C}_{r(m-1)} \preceq \mathscr{C}_{rm}}}\dots \sum_{\substack{\mathscr{C}_{2r} \\ |\mathscr{C}_{2r}|=V_{2r}\\ \mathscr{C}_{2r} \preceq \mathscr{C}_{3r}}} N(C_{2r}, V_r) \\
			&\leq  |C_{rm}(x): C_{rm}(x) \ni 0| \sum_{(V_{rn})_{0 \leq n \leq m-1}} \prod_{n=0}^{m-1}e^{c V_{rn}}.
    \end{split}		
  \end{equation}
		Since the $rm$-cubes must contain the point $0$, we just need to count how many centers $2^{rm-1}x \in \Z^d$ are possible. This is equivalent to count how many $x_i$ satisfy
		$0 \in [2^{rm-1}x_i -2^{rm-1}, 2^{rm-1}x_i+2^{rm-1}]$ for all $1\leq i \leq d$. It is easy to see that $x_i \in \{-1,0,1\}$, for all $1 \leq i \leq d$, hence
		\be \label{eq102}
		|C_{rm}(x): C_{rm}(x) \ni 0| \leq 3^d.
		\ee
		We have at most $2^V$ solutions for Equation \eqref{constraint}, thus Inequality \eqref{101} together with the fact that the number of $m$-cubes containing $0$ is bounded by $3^d$ yield us
		\[
		|\mathscr{F}_V| \leq 3^d V 2^V e^{c V}.
		\]
		Therefore, Inequality (\ref{bound_F_V}) holds for $b = d\log(3) + \log(2)+ c+1$. 
	\end{proof}
	
	We are able to prove Proposition \ref{importprop} once we show that a fixed configuration $\sigma$ with $\Gamma(\sigma)=\{\gamma\}$ and a fixed volume $|\gamma|=m$ implies that the total volume $V_r(\Sp(\gamma))$ is finite.  We need the following auxiliary result about graphs, which is a generalization of Claim 4.2 in \cite{KP}.   
	%For a fixed $\Lambda \Subset \Z^d$, consider a minimal cover $\mathscr{C}_{rn}$. We associate to $\mathscr{C}_{rn}$ a graph $G_{rn}$ where 
	%\be\label{graph}
	%v(G_{rn})=\mathscr{C}_{rn}, \text{   and   } 
	%e(G_{rn})=\{(C_{rn},C'_{rn}) \in \mathscr{C}_{rn}\times \mathscr{C}_{rn}: d(C_{rn},C'_{rn}) \leq  M d^a 2^{a rn}\}.
	%\ee
	%For a given $rn$-cube $C_{rn} \in \mathscr{C}_{rn}$ , let $G_{C_{rn}}$ be the maximally connected subgraph of $G_{rn}$ such that $C_{rn} \in v(G_{C_{rn}})$. We define the collection of \emph{small clusters} $\mathscr{C}_{rn}'$ by 
	%\be
	%\mathscr{C}'_{rn} =\{C_{rn}  \in \mathscr{C}_{rn}: |v(G_{C_{rn}})| \leq 2^r-1\}.
	%\ee
	%If $|\mathscr{C}_{rn}|=1$ we set $\mathscr{C}'_{rn} = \emptyset$. Note that when $r=1$, this condition recovers the condition of being an isolated cube in Fr\"{o}hlich-Spencer \cite{Fro3}.
	\begin{proposition}\label{proptree}
		Let $k\ge 1$ and $G$ be a finite, non-empty, connected simple graph. Then, $G$ can be covered by $\lceil |v(G)|/k \rceil$ connected subgraphs of size at most $2k$.
	\end{proposition}
	
	\begin{proof}
		Since we can always consider a spanning tree from a connected graph $G$, it is sufficient to prove the proposition when $G$ is a tree. If either $k=1$ or $|v(G)|\leq 2k$, our statement is trivially true, so we suppose $k\geq 2$ and $|v(G)|\geq 2k+1$, and proceed by induction on $\lceil |v(G)|/k \rceil$.
		
		Choose a vertex $r \in v(G)$ to be our root. For every vertex $u$ of $G$ let $\text{dep}(u)$ be \emph{the depth} of the vertex $u$, i.e., the distance in the graph between $r$ and $u$. We say that a vertex $w$ is a \emph{descendant} of $u$ if there is a path $u_1=w, u_2, \dots, u_{n-1}, u_n=u$ in $G$ with $\text{dep}(u_i)> \text{dep}(u)$, for all $1\leq i \leq n-1$ and let $\text{desc}(u)$ be the number of descendants of $u$. Take a vertex $u^*$ from $\{u \in v(G): \text{desc}(u)\geq k\}$, that is not empty since $\text{desc}(r)\geq 2k$, with highest depth, i.e., such that for any other $u \in v(G)$ with $\text{desc}(u)\geq k$ we have $\text{dep}(u^*)\geq \text{dep}(u)$. Let $u_1,\dots,u_t$ be the children of the vertex $u^*$, and define $a_i = \text{desc}(u_i)+1$. By definition of $u^*$, we have that $a_i\leq k$, for $1\leq i \leq t$, and $a_1+ \dots +a_t \geq k$. Hence, there must be an $1\leq s \leq t$ for which $k \leq a_1+ \dots +a_s < 2k$. Therefore, we can consider the subtree $T$ whose vertex set is composed by $u^*, u_1, \dots, u_s$ and their descendants. By construction, it holds $k+1 \leq |v(T)|\leq 2k$. 
		The induced subgraph $H \coloneqq G[(v(G)\setminus v(T))\cup \{u^*\}]$ is connected and satisfies $|v(H)|\leq |v(G)|-k$. Using the induction hypothesis, $H$ can be covered by $\lceil|v(H)|/k\rceil\leq \lceil|v(G)|/k\rceil-1$ connected subgraphs. Adding $T$ to this cover completes the proof. 
	\end{proof}
		The importance of the choice $r=\lceil \log_2(a+1)\rceil +d+1$ will be seen in the following proposition, where we will show that the total volume can be bounded by the size of the contour.

	\begin{proposition}\label{prop4}
		There exists a constant $\kappa\coloneqq \kappa(d,M,r) >0$ such that, for any contour $\gamma \in \mathcal{E}_\Lambda^\pm$,
		\be\label{eq100}
		V_r(\Sp(\gamma)) \leq \kappa|\gamma|.
		\ee 
	\end{proposition}
	\begin{proof}
		%Define $\mathscr{C}''_{rn} = \mathscr{C}_{rn} \setminus \mathscr{C}'_{rn}$. 
		Define $g:\mathbb{N} \rightarrow \Z$ by
		\be\label{prop4:eq1}
		g(n) = \left\lfloor \frac{n-2-\log_{2^r}(2Md^a)}{a}\right\rfloor.
		\ee
		We are going to prove 
		\be\label{prop4:eq2}
		|\mathscr{C}_{rn}(\Sp(\gamma))| \leq \frac{1}{2^{r-d-1}}|\mathscr{C}_{rg(n)}(\Sp(\gamma))|,
		\ee
		whenever $g(n)>0$, and either the graph $G_{rg(n)}(\Sp(\gamma))$ defined in Proposition \ref{prop1} has more than two connected components or it has only one connected component with at least $2^r$ vertices. Remember that $\mathscr{G}_{rg(n)}(\Sp(\gamma))$ is the set of all connected components of the graph $G_{rg(n)}(\Sp(\gamma))$. Note that
		\be\label{eq9}
		|\mathscr{C}_{rg(n)}(\Sp(\gamma))| = 2^r\sum_{G \in \mathscr{G}_{rg(n)}(\Sp(\gamma))} \frac{|v(G)|}{2^r}.
		\ee
		
		Proposition \ref{proptree} states that we can cover the vertex set $v(G)$ with a family of connected graphs $G_i$ with $1 \leq i \leq \lceil |v(G)|/2^r \rceil$ and $ |v(G_i)|\leq 2^{r+1}$. Using the inequality 
		\[
		\diam(\Lambda \cup \Lambda') \leq \diam(\Lambda) + \diam(\Lambda') + d(\Lambda,\Lambda'), \quad \text{for all } \Lambda, \Lambda' \Subset \Z^d,
		\]
		and the fact that we can always extract a vertex of a connected graph in a way that the induced subgraph is still connected, by removing a leaf of a spanning tree we can bound the diameter of $B_{G_i} = \bigcup_{C_{rg(n)}(x)\in v(G_i)}C_{rg(n)}(x)$ by
		\begin{equation}\label{eq10}
        \begin{split}
			\diam(B_{G_i}) &\leq \sum_{C_{rg(n)}(x) \in v(G_i)}\diam(C_{rg(n)}(x)) + |v(G_i)|Md^a 2^{arg(n)} \\
			&\leq d2^{r(g(n)+1)+1}+ Md^a2^{r(ag(n)+1)+1}  \\
			& \leq 2^{rn}.
		\end{split}
        \end{equation}
		% By construction, if $G \in \mathscr{G}_{rg(n)}$ for $\Sp(\gamma)$, then $v(G)\geq 2^r$, since if this was not the case, then $G$ would have been separated is another contour in the step $\mathscr{P}_{rg(n)}$ in the proof of Proposition \ref{prop1}. 
		The third inequality holds since $M,a,r \geq  1$. Therefore, each graph $G_i$ can be covered by one cube of side length $2^{rn}$ with center in $\Z^d$. We claim that every cube of side length $2^{rn}$ with an arbitrary center in $\Z^d$ can be covered by at most $2^d$ $rn$-cubes $C_{rn}(x)$. 
		%Indeed,
		Note that it is enough to consider the case where the cube has the form 
		\[
		\prod_{i=1}^d([q_i- 2^{rn-1}, q_i+ 2^{rn-1}])\cap \Z^d,
		\] 
		where $q_i \in \{0,1,\dots, 2^{rn-1}-1\}$, for $1 \leq i \leq d$.
		%, since the general case follows by translating the cube by $2^{rn-1}p$, for $p \in \Z^d.$
		It is easy to see that
		\[
		[q_i- 2^{rn-1}, q_i+ 2^{rn-1}] \subset [-2^{rn-1}, 2^{rn-1}]\cup [0, 2^{rn}].
		\] 
		%	where the interval $[0,2^{rn}]$ has center $2^{rn-1}$. 
		Taking products for all $1\leq i \leq d$, it concludes our claim. This reasoning allow us to conclude that the maximum number of $rn$-cubes required to cover each connected component $G$ of $G_{rg(n)}(\Sp(\gamma))$ is at most $2^d\lceil |v(G)|/2^r \rceil$, yielding us  
		\be\label{eq11}
		|\mathscr{C}_{rn}(\Sp(\gamma))| \leq \sum_{G \in \mathscr{G}_{rg(n)}(\Sp(\gamma))}\left|\mathscr{C}_{rn}\left(\bigcup_{1 \leq i \leq \lceil |v(G)|/2^r\rceil}B_{G_i}\right)\right| \leq \sum_{G \in \mathscr{G}_{rg(n)}(\Sp(\gamma))} 2^d\left\lceil\frac{|v(G)|}{2^r}\right\rceil.
		\ee
		
		If $|\mathscr{G}_{rg(n)}(\Sp(\gamma))|\geq 2$, since $\gamma \in \mathcal{E}^\pm_\Lambda$, each connected component $G \in \mathscr{G}_{rg(n)}(\Sp(\gamma))$ satisfies $|v(G)|\geq 2^r$. Indeed, if $|v(G)|\leq 2^r-1$, by our construction in Proposition \ref{prop1} the set $v(G)$ would be separated into another element of the $(M,a,r)$-partition. By hypothesis, if $|\mathscr{G}_{rg(n)}(\Sp(\gamma))|=1$ it already satisfies $|v(G_{rg(n)})|\geq 2^r$. 
		Together with 
		\[
		\frac{1}{2}\left\lceil\frac{|v(G)|}{2^r}\right\rceil  \leq \frac{|v(G)|}{2^r},
		\]
		Inequalities \eqref{eq9} and \eqref{eq11} yield 
		\[
		|\mathscr{C}_{rn}(\Sp(\gamma))| \leq 2^{d+1}\sum_{G \in \mathscr{G}_{rg(n)}(\Sp(\gamma))}\frac{|v(G)|}{2^r} = \frac{2^{d+1}}{2^r}|\mathscr{C}_{rg(n)}(\Sp(\gamma))|.
		\]
		
		So, Inequality \eqref{prop4:eq2} is proved. Let us define two auxiliary quantities
		\begin{equation*}
        \begin{split}
		&l_1(n) \coloneqq \max\{m \geq 0: g^m(n) \geq 0\} \quad \text{and} \\ &l_2(n) \coloneqq \max\{m \geq 0: |\mathscr{G}_{rg^m(n)}(\Sp(\gamma))|=1, |v(G_{rg^m(n)})|\leq 2^r-1\}.
        \end{split}		
        \end{equation*}
		For the set $\mathscr{G}_{rg^m(n)}(\Sp(\gamma))$ to be well defined we must have $g^m(n) \geq 0$, thus $l_2(n) \leq l_1(n)$. Moreover, knowing that $|\mathscr{C}_n(\Lambda)| \leq |\Lambda|$ for any $n \geq 0$,
		%, since the cubes get larger and the edges connect regions that are far more distant, yielding fewer connected components. 
		%Therefore, the function $m \mapsto |\mathscr{G}_{rg^m(n)}|$ becomes increasing since the function $m \mapsto g^m(n)$ is decreasing for any $n$. 
		%We conclude that $l_2(n) \leq l_1(n)$. 
		%Applying the %trivial 
		%inequality $|\mathscr{C}_m(\Lambda)| \leq |\mathscr{C}_n(\Lambda)|$ for any  $n \leq m$, we get
		\be\label{prop4:eq4}
		|\mathscr{C}_{rn}(\Sp(\gamma))| \leq  |\mathscr{C}_{rg^{l_2(n)}(n)}(\Sp(\gamma))| \leq \frac{1}{2^{(r-d-1)(l_1(n)-l_2(n))}}|\gamma|.
		\ee
		%since $|\mathscr{C}_n(\Lambda)| \leq |\Lambda|$ for any $n \geq 0$.
		
		We claim %that
		\be\label{eq13}
		l_1(n) \geq \begin{cases}
			0 & \text{ if }0 \leq n \leq n_0, \\
			\left\lfloor \frac{\log_2 (n)-\log_2(n_0)}{\log_2(a)} \right\rfloor & \text{ if } n> n_0,
		\end{cases}
		\ee
		where %$n_0=\frac{a+2+\log_{2^r}(2Md^a)}{a-1}$ 
		$n_0= (a+2+\log_{2^r}(2Md^a))(a-1)^{-1}$.
		The first bound is trivial. Let $n > n_0$ and consider the function 
		\[
		\tilde{g}(n) = \frac{n-a-2-\log_{2^r}(2Md^a)}{a}.
		\]
		%It is easy to see that 
		From the fact that $g(n) \geq \tilde{g}(n)$ and both functions are increasing, we have
		\be\label{ineq:gtilde}
		g^m(n) \geq \tilde{g}^m(n), \quad \text{ for all } m \geq 1,
		\ee
		which implies $\max_{\tilde{g}^m(n) \geq 0} m \leq l_1(n)$. Thus, we need to compute a lower bound for $m$ such that $\tilde{g}^m(n)>0$. Since
		\[
		\tilde{g}^m(n) = \frac{n}{a^m} - b\left(\frac{a^m-1}{a^{m-1}(a - 1)}\right),
		\] 
		is sufficient to have
		\be\label{prop4:eq5}
		\frac{n}{a^m}  > \frac{ab}{a- 1},
		\ee
		where $b=(a+2+\log_{2^r}(2Md^a))a^{-1}$.
		We get the desired bound after taking the %$\log_2$ 
		logarithm with respect to base two in both sides of Inequality \eqref{prop4:eq5}. To finish our calculation, we will analyze two cases depending if $l_2(n)$ is zero or not. First, let us consider the case where $l_2(n)=0$. Using Inequality \eqref{prop4:eq4} we get 
		\be
		V_r(\Sp(\gamma)) \leq |\gamma|\sum_{n=0}^\infty\frac{1}{2^{(r-d-1)l_1(n)}}.
		\ee
		To finish, notice that the equation above can be bounded in the following way
		\[
		\sum_{n=0}^\infty\frac{1}{2^{(r-d-1)l_1(n)}} \leq n_0+ 1+2^{r-d-1}n_0^{\frac{r-d-1}{\log_2(a)}} \zeta\left(\frac{r-d-1}{\log_2(a)}\right).
		\] 	
		Now consider the case $l_2(n)\neq 0$. A similar bound as \eqref{eq10} and the fact that $\mathscr{C}_{rg^m(n)}(\Sp(\gamma))$ is a cover for the set $\Sp(\gamma)$ implies
		\begin{align*}
			\diam(\Sp(\gamma)) &\leq \diam(B_{G_{rg^m(n)}(\Sp(\gamma))})\leq (d2^{rg^m(n)}+ Md^a 2^{arg^m(n)})|v(G_{rg^m(n)}(\Sp(\gamma)))|
			\leq 2Md^a 2^{arg^m(n)+r}.
		\end{align*}
		The inequality above yields
		\[
		\log_{2^r}(\diam(\Sp(\gamma))) \leq \log_{2^r}(2Md^a)+ ag^m(n)+1 \leq \log_{2^r}(2Md^a)+ \frac{n}{a^{m-1}}+1.
		\]
		Let us assume that $\diam(\Sp(\gamma))> 2^{2r+1}Md^a$. Isolating the term that is a function of $m$ and taking the logarithm with respect to base two in both sides of the equation above, it gives us
		\[
		m \leq 1+ \frac{\log_2(n) - \log_2(\log_{2^r}(\diam(\Sp(\gamma)))-\log_{2^r}(2Md^a)-1)}{\log_2(a)}.
		\] 
		The inequality above is valid for all $m \in \{k\geq 0: |\mathscr{G}_{rg^k(n)}|=1, |v(G_{rg^k(n)})|\leq 2^r-1\}$. Thus, together with the lower bound \eqref{eq13}, we get for $n>n_0$
		\[
		l_1(n)-l_2(n) \geq \frac{ \log_2(\log_{2^r}(\diam(\Sp(\gamma)))-\log_{2^r}(2Md^a)-1)-\log_2(n_0)}{\log_2(a)}-2.
		\]
		Inequality \eqref{prop4:eq4} together with the inequality above yields  
		\begin{equation}
        \begin{split}
			V_r(\gamma) &\leq (n_0 +1) |\gamma|+ |\gamma|\frac{2^{2(r-d-1)}n_0^{\frac{r-d-1}{\log_2(a)}}n_r(\Sp(\gamma))}{(\log_{2^r}(\diam(\Sp(\gamma)))-\log_{2^r}(2Md^a)-1)^{\frac{r-d-1}{\log_2(a)}}} \\
			&\leq |\gamma|\left(n_0 +1+ 2^{2(d+1-r)}n_0^{\frac{d+1-r}{\log_2(a)}}+ \frac{2^{2(d+1-r)}n_0^{\frac{d+1-r}{\log_2(a)}}\log_{2^r}(\diam(\Sp(\gamma)))}{\log_{2^r}(\diam(\Sp(\gamma)))-\log_{2^r}(2Md^a)-1}\right) \\
			&\leq (n_0 + 1+ 2^{2(r-d-1)}n_0^{\frac{r-d-1}{\log_2(a)}}(2+\log_{2^r}(2Md^a)))|\gamma|,
		\end{split}
        \end{equation}
		where the last inequality is due to the fact that %$\frac{x}{x-w}\leq 1+w$, 
		$x/(x-w)\leq 1+w$ for any constant $x \geq w+1$.
		
		If $\diam(\Sp(\gamma))\leq 2^{2r+1}Md^a$, we have
		\be
		V_r(\Sp(\gamma))\leq (n_r(\Sp(\gamma))+1)|\gamma| \leq (3 + \log_{2^r}(2Md^a))|\gamma|.
		\ee
		Taking $$\kappa = \max\left\{n_0 + 1+ 2^{2(r-d-1)}n_0^{\frac{r-d-1}{\log_2(a)}}(2+\log_{2^r}(2Md^a)), n_0+ 1+2^{r-d-1}n_0^{\frac{r-d-1}{\log_2(a)}} \zeta\left(\frac{r-d-1}{\log_2(a)}\right)\right\},$$ concludes the desired result.
		%In order to finish the proposition, we need to bound the following sum
		%\[
		%\sum_{n=0}^\infty\frac{1}{2^{(r-t)l(n)}} \leq n_0+ n_0^s \zeta(s), \text{  where  }s = \frac{r-t}{\log_2(a)}.
		%\] 	
		%Defining $S_{m,j}= \{n : g^m(n) = j\}$,	
		%We will show that $|S_{m,j}|$ is bounded. Indeed, since the function $g$ is nondecreasing, the set $S_{m,j}$ is finite. Let $n_*$ and $n^*$ be the minimum and the maximum of $S_{m,j}$, respectively. It holds
		%\be
		%\frac{n^*}{a^m}-n_0 \leq  g^m(n^*)=g^m(n_*) \leq\frac{n_*}{a^m}  .
		%\ee
		%we have $|S_{m,j}| \leq 2n_0 a^m$,
		%Since for $m,m',j,j'$ the sets $S_{m,j}$ and $S_{m',j'}$ may have an non empty intersection, 
		%and then
		%Taking $\kappa =n_0+n_0^s\zeta(s)$ yields the desired result.
	\end{proof}
	
	%We give an upper bound to the number of possible contours next.
	
	The Proposition \ref{importprop} will show that although the contours may be disconnected, there are at most an exponential number of them, depending on their size.
	
	\begin{proposition}\label{importprop}
		Let $m\geq 1$, $d\ge 2$, and $\Lambda \Subset \mathbb{Z}^d$. Consider the set $\mathcal{C}_0(m)$ given by
		\[
		\mathcal{C}_0(m) = \{\overline{\gamma}\in \mathcal{F}(\Z^d):\exists\gamma \in \mathcal{E}_\Lambda^- \text{ s.t.
} \Sp(\gamma)=\overline{\gamma}, 0 \in V(\gamma), |\gamma|=m\}.
		\]
		There exists $c_1\coloneqq c_1(d,M,r)>0$ such that
		\[
		|\mathcal{C}_0(m)| \leq e^{c_1 m}.
		\]
	\end{proposition}
	
	\begin{proof}
		For a given contour $\gamma$, define the set $\mathcal{C}_\gamma$ by
		\be
		\mathcal{C}_\gamma \coloneqq \{\Sp(\gamma') \in \mathcal{C}_0(m): \exists \; x \in \Z^d \text{ s.t. } \Sp(\gamma') = \Sp(\gamma)+x\}.
		\ee
		Thus, we can partition the set $\mathcal{C}_0(m)$ into
		\[
		\mathcal{C}_0(m) = \bigcup_{\substack{0 \in \Sp(\gamma) \\ |\gamma|=m}}C_\gamma.
		\]	
		Given a contour $\gamma \in \mathcal{E}_\Lambda$, there are at most $|V(\gamma)|$ possibilities for the position of the point $0$.
		Then,
		\be\label{prop5:eq1}
		|\mathcal{C}_0(m)| \leq \sum_{\substack{0 \in \Sp(\gamma) \\ |\gamma|=m}}|\mathcal{C}_\gamma| \leq \sum_{\substack{0 \in \Sp(\gamma) \\ |\gamma|=m}}|V(\gamma)|.
		\ee
		Using the isoperimetric inequality and the fact $\din V(\gamma) \subset \Sp(\gamma)$ we obtain,
		\be\label{eq14}
		\sum_{\substack{0 \in \Sp(\gamma) \\ |\gamma|=m}}|V(\gamma)| \leq m^{1+ \frac{1}{d-1}}|\{\overline{\gamma}\in \mathcal{F}(\Z^d):\exists\gamma \in \mathcal{E}_\Lambda^- \text{ s.t.
} \Sp(\gamma)=\overline{\gamma}, 0 \in \Sp(\gamma), |\gamma|=m \}|.
		\ee
  By Proposition \ref{prop4}, and since not all the finite sets with bounded total volume are contours, we have
		% Proposition \ref{prop4} states that if $\gamma \in \mathcal{E}^\pm_\Lambda$, then $V_r(\Sp(\gamma)) \leq \kappa |\gamma|$. Since not all the finite sets with bounded total volume are contours, we have
		\be\label{eq15}
		\{\overline{\gamma}\in \mathcal{F}(\Z^d):\exists\gamma \in \mathcal{E}_\Lambda^- \text{ s.t.
} \Sp(\gamma)=\overline{\gamma}, 0 \in \Sp(\gamma), |\gamma|=m \} \subset \{\Lambda \Subset \Z^d: 0 \in \Lambda, V_r(\Lambda)\leq \kappa m\}.
		\ee
		Proposition \ref{appB:prop2} yields
		\be\label{eq16}
		|\{\Lambda \Subset \Z^d: 0 \in \Lambda, V_r(\Lambda)\leq \kappa m \}|= \sum_{V= 1}^{\lceil \kappa m \rceil }|\mathscr{F}_V| \leq \frac{e^{2b\kappa m+1}}{e^{b}-1}.
		\ee
		Substituting Inequalities \eqref{eq14}, \eqref{eq15} and \eqref{eq16} into Inequality \eqref{prop5:eq1}, we conclude
		\be
		|\mathcal{C}_0(m)| \leq m^{1+ \frac{1}{d-1}}\frac{e^{2b\kappa m+1}}{e^{b}-1} \leq e^{c_1m},
		\ee
		for %$c_1 = 2b\kappa+1+ \frac{1}{d-1}$.
		$c_1 = 2b\kappa+1+ (d-1)^{-1}$.
	\end{proof}
	
	\section{Phase Transition}

	%In this section, we show that the long range Ising model with coupling constant decaying polynomially in the distance with parameter $\alpha$ with presence of the spatially dependent magnetic field, which decays polynomially in the distance with parameter $\delta$, undergoes a phase transition at low temperature when $\min\{d-\alpha,1\}<\delta<d$.
	
	In this section, we prove that the long-range Ising model with a decaying field undergoes a phase transition at low temperature when $\min\{\alpha-d,1\}<\delta<d$. The strategy will follow closely the one from Chapter 1, and is described in the end of Section 1.1. When the magnetic field decays with power $\delta \geq d$, the result is straightforward. In fact, for $\delta>d$, the magnetic field is summable and, by a general result of Georgii (see Example 7.32 and Theorem 7.33 in \cite{Geo}), there is an affine bijection between the Gibbs measures of the Ising model with $h=0$. Then, the phase transition is already known in this case. For $\delta=d$ the sum $\sum_{x\in \Lambda}h_x$ can be bounded by $\log |\Lambda|$. This implies that $\sum_{x \in \Lambda} h_x = o(|\Lambda|^\varepsilon)$ for any $\varepsilon>0$. Thus, if we prove the phase transition for $\delta < d$, it is easy to extend to this case.
	
	\begin{theorem}\label{thm1}
		For a fixed $d\ge 2$, suppose that $\alpha>d$ and $\delta>0$.
		There exists $\beta_c\coloneqq \beta_c(\alpha,d)>0$ such that, for every $\beta>\beta_c$, the long range Ising model with coupling constant  (\ref{long}) and magnetic field (\ref{magfield}) undergoes a phase transition at inverse of temperature $\beta$ when
		\begin{itemize}
			\item $d<\alpha<d+1$ and $\delta >\alpha -d$; 
			\item $d<\alpha<d+1$ and $\delta=\alpha-d$ if $h^*$ is small enough;
			\item $\alpha \geq d+1$ and $\delta>1$;
			\item  $\alpha \geq d+1$ and $\delta=1$ if $h^*$ is small enough.
		\end{itemize}.
	\end{theorem}

	Throughout this section, we will also use $\Gamma(\sigma)$ to denote the set of contours associated with a configuration $\sigma$ instead of the $(M,a,r)$-partition. Define the map $\tau_\Gamma:\Omega(\Gamma) \rightarrow \Omega_{\Lambda}^-$ as
	\be
	\tau_\Gamma(\sigma)_x = 
	\begin{cases}
		\;\;\;\sigma_x &\text{ if } x \in \I_-(\Gamma)\cup V(\Gamma)^c, \\
		-\sigma_x &\text{ if } x \in \I_+(\Gamma),\\
		-1 &\text{ if } x \in \Sp(\Gamma).
	\end{cases}
	\ee
	The map $\tau_\Gamma$ erases a family of compatible contours since the spin-flip preserves incorrect points but transforms $+$-correct points into $-$-correct points. Given $\Gamma \in \mathcal{E}^-_\Lambda$ and a configuration $\sigma \in \Omega(\Gamma)$, we will calculate the energy cost to extract one of its elements. We start with a lemma giving a lower bound for the diameter of a finite subset of $\Z^d$.
	\begin{lemma}\label{diamlema}
		There exists $k_d>0$ such that for every $\Lambda \Subset \Z^d$ it holds,
		\be\label{diamlema:eq1}
		\diam(\Lambda)\geq k_d |\Lambda|^{\frac{1}{d}}
		\ee
	\end{lemma}
	\begin{proof}
		Consider a closed ball with a positive integer radius $n$. Lemma \ref{comblema} implies that the diameter satisfies
		\[
		\diam(B_n(x))= 2n \geq C_d |B_n(x)|^{\frac{1}{d}},
		\]
		where $C_d = \frac{(ed)^{\frac{1}{d}}}{2e+1}$. If we take $x^*, y^* \in \Lambda\Subset \Z^d$ such that $\diam(\Lambda)=|x^*-y^*|$ we have
		\[
		2\diam(\Lambda)= \diam( B_{|x^*-y^*|}(x^*)) \geq C_d|\Lambda|^{\frac{1}{d}},
		\]
		where the last inequality is due to the fact that $\Lambda\subset B_{|x^*-y^*|}(x^*)$. Inequality \eqref{diamlema:eq1} follows by choosing the constant $k_d = C_d/2$. 
	\end{proof}
	
	In the next proposition, we will give a lower bound for the cost of extracting a contour from a given configuration. The main difference is that one has a \emph{surface order term}, defined as
	\[
	F_\Lambda \coloneqq \sum_{\substack{x \in \Lambda \\ y \in \Lambda^c}}J_{xy},
	\]
	for every set $\Lambda \Subset \Z^d$. First, let us give a lower bound to the surface energy term, that will be useful to the proof of phase transition.
	\begin{lemma}\label{lema3}
		Given $\alpha>d$, there exists $K_\alpha\coloneqq K_\alpha(J,\alpha,d)>0$ such that for every $\Lambda \in \mathcal{F}(\Z^d)$ it holds
		\be
		F_\Lambda \geq K_{\alpha}\max\{|\Lambda|^{2-\frac{\alpha}{d}}, |\partial\Lambda|\}.
		\ee
	\end{lemma}
	\begin{proof}
		Since all the edges of $\partial\Lambda$ are present in the surface energy term $F_\Lambda$, we have the bound $F_\Lambda \geq J |\partial \Lambda|$. Fix $x \in \Lambda$.
		If we set $R = \lceil (dc_d^{-1}|\Lambda|)^{\frac{1}{d}} \rceil$ and using that $\sum_{\substack{y \in \Z^d}}J_{xy} < \infty$ we have
		\[
		\ssum{y \in \Lambda^c}J_{xy} - \ssum{y \in B_{R}(x)^c}J_{xy}=\ssum{y \in B_{R}(x)}J_{xy} - \ssum{y \in \Lambda}J_{xy} \geq \frac{J(|B_{R}(x)| - |\Lambda|)}{R^\alpha} \geq 0.
		\]
		Lemma \ref{comblema} yields us
		\[
		\sum_{y \in B_R(x)^c}J_{xy}  = J\sum_{n\geq R+1}\frac{s_d(n)}{n^\alpha} \geq J c_d\sum_{n \geq R+1} \frac{1}{n^{\alpha -d +1}}\geq \frac{Jc_d^{1+\alpha-d}}{(\alpha-d)d^{\alpha-d}} |\Lambda|^{1-\frac{\alpha}{d}},
		\]
		where the last inequality on the right-hand side above was bounded below by an integral. Summing over $x\in \Lambda$ and taking $K_\alpha = J\max\Bigg\{1,\frac{c_d^{1+\alpha-d}}{(\alpha-d)d^{\alpha-d}}\Bigg\}$ finish the proof.
	\end{proof}
	
	In the following proposition, we will denote $H_\Lambda^-$ the Hamiltonian function in \eqref{Isingsys} when the field $h_x = 0$, for all $x \in \Z^d$.
	\begin{proposition}\label{prop2}
		For $M$ large enough, there are constants $c_i\coloneqq c_i(\alpha,d)>0$, $i=2,3,4$, such that for $\Lambda \in \mathcal{F}(\Z^d)$, any fixed contour $\gamma \in \mathcal{E}^-_\Lambda$, and $\sigma \in \Omega(\gamma)$ it holds 
		\be
		H_\Lambda^-(\sigma)- H_\Lambda^-(\tau(\sigma))\geq c_2|\gamma|+ c_3 F_{\I_+(\gamma)} + c_4 F_{\Sp(\gamma)}.
		\ee
	\end{proposition}
	\begin{proof}
		Fix some $\sigma \in \Omega(\gamma)$. We will denote $\tau_\gamma(\sigma)\coloneqq \tau$ and $\Gamma(\sigma) \coloneqq \Gamma$ throughout this proposition. The difference between the Hamiltonians is
  \begin{equation}\label{prop2:eq1}
  \begin{split}
  H_{\Lambda}^-(\sigma)-H_{\Lambda}^-(\tau) &= \sum_{\{x,y\}\subset \Lambda}\hspace{-0.3cm}J_{x,y}(\tau_x\tau_y-\sigma_x\sigma_y) +\sum_{\substack{x\in \Lambda \\ y \in \Lambda^c}}J_{x,y}(\sigma_x-\tau_x) \\
  &= \sum_{\{x,y\}\subset V(\Gamma)}\hspace{-0.3cm}J_{x,y}(\tau_x\tau_y-\sigma_x\sigma_y) +\sum_{\substack{x\in V(\Gamma) \\ y \in \Lambda\setminus V(\Gamma)}}\hspace{-0.3cm}J_{x,y}(\sigma_x-\tau_x) +\sum_{\substack{x\in V(\Gamma) \\ y \in \Lambda^c}}\hspace{-0.3cm}J_{x,y}(\sigma_x-\tau_x) \\
  &= \sum_{\{x,y\}\subset V(\Gamma)}\hspace{-0.3cm}J_{xy}(\tau_x\tau_y - \sigma_x\sigma_y) + \sum_{\substack{x \in V(\Gamma)\\ y \in V(\Gamma)^c}}J_{xy}(\sigma_x - \tau_x),
  \end{split}
  \end{equation}
  where the second equality is due the fact that outside $V(\Gamma)$ the configurations $\sigma$ and $\tau$ are equal to $-1$.  Since $\tau_x=\sigma_x$ for $x\in V(\Gamma\setminus\{\gamma\})$ we have
  \begin{equation*}
  \begin{split}
  \sum_{\{x,y\}\subset V(\Gamma)}\hspace{-0.3cm}J_{xy}(\tau_x\tau_y - \sigma_x\sigma_y) + \hspace{-0.2cm} \sum_{\substack{x \in V(\Gamma)\\ y \in V(\Gamma)^c}}J_{xy}(\sigma_x - \tau_x)= \hspace{-0.5cm}\sum_{\{x,y\}\subset V(\gamma)}\hspace{-0.3cm}J_{x,y}(\tau_x\tau_y-\sigma_x\sigma_y)+\hspace{-0.3cm}\sum_{\substack{x\in V(\gamma) \\ y \in V(\gamma)^c}}\hspace{-0.3cm} J_{x,y}(\tau_x\sigma_y-\sigma_x\sigma_y).
  \end{split}
  \end{equation*}
  We focus now on the sum involving the terms $\{x,y\} \subset V(\gamma)$. We can split it accordingly with $V(\gamma)= \Sp(\gamma) \cup \I(\gamma)$. Then,
  \begin{equation*}
  \begin{split}
  \sum_{\{x,y\}\subset V(\gamma)}J_{x,y}(\tau_x\tau_y-\sigma_x\sigma_y) &= \hspace{-0.3cm}\sum_{\{x,y\}\subset \Sp(\gamma)}\hspace{-0.3cm}J_{x,y}(\tau_x\tau_y-\sigma_x\sigma_y)+\hspace{-0.3cm}\sum_{\{x,y\}\subset \I(\gamma)}\hspace{-0.3cm}J_{x,y}(\tau_x\tau_y-\sigma_x\sigma_y)+\sum_{\substack{x\in \Sp(\gamma) \\ y \in \I(\gamma)}}\hspace{-0.1cm}J_{x,y}(\tau_x\tau_y-\sigma_x\sigma_y) \\
  &=\hspace{-0.3cm}\sum_{\{x,y\}\subset \Sp(\gamma)}\hspace{-0.3cm}J_{x,y}(1-\sigma_x\sigma_y)-2\sum_{\substack{x\in \I_+(\gamma) \\ y \in \I_-(\gamma)}}J_{x,y}\sigma_x\sigma_y +\sum_{\substack{x\in \Sp(\gamma) \\ y \in \I(\gamma)}}J_{x,y}(\text{sign}(y)\sigma_y-\sigma_x\sigma_y) ,
  \end{split}
  \end{equation*}
  where for the second equality we used the definition of the map $\tau_\gamma$ and $\text{sign}(y)=+1$ if $y \in \I_+(\gamma)$ and $-1$ if $y \in \I_-(\gamma)$. For the same reason, we have
  \[
  \sum_{\substack{x\in V(\gamma) \\ y \in V(\gamma)^c}}J_{x,y}(\tau_x\sigma_y-\sigma_x\sigma_y) = \sum_{\substack{x\in \Sp(\gamma) \\ y \in V(\gamma)^c}}J_{x,y}(-\sigma_y-\sigma_x\sigma_y)-2\sum_{\substack{x\in \I_+(\gamma) \\ y \in V(\gamma)^c}}J_{x,y}\sigma_x\sigma_y
  \]
  Putting everything together and using that $\pm\sigma_y - \sigma_x\sigma_y = 2\mathbbm{1}_{\{\sigma_x\neq \sigma_y\}}-2\mathbbm{1}_{\{\sigma_y=\mp 1\}}$ we get
		\begin{align}\label{prop2:eq2}
			H_{\Lambda}^-(\sigma)-H_{\Lambda}^-(\tau) &= \sum_{\substack{x \in \Sp(\gamma)\\ y \in \Z^d}}J_{xy}\mathbbm{1}_{\{\sigma_x \neq \sigma_y\}}+\sum_{\substack{x \in \Sp(\gamma)\\ y \in \Sp(\gamma)^c}}J_{xy}\mathbbm{1}_{\{\sigma_x \neq \sigma_y\}}-2\sum_{\substack{x \in I_+(\gamma) \\ y \in B(\gamma)}}J_{xy}\sigma_x\sigma_y \nonumber\\
			&- 2\sum_{\substack{x \in \Sp(\gamma) \\ y \in B(\gamma)}}J_{xy}\mathbbm{1}_{\{\sigma_y = +1\}}-2\sum_{\substack{x \in \Sp(\gamma) \\ y \in \I_+(\gamma)}}J_{xy}\mathbbm{1}_{\{\sigma_y = -1\}},
		\end{align}
		where $B(\gamma)=\I_-(\gamma)\cup V(\gamma)^c$. We need to carefully analyze each negative term of Equation \eqref{prop2:eq2}. We start with the terms depending on $\Sp(\gamma)$. Notice that the characteristic functions on $B(\gamma)$ and $\I_+(\gamma)$ can only be different from zero at the other contours volumes. Thus,
		\be\label{prop2:eq3}
		\sum_{\substack{x \in \Sp(\gamma) \\ y \in B(\gamma)}}J_{xy}\mathbbm{1}_{\{\sigma_y = +1\}}+\sum_{\substack{x \in \Sp(\gamma) \\ y \in \I_+(\gamma)}}J_{xy}\mathbbm{1}_{\{\sigma_y = -1\}} \leq \sum_{\substack{x \in \Sp(\gamma) \\ y \in V(\Gamma')}}J_{xy},
		\ee
		where $\Gamma'\coloneqq \Gamma(\tau)$. Let $\gamma = \bigcup_{1 \leq k\leq n} \gamma_k$ and $\gamma'= \cup_{1\leq j\leq n'}\gamma'_j$ for each $\gamma' \in \Gamma'$ be the subsets given to us by condition \textbf{(B)}. We will divide the r.h.s of Equation \eqref{prop2:eq3} into two terms depending on the sets
		$\Upsilon_1= \{\gamma' \in \Gamma': \underset{1\leq k \leq n}{\max}\diam(\gamma_k)\leq \underset{1\leq j \leq n'}{\max}\diam(\gamma'_j)\}$  and $\Upsilon_2= \Gamma' \setminus \Upsilon_1$. On the first sum, Condition \textbf{(B2)} implies
		\[\label{prop2:eq4}
		\sum_{\substack{x \in \Sp(\gamma) \\ y \in V(\Upsilon_1)}}J_{xy} \leq \sum_{\substack{x \in \Sp(\gamma) \\ y \in B_R(x)^c}}J_{xy},
		\]
		where $R= M\underset{1\leq k \leq n}{\max}\diam(\gamma_k)^a$. Using Condition \textbf{(B1)} and Lemma \ref{diamlema} it holds,
		\be\label{prop2:eq5}
		\sum_{\substack{x \in \Sp(\gamma)\\ y \in B_R(x)^c}}J_{xy} \leq \frac{Je^{-1}(2e+1)^{d-1}|\gamma|}{(\alpha-d)M^{\alpha-d}}\underset{1\leq k \leq n}{\max}\diam(\gamma_k)^{a(d-\alpha)}\leq  \frac{Je^{-1}(2e+1)^{d-1}(2^r-1)}{(\alpha-d)k_d^{a(\alpha-d)}M^{\alpha -d}}.
		\ee
		We turn our attention to the sum depending on $\Upsilon_2$. We divide the set $\Upsilon_2$ into sets $\Upsilon_{2,m}$ consisting of contours of $\Gamma'$ where the maximum diameter of its partition is $m$. Thus, for each $x$ in $\Sp(\gamma)$ and $\gamma' \in \Gamma'$, there is at least one point $y_{\gamma',x}\in V(\gamma')$ such that $|x-y_{\gamma',x}| = d(x,\gamma')$. Then, it holds for each $1\leq m < \max_{1\leq k\leq n} \diam(\gamma_k)$,
		\[
		\sum_{\substack{x \in \Sp(\gamma) \\ y \in V(\Upsilon_{2,m})}}J_{xy} \leq \sum_{\substack{x \in \Sp(\gamma) \\ \gamma' \in \Upsilon_{2,m}}}|V(\gamma')|J_{xy_{\gamma',x}}.
		\]
		For each $\gamma' \in V(\Upsilon_{2,m})$, define the graph $G_{\gamma'}$ with vertex set $v(G_{\gamma'})=\{\gamma'_j\}_{1\leq j \leq n'}$ and an edge is placed when $d(\gamma'_j,\gamma'_i)\leq 1$. Let $G_j$ be the maximal connected component of $G_{\gamma'}$ such that $\gamma'_j$ is an element of its vertex set. Also, let $V(G_j)\subset V(\gamma')$ be the subset of all connected components of $V(\gamma')$ that have a non-empty intersection with the vertices of $G_j$. Using Lemma \ref{diamlema} we have
		\[
		|V(\gamma')| \leq \sum_{j=1}^{n'}|V(G_j)|\leq \frac{1}{k_d^d}\sum_{j=1}^{n'}\diam(V(G_j))^d
		\]
		The diameter of $V(G_j)$ is realized by the distance between two points, namely $x^*$ and $y^*$, that must be into $\gamma'_i, \gamma'_l \in v(G_j)$. We can make a path $\lambda_1$ in the graph $G_j$ between $\gamma'_i$ and $\gamma'_l$ since it is connected. Thus, using the path $\lambda_1$ we can construct a path $\lambda_2$ in $\Z^d$ connecting $x^*$ and $y^*$ that passes through every vertex that is visited by the path $\lambda_1$. Let $\lambda_3$ be a minimal path realizing the distance between $x^*$ and $y^*$. Since the path is minimal, we have
		\[
		\diam(V(G_j))=|\lambda_3|\leq |\lambda_2|\leq \sum_{\gamma'_i \in v(G_j)}\diam(\gamma'_i)+1.
		\]
		Hence,
		\be\label{eqvolume}
		\frac{1}{k_d^d}\sum_{j=1}^{n'}\diam(V(G_j))^d\leq \frac{1}{k_d^d}\sum_{j=1}^{n'} \left(\sum_{\gamma'_i \in v(G_j)}\diam(\gamma'_i)+1\right)^d.
		\ee
		The number of elements in $v(G_j)$ is at most $2^r-1$ by condition \textbf{(B)}, thus
		\be\label{prop2:eq6}
		\sum_{\substack{x \in \Sp(\gamma) \\ y \in V(\Upsilon_{2,m})}}J_{xy} \leq \frac{(2^r-1)^{d+1}}{k_d^d}(m+1)^d\sum_{\substack{x \in \Sp(\gamma) \\ \gamma' \in \Upsilon_{2,m}}}J_{xy_{\gamma',x}}.
		\ee
		We know that there is no other point $y_{\gamma'',x}$ at least in a ball of radius $Mm^a/3$ centered at $y_{\gamma',x}$. These balls with different centers are disjoint by Condition \textbf{(B2)}. Also, if $\lambda$ is the minimal path realizing the distance between $x$ and $y_{\gamma',x}$, we know that $\lambda$ must contain at least $Mm^a/3$ points (see Figure \ref{figure_minimalpath}). Thus,
		\be\label{prop2:eq7}
		\sum_{\substack{x \in \Sp(\gamma) \\ \gamma' \in \Upsilon_{2,m}}}J_{xy_{\gamma',x}} \leq \frac{3}{Mm^a}F_{\Sp(\gamma)}.
		\ee
		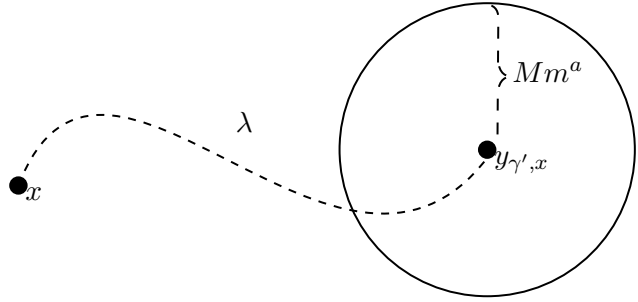
\begin{figure}[H]	
  \centering
			
			\tikzset{every picture/.style={line width=0.75pt}} %set default line width to 0.75pt        
			
			\begin{tikzpicture}[x=0.75pt,y=0.75pt,yscale=-1,xscale=1]
				%uncomment if require: \path (0,300); %set diagram left start at 0, and has height of 300
				
				%Shape: Circle [id:dp36157831401121343] 
				\draw  [fill={rgb, 255:red, 0; green, 0; blue, 0 }  ,fill opacity=1 ] (4.8,99.1) .. controls (4.8,96.84) and (6.64,95) .. (8.9,95) .. controls (11.16,95) and (13,96.84) .. (13,99.1) .. controls (13,101.36) and (11.16,103.2) .. (8.9,103.2) .. controls (6.64,103.2) and (4.8,101.36) .. (4.8,99.1) -- cycle ;
				%Shape: Circle [id:dp5187850050766343] 
				\draw  [fill={rgb, 255:red, 0; green, 0; blue, 0 }  ,fill opacity=1 ] (238.8,81.1) .. controls (238.8,78.84) and (240.64,77) .. (242.9,77) .. controls (245.16,77) and (247,78.84) .. (247,81.1) .. controls (247,83.36) and (245.16,85.2) .. (242.9,85.2) .. controls (240.64,85.2) and (238.8,83.36) .. (238.8,81.1) -- cycle ;
				%Curve Lines [id:da3269010169198716] 
				\draw [dashed  ,draw opacity=1 ]   (8.9,99.1) .. controls (51.4,-12) and (169.5,182) .. (242.9,85.2) ;
				%Shape: Circle [id:dp8970204274266147] 
				\draw   (169.2,81.1) .. controls (169.2,40.4) and (202.2,7.4) .. (242.9,7.4) .. controls (283.6,7.4) and (316.6,40.4) .. (316.6,81.1) .. controls (316.6,121.8) and (283.6,154.8) .. (242.9,154.8) .. controls (202.2,154.8) and (169.2,121.8) .. (169.2,81.1) -- cycle ;
				%Shape: Brace [id:dp10131283592165574] 
				\draw  [dash pattern={on 4.5pt off 4.5pt}] (241,81) .. controls (245.67,81.03) and (248.02,78.72) .. (248.05,74.05) -- (248.18,54.25) .. controls (248.23,47.58) and (250.58,44.27) .. (255.25,44.3) .. controls (250.58,44.27) and (248.27,40.92) .. (248.32,34.25)(248.3,37.25) -- (248.45,15.05) .. controls (248.48,10.38) and (246.17,8.03) .. (241.5,8) ;
				
				% Text Node
				\draw (244.9,80.4) node [anchor=north west][inner sep=0.75pt]    {$y_{\gamma',x}$};
				% Text Node
				\draw (10.9,98.4) node [anchor=north west][inner sep=0.75pt]    {${\displaystyle x}$};
				% Text Node
				\draw (116,60.4) node [anchor=north west][inner sep=0.75pt]    {$\lambda $};
				% Text Node
				\draw (254,35.4) node [anchor=north west][inner sep=0.75pt]    {$Mm^{a}$};
				
			\end{tikzpicture}
			\caption{Minimal path $\lambda$ between $x$ and $y_{\gamma',x}$.}
   \label{figure_minimalpath}
		\end{figure}
		Inequalities \eqref{prop2:eq7}, \eqref{prop2:eq6}, \eqref{prop2:eq5} plugged into Inequality \eqref{prop2:eq3} yields,
		\be\label{prop2:eq8}
		\sum_{\substack{x \in \Sp(\gamma) \\ y \in B(\gamma)}}J_{xy}\mathbbm{1}_{\{\sigma_y = +1\}}+\sum_{\substack{x \in \Sp(\gamma) \\ y \in \I_+(\gamma)}}J_{xy}\mathbbm{1}_{\{\sigma_y = -1\}} \leq \frac{k^{(1)}_\alpha}{M^{(\alpha-d)\wedge 1}} F_{\Sp(\gamma)},
		\ee
		where, $k^{(1)}_\alpha = \max\left\{\frac{Je^{-1}(2e+1)^{d-1}(2^r-1)}{(\alpha-d)k_d^{a(\alpha-d)}},\frac{3(2^r-1)^{d+1}2^d\zeta(a-d)}{k_d^d}\right\}$ and $(\alpha-d)\wedge 1 = \min\{\alpha-d,1\}$. 
		
		The remaining term in our analysis is the one involving the interaction between $\I_+(\gamma)$ and $B(\gamma)$. Recall that $\Gamma(\tau )=\Gamma'$ is the set of external contours of $\Gamma$ after $\gamma$ is removed and define $\Gamma_1\subset \Gamma'$ as the set of external contours that are contained in $\I_+(\gamma)$ and $\Gamma_2 = \Gamma' \setminus \Gamma_1$. We have,
		\begin{align}\label{prop2:eq9}
			\sum_{\substack{x \in \I_+(\gamma) \\ y \in B(\gamma)}}J_{xy}\sigma_x\sigma_y= \sum_{\substack{x \in V(\Gamma_1) \\ y \in V(\Gamma_2)}}J_{xy} +\sum_{\substack{x \in \I_+(\gamma)\setminus V(\Gamma_1) \\ y \in V(\Gamma_2)}}2J_{xy}\mathbbm{1}_{\{\sigma_y=+1\}} + \sum_{\substack{x \in V(\Gamma_1) \\ y \in B(\gamma)\setminus V(\Gamma_2)}}2J_{xy}\mathbbm{1}_{\{\sigma_x=-1\}} \nonumber \\
			-\sum_{\substack{x \in \I_+(\gamma)\setminus V(\Gamma_1) \\ y \in V(\Gamma_2)}}J_{xy}-\sum_{\substack{x \in V(\Gamma_1) \\ y \in V(\Gamma_2)}}2J_{xy}\mathbbm{1}_{\{\sigma_x \neq \sigma_y\}}-\sum_{\substack{x \in \I_+(\gamma) \\ y \in B(\gamma)\setminus V(\Gamma_2)}}J_{xy},
		\end{align}
		We start our analysis with the first two terms on r.h.s of \eqref{prop2:eq9}. Note that,
		\be\label{again}
		\sum_{\substack{x \in V(\Gamma_1) \\ y \in V(\Gamma_2)}}J_{xy} + \sum_{\substack{x \in \I_+(\gamma)\setminus V(\Gamma_1) \\ y \in V(\Gamma_2)}}2J_{xy}\mathbbm{1}_{\{\sigma_y=+1\}}\leq 2 \sum_{\substack{x \in \I_+(\gamma) \\ y \in V(\Gamma_2)}}J_{xy}.
		\ee
		Consider the two sets $\Upsilon_3=\{\gamma' \in \Gamma_2: \underset{1\leq k \leq n}{\max}\diam(\gamma_k) \leq \underset{1\leq j \leq n'}{\max}\diam(\gamma'_j)\}$ and $\Upsilon_4 = \Gamma_2 \setminus \Upsilon_3$. Now we proceed as in the previous case for $\Upsilon_1$,
		\be\label{prop2:eq10}
		\sum_{\substack{x \in \I_+(\gamma)\\y \in V(\Upsilon_3)}}J_{xy} \leq \frac{Je^{-1} (2e+1)^{d-1} |\I_+(\gamma)|}{(\alpha-d)M^{\alpha-d}}\underset{1\leq k \leq n}{\max}\diam(\gamma_k)^{a(d-\alpha)}\leq \frac{ (2d)^\frac{d}{d-1}k_\alpha^{(1)}}{M^{\alpha-d}}|\din\gamma|^{\frac{1}{d-1}}\leq \frac{ (2d)^\frac{d}{d-1}k_\alpha^{(1)}}{M^{\alpha-d}}|\gamma|,
		\ee
		where the last inequality is due to the isoperimetric inequality. For the sum depending on the contours in $\Upsilon_4$, we will need to break, as before, into sets $\Upsilon_{4,m}$ whose contours have a maximum diameter equal to $m$. An argument similar to the one employed in \eqref{prop2:eq8} holds, hence
		\be\label{prop2:eq11}
		\sum_{\substack{x \in \I_+(\gamma)\\y \in V(\Upsilon_4)}}J_{xy} \leq \frac{(2^r-1)^{d+1}\zeta(a-d)}{k_d^d M}\sum_{\substack{x \in \I_+(\gamma) \\ y \in \I_-(\gamma)\cup V(\gamma)^c}}J_{xy}\leq \frac{k_\alpha^{(1)}}{M}F_{\I_+(\gamma)}.
		\ee
		For the next term, since we have $B(\gamma)\setminus V(\Gamma_2) \subset \I_+(\gamma)^c$, we get
		\[
		\sum_{\substack{x \in V(\Gamma_1) \\ y \in B(\gamma)\setminus V(\Gamma_2)}}J_{xy} \leq \sum_{\substack{x \in V(\Gamma_1) \\ y \in \I_+(\gamma)^c}}J_{xy}.
		\] 
		We claim that, for any $\gamma' \in \Gamma_1$, $\max_{1\leq j \leq n'}\diam(\gamma'_j) < \max_{1\leq k \leq n}\diam(\gamma_k)$ for $M>(2^r-1)^{d+1}/k_d^d$. Indeed, by condition \textbf{(A)}, $\Sp(\gamma')$ is contained in only one connected component of $\I(\gamma)$, let us call it $\I_+(\gamma)^{(1)}$. By similar reasonings as the one that gave us Inequality \eqref{eqvolume}, we have
		\[
		|\I_+(\gamma)^{(1)}|\leq \frac{(2^r-1)^{d+1}}{k_d^d}\max_{1\leq k \leq n}\diam(\gamma_k)^d.
		\] 
		Assume by contradiction $\max_{1\leq j \leq n'}\diam(\gamma'_j) \geq \max_{1\leq k \leq n}\diam(\gamma_k)$, then Condition \textbf{(B2)} implies that $d(\gamma',\gamma)\geq M \max_{1 \leq k \leq n}\diam(\gamma_k)^a$. Therefore, $|I_+(\gamma)|$ must have at least $M \max_{1 \leq k \leq n}\diam(\gamma_k)^a$ points inside it, which is a contradiction with our choice of $M$.
		
		Thus, let us break $\Gamma_1$ into layers $\Gamma_{1,m}$ where $\max_{1\leq j \leq n'}\diam(\gamma'_j)=m$. For each $y \in \I_+(\gamma)^c$ and $\gamma' \in \Gamma_1$ there is $x_{\gamma',y} \in V(\gamma')$ that realizes the distance between $V(\gamma')$ and $y$. Hence,
		\be\label{eqfinal}
		\sum_{\substack{x \in V(\Gamma_1) \\ y \in B(\gamma)\setminus V(\Gamma_2)}}J_{xy}\leq \sum_{m = 1}^{N-1}\sum_{\substack{x \in V(\Gamma_{1,m}) \\ y \in \I_+(\gamma)^c}}J_{xy} \leq \sum_{m = 1}^{N-1}\frac{(2^r-1)^{d+1}}{k_d^d}m^d \sum_{\substack{\gamma' \in \Gamma_{1,m} \\ y \in \I_+(\gamma)^c}}J_{x_{\gamma',y}y} \leq \frac{k_\alpha^{(1)}}{M} F_{\I_+(\gamma)},
		\ee
		where $N\coloneqq \max_{1\leq k\leq n}\diam(\gamma_k)$. We turn our attention to the term containing $\mathbbm{1}_{\{\sigma_x \neq \sigma_y\}}$ in the r.h.s of Inequality \eqref{prop2:eq9}. The triangle inequality implies that the following inequality holds
		\be\label{prop2:eq12}
		J_{xy} \geq \frac{1}{(2d+1)2^\alpha}\sum_{|x-x'| \leq 1}J_{x'y},
		\ee
		for every distinct pair of points $x,y \in \Z^d$. Thus, we have that
		\be\label{prop2:eq13}
		\sum_{\substack{x \in V(\Gamma_1)\\ y \in V(\Gamma_2)}}J_{xy}\mathbbm{1}_{\{\sigma_x \neq \sigma_y\}}\geq \frac{1}{(2d+1)2^\alpha}\sum_{\substack{x \in V(\Gamma_1)_0 \\ y \in V(\Gamma_2)}}J_{xy},
		\ee
		where $V(\Gamma_1)_0 = \{x \in V(\Gamma_1): \Theta_x(\sigma)=0\}$. Plugging Inequalities \eqref{prop2:eq10}, \eqref{prop2:eq11}, \eqref{eqfinal}, \eqref{prop2:eq13} into Equation \eqref{prop2:eq9}, we get
		\begin{align}\label{prop2:eq14}
			\sum_{\substack{x \in \I_-(\gamma)\\ y \in B(\gamma)}}J_{xy}\sigma_x\sigma_y &\leq \frac{3k^{(1)}_\alpha}{M}F_{\I_+(\gamma)} +\frac{2(2d)^{\frac{d}{d-1}}k_\alpha^{(1)}}{M^{\alpha-d}}|\gamma|-\frac{1}{(2d+1)2^{\alpha-1}}\sum_{\substack{x \in V(\Gamma_1)_0 \\ y \in V(\Gamma_2)}}J_{xy} \nonumber \\
			&- \sum_{\substack{x \in \I_+(\gamma)\\ y \in B(\gamma)\setminus V(\Gamma_2)}}J_{xy}- \sum_{\substack{x \in \I_+(\gamma)\setminus V(\Gamma_1) \\ y \in  V(\Gamma_2)}}J_{xy}.
		\end{align}
		We must add the regions with correct points into the sum depending on $V(\Gamma_1)_0$. But this is a simple task since we have,
		\be
		\sum_{\substack{x \in V(\Gamma_1)\setminus V(\Gamma_1)_0\\ y \in V(\Gamma_2)}}J_{xy} \leq \sum_{\substack{x \in V(\Gamma_1)\\ y \in V(\Gamma_2)}}J_{xy},
		\ee
		and proceeding as we did in \eqref{again} we arrive at the following inequality
		\be\label{prop2:eq15}
		\sum_{\substack{x \in \I_-(\gamma) \\ y \in B(\gamma)}}J_{xy}\sigma_x\sigma_y \leq \frac{4k^{(1)}_\alpha}{M}F_{\I_+(\gamma)} +\frac{3(2d)^{\frac{d}{d-1}}k_\alpha^{(1)}}{M^{\alpha-d}}|\gamma|-\frac{1}{(2d+1)2^{\alpha-1}}\sum_{\substack{x\in \I_+(\gamma) \\ y \in B(\gamma)}}J_{xy}.
		\ee
		Also, Inequality \eqref{prop2:eq12} implies that
		\be\label{prop2:eq16}
	   \begin{split}	
        \sum_{\substack{x \in \Sp(\gamma)\\ y \in \Z^d}}J_{xy}\mathbbm{1}_{\{\sigma_x \neq \sigma_y\}}+&\sum_{\substack{x \in \Sp(\gamma)\\ y \in \Sp(\gamma)^c}}J_{xy}\mathbbm{1}_{\{\sigma_x \neq \sigma_y\}} \geq \frac{1}{(2d+1)2^\alpha}\left(Jc_\alpha|\gamma|+ F_{\Sp(\gamma)}\right) \\
        &\geq \frac{1}{(2d+1)2^\alpha}\left(Jc_\alpha|\gamma|+ \frac{F_{\Sp(\gamma)}}{2}+\frac{1}{2}\sum_{\substack{x\in \Sp(\gamma) \\ y \in \I(\gamma)}}J_{xy}\right) 
        \end{split}
		\ee
		where $c_\alpha = \sum_{y \neq 0 \in \Z^d}\frac{1}{|y|^\alpha}$. Joining Inequalities \eqref{prop2:eq8},\eqref{prop2:eq15} and \eqref{prop2:eq16} into \eqref{prop2:eq2} yields
		\begin{align}
			H_\Lambda^-(\sigma)-H_\Lambda^-(\tau) &\geq \left( \frac{Jc_\alpha}{(2d+1)2^\alpha}- \frac{6(2d)^{\frac{d}{d-1}}k_\alpha^{(1)}}{M^{\alpha-d}}\right)|\gamma| + 2\left(\frac{1}{(2d+1)2^{\alpha+1}} -\frac{4k^{(1)}_\alpha}{M}\right)F_{\I_+(\gamma)} \nonumber \\
			&\left(\frac{1}{(2d+1)2^{\alpha+1}}-\frac{2k_\alpha^{(1)}}{M^{(\alpha-d)\wedge1}}\right)F_{\Sp(\gamma)}.
		\end{align}
		Letting $M> \max\{\frac{(2^r-1)^{d+1}}{k_d^d}, M_1, M_2\}$, where $$M_1^{\alpha-d}\coloneqq \frac{12(2d+1)(2d)^{\frac{d}{d-1}}k_\alpha^{(1)}2^{\alpha+1}}{Jc_\alpha},\text{ and } {M_2^{(\alpha-d)\wedge 1} \coloneqq (2d+1)k_\alpha^{(1)}2^{\alpha+4}},$$  we arrive at the desired result.
	\end{proof}
	
		As in the usual Peierls argument, Theorem \ref{thm1} will follow once we prove the following proposition.
	\begin{proposition}\label{prop3}
		Let $\alpha>d$ and $\delta>0$. For $\beta$ large enough, it holds that 
		\be
		\nu^-_{\beta,\textbf{h},\Lambda}(\sigma_0= + 1) < \frac{1}{2},
		\ee
		for every $\Lambda \in \mathcal{F}(\Z^d)$ when
		\begin{itemize}
			\item $d<\alpha<d+1$ and $\delta>\alpha -d$;
			\item $d<\alpha<d+1$ and $\delta=\alpha-d$ if $h^*$ is small enough;
			\item $\alpha \geq d+1$ and $\delta>1$;
			\item $\alpha \geq d+1$ and $\delta=1$ if $h^*$ is small enough.
		\end{itemize}
	\end{proposition}
	\begin{proof}
		Let $R>0$ and $(\widehat{h}_x)_{x \in\Z^d}$ be the truncated magnetic field defined in Equation \eqref{magfield2}. The constant $R$ will be chosen later. The existence of phase transition under the presence of the truncated field implies phase transition for the model with the decaying field (see Theorem 7.33 of \cite{Geo} for a more general statement). If $\sigma_0=+1$ there must exist a contour $\gamma$ such that $0 \in V(\gamma)$. Hence
		\[\label{prop3:eq1}
		\nu^-_{\beta,\bm{\widehat{h}},\Lambda}(\sigma_0= + 1) \leq \sum_{\substack{\gamma \in \mathcal{E}_\Lambda^- \\ 0 \in V(\gamma)}} \nu^-_{\beta,\bm{\widehat{h}},\Lambda}(\Omega(\gamma)). 
		\]
		Using Proposition \ref{prop2}, we know that the Hamiltonian $H^-_{\Lambda,\bm{\widehat{h}}}$ satisfies,
		\be\label{prop3:eq2}
		H^-_{\Lambda,\bm{\widehat{h}}}(\sigma)-H^-_{\Lambda,\bm{\widehat{h}}}(\tau(\sigma)) \geq c_2 |\gamma| + c_3 F_{\I_+(\gamma)} - 2\sum_{x \in \I_+(\gamma)\cup \Sp(\gamma)}\hat{h}_x,
		\ee
		where $\tau_\gamma(\sigma)= \tau(\sigma)$. Notice that 
		\[
		\sum_{x \in \Sp(\gamma)}\hat{h}_x \leq \frac{h^* |\gamma|}{R^\delta}.
		\]
		If $R^\delta > \frac{4h^*}{c_2}$ is sufficient to guarantee that the term $c_2 |\gamma|$ is larger than the field contribution in Inequality \eqref{prop3:eq2}. We want to prove that
		$
		c_3 F_{\I_+(\gamma)} - 2\sum_{x \in \I_+(\gamma)}\hat{h}_x\geq 0.
		$
		If $\I_+(\gamma) = \emptyset$ there is nothing to do, since the bound is trivial. Otherwise, we must analyse the competition of the decaying field with the different regimes of decay for the couplings constants $J_{xy}$.
		\begin{enumerate}[label=\textbf{(\roman*)}]
			\item \emph{Case} $d< \alpha < d+1$.
			By Lemmas \ref{lema3} and \ref{lema1}, we have 
			\be\label{prop3:eq3}
			c_3 F_{\I_+(\gamma)}- 2\sum_{x \in \I_+(\gamma)}\hat{h}_x \geq c_3K_\alpha |\I_+(\gamma)|^{2 - \frac{\alpha}{d}}- 2c_5|\I_+(\gamma)|^{1-\frac{\delta}{d}}.
			\ee
			Thus, if $\delta > \alpha - d$ and $|\I_+(\gamma)| \geq c'_\alpha \coloneqq \left(\frac{2c_5}{c_3 K_\alpha}\right)^{\frac{d}{\delta -(\alpha-d)}}$, we have that the r.h.s of Inequality \eqref{prop3:eq3} is nonnegative. In order to get a positive difference for all sizes of $\I_+(\gamma)$, we need to consider $R^\delta>R_1^\delta \coloneqq\frac{c'_\alpha}{c_3 K_\alpha}$. For the case $\delta=\alpha-d$, we must take $h^*$ small enough since the exponents in \eqref{prop3:eq3} will be equal.
			\item \emph{Case} $\alpha \geq d+1$.
			By Lemmas \ref{lema3} and \ref{lema1}, we have 
			\be\label{prop3:eq4}
			c_3 F_{\I_+(\gamma)}- 2\sum_{x \in \I_+(\gamma)}\hat{h}_x \geq c_3K_\alpha |\partial\I_+(\gamma)|- 2c_5|\I_+(\gamma)|^{1-\frac{\delta}{d}}.
			\ee
			Thus, if $\delta > 1$ and $|\I_+(\gamma)| \geq b_\alpha \coloneqq \left(\frac{c_5}{dc_3 K_\alpha}\right)^{\frac{d}{\delta-1}}$, we have that the r.h.s of Inequality \eqref{prop3:eq4} is nonnegative. In order to get a positive difference for all sizes of $\I_+(\gamma)$, we need to consider $R^\delta>R_2^\delta \coloneqq \frac{h^* b_\alpha}{dc_3 K_\alpha}$. The case where $\delta=1$, we must take $h^*$ small enough and use the isoperimetric inequality in Inequality \eqref{prop3:eq4}.
		\end{enumerate}
		It is clear that, by taking $R = \max\{\left(\frac{4h^*}{c_2}\right)^\frac{1}{\delta}, R_1, R_2\}$ together with \eqref{prop3:eq2} we get
		\[
		H^-_{\Lambda,\bm{\widehat{h}}}(\sigma)-H^-_{\Lambda,\bm{\widehat{h}}}(\tau(\sigma)) \geq \frac{c_2}{2} |\gamma|,
		\]
		which implies
		\be\label{eq112}
		\nu_{\beta,\bm{\widehat{h}},\Lambda}^-(\Omega(\gamma)) \leq \frac{e^{-\beta \frac{c_2}{2} |\gamma|}}{Z_{\beta, \bm{\widehat{h}}}^-(\Lambda)}\sum_{\sigma \in \Omega(\gamma)}e^{-\beta H^-_{\Lambda,\bm{\widehat{h}}}(\tau(\sigma))}.
		\ee
		Using the decomposition
		\[
		\Omega(\gamma) = \bigcup_{\Gamma: \Gamma \cup \{\gamma\} \in \mathcal{E}_\Lambda^-} \{ \sigma \in \Omega_\Lambda^-: \Gamma(\sigma)= \Gamma \cup \{\gamma\}\},
		\]
		together with the fact that, when we erase the contour $\gamma$, we may create new external contours but it always holds that $V(\Gamma(\tau(\sigma))) \subset \Lambda \setminus \Sp(\gamma)$. Hence, the r.h.s of Inequality \eqref{eq112} can be bounded as follows
		\begin{align*}
			\sum_{\sigma \in \Omega(\gamma)}e^{-\beta H^-_{\Lambda,\bm{\widehat{h}}}(\tau(\sigma))} &\leq \sum_{\substack{\Gamma \in \mathcal{E}^-_\Lambda \\ V(\Gamma)\subset \Lambda \setminus V(\gamma)}} \sum_{\substack{\Gamma' \in \mathcal{E}^-_\Lambda \\ V(\Gamma')\subset \I(\gamma)}}\sum_{\substack{\tau(\sigma) \\ \Gamma(\tau(\sigma))=\Gamma \cup \Gamma'}} \sum_{\omega: \tau(\omega)=\tau(\sigma)} e^{-\beta H^-_{\Lambda,\bm{\widehat{h}}}(\tau(\sigma))}\\
			&\leq |\{\sigma \in \Omega_{\Sp(\gamma)}: \Theta_x(\sigma)=0, \text{ for each } x \in \Sp(\gamma)\}| \sum_{\substack{\Gamma \in \mathcal{E}_{\Lambda}^- \\ V(\Gamma)\subset \Lambda \setminus \Sp(\gamma)}} \sum_{\sigma \in \Omega(\Gamma)}e^{-\beta H^-_{\Lambda,\bm{\widehat{h}}}(\sigma)}.
		\end{align*}
		Since the number of configurations that are incorrect in $\Sp(\gamma)$ is bounded by $2^{|\gamma|}$, we get
		\be\label{prop3:eq5}
		\nu_{\beta,\bm{\widehat{h}},\Lambda}^-(\Omega(\gamma)) \leq \frac{Z^-_{\beta, \bm{\widehat{h}}}(\Lambda\setminus\Sp(\gamma))e^{(\log(2)-\beta \frac{c_2}{2})|\gamma|}}{Z^-_{\beta, \bm{\widehat{h}}}(\Lambda)}.
		\ee
		Summing over all contours yields, together with Proposition \ref{importprop},
		\begin{align}
			\nu_{\beta,\bm{\widehat{h}},\Lambda}^-(\sigma_0 = +1) &\leq \sum_{\substack{\gamma \in \mathcal{E}_\Lambda^- \\ 0 \in V(\gamma)}}\frac{Z^-_{\beta, \bm{\widehat{h}}}(\Lambda\setminus\Sp(\gamma))e^{(\log(2)-\beta \frac{c_2}{2})|\gamma|}}{Z^-_{\beta, \bm{\widehat{h}}}(\Lambda)} \nonumber \\
			&\leq \sum_{m \geq 1}|\mathcal{C}_0(m)|e^{(\log(2)-\beta \frac{c_2}{2})m} \nonumber\\
			&\leq \sum_{m \geq 1}e^{(c_1 +\log(2) - \beta \frac{c_2}{2})m}< \frac{1}{2},
		\end{align}
		for $\beta$ large enough. 
	\end{proof}

\chapter{Conclusion and Further Research in Long-range spin systems}
\label{ch:conclusion2}
\epigraph{All our lives we postpone everything that can be postponed; perhaps we all have the certainty, deep inside, that we are immortal and that sooner or later every man will do everything, know all there is to know.}{\textbf{Jorge Luis Borges} \\ \textit{Funes el Memorioso}}

In Chapter \ref{ch:classtatmech} we briefly introduced the rigorous theory of classical equilibrium statistical mechanics through the Gibbsian specification formalism. We also introduced the concept of contours based on Pirogov-Sinai's theory. In Chapter \ref{ch:longrange}, we fully developed the multiscaled version of these contours through the use of $(M,a,r)$-partitions. Since long-range systems may have strong interactions between close regions, this concept allowed us to separate sets of incorrect points in regions that can be deemed weakly interacting. In this way, we were able to prove phase transition for long-range ferromagnetic Ising models when $d\geq 2$. As an application, we showed that the ferromagnetic Ising model with a decaying field of the form $h_x=h^*|x|^{-\delta}$ presents a phase transition at low temperatures when $\delta=\alpha-d$ we have to consider a small $h^*$. This is an indication that phase transition should not hold further into the region of the exponents. A similar phenomenon already happens in the short-range case, as studied by Bissacot, Cassandro, Cioletti, and Pressuti \cite{Bis2}, where the uniqueness of the Gibbs measure is true whenever $\delta<1$. 

Still on ferromagnetic models, our argument works for $d\geq 2$, but the notion of a $(M,a,r)$ partition only uses a notion of a metric to make sense. This brings up the question if the Peierls argument presented in Chapter \ref{ch:longrange} can be extended to the one-dimensional setting. It was shown by Littin and Picco \cite{LP} that the one-dimensional contours introduced in \cite{Cass} have the downside of not having a subadditive estimate for its energies when $\alpha \in (1,3-\log(3)/\log(2)]$. An interesting problem could be the extension of the argument to this setting. Also, the decay of correlation functions for polynomially decaying long-range Ising systems is a subtle matter and seems to be only studied for the one-dimensional long-range Ising model by Imbrie \cite{Imbr1} and Imbrie and Newman \cite{Imbr2}, trying to extend their results to higher dimensions seems another possible direction of further development. 

Another natural question is to investigate if we can extend the Peierls argument to more general interactions. For instance, the ferromagnet nearest-neighbor Ising model with a competing long-range antiferromagnet interaction, as considered in \cite{Bisk}. As stated in their paper, zero magnetization does not imply the absence of phase transition. Since the notion of incorrect points can be adapted to other systems, maybe some of the techniques developed here could be helpful to investigate the problem of phase transition in other models.

\chapter{Quantum Statistical Mechanics}
\label{ch:quantstatmech}

\epigraph{Going from Newton's mechanics to Einstein's must be, for the mathematician,  a bit like jumping from the good old Provençal dialect to the latest Parisian slang. On the other hand, going to quantum mechanics, I imagine, means going from French to Chinese.}{\textbf{Alexandre Grothendieck} \\ \textit{Récoltes et Semailles}}

	In this chapter, we will describe the basics of quantum statistical mechanics in the groupoid language. This encompasses basically the construction of the algebra of observables as a groupoid C*-algebra, as well as some general results on the existence of the infinite-volume dynamics, as well as make a quick review of KMS states, and the Gibbs-Araki-Ion condition for equilibrium, finishing the chapter with a proof of the equivalence between the DLR states introduce in Chapter 1 and KMS states for classical interactions. This result was first proved by Brascamp \cite{Bras} for lattice gases. All these constructions are quite standard and we cite the relevant literature as we delve into these matters. 

   The idea of using groupoids in quantum mechanics is not new. In \cite{Connes}, Connes argues that the first groupoid in physics was defined by Heisenberg's investigations on the foundations for a mathematical theory of quantum mechanics. He was not satisfied with the at the time ad-hoc methods of quantization such as the Bohr-Sommerfeld rule and his work was inspired by the Ritz-Ridberg combination principle for the transitions observed in the spectral lines of the hydrogen atom (see the book by Emch \cite{emch1984mathematical} for a wonderful account of the history of the concepts as well as the mathematics developed during this period). The idea of a groupoid was not recognized by Heisenberg, or by Born and Jordan, who instead looked to the algebra of observables and recognized the product of two of them as being the usual matrix product (although the matrices would need to be infinite). The idea of looking to the allowed transitions of the underlying quantum theory to be fundamental was later extended by Schwinger \cite{Schw} with what he coined to be the \emph{algebra of selected measurements}. This approach is being revisited and expanded by Ciaglia, Ibort and Marmo \cite{Ciaglia1} (many more works are being produced on this subject; we suggest the interested reader to look into the references therein but also on the papers that cite them). We, in this chapter, will not make justice to the rich subject of groupoid C*-algebras from either the mathematical and physical side of the subject. Most of the definitions and results stated in this section are present in \cite{putnam, Raszeja, Renault1980,  SimsSzaboWilliams2020}. The interested reader may check these references for further details, as well as the relevant literature cited before.

\section{Transformation Groupoids}
 
    A groupoid $\mathcal{G}$ is a set endowed with a partially\footnote{It can, of course, be defined in the whole domain $\mathcal{G}\times \mathcal{G}$. In this case, the groupoid will be a group.} defined \emph{product} operation $\circ:\mathcal{G}^{(2)}\subseteq\mathcal{G}\times \mathcal{G}\rightarrow \mathcal{G}$ and a globally defined \emph{inverse} $^{-1}:\mathcal{G}\rightarrow \mathcal{G}$.  When $(g,h) \in \mathcal{G}^{(2)}$ we say that $g\circ h$ is \emph{defined}.
      \begin{description}
    \item[\textbf{(Inverse)}]  for any $g \in \mathcal{G}$, $g^{-1} \circ g$ and $g\circ g^{-1}$ are always defined.
    \item[\textbf{(Associativity)}] If $g_1 \circ g_2$ and $g_2 \circ g_3$ are defined, then $(g_1 \circ g_2) \circ g_3 = g_1 \circ (g_2 \circ g_3)$.
    \item[\textbf{(Identity)}] if $g_1 \circ g_2$ is defined, $g_1^{-1} \circ g_1 \circ g_2 = g_2$ and $g_1 \circ g_2 \circ g_2^{-1}=g_1$.    
 \end{description}

We often call the elements of the groupoid by \emph{arrows}. There is a distinguished subset of $\mathcal{G}$ called \emph{unit space} that is defined as
\[
\mathcal{G}^{(0)}\coloneqq \{g\circ g^{-1}: g \in \mathcal{G}\}.
\]
This is exactly the space of all the identities of the groupoid. For every groupoid $\mathcal{G}$, there are distinguished maps $r,s:\mathcal{G}\rightarrow\mathcal{G}^{(0)}$ called, respectively, the \emph{range} and \emph{source} maps, and are given by
\[
r(g) = g \circ g^{-1}, \quad \text{ and } \quad s(g) = g^{-1} \circ g.
\]
Other important subsets of 
a groupoid $\mathcal{G}$ are, for each $g \in \mathcal{G}$, 
\[
\mathcal{G}^{r(g)} = r^{-1}(\{r(g)\}) \quad \text{ and } \quad \mathcal{G}_{s(g)} = s^{-1}(\{s(g)\}).
\]
\begin{figure}[h]
	\centering
	
	\tikzset{every picture/.style={line width=0.7pt}} %set default line width to 0.75pt       
	\resizebox{1\textwidth}{!}{
		
		\begin{tikzpicture}[x=0.75pt,y=0.75pt,yscale=-1,xscale=1]
			%uncomment if require: \path (0,761); %set diagram left start at 0, and has height of 761
			
			%Curve Lines [id:da6180295398856929] 
			%\draw [line width=0.75]    (676.52,247) .. controls (621.5,248) and (602.5,282) .. (683.5,287) .. controls (778.54,287.99) and (758.91,245.86) .. (692.22,247.6) ;
			%\draw [shift={(690.19,247.67)}, rotate = 357.76] [fill={rgb, 255:red, 0; green, 0; blue, 0 }  ][line width=0.08]  [draw opacity=0] (10.72,-5.15) -- (0,0) -- (10.72,5.15) -- (7.12,0) -- cycle    ;
			
			%Shape: Polygon Curved [id:ds4581958776196102] 
			\draw  [line width=0.75]  (470.5,75) .. controls (604.5,70) and (849.5,89) .. (904.5,157) .. controls (959.5,225) and (789.5,273) .. (800.5,335) .. controls (811.5,397) and (995.5,524) .. (901.5,601) .. controls (807.5,678) and (558.5,669) .. (475.5,667) .. controls (392.5,665) and (170.5,656) .. (86.5,580) .. controls (2.5,504) and (186.5,465) .. (157.5,398) .. controls (128.5,331) and (-24.5,294) .. (30.5,216) .. controls (85.5,138) and (336.5,80) .. (470.5,75) -- cycle ;
			%Curve Lines [id:da42344962886304205] 
			\draw [line width=0.75]    (329,288.75) .. controls (348.3,235.67) and (323.83,224.97) .. (301.43,215.3) ;
			\draw [shift={(299,214.25)}, rotate = 383.5] [fill={rgb, 255:red, 0; green, 0; blue, 0 }  ][line width=0.08]  [draw opacity=0] (10.72,-5.15) -- (0,0) -- (10.72,5.15) -- (7.12,0) -- cycle    ;
			
			%Curve Lines [id:da1407115575067851] 
			\draw [line width=0.75]    (409,200.25) .. controls (398.22,230.14) and (402.81,246.1) .. (431.07,280.22) ;
			\draw [shift={(432.83,282.33)}, rotate = 230.02] [fill={rgb, 255:red, 0; green, 0; blue, 0 }  ][line width=0.08]  [draw opacity=0] (10.72,-5.15) -- (0,0) -- (10.72,5.15) -- (7.12,0) -- cycle    ;
			
			%Curve Lines [id:da16455474347106724] 
			\draw [line width=0.75]    (262.3,522.67) .. controls (279.83,463.89) and (326.89,457.68) .. (356.97,504.65) ;
			\draw [shift={(358.33,506.83)}, rotate = 238.7] [fill={rgb, 255:red, 0; green, 0; blue, 0 }  ][line width=0.08]  [draw opacity=0] (10.72,-5.15) -- (0,0) -- (10.72,5.15) -- (7.12,0) -- cycle    ;
			
			%Curve Lines [id:da6416629715292622] 
			\draw [line width=0.75]    (367,507.5) .. controls (395.91,466.32) and (442.69,456.85) .. (459.79,525.88) ;
			\draw [shift={(460.3,528)}, rotate = 256.85] [fill={rgb, 255:red, 0; green, 0; blue, 0 }  ][line width=0.08]  [draw opacity=0] (10.72,-5.15) -- (0,0) -- (10.72,5.15) -- (7.12,0) -- cycle    ;
			
			%Curve Lines [id:da2248421219283553] 
			\draw [line width=0.75]    (261.63,537.33) .. controls (288.83,616.93) and (406.58,633.7) .. (458.19,542.06) ;
			\draw [shift={(458.97,540.67)}, rotate = 478.78] [fill={rgb, 255:red, 0; green, 0; blue, 0 }  ][line width=0.08]  [draw opacity=0] (10.72,-5.15) -- (0,0) -- (10.72,5.15) -- (7.12,0) -- cycle    ;
			
			%Curve Lines [id:da8076496173986822] 
			\draw [line width=0.75]    (609.5,532.5) .. controls (636.7,612.1) and (752.6,628.7) .. (804.19,537.06) ;
			\draw [shift={(804.97,535.67)}, rotate = 478.78] [fill={rgb, 255:red, 0; green, 0; blue, 0 }  ][line width=0.08]  [draw opacity=0] (10.72,-5.15) -- (0,0) -- (10.72,5.15) -- (7.12,0) -- cycle    ;
			
			%Curve Lines [id:da5379585064955144] 
			\draw [line width=0.75]    (804.5,523) .. controls (774.28,444.5) and (658.27,424.94) .. (610.22,518.42) ;
			\draw [shift={(609.5,519.83)}, rotate = 296.59000000000003] [fill={rgb, 255:red, 0; green, 0; blue, 0 }  ][line width=0.08]  [draw opacity=0] (10.72,-5.15) -- (0,0) -- (10.72,5.15) -- (7.12,0) -- cycle    ;
			
			%Shape: Circle [id:dp7004030581274747] 
			\draw  [fill={rgb, 255:red, 0; green, 0; blue, 0 }  ,fill opacity=1 ] (803,529.5) .. controls (803,527.57) and (804.57,526) .. (806.5,526) .. controls (808.43,526) and (810,527.57) .. (810,529.5) .. controls (810,531.43) and (808.43,533) .. (806.5,533) .. controls (804.57,533) and (803,531.43) .. (803,529.5) -- cycle ;
			%Shape: Circle [id:dp523628621812534] 
			\draw  [fill={rgb, 255:red, 0; green, 0; blue, 0 }  ,fill opacity=1 ] (290.67,212.5) .. controls (290.67,210.57) and (292.23,209) .. (294.17,209) .. controls (296.1,209) and (297.67,210.57) .. (297.67,212.5) .. controls (297.67,214.43) and (296.1,216) .. (294.17,216) .. controls (292.23,216) and (290.67,214.43) .. (290.67,212.5) -- cycle ;
			%Shape: Circle [id:dp06023356249035838] 
			\draw  [fill={rgb, 255:red, 0; green, 0; blue, 0 }  ,fill opacity=1 ] (323,294.83) .. controls (323,292.9) and (324.57,291.33) .. (326.5,291.33) .. controls (328.43,291.33) and (330,292.9) .. (330,294.83) .. controls (330,296.77) and (328.43,298.33) .. (326.5,298.33) .. controls (324.57,298.33) and (323,296.77) .. (323,294.83) -- cycle ;
			%Shape: Circle [id:dp17005283026453577] 
			\draw  [fill={rgb, 255:red, 0; green, 0; blue, 0 }  ,fill opacity=1 ] (433.67,286.5) .. controls (433.67,284.57) and (435.23,283) .. (437.17,283) .. controls (439.1,283) and (440.67,284.57) .. (440.67,286.5) .. controls (440.67,288.43) and (439.1,290) .. (437.17,290) .. controls (435.23,290) and (433.67,288.43) .. (433.67,286.5) -- cycle ;
			%Shape: Circle [id:dp1695040007325206] 
			\draw  [fill={rgb, 255:red, 0; green, 0; blue, 0 }  ,fill opacity=1 ] (408.67,193.83) .. controls (408.67,191.9) and (410.23,190.33) .. (412.17,190.33) .. controls (414.1,190.33) and (415.67,191.9) .. (415.67,193.83) .. controls (415.67,195.77) and (414.1,197.33) .. (412.17,197.33) .. controls (410.23,197.33) and (408.67,195.77) .. (408.67,193.83) -- cycle ;
			%Shape: Circle [id:dp624549862798362] 
			%\draw  [fill={rgb, 255:red, 0; green, 0; blue, 0 }  ,fill opacity=1 ] (680.67,247.5) .. controls (680.67,245.57) and (682.23,244) .. (684.17,244) .. controls (686.1,244) and (687.67,245.57) .. (687.67,247.5) .. controls (687.67,249.43) and (686.1,251) .. (684.17,251) .. controls (682.23,251) and (680.67,249.43) .. (680.67,247.5) -- cycle ;
			%Shape: Circle [id:dp019311789180822547] 
			\draw  [fill={rgb, 255:red, 0; green, 0; blue, 0 }  ,fill opacity=1 ] (258,529.5) .. controls (258,527.57) and (259.57,526) .. (261.5,526) .. controls (263.43,526) and (265,527.57) .. (265,529.5) .. controls (265,531.43) and (263.43,533) .. (261.5,533) .. controls (259.57,533) and (258,531.43) .. (258,529.5) -- cycle ;
			%Shape: Circle [id:dp8715083651564467] 
			\draw  [fill={rgb, 255:red, 0; green, 0; blue, 0 }  ,fill opacity=1 ] (358,512.5) .. controls (358,510.57) and (359.57,509) .. (361.5,509) .. controls (363.43,509) and (365,510.57) .. (365,512.5) .. controls (365,514.43) and (363.43,516) .. (361.5,516) .. controls (359.57,516) and (358,514.43) .. (358,512.5) -- cycle ;
			%Shape: Circle [id:dp8127684194488981] 
			\draw  [fill={rgb, 255:red, 0; green, 0; blue, 0 }  ,fill opacity=1 ] (457.33,534.83) .. controls (457.33,532.9) and (458.9,531.33) .. (460.83,531.33) .. controls (462.77,531.33) and (464.33,532.9) .. (464.33,534.83) .. controls (464.33,536.77) and (462.77,538.33) .. (460.83,538.33) .. controls (458.9,538.33) and (457.33,536.77) .. (457.33,534.83) -- cycle ;
			%Shape: Circle [id:dp5397782644652955] 
			\draw  [fill={rgb, 255:red, 0; green, 0; blue, 0 }  ,fill opacity=1 ] (605.67,525.5) .. controls (605.67,523.57) and (607.23,522) .. (609.17,522) .. controls (611.1,522) and (612.67,523.57) .. (612.67,525.5) .. controls (612.67,527.43) and (611.1,529) .. (609.17,529) .. controls (607.23,529) and (605.67,527.43) .. (605.67,525.5) -- cycle ;
			
			% Text Node
			%\draw (685,225) node  [font=\Large]  {$s( x) =x=r( x)$};
			% Text Node
			\draw (391.83,237) node  [font=\Large]  {$g$};
			% Text Node
			\draw (415,172) node  [font=\Large]  {$s( g)$};
			% Text Node
			\draw (435,301) node  [font=\Large]  {$r( g)$};
			% Text Node
			\draw (350,241) node  [font=\Large]  {$h$};
			% Text Node
			\draw (324,312) node  [font=\Large]  {$s( h)$};
			% Text Node
			\draw (294,191) node  [font=\Large]  {$r( h)$};
			% Text Node
			\draw (113,232) node  [font=\Huge]  {$\mathcal{G}^{( 0)}$};
			% Text Node
			%\draw (683,302) node  [font=\Large]  {$x$};
			% Text Node
			\draw (422,457) node  [font=\Large]  {$g_{2}$};
			% Text Node
			\draw (361,623) node  [font=\Large]  {$g_{2} g_{1}$};
			% Text Node
			\draw (362,526) node  [font=\large]  {$r( g_{1}) =s( g_{2})$};
			% Text Node
			\draw (312,456) node  [font=\Large]  {$g_{1}$};
			% Text Node
			\draw (232,528) node  [font=\Large]  {$s( g_{1})$};
			% Text Node
			\draw (491,534) node  [font=\Large]  {$r( g_{2})$};
			% Text Node
			\draw (247,188) node  [font=\LARGE] [align=left] {I.};
			% Text Node
			%\draw (600,186) node  [font=\LARGE] [align=left] {II.};
			% Text Node
			\draw (245,441) node  [font=\LARGE] [align=left] {II.};
			% Text Node
			\draw (600,443) node  [font=\LARGE] [align=left] {III.};
			% Text Node
			\draw (838,527) node  [font=\Large]  {$r(g)$};
			% Text Node
			\draw (578,524) node  [font=\Large]  {$s(g)$};
			% Text Node
			\draw (699,618) node  [font=\Large]  {$g$};
			% Text Node
			\draw (702,437) node  [font=\Large]  {$g^{-1}$};

		\end{tikzpicture}
		
	}
	
\caption{Modified from Thiago Raszeja PhD Thesis \cite{Raszeja}. From left to right: I. Is the picture of seen elements of the groupoid as arrows. II. The product is similar to the composition of functions. III. The definitions of the range as source maps, as being, respectively, the end and beginning of the arrow $g$.}
\end{figure}
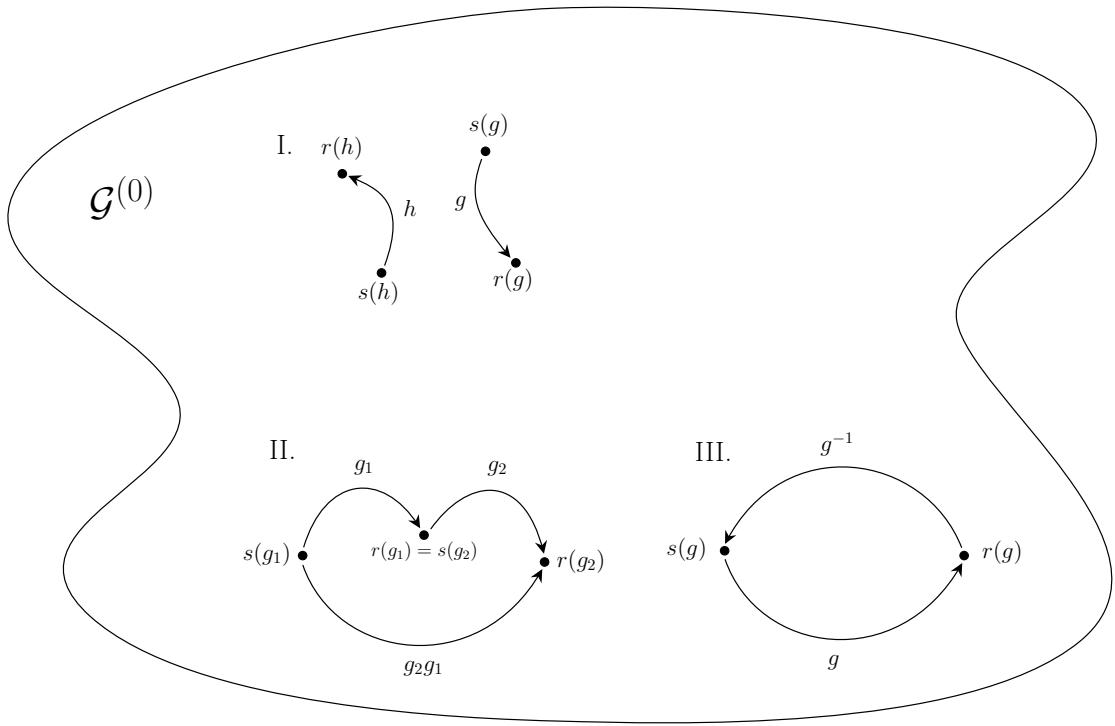

\pagebreak

We give some elementary examples of groupoids. The groupoid that will be our object of study will be presented with further details later.  
\begin{example}
	\begin{itemize}
		\item[(i)] Every group $G$ is a groupoid with their product and inverse structure, where $\mathcal{G}^{(0)} = \{e\}$, the identity element.
		\item[(ii)] For a set $X$, an equivalence relation $R\subset X \times X$ can be used to give a groupoid structure on $X\times X$. The inverse of $(x,y)^{-1} = (y,x)$ and the product is $(x,y)\circ(y,z) = (x,z)$. 
		\item[(iii)] Let $\mathcal{G}=\bigcup_{n\geq 1}\mathrm{GL}_n(\mathbb{C})$, the union of all invertible $n\times n$ complex matrices. The product is only defined when two matrices have the same size. The inverse is the usual inverse of matrices. Notice that the unit space is the union of all identity matrices.  
	\end{itemize}
\end{example}
A \emph{topological groupoid} is a groupoid $\mathcal{G}$ endowed with a topology such that the product and the inverse are continuous operations. This is sufficient for the range and source maps also to be continuous, where $\mathcal{G}^{(0)}$ is endowed with the subspace topology since the range and source maps can be written in terms of the product and inverse operations. Inspired by \cite{Gruber}, we will consider the state space as having a group structure. Let $G_q \subset \mathbb{S}^1$ be the subgroup of all $q$-th roots of unity, i.e.,
\[
G_q = \{e^{2\pi i\frac{k}{q}}: k=0,1, \dots, q-1\}.
\]
Define the action $\alpha: G_q \times G_q \rightarrow G_q$ given by multiplication
\[
\alpha(e^{2\pi i \frac{k}{q}},e^{2\pi i \frac{m}{q}}) = e^{2\pi i \frac{k+m}{q}}.
\]
The \emph{transformation groupoid} $G_q \rtimes_\alpha G_q$ is the set $G_q\times G_q$ with the following groupoid structure
\begin{align*}
	&(e^{2\pi i \frac{k_2}{q}},e^{2\pi i \frac{m_2}{q}})\circ (e^{2\pi i \frac{k_1}{q}},e^{2\pi i \frac{m_1}{q}}) = (e^{2\pi i \frac{k_1}{q}},e^{2\pi i \frac{m_1+m_2}{q}}), \text{ if } k_1+m_1=k_2 \mod q \\
	&(e^{2\pi i \frac{k_1}{q}},e^{2\pi i \frac{m_1}{q}})^{-1} = (e^{2\pi i \frac{k_1+m_1}{q}},e^{2\pi i \frac{(q-m_1)}{q}}).
\end{align*}
We can endow $G_q\rtimes_\alpha G_q$ with the discrete topology, making it a topological groupoid. For each subset $\Lambda \subset \Z^d$, consider
\[
\Omega_\Lambda = \prod_{x\in \Lambda}G_q \quad \text{and} \quad G_\Lambda = \bigoplus_{x \in \Lambda} G_q.
\]
When $\Lambda = \Z^d$, we will write $\Omega_{\Z^d} \coloneqq \Omega$ and $G_{\Z^d}\coloneqq G$. The identity element of the group $G_\Lambda$ as $1$. Let action $\alpha_\Lambda : \Omega_\Lambda \times G_\Lambda \rightarrow \Omega_\Lambda$ be the product action given by
\[
\alpha_\Lambda(\sigma,g) = (\alpha_x(\sigma_x,g_x))_{x\in \Lambda}, 
\] 
where $g = \oplus_{x\in \Lambda}g_x$ and $\alpha_x$ is the corresponding copy of the action defined previously. We will omit $\alpha$ to denote the action anymore and instead adopt the more convenient notation $\alpha(\sigma_\Lambda,g_\Lambda) = g_\Lambda \sigma_\Lambda$.  For $q=2$, this action is known as \emph{spin-flip}. The transformation groupoid $\mathcal{G}_\Lambda = \Omega_\Lambda \rtimes_{\alpha_\Lambda} G_\Lambda$ then is the product space $\Omega_\Lambda \times G_\Lambda$ with the groupoid structure given by
\be\label{prod_inv}
(\sigma_\Lambda, g_\Lambda)\circ(\omega_\Lambda,h_\Lambda) = (\sigma_\Lambda,h_\Lambda g_\Lambda) \text{  if  } g_\Lambda\sigma_\Lambda = \omega_\Lambda \quad \text{and}
\quad (\sigma_\Lambda, g_\Lambda)^{-1} = (g_\Lambda\sigma_\Lambda,g_\Lambda^{-1}).
\ee
when $\Lambda = \Z^d$ we will once more suppress it from the notation, i.e.,  $\mathcal{G}\coloneqq \mathcal{G}_{\Z^d}$. Let $\supp{g_\Lambda} = \{x\in \Lambda: g_x \neq 1\}$. Since $G_\Lambda$ is the direct sum, $\supp g_\Lambda$ is a finite subset of $\Lambda$. The unit space of this groupoid then is the set
\[
\mathcal{G}^{(0)}_\Lambda= \{(\sigma_\Lambda,1): \sigma_\Lambda \in \Omega_\Lambda\}.
\]
The unit space $\mathcal{G}^{(0)}_\Lambda$ can be identified with the configuration space $\Omega_\Lambda$, and for now on we will make this identification and also write $\Omega_\Lambda$ for the unit space of $\mathcal{G}_\Lambda$. The range and source maps are
\[
r(\sigma_\Lambda,g_\Lambda) = g_\Lambda \sigma_\Lambda
\quad \text{ and } \quad s(\sigma_\Lambda,g_\Lambda) = \sigma_\Lambda  
\]
We can give $\Omega_\Lambda$ the product topology, making it a compact metrizable space. For the group $G_\Lambda$ when $\Lambda$ is finite, the direct sum of groups is isomorphic to the product of the groups, thus using the product topology is enough. There is a distinction between the direct sum and the product of groups when $\Lambda$ has infinitely many points and the question of what topology to use may be nontrivial. Despite that, we will always consider the group with the discrete topology. Then, we can endow the groupoid $\mathcal{G}_\Lambda$ with the product topology of $\Omega_\Lambda$ and $G_\Lambda$.
 
 The following proposition is a straightforward consequence of us choosing the product topology. 
\begin{proposition}
	The product and inverse operations defined in \eqref{prod_inv} are continuous functions in $\mathcal{G}_\Lambda$ equipped with the product topology.
\end{proposition}
\begin{proof}
   Notice that the product topology on $\Omega_\Lambda$ is metrizable. We can choose a metric on $\Omega_\Lambda$ defined by
    \[
    d(\omega_\Lambda, \sigma_\Lambda) = \sum_{x\in\Lambda}\frac{|\omega_x-\sigma_x|}{2^{|x|}},
    \]
    where $|\omega_x-\sigma_x|$ is the absolute value of the complex number. With this choice, the action of $G_\Lambda$ is isometric and, therefore continuous. The continuity of the inverse is due to the continuity of the inverse operation of the group $G_\Lambda$ together with the continuity of the action on $\Omega_\Lambda$. The product is the restriction of the map on $\mathcal{G}_\Lambda \times \mathcal{G}_\Lambda$ that sends two arrows $(\sigma_\Lambda,g_\Lambda)$ and $(\omega_\Lambda,h_\Lambda)$ to an arrow $(\sigma_\Lambda,h_\Lambda g_\Lambda)$. This map is continuous in the product topology of $\mathcal{G}_\Lambda \times \mathcal{G}_\Lambda$ and to get that the actual groupoid product is continuous one just needs to notice that the product is the restriction of the map just cited in the set of composable arrows. 
\end{proof}
\begin{remark}
	We want to stress that the product topology is not always the obvious choice in order to have a topological groupoid. For instance, an important class of groupoids in the theory of dynamical systems and thermodynamic formalism is known as Renault-Deaconu groupoids \cite{Renault1980, Raszeja}. Although they have a product structure as a set, the topology used to transform it into a topological groupoid is not the product topology. 
\end{remark}

We finish this section with an important proposition about the topology of $\mathcal{G}_\Lambda$.
\begin{proposition}\label{prop_clopen}
	For each $g_\Lambda \in G_\Lambda$, we have that the set $\Omega_\Lambda\times \{g_\Lambda\}$ is a clopen subset of the groupoid $\mathcal{G}_\Lambda$. In particular, the unit space is a clopen subset of the groupoid. 
\end{proposition}
\begin{proof}
	We prove the Proposition only for the unit space, the general case being a straightforward generalization. That the configuration space $\Omega_\Lambda$ is closed is obvious since the topology of the groupoid is the product one. Since the topology on the group $G_\Lambda$ is the discrete one, the only converging sequences are the eventually constant ones, yielding that the set
	\[
	\mathcal{G}_\Lambda \setminus \Omega_\Lambda=\{(\sigma_\Lambda,g_\Lambda): g_\Lambda \neq 1\},
	\]
	is also closed, finishing the proof. 
\end{proof}

This proposition will have important consequences in the next section when we construct a C*-algebra with the groupoid. 

\section{The Groupoid C*-algebra}

We introduced in the last subsection the definition of the transformation groupoids that we will use, and now we will sketch how to construct C*-algebras with them. The last result in this subsection will be an isomorphism between the usual spin-algebra of Bratteli-Robinson's book \cite{Bra2} and the C*-algebra of the transformation groupoid. The results are standard and can be consulted in \cite{putnam, Raszeja, Renault1980,  SimsSzaboWilliams2020}. 
Consider 
\[
C_c(\mathcal{G}_\Lambda)\coloneqq \{ f:\mathcal{G}_\Lambda \rightarrow \mathbb{C}: f \text{ is continuous and} \supp f \text{ is compact}\}, 
\]
the continuous complex-valued functions defined on the groupoid with compact support. Important examples are the functions defined as follows. Let $A$ be a clopen compact subset of $\mathcal{G}_\Lambda$. Then we can define the delta function $\delta_A$ by 
\[
\delta_A(\sigma_\Lambda,g_\Lambda) = \begin{cases}
	1,& (\sigma_\Lambda,g_\Lambda) \in A \\
	0,& (\sigma_\Lambda,g_\Lambda) \not\in A.
\end{cases}
\]
If $\Lambda$ is finite, the groupoid $\mathcal{G}_\Lambda$ have 
a discrete topology, the unitary sets $\{(\sigma_\Lambda,g_\Lambda)\}$ are clopen. In order to lighten the notation, we will write $\delta_{(\sigma_\Lambda,g_\Lambda)}$ for the delta function on $\{(\sigma_\Lambda,g_\Lambda)\}$ and, when $g_\Lambda=1$ we will only write $\delta_{\sigma_\Lambda}$. Using these delta functions together with Proposition \ref{prop_clopen} we can extend every continuous function $f:\Omega_\Lambda\rightarrow \mathbb{C}$ on the configuration space to a continuous function with compact support in the groupoid by the formula 
\[
f(\sigma_\Lambda,g_\Lambda) =
f(\sigma_\Lambda) \delta_{\Omega_\Lambda}(\sigma_\Lambda,g_\Lambda).
\]
 thus recovering the classical observable algebra as a subspace of $C_c(\mathcal{G}_\Lambda)$. We will describe briefly how to use $C_c(\mathcal{G}_\Lambda)$ to construct a C*-algebra. Since it has an obvious complex vector space structure, we will start by describing the product of two functions and the adjoint operation, giving us a $*$-algebra. The convolution product of two functions $f_1,f_2 \in C_c(\mathcal{G}_\Lambda)$ 
\be\label{prod_groupoid}
f_1 \cdot f_2(\sigma_\Lambda,g_\Lambda) = \sum_{(\omega_\Lambda,h_\Lambda)\in \mathcal{G}^{g_\Lambda\sigma_\Lambda}_\Lambda}\hspace{-0.5cm}f_1(\omega_\Lambda,h_\Lambda)f_2(\sigma_\Lambda, h_\Lambda^{-1}g_\Lambda),
\ee
and the adjoint is
\be\label{adj_groupoid}
f_1^*(\sigma_\Lambda,g_\Lambda) = \overline{f_1(g_\Lambda\sigma_\Lambda,g_\Lambda^{-1})}.
\ee
The product \eqref{prod_groupoid} can be written in a different way, depending on the set of all arrows with the same source $\mathcal{G}_{\Lambda,\sigma_{\Lambda}}=\{(\sigma,g_\Lambda): g_\Lambda \in G_\Lambda\}$ by using a change of variable
\be\label{prod_groupoid_source}
f_1\cdot f_2 (\sigma_\Lambda,g_\Lambda) = \sum_{(\sigma_\Lambda,k_\Lambda)\in \mathcal{G}_{\Lambda,\sigma_\Lambda}}\hspace{-0.5cm}f_1(k_\Lambda\sigma_\Lambda,g_\Lambda k_\Lambda^{-1})f_2(\sigma_\Lambda,k_\Lambda).
\ee
\begin{proposition}
    The operations \eqref{prod_groupoid} and \eqref{adj_groupoid} make $C_c(\mathcal{G}_\Lambda)$ into a $*$-algebra.
\end{proposition}
\begin{proof}

In our case, since the group $\mathcal{G}_\Lambda$ is discrete and countable, in order for a continuous function $f$ to have compact support it must satisfy 
\be\label{compactsupport}
|G_{\Lambda,f}|\coloneqq |\{g_\Lambda \in G_\Lambda: \exists \sigma_\Lambda \in \Omega_\Lambda \text{ s.t. } f(\sigma_\Lambda,g_\Lambda) \neq 0\}|<\infty,
\ee
since by Proposition \ref{prop_clopen}, we can decompose $\mathcal{G}_\Lambda$ in a countable union of the open sets $\Omega_\Lambda\times\{g_\Lambda\}$. Actually, the condition \eqref{compactsupport} is sufficient to have compact support, since the configuration space $\Omega_\Lambda$ is compact with the product topology. It holds, 
\begin{equation*}
    \begin{split}
    |f_1\cdot f_2(\sigma_\Lambda,g_\Lambda)|\leq \sum_{h_\Lambda \in G_\Lambda}|f_1(h_\Lambda^{-1} g_\Lambda\sigma_\Lambda,h_\Lambda)||f_2(\sigma_\Lambda,h_\Lambda^{-1}g_\Lambda)|\leq \|f_2\|_\infty\|f_1\|_\infty |G_{\Lambda,f_1}|,
    \end{split}
\end{equation*}
because $(\omega_\Lambda,h_\Lambda)\in \mathcal{G}_\Lambda^{g_\Lambda\sigma_\Lambda}$ implies $\omega_\Lambda = h^{-1}_\Lambda g_\Lambda \sigma_\Lambda$ and where $\|f\|_\infty$ is the supremum norm. Therefore, the product \eqref{prod_groupoid} is well defined since it can only contain a finite amount of nonzero terms. Notice that the convolution product can be written as
\begin{equation*}
    \begin{split}
    f_1\cdot f_2(\sigma_\Lambda,g_\Lambda)= \sum_{h_\Lambda \in G_\Lambda}f_1(h_\Lambda^{-1} g_\Lambda\sigma_\Lambda,h_\Lambda) f_2(\sigma_\Lambda,h_\Lambda^{-1}g_\Lambda) \\
    =\sum_{h_\Lambda \in G_{\Lambda,f_1}}f_1(h_\Lambda^{-1} g_\Lambda\sigma_\Lambda,h_\Lambda) f_2(\sigma_\Lambda,h_\Lambda^{-1}g_\Lambda),
    \end{split}
\end{equation*}
and it will not be zero only when also $h_\Lambda^{-1}g_\Lambda \in G_{\Lambda,f_2}$, thus if $g_\Lambda \in h_\Lambda G_{\Lambda,f_2}$ for some $h_\Lambda \in G_{\Lambda,f_1}$. Therefore $f_1\cdot f_2(\sigma_\Lambda,g_\Lambda)\neq 0$ only if $g_\Lambda \in G_{\Lambda,f_1}G_{\Lambda,f_2}$, thus $G_{\Lambda,f_1\cdot f_2}\subset G_{\Lambda, f_1} G_{\Lambda,f_2}$, yielding that the support of $f_1\cdot f_2$ is compact. The continuity of the convolution product follows from the continuity of the action. \footnote{This holds in much more generality see Proposition 1.1 in \cite{Renault1980} and also Lemmas 3.21 and 3.23 in \cite{Raszeja}.} 
The continuity of adjoint \ref{adj_groupoid} follows from the continuity of the inverse operation of the groupoid. Notice that $G_{\Lambda,f}^{-1}= G_{\Lambda,f^*}$, thus the support of the adjoint is still compact. 
\end{proof}

The important feature of the product formula \eqref{prod_groupoid} is that the convolution product of two continuous functions on the configuration space is just the pointwise product of the classical algebra $C(\Omega_\Lambda)$, therefore we can see $C(\Omega_\Lambda)$ as an abelian subalgebra of $C_c(\mathcal{G}_\Lambda)$. The product of a function $f_1\in C(\Omega_\Lambda)$ and a general $f_2 \in C_c(\mathcal{G}_\Lambda)$ can readily be calculated to be
\be\label{prod_classical_quantum}
\begin{split}
	f_1\cdot f_2(\sigma_\Lambda,g_\Lambda) &= \sum_{(\omega_\Lambda,h_\Lambda)\in \mathcal{G}_\Lambda^{g_\Lambda\sigma_\Lambda}} f_1(\omega_\Lambda,h_\Lambda) f_2(\sigma_\Lambda,h_\Lambda^{-1}g_\Lambda) \\
	&= \sum_{(\omega_\Lambda,h_\Lambda)\in \mathcal{G}_\Lambda^{g_\Lambda\sigma_\Lambda}} f_1(\omega_\Lambda)\delta_{\{h_\Lambda=1\}}(\omega_\Lambda,h_\Lambda) f_2(\sigma_\Lambda,h_\Lambda^{-1}g_\Lambda) \\
	&=f_1(g_\Lambda\sigma_\Lambda)f_2(\sigma_\Lambda,g_\Lambda),
\end{split}
\ee 
thus showing that the product by classical functions is almost like a pointwise product. The next lemma will be useful and give us a nice formula of how the delta functions introduced earlier behave with respect to the convolution product.

\begin{lemma}\label{matrix_elements}
	 For every $\Lambda \Subset \Z^d$ and function $f\in C_c(\mathcal{G}_\Lambda)$ we have
	\be\label{eq3:poirep}
	\delta_{(\omega_\Lambda,h_\Lambda)} \cdot f \cdot \delta_{(\eta_\Lambda,k_\Lambda)} = f(k_\Lambda \eta_\Lambda, h_\Lambda^{-1}\widetilde{g}_\Lambda k_\Lambda^{-1})\delta_{(\eta_\Lambda, \widetilde{g}_\Lambda)},
	\ee
	where $\widetilde{g}_\Lambda\in G_\Lambda$ is the unique element such that $ \widetilde{g}_\Lambda\eta_\Lambda=h_\Lambda\omega_\Lambda$.
\end{lemma}
\begin{proof}
	Using the product formula \eqref{prod_groupoid_source} we have
	\[
	f\cdot \delta_{(\eta_\Lambda,k_\Lambda)}(\sigma_\Lambda,g_\Lambda) = \sum_{(\sigma_\Lambda,k'_\Lambda) \in \mathcal{G}_{\Lambda,\sigma_\Lambda}}\hspace{-0.5cm}f(k'_\Lambda\sigma_\Lambda,g_\Lambda (k'_\Lambda)^{-1})\delta_{(\eta_\Lambda,k_\Lambda)}(\sigma_\Lambda, k'_\Lambda) = f(k_\Lambda\eta_\Lambda,g_\Lambda k_\Lambda^{-1})\delta_{\mathcal{G}_{\Lambda,\eta_\Lambda}}(\sigma_\Lambda,g_\Lambda).
	\]
	Plug the formula above into the product with $\delta_{(\omega_\Lambda,h_\Lambda)}$ to get
	\begin{align*}
		\delta_{(\omega_\Lambda,h_\Lambda)}\cdot f \cdot \delta_{(\eta_\Lambda,k_\Lambda)}(\sigma_\Lambda,g_\Lambda) &= \sum_{(\theta_\Lambda,k'_\Lambda) \in \mathcal{G}_\Lambda^{g_\Lambda\sigma_\Lambda}}\delta_{(\omega_\Lambda,h_\Lambda)}(\theta_\Lambda,k'_\Lambda) (f\cdot\delta_{(\eta_\Lambda,k_\Lambda)})(\sigma_\Lambda,(k_\Lambda')^{-1}g_\Lambda) \\
		&= \delta_{\mathcal{G}_\Lambda^{h_\Lambda \omega_\Lambda}}(\sigma_\Lambda,g_\Lambda)(f\cdot \delta_{(\eta_\Lambda,k_\Lambda)})(\sigma_\Lambda,h_\Lambda^{-1}g_\Lambda)\\
		&= f(k_\Lambda\eta_\Lambda,h_\Lambda^{-1}g_\Lambda k_\Lambda^{-1})\delta_{\mathcal{G}_\Lambda^{h_\Lambda \omega_\Lambda}}(\sigma_\Lambda,g_\Lambda)\delta_{\mathcal{G}_{\Lambda,\eta_\Lambda}}(\sigma_\Lambda,g_\Lambda).
	\end{align*}
	Since the action is free, we know that $\delta_{\mathcal{G}_\Lambda^{h_\Lambda \omega_\Lambda}}(\sigma_\Lambda,g_\Lambda)\delta_{\mathcal{G}_{\Lambda,\eta_\Lambda}}(\sigma_\Lambda,g_\Lambda)$ is not zero only at one arrow, namely, the arrow $(\eta_\Lambda,\widetilde{g}_\Lambda)$, where $\widetilde{g}_\Lambda$ is the unique group element such that $\widetilde{g}_\Lambda\eta_\Lambda =h_\Lambda \omega_\Lambda$.
\end{proof}
 To finish the construction of the C*-algebra through the groupoid, we proceed to construct a C*-norm. Just to remember, $\mathcal{G}_{\Lambda,\sigma_\Lambda}=s^{-1}(\{\sigma_\Lambda\})$ is the set where all arrows have $\sigma_\Lambda$ as its source. For each $\sigma_\Lambda\in \Omega_\Lambda$, we can consider the Hilbert space 
\[ \ell^2(\mathcal{G}_{\Lambda,\sigma_\Lambda}) = \left\{\xi:\mathcal{G}_{\Lambda,\sigma_\Lambda}\rightarrow \mathbb{C}: \sum_{(\sigma_\Lambda,g_\Lambda) \in \mathcal{G}_{\Lambda,\sigma_\Lambda}} |\xi(\sigma_\Lambda,g_\Lambda)|^2 < \infty\right\}
\]
Where the inner product is given by
\[
\langle \xi_1,\xi_2 \rangle = \sum_{(\sigma_\Lambda,g_\Lambda) \in \mathcal{G}_{\Lambda,\sigma_\Lambda}}\overline{\xi_1(\sigma_\Lambda,g_\Lambda)}\xi_2(\sigma_\Lambda,g_\Lambda).
\]
With the inner product above the delta functions $\delta_{(\sigma_\Lambda,g_\Lambda)}$ for $g_\Lambda \in G_\Lambda$ can be shown to be an orthonormal basis for $\ell^2(\mathcal{G}_{\Lambda,\sigma_\Lambda})$. Just for concreteness, let us further analyze the Hilbert space $\ell^2(\mathcal{G}_{\Lambda,\sigma_\Lambda})$. Notice that $\mathcal{G}_{\Lambda,\sigma_\Lambda}$ is isomorphic to $G_\Lambda$, yielding $\ell^2(\mathcal{G}_{\Lambda,\sigma_\Lambda})\simeq \ell^2(G_\Lambda)$. When $\Lambda$ is finite, this is a finite-dimensional Hilbert space. For each configuration $\sigma_\Lambda \in \Omega_\Lambda$ we can construct a $*$-representation $\pi^{\sigma_\Lambda}:C_c(\mathcal{G}_\Lambda)\rightarrow B(\ell^2(\mathcal{G}_{\Lambda,\sigma_\Lambda}))$ defined as
\[
\pi^{\sigma_\Lambda}(f)\xi(\sigma_\Lambda,g_\Lambda) = \hspace{-0.5cm}\sum_{(\sigma_\Lambda,h_\Lambda)\in \mathcal{G}_{\Lambda, \sigma_\Lambda}} \hspace{-0.5cm}f(h_\Lambda\sigma_\Lambda,g_\Lambda h_\Lambda^{-1}) \xi(\sigma_\Lambda,h_\Lambda),
\]
\begin{proposition}
    The map $\pi^{\sigma_\Lambda}$ is a well-defined *-representation of $C_c(\mathcal{G}_\Lambda)$ into the Hilbert space $\ell^2(\mathcal{G}_{\Lambda,\sigma_\Lambda})$.
\end{proposition}
\begin{proof}
Notice that $(h_\Lambda\sigma_\Lambda,g_\Lambda h_\Lambda^{-1}) = (\sigma_\Lambda,h_\Lambda)^{-1}\circ (\sigma_\Lambda, g_\Lambda)$, hence by writing $$f_{g_\Lambda}(\sigma_\Lambda,h_\Lambda) = \overline{f((\sigma_\Lambda,h_\Lambda)^{-1}\circ (\sigma_\Lambda, g_\Lambda))},$$ the following form for the representation $\pi^{\sigma_\Lambda}$ follows
\[
\pi^{\sigma_\Lambda}(f)\xi(\sigma_\Lambda,g_\Lambda) = \langle f_{g_\Lambda},\xi\rangle.
\]
It always holds that $f_{g_\Lambda}\in \ell^2(\mathcal{G}_{\Lambda,\sigma_\Lambda})$ since the function $f$ has compact support, therefore $\pi^{\sigma_\Lambda}(f)$ is well-defined as a linear operator in $\ell^2(\mathcal{G}_{\Lambda,g_\Lambda})$. Indeed, for every $\xi \in \ell^2(\mathcal{G}_{\Lambda,\sigma_\Lambda})$ we have
\begin{equation*}
\begin{split}
    \|\pi^{\sigma_\Lambda}(f)\xi\|^2 &= \hspace{-0.5cm}\sum_{(\sigma_\Lambda,g_\Lambda)\in \mathcal{G}_{\Lambda,\sigma_\Lambda}}\hspace{-0.5cm}\overline{\langle f_{g_\Lambda},\xi\rangle}\langle f_{g_\Lambda},\xi\rangle \leq \|\xi\|^2\hspace{-0.5cm}\sum_{(\sigma_\Lambda,g_\Lambda)\in \mathcal{G}_{\Lambda,\sigma_\Lambda}}\hspace{-0.5cm}\|f_{g_\Lambda}\|^2,
\end{split}    
\end{equation*}
where we used Cauchy-Schwarz inequality in the last passage above. Notice that this implies that $\pi^{\sigma_\Lambda}(f)$ is a bounded linear operator, since $\|f_{g_\Lambda}\|$ is not zero only in at most a finite number of values for $g_\Lambda$ since $f$ has compact support. It is easy to see also that $\pi^{\sigma_\Lambda}$ is linear as a map from $C_c(\mathcal{G}_\Lambda)$ to $B(\ell^2(\mathcal{G}_{\Lambda,\sigma_\Lambda})$. To finish the proof, we must show that $\pi^{\sigma_\Lambda}$ respects the product and the adjoint structure. Fix $f_1,f_2 \in C_c(\mathcal{G}_\Lambda)$. Then it holds
\begin{equation*}
    \begin{split}
        \pi^{\sigma_\Lambda}(f_1\cdot f_2)\xi(\sigma_\Lambda,g_\Lambda) &= \hspace{-0.5cm}\sum_{(\sigma_\Lambda,h_\Lambda)\in \mathcal{G}_{\Lambda,\sigma_\Lambda}}\hspace{-0.5cm}f_1\cdot f_2(h_\Lambda\sigma_\Lambda, g_\Lambda h_\Lambda^{-1})\xi(\sigma_\Lambda,h_\Lambda) \\
        &=\hspace{-0.5cm}\sum_{\substack{(\sigma_\Lambda,h_\Lambda)\in \mathcal{G}_{\Lambda,\sigma_\Lambda} \\(h_\Lambda\sigma_\Lambda,k_\Lambda)\in \mathcal{G}_{\Lambda,h_\Lambda\sigma_\Lambda}}}\hspace{-0.5cm}f_1(k_\Lambda h_\Lambda \sigma_\Lambda, g_\Lambda h_\Lambda^{-1}k_\Lambda^{-1}) f_2(h_\Lambda\sigma_\Lambda, k_\Lambda)\xi(\sigma_\Lambda,h_\Lambda) \\
        &=\hspace{-0.5cm}\sum_{\substack{(\sigma_\Lambda,h_\Lambda)\in \mathcal{G}_{\Lambda,\sigma_\Lambda} \\(h_\Lambda\sigma_\Lambda,k_\Lambda)\in \mathcal{G}_{\Lambda,h_\Lambda\sigma_\Lambda}}}\hspace{-0.5cm}f_1((\sigma_\Lambda,k_\Lambda h_\Lambda)^{-1}(\sigma_\Lambda,g_\Lambda)) f_2((\sigma_\Lambda, h_\Lambda)^{-1}(\sigma_\Lambda, k_\Lambda h_\Lambda))\xi(\sigma_\Lambda,h_\Lambda).
    \end{split}
\end{equation*}
The innermost sum above sum dependent on $\mathcal{G}_{\Lambda,h_\Lambda\sigma_\Lambda}$ is, essentially, a sum over $k_\Lambda \in G_\Lambda$. By making a change of variables $k_\Lambda h_\Lambda = l_\Lambda$, this sum then can written as a sum over $(\sigma_\Lambda,l_\Lambda)\in \mathcal{G}_{\Lambda,\sigma_\Lambda}$, thus
\begin{equation*}
\begin{split}
\pi^{\sigma_\Lambda}(f_1\cdot f_2)\xi(\sigma_\Lambda,g_\Lambda) &= \hspace{-0.5cm}\sum_{\substack{(\sigma_\Lambda,h_\Lambda), (\sigma_\Lambda,l_\Lambda) \in \mathcal{G}_{\Lambda,\sigma_\Lambda}}}\hspace{-1cm}f_1((\sigma_\Lambda,l_\Lambda)^{-1}(\sigma_\Lambda,g_\Lambda)) f_2((\sigma_\Lambda, h_\Lambda)^{-1}(\sigma_\Lambda, l_\Lambda))\xi(\sigma_\Lambda,h_\Lambda) \\
&= \hspace{-0.5cm}\sum_{ (\sigma_\Lambda,l_\Lambda) \in \mathcal{G}_{\Lambda,\sigma_\Lambda}}\hspace{-0.5 cm}f_1((\sigma_\Lambda,l_\Lambda)^{-1}(\sigma_\Lambda,g_\Lambda)) \pi^{\sigma_\Lambda}(f_2)\xi(\sigma_\Lambda,l_\Lambda) \\
&= \pi^{\sigma_\Lambda}(f_1)\pi^{\sigma_\Lambda}(f_2)\xi(\sigma_\Lambda,g_\Lambda).
\end{split}
\end{equation*}
Since the equality above holds for all $\xi$, we get that $\pi^{\sigma_\Lambda}$ is multiplicative. The last part is to show that the adjoint operation is respected. Let $\xi_1,\xi_2 \in \ell^2(\mathcal{G}_{\Lambda,\sigma_\Lambda})$ we have
\begin{equation*}
    \begin{split}
        \langle\xi_1,\pi^{\sigma_\Lambda}(f^*)\xi_2\rangle &= \sum_{(\sigma_\Lambda,g_\Lambda)\in \mathcal{G}_{\Lambda,\sigma_\Lambda}} \overline{\xi_1(\sigma_\Lambda,g_\Lambda)}\pi^{\sigma_\Lambda}(f^*)\xi_2(\sigma_\Lambda,g_\Lambda) \\
        &=\sum_{(\sigma_\Lambda,g_\Lambda)\in \mathcal{G}_{\Lambda,\sigma_\Lambda}} \overline{\xi_1(\sigma_\Lambda,g_\Lambda)}\pi^{\sigma_\Lambda}(f^*)\xi_2(\sigma_\Lambda,g_\Lambda) \\
        &=\sum_{(\sigma_\Lambda,g_\Lambda)\in \mathcal{G}_{\Lambda,\sigma_\Lambda}} \sum_{(\sigma_\Lambda,h_\Lambda)\in \mathcal{G}_{\Lambda,\sigma_\Lambda}}\overline{\xi_1(\sigma_\Lambda,g_\Lambda)f(g_\Lambda\sigma_\Lambda, h_\Lambda g_\Lambda^{-1})}\xi_2(\sigma_\Lambda,h_\Lambda) \\
        &= \sum_{(\sigma_\Lambda,h_\Lambda) \in \mathcal{G}_{\Lambda,h_\Lambda}}\overline{\pi^{\sigma_\Lambda}(f)\xi_1(\sigma_\Lambda,h_\Lambda)}\xi_2(\sigma_\Lambda,h_\Lambda) \\
        &=\langle \pi^{\sigma_\Lambda}(f)\xi_1,\xi_2\rangle = \langle \xi_1,\pi^{\sigma_\Lambda}(f)^*\xi_2\rangle.
    \end{split}
\end{equation*}
Since the calculation above holds for any $\xi_1,\xi_2$, we have $\pi^{\sigma_\Lambda}(f^*) = \pi^{\sigma_\Lambda}(f)^*$ finishing our proof.
\end{proof}
Again for concreteness, let us calculate the action of $\pi^{\sigma_\Lambda}(f)$ on a delta function
$\delta_{(\sigma_\Lambda,h_\Lambda)}$ when $\Lambda$ is finite. We have
\begin{equation}\label{rep_acting_delta}
\begin{split}
	\pi^{\sigma_\Lambda}(f)\delta_{(\sigma_\Lambda,g_\Lambda)}(\sigma_\Lambda,h_\Lambda) &= \hspace{-0.5cm}\sum_{(\sigma_\Lambda,k_\Lambda) \in \mathcal{G}_{\Lambda,\sigma_\Lambda}}\hspace{-0.5cm}f(k_\Lambda\sigma_\Lambda,h_\Lambda k_\Lambda^{-1})\delta_{(\sigma_\Lambda,g_\Lambda)}(\sigma_\Lambda,k_\Lambda) \\ 
 &= f(g_\Lambda \sigma_\Lambda, h_\Lambda g_\Lambda^{-1})\delta_{(\sigma_\Lambda,g_\Lambda)}(\sigma_\Lambda,h_\Lambda).
 \end{split}
\end{equation}   
  
The \emph{left regular representation} $\pi:C_c(\mathcal{G}_\Lambda)\rightarrow B(\ell^2(\mathcal{G}_\Lambda))$  is defined by 
\[
\pi(f)\Bigg(\sum_{(\sigma_\Lambda,g_\Lambda)\in \mathcal{G}_\Lambda}\lambda_{(\sigma_\Lambda,g_\Lambda)}\delta_{(\sigma_\Lambda,g_\Lambda)}\Bigg) = \sum_{(\sigma_\Lambda,g_\Lambda)\in \mathcal{G}_\Lambda}\lambda_{(\sigma_\Lambda,g_\Lambda)}\pi^{\sigma_\Lambda}(f)\delta_{(\sigma_\Lambda,g_\Lambda)},
\]
where
\[
\ell^2(\mathcal{G}_\Lambda) \coloneqq \bigoplus_{\sigma_\Lambda \in \Omega_\Lambda}\ell^2(\mathcal{G}_{\Lambda,\sigma_\Lambda}).
\]

\begin{proposition}
A continuous function $f\in C_c(\mathcal{G}_\Lambda)$ is positive as an operator if and only if for every $n\geq 1$ and $(g_{\Lambda,i}\sigma_\Lambda,g_{\Lambda,i})$, $i=1,\dots,n$ it holds
	\be\label{positivity_eq1}
	\sum_{i,j=1}^n \overline{\lambda_i}\lambda_j f(g_{\Lambda,i}\sigma_\Lambda,g_{\Lambda,j}g_{\Lambda,i}^{-1}) \geq 0,
	\ee
	for $\lambda_i \in \mathbb{C}$. Moreover, the left-regular representation $\pi$ is faithful.
\end{proposition}
\begin{proof}
	To show that $\pi(f)$ is positive we only need to check Inequality \eqref{positivity_eq1} for finite combinations of the form
	\[
	\psi_{\sigma_\Lambda} = \sum_{(\sigma_\Lambda,g_\Lambda) \in \mathcal{G}_{\Lambda,\sigma_\Lambda}}\lambda_{(\sigma_\Lambda,g_\Lambda)}\delta_{(\sigma_\Lambda,g_\Lambda)}.
	\]
	Then,
	\begin{align*}
		\langle \psi_{\sigma_\Lambda}, \pi(f)\psi_{\sigma_\Lambda}\rangle &= \sum_{\substack{(\sigma_\Lambda,g_\Lambda)\in \mathcal{G}_{\Lambda,\sigma_\Lambda} \\ (\sigma_\Lambda,h_\Lambda)\in \mathcal{G}_{\Lambda,\sigma_\Lambda}}}\overline{\lambda_{(\sigma_\Lambda,g_\Lambda)}}\lambda_{(\sigma_\Lambda,h_\Lambda)}\langle \delta_{(\sigma_\Lambda,g_\Lambda)}, \pi^{\sigma_\Lambda}(f) \delta_{(\sigma_\Lambda,h_\Lambda)} 
		\rangle \\
		&= \sum_{\substack{(\sigma_\Lambda,g_\Lambda)\in \mathcal{G}_{\Lambda,\sigma_\Lambda} \\ (\sigma_\Lambda,h_\Lambda)\in \mathcal{G}_{\Lambda,\sigma_\Lambda}}}\sum_{(\sigma_\Lambda,k_\Lambda) \in \mathcal{G}_{\Lambda,\sigma_\Lambda}}\overline{\lambda_{(\sigma_\Lambda,g_\Lambda)}}\lambda_{(\sigma_\Lambda,h_\Lambda)}f(h_\Lambda\sigma_\Lambda,k_\Lambda h_\Lambda^{-1})\langle \delta_{(\sigma_\Lambda,g_\Lambda)}, \delta_{(\sigma_\Lambda,k_\Lambda)} 
		\rangle \\
		&=\sum_{\substack{(\sigma_\Lambda,k_\Lambda)\in \mathcal{G}_{\Lambda,\sigma_\Lambda} \\ (\sigma_\Lambda,h_\Lambda)\in \mathcal{G}_{\Lambda,\sigma_\Lambda}}}\overline{\lambda_{(\sigma_\Lambda,k_\Lambda)}}\lambda_{(\sigma_\Lambda,h_\Lambda)}f(h_\Lambda\sigma_\Lambda,k_\Lambda h_\Lambda^{-1}),
	\end{align*}
	Hence if Inequality \eqref{positivity_eq1} holds the function $f$ must be positive.  Let $f \in C_c(\mathcal{G}_\Lambda)$. Then if $f$ is positive but $\pi(f)=0$, we know that for any $\psi \in \ell^2(\mathcal{G}_\Lambda)$ we know that $\pi(f)\psi=0$. Then, taking $\psi=\delta_{(\sigma_\Lambda,g_\Lambda)}$ we get
	\[
	\langle \delta_{(\sigma_\Lambda,g_\Lambda)},\pi^{\sigma_\Lambda}(f)\delta_{(\sigma_\Lambda, g_\Lambda)}\rangle = \sum_{(\sigma_\Lambda,h_\Lambda)\in \mathcal{G}_{\Lambda,\sigma_\Lambda}}f(g_\Lambda\sigma_\Lambda, h_\Lambda g_\Lambda^{-1}) \langle \delta_{(\sigma_\Lambda,g_\Lambda)},\delta_{(\sigma_\Lambda,h_\Lambda)}\rangle = f(g_\Lambda \sigma_\Lambda,1)=0,
	\]
	concluding that $f(\sigma_\Lambda,1) = 0$ for any $\sigma_\Lambda \in \Omega_\Lambda$. Choosing $\psi = \delta_{(\sigma_\Lambda,1)}+\delta_{(\sigma_\Lambda,h_\Lambda)}$ we get
	\begin{align*}
		\langle \psi, \pi^{\sigma_\Lambda}(f)\psi\rangle &= 2 \langle \delta_{(\sigma_\Lambda,1)},\pi^{\sigma_\Lambda}(f)\delta_{(\sigma_\Lambda,h_\Lambda)}\rangle \\
		& = 2\sum_{(\sigma_\Lambda,k_\Lambda)\in \mathcal{G}_{\Lambda,\sigma_\Lambda}}f(h_\Lambda \sigma_\Lambda, k_\Lambda h_\Lambda^{-1})\langle\delta_{(\sigma_\Lambda,1)},\delta_{(\sigma_\Lambda,k_\Lambda)}\rangle \\
		&=2f(h_\Lambda\sigma_\Lambda, h_\Lambda^{-1}) = 0,
	\end{align*}
	hence $f = 0$. 
\end{proof}
\begin{remark}
	The notion of positivity differs from simply being positive in every element of the groupoid. Indeed, being nonnegative in every element of the groupoid does not guarantee that $f$ is positive as an operator. Nonetheless, for functions $f \in C(\Omega_\Lambda)$, for it to be positive as an operator it only needs to be nonnegative in each point of the configuration space.
\end{remark}
The C*-algebra norm that we will choose to give to $C_c(\mathcal{G}_\Lambda)$ is $\|f\|_r = \|\pi(f)\|$ and $C^*(\mathcal{G}_\Lambda)$ is the completion of $C_c(\mathcal{G}_\Lambda)$ with respect to the norm $\|\cdot\|_r$. The C*-algebra $C^*(\mathcal{G}_\Lambda)$ is called the \emph{reduced C* algebra} of the groupoid $\mathcal{G}_\Lambda$. Since the groupoids $\mathcal{G}_\Lambda$ are amenable (see \cite{Renault1980}), this is isomorphic to other construction called the full C* algebra.
More details on this construction can be found in for groupoid $\text{C}^*$-algebras  are \cite{Paterson, putnam, Renault1980, Renault2,  SimsSzaboWilliams2020}.

\subsection{The quasilocal algebra}

Let us introduce some important functions. To get acquainted with the groupoid picture, we first define and study properties of these functions for the case of one lattice point $x$ then we define the general picture later. For the groupoid $\mathcal{G}_x$ let $u_x,v_x: \mathcal{G}_x\rightarrow \mathbb{C}$ be the functions
\be\label{algebra_generator}
u_x(\sigma_x,g_x) = 
\sigma_x \delta_{\{g_x=1\}}  
\quad
\text{ and }
\quad 
v_x(\sigma_x,g_x) = \delta_{\{g_x=z_q\}}, 
\ee
where $z_q \coloneqq e^{\frac{2\pi i}{q}}$. These functions are continuous since the groupoid $\mathcal{G}_x$ has the discrete topology. They also satisfy the following relations
\[
u_x^q(\sigma_x,g_x) = \sum_{(\omega_x,h_x)\in \mathcal{G}_x^{g_x\sigma_x}}u_x(\omega_x,h_x)u^{q-1}_x(\sigma_x,h_x^{-1}g_x) = g_x\sigma_x u_x^{q-1}(\sigma_x,g_x)
\]
By iterating the above procedure we get $u_x^q(\sigma_x,g_x) = (\sigma_x)^q \delta_{\{g_x=1\}} = \delta_{\{g_x=1\}}$, 
where the last equality follows from the fact that $\sigma_x \in G_q$. A similar argument holds for $v_x$, 
\[
v_x^q(\sigma_x,g_x) = \sum_{(\omega_x,h_x)\in \mathcal{G}_x^{g_x\sigma_x}}v_x(\omega_x,h_x)v^{q-1}_x(\sigma_x,h_x^{-1}g_x) = v_x^{q-1}(\sigma_x,z_q^{-1}g_x).
\]
Iterating the formula above yields $v_x^q(\sigma_x,g_x)= v_x(\sigma_x,z_q^{-(q-1)}g_x) $ which will not be zero only when $g_x = z_q^q = 1$.  Also, 
\[
u_x \cdot v_x (\sigma_x,g_x)  = \sum_{(\omega_x,h_x)\in \mathcal{G}_x^{g_x\sigma_x}} u_x(\omega_x,h_x)v_x(\sigma_x,h_x^{-1}g_x) = g_x\sigma_x v_x(\sigma_x,g_x) 
\]
If we change the order of multiplication we get
\[
v_x \cdot u_x (\sigma_x,g_x)  = \sum_{(\omega_x,h_x)\in \mathcal{G}_x^{g_x\sigma_x}} v_x(\omega_x,h_x)u_x(\sigma_x,h_x^{-1}g_x) = \sigma_x v_x(\sigma_x,g_x) 
\]
since the products are only different from zero when $g_x=1$, we know that $g_x\sigma_x = z_q \sigma_x$, thus $u_x \cdot v_x = z_q v_x u_x$. This shows that the convolution product in groupoids is not commutative in general.

The \emph{observables algebra}\footnote{Some authors reserve the term observable only for the self-adjoint operators in a C*-algebra. Here, we follow the terminology in Bratteli and Robinson book \cite{Bra2}.} in quantum statistical mechanics as in Bratteli-Robinson's books \cite{Bra1,Bra2}, is the inductive limit C$^*$-algebra, constructed in section 6.2.1 of \cite{Bra2}. For completeness, we proceed to briefly explain this construction and show that it is a groupoid C*-algebra. For each finite set $\Lambda \in \mathcal{F}(\Z^d)$, consider the local algebras as $\mathfrak{A}_\Lambda = \otimes_{x\in \Lambda}M_q(\mathbb{C}) \simeq M_{q^{|\Lambda|}}(\mathbb{C})$ the tensor product of $|\Lambda|$ copies of the $q\times q$ matrices. Then, when $\Lambda \subset \Lambda'$, define the inclusion maps $\varphi_{\Lambda',\Lambda}: \mathfrak{A}_{\Lambda}\rightarrow \mathfrak{A}_{\Lambda'}$
\[
\varphi_{\Lambda',\Lambda}(a) = a \otimes \mathbbm{1}.
\]
These maps have some important properties such as
\begin{itemize}
	\item[(i)] \textbf{(Composition)} If $\Lambda \subset \Lambda' \subset \Lambda''$ then $\varphi_{\Lambda'',\Lambda'}\circ \varphi_{\Lambda',\Lambda}= \varphi_{\Lambda'',\Lambda}$.
	\item[(ii)] \textbf{(Norm-preserving)} For every $\mathfrak{A}_{\Lambda}$, if $\|\cdot\|_\Lambda$ is its C*-norm, we have $\|a\|_\Lambda = \|\varphi_{\Lambda',\Lambda}(a)\|_{\Lambda'}$.
	\item[(iii)]\textbf{(Locality)} For $a \in \mathfrak{A}_\Lambda$ and $b\in \mathfrak{A}_{\Lambda'}$ if $\Lambda'\cap \Lambda = \emptyset$ then, for any $\Lambda\cup\Lambda' \subset\Lambda''$ we have
	\[
	[\varphi_{\Lambda'',\Lambda}(a),\varphi_{\Lambda'',\Lambda'}(b)]=0.
	\]
\end{itemize}

The composition property is essential to construct an inductive limit of the algebras and the norm-preserving property is paramount in order to complete the algebraic direct limit to a C* algebra (for a detailed construction, check Appendix L in \cite{Olsen}). According to the construction of inductive limits, there exists a C*-algebra $\mathfrak{A}_q$ and *-homomorphisms $\varphi_\Lambda:\mathfrak{A}_\Lambda\rightarrow \mathfrak{A}_q$ such that, for any $\Lambda \subset \Lambda'$ we have $\varphi_{\Lambda'} \circ \varphi_{\Lambda',\Lambda} = \varphi_\Lambda$. The \emph{spin algebra} is 
\[
\mathfrak{A}_q = \overline{\bigcup_{\Lambda \in \mathcal{F}(\Z^d)}\mathfrak{A}_\Lambda},
\]
where we tacitly identified the algebras $\mathfrak{A}_\Lambda$ with $\varphi_\Lambda(\mathfrak{A}_\Lambda)$. The algebras $\mathfrak{A}_q$ and $\mathfrak{A}_\Lambda$ share the same identity element $\mathfrak{1}$ by construction. The separation property manifests the physical principle that observables localized in disjoint systems should be independent of each other. This algebra is also known as the UHF-algebra of type $q^\infty$. 
\pagebreak
\begin{theorem}
	The C$^*$ algebra $C^*(\mathcal{G})$ is isomorphic to the spin algebra $\mathfrak{A}_q$.
\end{theorem}
\begin{proof}
	By the universal property of the inductive limit, we need only to find morphisms $\psi_\Lambda:\mathfrak{A}_\Lambda \rightarrow C^*(\mathcal{G})$ such that $\psi_{\Lambda'} \circ \varphi_{\Lambda,\Lambda'} = \psi_\Lambda$, for any $\Lambda \subset \Lambda'$. Since each matrix algebra $M_q(\mathbb{C})$ is generated by the matrices $U,V$ given by 
	\begin{equation*}
		U = 
		\begin{pmatrix}
			1 & 0 & \dots & \dots & 0\\
			0 & z_q & 0 & \dots & 0 \\
			0 & 0 & z_q^2 & \dots & 0 \\
			\vdots & \ddots &\ddots& \ddots & \vdots \\
			0& \dots & \dots& 0 & z_q^{q-1}
		\end{pmatrix}
		\;\;\;\;
		\text{and}
		\;\;\;\;
		V = 
		\begin{pmatrix}
			0 & 1 & \dots & \dots & 0\\
			0 & 0 & 1 & \dots & 0 \\
			\vdots & \ddots &\ddots& \ddots & \vdots  \\
			0 & 0 &\dots& \dots & 1 \\
			1 & \dots & \dots& 0 & 0
		\end{pmatrix}.
	\end{equation*}
	These matrices satisfy $UV = z_q VU$, $U^q = V^q = \mathbbm{1}$. The matrices $U,V$ generate the algebra of $q \times q $ matrices, i.e., 
	\[
	\{U^k V^\ell: k,\ell = 0, \dots, q-1\},
	\]
	is a basis for $M_q(\mathbb{C})$ as a complex vector space. Indeed, using the inner product $\langle A,B\rangle \coloneqq \text{Tr}(A^*B)$, we get that 
	\[
	\text{Tr}((U^{k_1}V^{\ell_1})^*U^{k_2}V^{\ell_2}) = \text{Tr}((V^{\ell_1})^*(U^{k_1})^*U^{k_2}V^{\ell_2}) = \text{Tr}((U^{k_1})^*U^{k_2}V^{\ell_2}(V^{\ell_1})^*) 
	\]
	But $(U^k)^* = U^{q-k}$ and $(V^k)^*=V^{q-k}$
	\[
	\text{Tr}((U^{k_1})^*U^{k_2}V^{\ell_2}(V^{\ell_1})^*) = \text{Tr}((U^{k_2-k_1}V^{\ell_2-\ell_1}) = \delta_{k_1,k_2}\delta_{\ell_1,\ell_2},
	\]
	yielding us that the elements $U^kV^\ell$ are linearly independent. Since we have $q^2$ matrices, their span must be all the matrices. This fact implies that every diagonal matrix is a polynomial in $U$. The groupoid $\mathcal{G}_x$ is discrete with $q^2$ points. Thus $C(\mathcal{G}_x)$ is a complex vector space of dimension $q^2$. Thus defining the map $\psi_x:C(\mathcal{G}_x)\rightarrow M_q(\mathbb{C})$ by
	\[
	\psi_x(U^kV^\ell) = u_x^k \cdot v_x^\ell, \quad \text{for } k,\ell=0,\dots, q-1, 
	\]
	where $u_x$ and $v_x$ are the functions defined in \eqref{algebra_generator}. Extending $\psi_x$ linearly we get an isomorphism of $C_c(\mathcal{G}_x)$ and $M_q(\mathbb{C})$ as vector spaces. It is easy to verify that $\psi_x$ is actually an $*$-isomorphism between the C*-algebras, since $u_x,v_x$ and $U,V$ satisfy the same relations. For every finite $\Lambda \subset \Z^d$, we have $C_c(\mathcal{G}_\Lambda) \simeq \bigotimes_{x\in \Lambda}C_c(\mathcal{G}_x)$, since $C_c(X\times Y) = C_c(X)\otimes C_c(Y)$ for locally compact Hausdorff spaces $X$ and $Y$. Thus we can define $*$-isomorphisms of $C^*(\mathcal{G}_\Lambda)$ with the local algebras $\mathfrak{A}_\Lambda$ with the maps $\psi_\Lambda =\otimes_{x\in\Lambda}\psi_x$. For every $\Delta \subset \Lambda$, we can embed $C_c(\mathcal{G}_\Delta)$ in $C_c(\mathcal{G}_\Lambda)$ as a subalgebra using the following extension
	\[
	f(\sigma_\Lambda,g_\Lambda) = f(\sigma_\Delta,g_\Delta)\delta_{\{g_{\Lambda\setminus\Delta}=1\}}.
	\]
	This extension defines a family of inclusion maps $i_{\Delta,\Lambda}:C^*(\mathcal{G}_\Delta) \rightarrow C^*(\mathcal{G}_\Lambda)$ satisfying the properties of the composition, norm-preservation, and spatial separation introduced earlier. Thus, together with the isomorphisms $\psi_\Lambda$, the universal property of C*-inductive limit guarantees that $C^*(\mathcal{G})\simeq \mathfrak{A}_q$.
\end{proof}

The theorem above has some interesting consequences for continuous functions on the groupoid. 

\begin{corollary}\label{local_functions}
	Every continuous function $f\in C(\mathcal{G}_\Lambda)$ is given by
	\[
	f = \sum_{k_A, \ell_B} c_{k_A,\ell_B} u_A^{k_A} \cdot v_B^{\ell_B}, 
	\]
	where we used a multi-index notation,
	\[
	u_A^{k_A} = \Conv_{x \in A} u_x^{k_x} \quad \text{ and } \quad v_B^{\ell_B} = \Conv_{x \in B} v_x^{\ell_x},
	\]
	with $\ell_x, k_x = 1,\dots,q-1$, and $\Conv$ is the convolution product. 
\end{corollary} 

We also make the assumption that if $A$ or $B$ is the empty set then $u_\emptyset = v_\emptyset =\mathbbm{1}$. For the important abelian subalgebra $C(\Omega)$, we get that every local function is a polynomial on the variables $u_x$, for $x\in \Lambda$ finite. We can identify a subset of the C*-algebra $C^*(\mathcal{G})$ with 
\[
\bigcup_{\Lambda \in \mathcal{F}(\Z^d)}C(\mathcal{G}_\Lambda).
\]
These will be called the \emph{local operators}, using the standard nomenclature of quantum statistical mechanics.  

\begin{comment}
There is an action of the $\Z^d$ as an abelian group in the groupoid, that can be lifted to a $*$-automorphism. There is an action from $\Z^d$ on the groupoid $\mathcal{G}$ by translations
\[
(\sigma,g) \mapsto  (\sigma^x,g^x),
\]
where $\sigma^x$ and $g^x$ are, respectively, the configuration and group element given by $(\sigma^x)_y=\sigma_{x+y}$ and $(g^x)_y = g_{x+y}$. This action can be used to define a $\Z^d$-action on the groupoid C*-algebra by *-automorphisms given by
\[
\theta_x(f)(\sigma,g) = f(\sigma^x,g^x),
\]
for $f\in C_c(\mathcal{G})$. The action can be extended to the C*-algebra by density. An state is said to be translation invariant if $\mu(\theta_x(f)) = \mu(f)$ for all $x\in \Z^d$.
\end{comment}

\subsection{The Jordan-Wigner Transformation}

The quasilocal algebra of operators has a very peculiar structure regarding the local observables $u_x$ and $v_x$, namely, the commutation of observables whenever they are in algebras localized in disjoint subsets of the lattice. However, other important systems can be described by the same algebra of observables, only changing the generators of the algebra. One important example of this kind of system is interacting fermions (see \cite{ArakiMoriya, BK, Datta1, Pedra, putnam, Raszeja, Renault1980,  SimsSzaboWilliams2020} and references therein) on the lattice, whose algebra of observables is the UHF algebra $2^\infty$ or CAR algebra, that is described by anticommutation relations between the annihilation and creation operators in different sites of the lattice. The automorphism of the algebra $\mathfrak{A}_2$ is called the Jordan-Wigner transform in the physics literature and we will describe it in this section in the groupoid language. The discussion that follows highly benefited from \cite{kochmanski1998jordanwigner}.

The generators of the algebra are $\sigma_x^{(3)}$ and $\sigma^{(1)}_y$, for $x,y\in \Z^d$. These are 
\[
\sigma_x^{(3)}(\sigma,g) =
\sigma_x\delta_{\Omega}(\sigma,g)
\quad
\text{ and }
\quad 
\sigma_x^{(1)}(\sigma,g) = \delta_{\{g_x=-1, g_{\{x\}^c} = 1\}}(\sigma,g)
\]

These functions have compact support, respectively $\supp(\sigma_x^{(3)}) = \Omega$ and $\supp(\sigma_x^{(1)})=\Omega\times\{\widehat{g}_x\}$, where $\widehat{g}_x = -1$ while $\widehat{g}_{\{x\}^c}=1$, and satisfy the following commutation relations
\[
\sigma_x^{(3)}\cdot\sigma_y^{(1)}= (-1)^{\delta_{x,y}}\sigma_y^{(1)}\cdot \sigma_x^{(3)},
\]
where $\delta_{x,y}$ is the Kronecker delta function. Furthermore, it is easy to see that $(\sigma_x^{(3)})^2 = (\sigma_x^{(1)})^2 = \mathbbm{1}$. The CAR algebra is the C*-algebra generated by the creation and annihilation operators $a^*_x,a_x$, for $x\in \Z^d$, respectively, that satisfy the following relations
\[
\{a^*_x,a_y\}=\delta_{x,y}\mathbbm{1}, \quad \text{ and } \quad \{a^*_x,a^*_y\}=0
\]
where $\{A,B\} = AB+BA$. As one may see, the spin observables have a very different spatial structure, so we will define the map in two steps. First, we make the local operators anticommute and then after that we modify the structure for the whole lattice. Define the raising and lowering operators $\sigma^+_x$ and $\sigma_x^-$ by
\[
\sigma_x^+ \coloneqq \mathbbm{1}_{\{\sigma_x=-1\}}\cdot\sigma_x^{(1)} 
\quad 
\text{ and }
\quad
\sigma_x^- = \mathbbm{1}_{\{\sigma_x=+1\}}\cdot \sigma_x^{(1)}
\]
Notice that $\sigma_x^+ = (\sigma_x^-)^*$ and $\{\sigma_x^+,\sigma_x^-\}=\mathbbm{1}$, but still $[\sigma_x^s,\sigma_y^{s'}]=0$, for any choice of $s\in \{+,-\}$, yielding the wrong commutation relation between sites. To fix this problem, let $f_x:\Omega_{\Lambda\setminus \{x\}} \rightarrow \mathbb{R}$ be a continuous function and let us multiply by a phase
\[
a_x = e^{i\pi f_x}\sigma_x^-,\quad \text{ and } \quad a_x^* = e^{-i\pi f_x}\sigma_x^+.
\]
Notice that $a_x \cdot a_x^* = \sigma_x^-\cdot \sigma_x^+$ and $a_x^*\cdot a_x =\sigma_x^+\cdot \sigma_x^-$ since $f_x$ does not depend on the value of the spin located in $x$ thus
\[
\{a_x^*,a_x\} = \{\sigma_x^+,\sigma_x^-\}=\mathbbm{1}\quad \text{and}\quad \{a_x,a_x\} = \{\sigma_x^-,\sigma_x^-\}=0.
\]
Notice that for any function $f\in C(\Omega)$ it holds
\[
f \cdot \sigma_x^+ = \mathbbm{1}_{\{\sigma_x=-1\}} \cdot f\cdot \sigma_x^{(1)} = \sigma_x^+\cdot f^{g_x},
\]
where $f^g(\sigma) = f(g\sigma)$. For different sites, using the formula above we can calculate the anticommutator 
\begin{align*}
	\{a_x^*,a_y\}= e^{-i\pi f_x}\sigma_x^+\cdot e^{i\pi f_y}\sigma_y^- +\cdot e^{i\pi f_y}\sigma_y^-e^{-i\pi f_x}\sigma_x^+ = (e^{i\pi (f_y^{g_x}-f_x)}+e^{i\pi (f_y- f_x^{g_y})})\sigma_x^+\sigma_y^-,
\end{align*}
and
\[
\{a_x,a_y\} = (e^{i\pi (f_y^{g_x}-f_x)}+e^{i\pi (f_y- f_x^{g_y})})\sigma_x^+\sigma_y^+.
\]
Thus, to satisfy the anticommutation relations, on the other hand, we need that the functions $f_x$ satisfy the relations
\be\label{jw_relation}
e^{i\pi (f_y^{g_x}-f_x)}+e^{i\pi (f_y- f_x^{g_y})} = 0 \Rightarrow f_y^{g_x}-f_x - f_y + f_x^{g_y} \in 2\Z+1.
\ee
Hence, it is enough for the family of functions $f_x$ to satisfy the relations above to define a working Jordan-Wigner transformation. These considerations motivate us to introduce the following definition
\begin{definition}[\textbf{Jordan-Wigner transformation}] Let $\Lambda \in \mathcal{F}(\Z^d)$. Take a family of functions $F=\{f_x\}_{x\in \Lambda}$, with $f_x \in C(\Omega_{\Lambda \setminus \{x\}})$ such that  for every pair $x,y$ it holds
    \[
    e^{i\pi (f_y^{g_x}-f_x)}+e^{i\pi (f_y- f_x^{g_y})} = 0 \Rightarrow f_y^{g_x}-f_x - f_y + f_x^{g_y} \in 2\Z+1.
    \]
    Then an \textbf{F-Jordan-Wigner transformation} is defined as the following definition for the creation and annihilation operators
    \[
    a_x = e^{i\pi f_x}\sigma_x^-,\quad \text{ and } \quad a_x^* = e^{-i\pi f_x}\sigma_x^+.
    \]
\end{definition}

We will proceed to give an example of a family of functions where these relations are satisfied, to show that they are not empty. Consider the number operators $n_x = \mathbbm{1}_{\{\sigma_x = +1\}}$. They are also known as the occupation number, since

\[
n_x = \sigma_x^+\cdot \sigma_x^-,
\]
thus giving the value $1$ if there is a fermion occupying the site $x$ or $0$ otherwise. Let $w:\Z^d\times\Z^d \rightarrow \mathbb{R}$ a function such that $w(x,x)=0$. We can define
\[
f_x = \sum_{z \in \Lambda}w(x,z) n_z
\]
In order to satisfy the relation \eqref{jw_relation}, we must have
\[
f_y^{g_x}-f_x - f_y + f_x^{g_y} = w(y,x)(n_x^{g_x}-n_x) + w(x,y)(n_y^{g_y}-n_y) = -w(y,x)\sigma_x - w(x,y)\sigma_y  \in 2\Z+1.
\]
Hence the function $w$ should satisfy the constraints of both $w(x,y)+w(y,x)$ and $w(x,y)-w(y,x)$ to be an odd integer. To illustrate that such a choice is realizable, notice that the usual Jordan-Wigner transformation in $d=1$ has the following choice
\[
w(x,y) = \delta_{\{y<x\}}.
\]
It is easy to verify that the sum $w(x,y) + w(y,x) = 1$ and $w(x,y)-w(y,x) = \pm 1$, and also $w(x,x) = 0$. For general $d\geq 2$, one can use the product order of the lattice $\Z$ for instance, this is known as the lexicographic order.  

\section{Interactions}

Once the observable algebras are constructed, we need to introduce the models to study quantum statistical mechanics systems properly. 

\begin{definition}
	A function $\phi:\mathcal{F}(\Z^d)\rightarrow \mathfrak{A}_q$ is called an interaction if 
	\begin{itemize}
		\item[i)] It is \textbf{self-adjoint}, i.e., $\phi(\Lambda)\coloneqq \phi_\Lambda = \phi_\Lambda^*$,
		\item[ii)] It is \textbf{local}, i.e., $\phi_\Lambda \in C^*(\mathcal{G}_\Lambda)$.
	\end{itemize} 
	An interaction is said to have \textbf{short-range} if there is $R>0$ such that if $\diam(X)>R$ implies $\phi_X = 0$. Otherwise, the interaction will be said to have \textbf{long-range}. 
\end{definition}

It is not hard to see that the space of all interactions $\phi$ has a structure of complex vector space. One can consider many different norms in this space and consider a Banach space of the interactions where the chosen norm is finite. For instance, one can consider, for each $\lambda>0$, the norm 
\[
\|\phi\|_\lambda \coloneqq \sum_{n\geq 0}e^{\lambda n}\sup_{x \in \Z^d} \sum_{\substack{X\ni x \\ |X|=n+1}}\|\phi_X\|.
\]
All these Banach spaces contain the short-range interactions as a dense subspace. The study of these Banach spaces and the convex functionals defined on them form a rich chapter of rigorous statistical mechanics. The interested reader can see more of these results in \cite{Is} for general spin systems and \cite{Pedra} for fermionic systems. The \emph{Hamiltonian operator} is defined as
\[
H_\Lambda(\phi) \coloneqq \sum_{X \subset \Lambda }\phi_X.
\]
Often we will refer to the Hamiltonian above as the \emph{empty boundary condition Hamiltonian}. By definition of interaction, the Hamiltonian is a well-defined self-adjoint element of the algebra $C^*(\mathcal{G}_\Lambda)$. Some important examples of interactions are presented below. 

\begin{example}[\textbf{$q=2$ interactions}]\label{ex.heisenberg}
	Let $J^{(1)}_{x,y}, J^{(2)}_{x,y}, J^{(3)}_{x,y}, h_x, \varepsilon_x, \varrho_x$ be real numbers. Then define
	\[
	\phi_\Lambda = \begin{cases}
		-J^{(1)}_{x,y}\sigma_x^{(1)}\sigma_y^{(1)}-J^{(2)}_{x,y}\sigma_x^{(2)}\sigma_y^{(2)}-J^{(3)}_{x,y}\sigma_x^{(3)}\sigma_y^{(3)} & \Lambda = \{x,y\} \\
		-\varepsilon_x \sigma_x^{(1)}-\varrho_x \sigma_x^{(2)}-h_x \sigma_x^{(3)} & \Lambda=\{x\} \\
		0 & o.w.
	\end{cases}
	\]
	where $\sigma^{(2)}_x = i \sigma_x^{(3)}\sigma_x^{(1)}$. These models have quite different behaviors, we will use the following classification
	\begin{itemize}
		\item[(i)] If $J^{(k)}_{x,y}\neq 0$ for all $x,y$, for $k=1,2,3$ then we call it a \textbf{Heisenberg-type interaction}.
		\item[(ii)] If $J^{(k)}_{x,y}\neq 0$, for only two different indices $k$ then we call it a \textbf{XY-type interaction}.
		\item[(iii)] If $J^{(k)}_{x,y}\neq 0$ for only one $k$ then we call it a \textbf{Ising-type interaction}.
	\end{itemize}
\end{example}
\pagebreak
\begin{example}[\textbf{Toric Code}]
Another important example is the Toric Code model. We will give the definition found in  \cite{PN} for the lattice $\Z^2$. A set $X \in \mathcal{F}(\Z^d)$ is called a \textbf{star} if there exists some $x \in \Z^2$ such that $X = B_1(x)$. A \textbf{plaquette} $X$ is a set such that there is $x \in \Z^2$ such that $X = \{x, x+(1,0),x+(0,1), x+(1,1)\}$. Then
\[
\phi_X = \begin{cases}
    \sigma_X^{(3)} & \text{X is a star} \\
    \sigma_X^{(1)} & \text{X is a plaquette}\\
    0 & o.w.
    \end{cases}
\]
\end{example}
\begin{example}[\textbf{$q>2$ interactions}]
	For $q>2$, there is an important class of two-body interactions described by
	\[
	\phi_X = \begin{cases}
		V_{x,y}(u_x,u_y)  & if X = \{x,y\} \\
		f_{1,x}(u_x)(v_x+v_x^*)+f_{2,x}(u_x)& if X = \{x\} \\
		0 & o.w.
	\end{cases}
	\]
	where $V_{x,y}(u_x,u_y), f_{1,x}(u_x)$ and $f_{2,x}(u_x)$ are self-adjoint polynomials of the operators $u_x, u_y$. 
 \begin{itemize}
     \item[(i)] The \textbf{clock models} corresponds to the choice $V_{x,y}(u_x,u_y) = J_{x,y}(u_x\cdot u^*_y+u_x^* \cdot u_y)$. Notice that for every configuration $\sigma \in \Omega$ one has
     \[
     (u_x\cdot u_y^*+u_y\cdot u_x^*)(\sigma) = 2 \cos(\theta_x-\theta_y).
     \]
     where $\theta_x$ is such that $\sigma_x = e^{2\pi i \theta_x}$.
		\item[(ii)] The \textbf{Potts models} corresponds to the choice $V_{x,y}(u_x,u_y) = J_{x,y} \chi_{\{u_xu^*_y+u_x^* u_y = 1\}}$,
		where $\chi$ is the characteristic function. It is easy to see that this characteristic function is an element of the algebra since
  \[
  \lim_{n\rightarrow \infty}\Bigg(\frac{u_x \cdot u_y^*-u_y\cdot u_x^*}{2}\Bigg)^n = \chi_{\{u_xu^*_y+u_x^* u_y = 1\}},
  \]
 \end{itemize}
  in the norm. The terms $f_{1,x}(u_x)$ and $f_{2,x}(u_x)$ represent, respectively, a family of transverse and longitudinal magnetic fields. 
\end{example}

 For fermions, an important class of interactions is collected in the following example (see \cite{Datta2, BK, Pedra} for more details).

\begin{example}[\textbf{Fermi-Hubbard-type interactions}]
	First, we need two copies of fermionic creation and annihilation operators depending on the spin of the particle $s=\{\uparrow,\downarrow\}$ satisfying the relations $$\{a_{x,s},a_{y,s'}\}=0 \quad \text{ and } \quad \{a_{x,s}^*,a_{y,s'}\}=\delta_{x,y}\delta_{s,s'}.$$ Let
    \[
    h_{x,y}^s = a_{x,s}^*\cdot a_{y,s}
    \]
    where $T$ is a self-adjoint matrix called the \textbf{hopping matrix}.
	\[
	\phi_X =\begin{cases}
				T_{x,y}(h_{x,y}^\uparrow+h_{x,y}^\downarrow) + T_{y,x}(h_{y,x}^\uparrow + h_{y,x}^\downarrow)
		  & X=\{x,y\} \\
		U_x(n_{x,\uparrow}-1/2)(n_{x,\downarrow}-1/2) & X=\{x\} \\
		0  & o.w.
	\end{cases}
	\]
	where $T$ is a self-adjoint matrix called the \textbf{hopping matrix} and $U_x$ is a real number. 
\end{example}

\begin{example}
The following model is known as the spinless Kitaev p-wave wire model, see \cite{Kitaev2001, Lapa2023} for details.
\[
\phi_X = \begin{cases}
    T_{x,y}h_{x,y} + T_{y,x}h_{y,x} + i\lambda_{x,y} a_x \cdot a_y - i\lambda_{x,y} a_x^* \cdot a_y^*,& X = \{x,y\} \\
    0 & o.w,
\end{cases}
\]
where $\lambda_{x,y}$ are real numbers.
\end{example}

These fermionic models have the following representation as spin models, through the Jordan-Wigner transformation introduced earlier. The hopping terms $h_{x,y}$ can be written as
\[
h_{x,y} (\sigma,g) = e^{i\pi w(y,x) \sigma_x}\sigma_x^+ \cdot \sigma_y^-(\sigma,g).
\]
A direct computation shows that the term $\sigma_x^+ \cdot \sigma_y^-$ is
\begin{equation*}
    \begin{split}
        4\sigma_x^+ \cdot \sigma_y^- &= (\mathbbm{1}-\sigma_x^{(3)})\cdot (\mathbbm{1}+\sigma_y^{(3)})\cdot \sigma_x^{(1)}\cdot \sigma_y^{(1)}
    \end{split}
\end{equation*}
Thus, using the self-adjointness of the matrix $T_{x,y}$ and supposing that $x<y$ in the lexicographic order, the other case being analogous, we get 
\begin{align*}
4(T_{x,y}h_{x,y}+T_{y,x}h_{y,x}) &=\Bigg(T_{x,y}(\mathbbm{1}-\sigma_{x}^{(3)})\cdot (\mathbbm{1}+\sigma_y^{(3)})+T_{y,x}(\mathbbm{1}-\sigma_y^{(3)})\cdot (\mathbbm{1}+\sigma_x^{(3)})\Bigg)\cdot \sigma_x^{(1)}\cdot \sigma_y^{(1)} \\
&=2\mathrm{Re}(T_{x,y})\Bigg(\mathbbm{1} - \sigma_x^{(3)}\cdot\sigma_y^{(3)}\Bigg)\cdot \sigma_x^{(1)}\cdot \sigma_y^{(1)}  - 2i \mathrm{Im}(T_{xy})(\sigma_x^{(3)}-\sigma_y^{(3)})\cdot \sigma_x^{(1)}\cdot \sigma_y^{(1)} \\
&= 2 \mathrm{Re}(T_{x,y}) (\sigma_x^{(1)}\sigma_y^{(1)}+\sigma_x^{(2)}\sigma_y^{(2)}) + 2i\mathrm{Im}(T_{x,y})(\sigma_x^{(2)}\cdot\sigma_y^{(1)}-\sigma_y^{(2)}\cdot\sigma_x^{(1)}).
\end{align*}
The novelty is that, since the matrix $T_{x,y}$ is only self-adjoint, the Jordan-Wigner map transforms the hopping term of the Hamiltonian into a $XY$-model with coupling constant equal to $\mathrm{Re}(T_{x,y})$. The novelty is a new term depending on $\mathrm{Im}(T_{x,y})$ that can be identified with a \emph{Dzyaloshinskii-Moriya} interaction  \cite{RicardodeSousa1995}. For the pairing terms $a_x\cdot a_y$ similar calculations apply and we get
\begin{equation*}
\begin{split}
a_x \cdot a_y - a_x^*\cdot a_y^*&=\Bigg((\mathbbm{1}+\sigma_{x}^{(3)})\cdot (\mathbbm{1}+\sigma_y^{(3)})-(\mathbbm{1}-\sigma_y^{(3)})\cdot (\mathbbm{1}-\sigma_x^{(3)})\Bigg)\cdot \sigma_x^{(1)}\cdot \sigma_y^{(1)} \\
&=2(\sigma_x^{(3)}+\sigma_y^{(3)})\cdot \sigma_x^{(1)}\cdot \sigma_y^{(1)} = \sigma_x^{(2)}\cdot\sigma_y^{(1)}+\sigma_y^{(2)}\cdot\sigma_x^{(1)}.     
\end{split}    
\end{equation*}

Following Bratteli-Robinson \cite{Bra2}, we define the \emph{surface energy} term corresponding to the interaction between the region $\Lambda$ and the exterior region $\Lambda^c$
\[
W_\Lambda(\phi) \coloneqq \sum_{\substack{X \cap \Lambda \neq \emptyset \\ X \cap \Lambda^c \neq \emptyset}} \phi_X.
\]
Different from the empty boundary condition Hamiltonian, the surface energy term is not well defined for all interactions since in its definition we need to sum potentially infinitely many terms, thus bringing convergence issues. But if we assume that $\phi$ is an interaction with finite $\|\cdot\|_\lambda$-norm for some $\lambda>0$, then we can show that $W_\Lambda(\phi)$ is a well-defined element of the algebra. Indeed, in Banach spaces absolute convergence implies convergence of the series, hence
\[
\sum_{\substack{X\cap \Lambda \neq \emptyset \\ X\cap \Lambda^c \neq \emptyset}}\|\phi_X\| =\sum_{n\geq 0}\sum_{\substack{X\cap \Lambda \neq \emptyset \\ X\cap \Lambda^c \neq \emptyset \\ |X|=n+1}}\|\phi_X\|. 
\]
But since $\|\phi\|_\lambda <\infty$, we have
\[
\sum_{\substack{X\cap \Lambda \neq \emptyset \\ X\cap \Lambda^c \neq \emptyset \\ |X|=n+1}}\|\phi_X\| \leq \sum_{x \in \Lambda}\sum_{\substack{X \ni x , X\cap \Lambda \neq \emptyset \\ X\cap \Lambda^c \neq \emptyset \\ |X|=n+1}}\|\phi_X\| \leq |\Lambda|\sup_{x\in \Z^d}\sum_{\substack{X\ni x \\ |X|=n+1}}\|\phi_X\| \leq |\Lambda|e^{-\lambda n}\|\phi\|_\lambda.
\]
Plugging again in what we had before we get 
\[
\sum_{\substack{X\cap \Lambda \neq \emptyset \\ X\cap \Lambda^c \neq \emptyset}}\|\phi_X\| \leq |\Lambda| \frac{\|\phi\|_\lambda}{1-e^{-\lambda}},
\]
yielding us that $W_\Lambda(\phi)$ is a well defined element of the C*-algebra $\mathfrak{A}_q$ for every $\Lambda \in \mathcal{F}(\Z^d)$.
\section{The KMS Condition and the Gibbs-Araki-Ion Condition}

	Motivated by considerations in condensed matter and quantum field theory Kubo \cite{Kubo} and Martin and Schwinger \cite{MS} noticed an important relation that the thermodynamic Green functions (see for example Chapter 11 of \cite{Bech}) satisfies with respect to the imaginary time. This periodicity property depends on the temperature. In a seminal work, Haag, Hugenholtz, and Winnink \cite{HHW} studied this relation in a C*-algebraic language and called it for the first time the \emph{Kubo-Martin-Schwinger boundary condition}, or KMS condition for short. Since the Hamiltonians for infinite systems are not defined, the Gibbs state cannot be defined as usual for these systems. However, they showed that, under some assumptions on the convergence of the Heisenberg dynamics and the states with respect to the thermodynamic limit procedure, the KMS condition can still have a meaningful sense. Let $\mu_{\beta,\phi,\Lambda}$ be the finite volume Gibbs state
    \be\label{freegibbsstate}
    \mu_{\beta,\phi,\Lambda}(f) = \frac{1}{Z_{\beta,\phi,\Lambda}}\sum_{\sigma_\Lambda \in \Omega_\Lambda} f \cdot e^{-\beta H_\Lambda(\phi)}(\sigma_\Lambda,1),
    \ee
    where $Z_{\beta,\phi,\Lambda} = \sum_{\sigma_\Lambda \in \Omega_\Lambda}e^{-\beta H_\Lambda(\phi)}(\sigma_\Lambda,1)$ is the partition function. The sum over the configuration space is the trace on the groupoid C*-algebra, since for any $f_1,f_2 \in C_c(\mathcal{G}_\Lambda)$ it holds,
    \begin{equation*}
    \begin{split}
    \sum_{\sigma_\Lambda \in \Omega_\Lambda} f_1 \cdot f_2 (\sigma_\Lambda,1) &= \sum_{\sigma_\Lambda \in \Omega_\Lambda}\sum_{(\omega_\Lambda,g_\Lambda)\in \mathcal{G}_\Lambda^{\sigma_\Lambda}}f_1(\omega_\Lambda,g_\Lambda)f_2(\sigma_\Lambda,g_\Lambda^{-1}) \\
     &= \sum_{\sigma_\Lambda \in \Omega_\Lambda}\sum_{(\omega_\Lambda,g_\Lambda)\in \mathcal{G}_\Lambda^{\sigma_\Lambda}}f_1(\omega_\Lambda,g_\Lambda)f_2((\omega_\Lambda,g_\Lambda)^{-1}) \\
     &=\sum_{(\omega_\Lambda,g_\Lambda)\in \mathcal{G}_\Lambda} f_1(\omega_\Lambda,g_\Lambda)f_2((\omega_\Lambda,g_\Lambda)^{-1}) \\
     &=\sum_{(\omega_\Lambda,g_\Lambda)\in \mathcal{G}_\Lambda} f_1((\omega_\Lambda,g_\Lambda)^{-1})f_2(\omega_\Lambda,g_\Lambda) \\
     & =  \sum_{\sigma_\Lambda \in \Omega_\Lambda}\sum_{(\omega_\Lambda,g_\Lambda)\in \mathcal{G}_\Lambda^{\sigma_\Lambda}}f_2(\omega_\Lambda,g_\Lambda)f_1(\sigma_\Lambda,g_\Lambda^{-1}) = \sum_{\sigma_\Lambda \in \Omega_\Lambda}f_2\cdot f_1(\sigma_\Lambda,1).
    \end{split}
    \end{equation*}
    
    For each box $\Lambda \subset \Z^d$, the Hamiltonian operators $H_\Lambda(\phi)$ can be used to define a dynamics on $C^*(\mathcal{G}_\Lambda)$,
	\[
	\tau_t^\Lambda(f) = e^{itH_\Lambda (\phi)}\cdot f \cdot e^{-itH_\Lambda(\phi)},
	\]
	for every $f \in C^*(\mathcal{G}_\Lambda)$, and $t\in \mathbb{R}$. Since, in this case, the function $t\mapsto e^{-it H_\Lambda(\phi)}$ makes sense for every complex argument, using the cyclicity of the trace, one can observe the following property for the finite volume Gibbs states
	\begin{align*}
		\mu_{\beta,\Lambda}(f\cdot \tau_{i\beta}^\Lambda(g)) = \frac{\text{tr}( f \cdot e^{-\beta H_\Lambda(\phi)}\cdot g \cdot e^{\beta H_\Lambda(\phi)}\cdot e^{-\beta H_\Lambda(\phi)} )}{\text{tr}(e^{-\beta H_\Lambda(\phi)})} = \frac{\text{tr}(g\cdot f\cdot e^{-\beta H_\Lambda(\phi)})}{\text{tr}(e^{-\beta H_\Lambda(\phi)})}= \mu_{\beta,\Lambda}(g\cdot f)
	\end{align*} 
In the thermodynamic limit, the Hamiltonian function $H_{\Z^d}(\phi)$ can be defined only formally. But not everything is lost. Indeed, for each $f\in C^*(\mathcal{G}_\Lambda)$, we can see that the dynamics $\tau_t^\Lambda$ satisfy the equation
\be\label{cauchy}
\frac{d}{dt}\tau_t^\Lambda(f) = i[H_\Lambda(\phi),f], \quad \text{and} \quad \tau_0(f) = f.
\ee
where $[f,g]\coloneqq f\cdot g - g\cdot f$. Formally, the commutator of the infinite volume Hamiltonian with any local operator $f$ is given by
\[
i[H_{\Z^d}(\phi),f] = i\sum_{X\cap \Lambda \neq \emptyset}[\phi_X,f].
\]
The right-hand side of the equation above is a linear operator in $C^*(\mathcal{G})$ that can potentially make sense for some classes of interactions. Therefore, we may try to solve the Cauchy problem stated in \eqref{cauchy} for it. We will show that this procedure can be carried out for interactions with $\|\phi\|_\lambda<\infty$ for some $\lambda>0$, following \cite{Bra2}. 

\begin{definition}
    For a C*-algebra $\mathfrak{A}$, a map $\mathcal{D}:\mathrm{dom}(\D)\rightarrow \mathfrak{A}$ is called a \textbf{*-derivation} if
    \begin{itemize}
        \item [(i)] $\D$ is linear;
        \item [(ii)] $\D(f\cdot g) = \D(f)\cdot g + f \cdot \D(g)$;
        \item [(iii)] $\D(f^*) = \D(f)^*$.
    \end{itemize}
\end{definition}
The set $\mathrm{dom}(\D)$ is called the \emph{domain} of the derivation $\D$, and it is only a proper subset of the C*-algebra $\mathfrak{A}$ in general.
    \begin{lemma}\label{l3}
Let $\phi$ be an interaction such that $\|\phi\|_\lambda <\infty$ in the spin algebra $C^*(\mathcal{G})$ for some $\lambda>0$. Let $\D^\phi$ be the *-derivation defined by
\[
\D^\phi(f) = i \sum_{X \cap \Lambda \neq \emptyset}[\phi_X, f], \;\; \text{ for  } f \in C^*(\mathcal{G}_\Lambda).
\]
Let $\D^\phi(f)^{(n)} = \D^\phi(\D^\phi(f))^{(n-1)}$. Then it holds that,
\[
\|\D^\phi(f)^{(n)}\| \leq \left(\frac{2\|\phi\|_\lambda}{\lambda}\right)^n  \|f\|n!e^{\lambda|\Lambda|}
\]
\end{lemma}

\begin{proof}
We must show that this $\D^\phi$ is a well-defined *-derivation in $C^*(\mathcal{G})$. Let $f \in C^*(\mathcal{G}_\Lambda)$. We will show that $\D^\phi(f) \in C^*(\mathcal{G})$. The definition of the commutant implies
$
\|[\phi_X, f]\|\leq 2\|\phi_X\|\|f\|.
$
Also, we have that
$$
\underset{X\cap \Lambda\neq \emptyset}{\sum}\|\phi_X\| = \sum_{n\geq 0}\sum_{\stackrel{X\cap\Lambda \neq \emptyset}{|X|=n+1}}\|\phi_X\| \leq \sum_{n\geq 0}\sum_{x \in \Lambda}\sum_{\stackrel{x \in X}{|X|=n+1}}\|\phi_X\|. 
$$
Then,
\begin{equation}\label{1}
\underset{X\cap \Lambda\neq \emptyset}{\sum}\|[\phi_X,f]\| \leq 2|\Lambda|\|f\| \|\phi\|_\lambda 
\end{equation}
Our hypothesis on the interaction implies the series defined on the left-hand side of \eqref{1} is absolutely convergent, so it is convergent in $C^*(\mathcal{G})$. Hence $\D^\phi(f)$ is a well-defined element of the C*-algebra. It is obviously linear and preserves the involution, so all it remains to show is the Leibniz property. Consider $f, g \in C^*(\mathcal{G}_\Lambda)$. 
\[
\D^\phi(f\cdot g) = i \underset{X \cap \Lambda \neq \emptyset}{\sum}[\phi_X, f\cdot g] = i \underset{X \cap \Lambda \neq \emptyset}{\sum}f\cdot [\phi_X, g]+[\phi_X,f]\cdot g = f\cdot \D^\phi(g)+\D^\phi(f)\cdot g.
\]
The operator $\D^\phi(f)^{(n)}$ can be written as 
$$
\D^\phi(f)^{(n)}= i^n\sum_{X_1\cap\Lambda \neq \emptyset}\dots\sum_{X_{n-1}\cap{S_{n-1}} \neq \emptyset}\sum_{X_n\cap S_n \neq \emptyset}[\phi_{X_n},[\phi_{X_{n-1}},\dot[\phi_{X_1},f]\dot]].
$$
Where $S_j = (\cup_{k=1}^{j-1} X_k)\cup \Lambda$. The following bound holds,
\be\label{lema_derivation_eq1}
\begin{split}
\|\D^\phi(f)^{(n)}\| &\leq 2^n \|f\|\sum_{X_1\cap\Lambda \neq \emptyset} \dots \sum_{X_{n-1}\cap S_{n-1} \neq \emptyset} \sum_{X_n\cap S_n \neq \emptyset}\prod_{j=1}^n \|\phi_{X_j}\| \\
&\leq 2^n\|f\|\Bigg[\sum_{X_1\cap\Lambda\neq \emptyset} \|\phi\|_{X_1}\Bigg[\dots \sum_{X_{n-1}\cap S_{n-1} \neq \emptyset}\|\phi_{X_{n-1}}\|\Bigg[ \sum_{X_n\cap S_n \neq \emptyset} \|\phi_{X_n}\|\Bigg]\dots\Bigg]
\end{split}
\ee
For each $j$, we can bound the sum over all $X_j\cap S_j\neq \emptyset$ for $S_j$ fixed as
\be\label{lema_derivation_eq2}
\sum_{X_j\cap S_j \neq \emptyset} \|\phi_{X_j}\| =\sum_{m_j\geq 0}\sum_{\substack{X_j\cap S_j \neq \emptyset \\ |X_j|=m_j+1}} \|\phi_{X_j}\| \leq \sum_{m_j\geq 0}\sum_{ x \in S_j} \sum_{ \substack{x \in X_j\\|X_j|=m_j+1}} \|\phi_{X_j}\|.
\ee
Let us call for a moment
$$|\phi|_{m_j} \coloneqq \underset{x \in \mathbb{Z}^d}{\sup}\sum_{\stackrel{x \in X}{|X|=m_j+1}} \|\phi_X\|.
$$
When the sizes of $X_j$ are fixed to be $m_j$ respectively, we have
\[
|S_j|\leq m_1+m_2+\dots+m_{j-1}+|\Lambda|.
\]
Iterating the bound \eqref{lema_derivation_eq2} and using the upper bound in the number of points of $S_j$ above the right-hand side of Inequality \eqref{lema_derivation_eq1} becomes
\begin{equation*}
\begin{split}
    \frac{\|\D^\phi(f)^{(n)}\|}{2^n\|f\|}&\leq \Bigg(\sum_{X_1\cap\Lambda\neq \emptyset} \|\phi\|_{X_1}\Bigg(\dots \sum_{X_{n-1}\cap S_{n-1} \neq \emptyset}\hspace{-0.5cm}\|\phi_{X_{n-1}}\|\Bigg( \sum_{X_n\cap S_n \neq \emptyset} \hspace{-0.3cm} \|\phi_{X_n}\|\Bigg)\dots\Bigg) \\
    &=\sum_{\substack{m_j\geq 0\\ j=1,\dots,n}}\Bigg(\sum_{\substack{X_1\cap\Lambda\neq \emptyset \\ |X_1|=m_1}} \|\phi\|_{X_1}\Bigg(\dots \sum_{\substack{X_{n-1}\cap S_{n-1} \neq \emptyset\\ |X_{n-1}|=m_{n-1}}}\hspace{-0.5cm}\|\phi_{X_{n-1}}\|\Bigg( \sum_{\substack{X_n\cap S_n \neq \emptyset \\ |X_n|=m_n}}  \hspace{-0.3cm}\|\phi_{X_n}\|\Bigg)\dots\Bigg) \\
    &\leq\sum_{\substack{m_j\geq 0\\ j=1,\dots,n}}\Bigg(\sum_{\substack{X_1\cap\Lambda\neq \emptyset \\ |X_1|=m_1}} \|\phi\|_{X_1}\Bigg(\dots \sum_{\substack{X_{n-2}\cap S_{n-2} \neq \emptyset\\ |X_{n-2}|=m_{n-1}}}\hspace{-0.5cm}\|\phi_{X_{n-2}}\|\Bigg( \sum_{\substack{X_{n-1}\cap S_{n-1} \neq \emptyset \\ |X_{n-1}|=m_{n-1}}} \hspace{-0.5cm} \|\phi_{X_{n-1}}\||\phi|_{m_n}|S_n|\Bigg)\dots\Bigg) \\
    &\vdots \\
    &\leq \sum_{\substack{m_j\geq 0\\ j=1,\dots,n}} \prod_{j=1}^n(m_1+\dots+m_{j-1}+|\Lambda|)|\phi|_{m_j}.
\end{split}
\end{equation*}
Since each $m_j$ is positive we have 
$$
\prod_{j=1}^n (m_1+...+m_{j-1} + |\Lambda|) \leq (m_1+...+m_n + |\Lambda|)^n \leq \frac{n!e^{\lambda |\Lambda|}}{\lambda^n}\prod_{j=1}^n e^{\lambda m_j}.
$$
Thus,
\begin{equation*}
    \begin{split}
        \|\D^\phi(f)^{(n)}\| \leq 2^n\|f\|\frac{n!e^{\lambda|\Lambda|}}{\lambda^n}\prod_{j=1}^n\Bigg(\sum_{m_j\geq 0}e^{\lambda m_j}|\phi|_{m_j}\Bigg),
    \end{split}
\end{equation*}
yielding us the desired result.
\end{proof}

\pagebreak 

\begin{theorem}[\textbf{Existence of the dynamics}]\label{dynamicsexistence}
Under the hypothesis of the previous lemma, we get that the *-derivation $\D^\phi$ has a dense domain, and its closure defines a strongly continuous one-parameter group of $*$-automorphisms $(\tau_t)_{t \in \mathbb{R}}$ in $C^*(\mathcal{G})$. Besides that, we have the convergence
$$
\underset{\Lambda \nearrow \mathbb{Z}^d}{\lim}\| \tau_t(f) - \tau_t^\Lambda (f)\| = 0
$$
is uniform in compacts.
\end{theorem}
\begin{proof}
We will prove that both $\pm\D^\phi$ are dissipative, i.e, that
\be\label{existence_dynamics_eq1}
\| (\lambda\mathbbm{1}\pm \D^\phi)(f)\| \geq \lambda \|f\|, 
\ee
for all $\lambda >0, f\in C^*(\mathcal{G}_\Lambda)$. It is a consequence of the Hahn-Banach theorem that for all $f \in C^*(\mathcal{G})$ there exists a state $\mu$ in $C^*(\mathcal{G})$ such that $\mu(f^*\cdot f) = \|f\|^2$. If we can show that $\mathrm{Re}(\mu(f^*\D^\phi(f))) = 0$ then Inequality \eqref{existence_dynamics_eq1} will follow because
\begin{equation*}
\begin{split}
 \|\lambda f\|^2 &=  \mu(\lambda^ 2 f^*\cdot f) =  \mathrm{Re}(\mu(\lambda^2 f^* \cdot f)) \\
 &= \mathrm{Re}(\mu(\lambda f^*\cdot(\lambda f-\D^\phi(f))))\leq \|\lambda f\|\|(\lambda \mathbbm{1}-\D^\phi)(f)\|.
\end{split}
\end{equation*}
Since $\mu$ is a state, it holds $$2\mathrm{Re}(\mu(f^*\cdot\D^\phi(f))) = \mu(f^*\cdot\D^\phi(f))+\overline{\mu(f^*\cdot \D^\phi(f))}=\mu(\D^\phi(f^*\cdot f)).
$$
Define $f'= (\|f^*\cdot f\|-f^*\cdot f)^{1/2}$. Note that $f'$ is a positive operator and $(f')^2 = \|f^*\cdot f\|-f^*\cdot f$. Lemma \ref{l3} implies that $f'$ is in the domain of $\D^\phi$ since $f'$ is also a local operator. We find 
$$
-\mu(\D^\phi(f)) = \mu(\D^\phi((f')^2)) = \mu(f'\cdot \D^\phi(f'))+\mu(\D^\phi(f')\cdot f').
$$
The Cauchy-Schwarz inequality implies
$$
|\mu(f'\cdot \D^\phi(f'))|^2 \leq \mu((f')^2)\mu(\D^\phi(f')^2) = (\|f^*\cdot f\|-\mu(f^*\cdot f))\mu(\D^\phi(f')^2) = 0,
$$
Showing that $\D^\phi$ is dissipative.  In order to apply the Lummer-Phillips Theorem (see Theorem \ref{t3}), we must find some positive $\lambda$ such that $(\lambda - \overline{\D^\phi})(\overline{\mathrm{dom}(\D^\phi)})$, is dense in $C^*(\mathcal{G})$, where $\overline{\D^\phi}$ is the closure of $\D^\phi$. By Lemma \ref{l3} the series
\be\label{analytic_element_series}
\T_t(f)=\sum_{n\geq0} \frac{t^n\D^\phi(f)^{(n)}}{n!},
\ee
has a radius of convergence $r = \frac{\lambda}{2\|\phi\|_\lambda}$. Each $\T_t(f)$ is in the domain of the closure of $\D^\phi$ for all local observables $f$. Indeed,
\[
\D^\phi\left(\sum_{n=0}^m \frac{t^n \D^\phi(f)^{(n)}}{n!}\right) = \sum_{n=0}^m \frac{t^n \D^\phi(\D^\phi(f))^{(n)}}{n!},
\]
and taking the limit $m\rightarrow \infty$ on the equation above we conclude that $\overline{\D^\phi}(\T_t(f)) = \T_t(\D^\phi(f))$. Define the function $\tau_t(f) = \T_{r/2}^k(\T_s(f))$ for all $t\in \mathbb{R}$, where $k$ is the largest integer such that $t = kr/2+s$. For some $\lambda>0$, let the resolvent function be
\[
R(\lambda)f = \int_0^\infty e^{-\lambda t}\tau_t(f) dt.
\]
Note that $(\lambda - \overline{\D^\phi})R(\lambda)(f) = f$ for all local operators. Since the local operators are dense in the C*-algebra, we conclude that the domain $\text{dom}(\overline{\D^\phi})$ is dense and we can apply Theorem \ref{t3}. The same reasoning can be applied to $-\D^\phi$ and we conclude the derivation can be exponentiated to a strongly continuous one-parameter group $\tau_t$. Finally, notice that the generator of $\tau_t^\Lambda$ is the derivation given by  $\D^\phi_\Lambda= i[H_\Lambda(\phi),\cdot]$. It is easy to verify that derivations approximate $\D^\phi$ for each $f$ a local observable
\[
\lim_{\Lambda \nearrow \mathbb{Z}^d} \|\D^\phi_\Lambda(f) - \D^\phi(f)\| = 0
\] 
Then the Trotter-Kato theorem (see \ref{TrotterKato} for the statement and Theorem 4.8 in \cite{En} for the proof) implies that $\tau_t^\Lambda$ converges strongly to $\tau_t$.
\end{proof}
\begin{remark}
The theorem has a converse, in a certain sense. One can ask what kind of derivations we have in the spin algebra. In \cite{Bra2} it is proved that all the derivations with $\underset{\Lambda \in \mathcal{F}(\mathbb{Z}^d)}{\bigcup}C^*(\mathcal{G}_\Lambda)$ as the dense domain comes from interaction, i.e., the derivation has the form of $\D^\phi$ for some interaction $\phi$ satisfying $\sum_{x \in X} \|\phi_X\| < \infty$ for all $x \in \mathbb{Z}^d$.
\end{remark}

In the previous discussion, the infinite volume dynamics $\tau_t$ was defined only for real $t$, and different from the finite-dimensional case discussed at the beginning of this section it is not easy to see how to extend it to a function on $\mathbb{C}$. This motivates us to introduce the following definition
	\begin{definition}
		An element $f \in C^*(\mathcal{G})$ is said to be analytic if there is a strip in the complex plane $I_\lambda = \{z\in \mathbb{C}: |\mathrm{Im}(z)|<\lambda\}$ and $F:I_\lambda \rightarrow C^*(\mathcal{G})$ such that
		\begin{itemize}
			\item[(i)] $F(t) = \tau_t(f)$, for any $t\in \mathbb{R}$.
			\item[(ii)]  the function $F(z)$ is analytic.
		\end{itemize}
  When $\lambda = \infty$ we call the elements entire analytic.
	\end{definition}
	
	Theorem 6.2.4 in \cite{Bra2} (whose content is stated and proved in Lemma \ref{l3}) shows that any local function is analytic for the dynamics $\tau_t$. We are ready to define the KMS condition. 
	\begin{definition}
		Let $\mathfrak{A}$ be a C*-algebra, a strongly continuous one-parameter group $\tau_t$, and $\mu$ a state. We say that $\mu$ satisfies the KMS condition if, and only if, for all entire analytic elements $f, g \in \mathfrak{A}$ the following holds
		\[
		\mu(f\cdot \tau_{i\beta}(g)) = \mu(g\cdot f),
		\] 
    and we denote the set of all $(\tau,\beta)$-KMS states by $K_\beta(\phi)$. If the states are $\Z^d$-invariant, then we denote it by $K_\beta^\theta(\phi)$. 
	\end{definition}

         Limits preserve the KMS condition. This was one of the motivations for Haag, Hugenholtz, and Winnink, in the seminal paper \cite{HHW}, to propose the KMS condition as a characterization for equilibrium since, as we already stated, the infinite volume Hamiltonian is often only formal. But the limit state and dynamics can be rigorously constructed, as shown in Theorem \ref{dynamicsexistence}.
        
        \begin{theorem}\label{limitkms}
            Let $\tau_{n,t}$ be a sequence of one-parameter groups of *-automorphisms converging in the strong operator topology to another one-parameter group of *-automorphisms $\tau_t$. Let $\beta_n$ a sequence of positive real numbers and $\mu_n$ a sequence of $(\tau_n,\beta_n)$-KMS states. Suppose that, $\beta_n$ and $\mu_n$ converge, respectively, to $\beta,$ and $\mu$. Then $\mu$ is a $(\tau,\beta)$-KMS state.
        \end{theorem}
        \begin{proof}
            The proof can be found in Theorem 5.3.25 of \cite{Bra2}.
        \end{proof}
	
 The KMS condition is known to be equivalent to different characterizations of equilibrium such as the variational principle \cite{Ara1,Ara2}. A thorough analysis of these results can be found, for instance, in \cite{Bra2, Is, Simon}). Araki and Ion \cite{Ara2} proved their result by defining a new condition for equilibrium, which we call here the \emph{Gibbs-Araki-Ion condition}.  To define this condition we must introduce first the notion of \emph{perturbation of a state}. Let $P \in C^*(\mathcal{G})$ be a self-adjoint element and define the perturbed derivation $\D_P^\phi: C^*(\mathcal{G}) \rightarrow C^*(\mathcal{G})$ by
	\[
	\D_P^\phi(f) = \D^\phi(f) +i [P,f].
	\]
        That the perturbed derivation above generates another strongly continuous one-parameter group of *-automorphisms is the subject of the next proposition.
	\begin{proposition}
	  Let $\tau_t$ a strongly continuous one-parameter group of *-automorphisms in $C^*(\mathcal{G})$, $\D^\phi$ its generator. For every self-adjoint $P=P^*$ the perturbed derivation $\D_P^\phi$ generates a strongly continuous one-parameter group of *-automorphisms $\tau^P_t$ and it relates to $\tau_t$ by the expression 
	\[
	\tau_t^P(f) = \tau_t(f) + \sum_{n \geq 1} i^n \int_{S_{n,t}} dt^n [\tau_{t_n}(P),[\dots[\tau_{t_1}(P), \tau_t(f)]]],
	\]
	for all $f \in C^*(\mathcal{G})$, where $S_{n,a,b} \coloneqq \{(t_1,\dots,t_n): a<t_n <t_{n-1}<\dots < t_n<b\}$. When $a=0$, we write $S_{n,b}$ instead. Furthermore, we can define a unitary operator
	\[
	E_{P,t} = \sum_{n \geq0} i^n \int_{S_{n,t}} dt^n \tau_{t_n}(P)\dots\tau_{t_1}(P).
	\]
	know as \textbf{Araki-Dyson series}. They satisfy the following properties
	\be\label{cocycle_properties}
	E_{P,t} = E_{P,t} \tau_t(E_{P,s}) \quad \text{ and } \quad (E_{P,t})^{-1} = E_{P,t}^*,
	\ee
	 the perturbed and the original dynamics are related by
	\[
	\tau^P_t(f) = E_{P,t} \tau_t (f) E_{P,t}^*.
	\] 
	\end{proposition}
    \begin{proof}
      See Proposition 5.4.1 of \cite{Bra2} for a more general result. 
    \end{proof}
	The first relation in \eqref{cocycle_properties} is known as the cocycle relation, and the second is just the definition of a unitary operator. Often we need to perturb the state $\mu$ from which $\tau_t$ is a KMS state. We can define the analytically continued Araki-Dyson series by the expression
 \be\label{araki_dyson}
 E_{P,it} = \sum_{n\geq 0}(-1)^n\int_{S_{n,t}}dt^n \tau_{it_n}(P)...\tau_{it_1}(P),
 \ee
 since the series above converges uniformly. All the relations in Equation \eqref{cocycle_properties} hold for the analytic continuation of the Araki-Dyson series. We are ready to introduce the perturbed state.
	\begin{definition}
		Let $\mathfrak{A}_q$ be the spin algebra, $\phi$ a short-range interaction, and $\tau$ the strongly continuous group it generates. Let $P \in \mathfrak{A}_q$ be a self-adjoint element. Let $\mu$ be a state, then we define the perturbed state $\mu^P$ by
		\be\label{perturbed_state}
		\mu^P(f) \coloneqq \frac{(\xi_\mu^P,\pi_\mu(f) \xi_\mu^P)}{(\xi_\mu^P,\xi_\mu^P)} = \frac{\mu((E_{P,i\beta/2})^*\cdot f\cdot E_{P,i\beta/2})}{\mu((E_{P,i\beta/2})^*\cdot E_{P,i\beta/2})} 
		\ee
		Where $\xi^P_\mu = \pi_\mu(E_{P,i\beta/2})\xi_\mu$, $\xi_\mu$, and $\pi_\mu$ are, respectively, the cyclic vector and the representation of the GNS representation associated with $\mu$. 
	\end{definition}
	Before we introduce the following definition, we need to introduce an important notion of \emph{product state}. In the case of the spin algebra, one can always decompose it with respect to a subset $\Lambda \subset \Z^d$ as a tensor product $C^*(\mathcal{G})\simeq C^*(\mathcal{G}_\Lambda)\otimes C^*(\mathcal{G}_{\Lambda^c})$. Then if one has two states $\mu_1:C^*(\mathcal{G}_\Lambda)\rightarrow \mathbb{C}$ and $\mu_2: C^*(\mathcal{G}_{\Lambda^c})\rightarrow \mathbb{C}$ one can define a product state, denoted by $\mu_1\otimes \mu_2: C^*(\mathcal{G})\rightarrow \mathbb{C}$ by
 \[
 \mu_1\otimes \mu_2(f_1\otimes f_2) = \mu_1(f_1)\mu_2(f_2).
 \]
 Of course, not all the elements of $C^*(\mathcal{G})$ are of the form $f_1 \otimes f_2$, but their span is dense on the original C*-algebra, so the definition above is sufficient to have a bounded linear functional on the whole algebra.
	\begin{definition}[\textbf{Gibbs-Araki-Ion Condition}]\label{def_GibbsAraki}
		Let $\mu:C^*(\mathcal{G})\rightarrow \mathbb{C}$ be a state. We say that it satisfies the Gibbs-Araki condition for $\beta$ and interaction $\phi$ if, and only if
		\begin{enumerate}
			\item $\mu$ is faithful;
			\item $\mu^P_\beta = \mu_{\beta,\phi,\Lambda} \otimes \bar{\mu}_\beta$, where $\mu^P_\beta$ is the perturbed state defined in Equation \eqref{perturbed_state} for $P = \beta W_\Lambda(\phi)$ and $\bar{\mu}$ is a state in $C^*(\mathcal{G}_{\Lambda^c})$.
		\end{enumerate}  
	\end{definition} 
	For any state $\mu:C^*(\mathcal{G}_\Lambda)\rightarrow \mathbb{C}$ we can define a linear map $\text{Id}\otimes \mu: C^*(\mathcal{G})\rightarrow C^*(\mathcal{G}_\Lambda)$ by
 \[
 \text{Id}\otimes \mu(f_1\otimes f_2) = \mu(f_1)f_2,
 \]
 where $f_1 \in C^*(\mathcal{G}_\Lambda)$ and $f_2 \in C^*(\mathcal{G}_{\Lambda^c})$. This kind of map will appear later on when we discuss the notion of boundary conditions for KMS states. Note that the Gibbs-Araki-Ion condition is equivalent to 
	\be\label{gai_equivalence}
	\mu^P_\beta = \bar{\mu}_\beta(\text{Id} \otimes \mu_{\beta,\phi,\Lambda}).
	\ee
 for every $\Lambda \Subset \Z^d$.
	\begin{theorem}
		Let $\mathfrak{A}$ be the spin algebra and $\phi$ an interaction with $\|\phi\|_\lambda<\infty$ for some $\lambda>0$. Then, the following assertions are equivalent for a state $\mu$
		\begin{enumerate}
			\item $\mu$ satisfies the Gibbs-Araki-Ion condition for the interaction $\phi$ at inverse temperature $\beta$;
			\item $\mu$ is a $(\tau,\beta)$-KMS state.
		\end{enumerate}
	\end{theorem}
    \begin{proof}
    We follow the exposition of \cite{Bra2}. Assume first that $(i)$ holds. By rescaling $\tau_t \rightarrow \tau_{\beta t}$, we can consider $\beta= 1$. Because $\mu$ is faithful by hypothesis, we can construct the modular automorphism group for $\mu$, and call it $\T_t$. Now, if $\xi_\mu$ is the cyclic and separating vector associated with the GNS representation of $\mu$, then it is easy to see that $\xi_\mu^P$  also cyclic and separating. 

 We know that $\mu$ satisfies the KMS for only one dynamics (Theorem III.3.4 of \cite{Is}) and when $\mu$ is a KMS state for $\tau_t$, then $\omega^P$ is a KMS state for $\tau^P$(Theorem 5.4.4 \cite{Bra2}). These assertions together imply that the modular automorphism group of $\mu^P$ is exactly the perturbation of the modular group $\T$.

Since $\mu$ satisfies the Gibbs-Araki-Ion condition for $\phi$ at $1$ we know that when $P=  W_\Lambda$ the perturbed state is the product state $\mu_{1,\phi,\Lambda} \otimes \bar{\mu}$. Because $\mu^P$ is faithful its restrictions are also faithful so we can construct the modular group for both states $\mu_{1,\phi,\Lambda}$ and $\bar{\mu}$. Let us call the modular groups of the restrictions $\tau^\Lambda$ and $\bar{\tau}$. 

It is easy to see that the product state satisfies the KMS condition at $\beta=1$ for the product dynamics and, again invoking the uniqueness of the automorphism group for which the state is KMS, we come to the conclusion that $\T^P_t = \tau^\Lambda_t \otimes \bar{\tau}_t$. Moreover, we know that the generator of $\T$ and $\T^P$, let us call them $\D$ and $\D^P$, are related by
\[
\D(f) = \D^P(f) + i[P,f],
\]
for all $f$ in the domain of $\D$. Consider $f \in C^*(\mathcal{G}_\Lambda)$, for $\Lambda$ the same finite subset of the perturbation. By our previous considerations, we know that $\T^P_t(f) = \tau^\Lambda_t(f)$ for all $t \in \mathbb{R}$. Because $\tau^\Lambda$ is the modular group of the local free Gibbs state we know 
 \[
 \D^P(f) = i[H_\Lambda(\phi),f] \Rightarrow \D(f) = i \sum_{X\cap \Lambda \neq \emptyset}[\phi_X,f]. 
 \]
 Since this is valid for every $\Lambda$ we conclude that $\D = \D^\phi$ and, consequently, that $\tau_t = \T_t$. Assume that $\mu$ is a $(\tau, \beta)$ KMS state. Again, by rescaling we only need to consider the case where $\beta = 1$. All local observables $f$ satisfy
 \[
 \D^\phi_P(f) = \D^\phi(f) - i[W_\Lambda(\phi),f] =  i\sum_{X \subset \Lambda}[\phi_X,f] + i\sum_{X \subset \Lambda^c}[\phi_X,f]
 \]
 Let us denote the second derivation in the right-hand side of the equation $\D_{\Lambda^c}^\phi$. It is easy to see that this derivation can be obtained by perturbing $\D^\phi$ with the operator $Q = -W_\Lambda(\phi) - H_\Lambda(\phi)$, so it generates a strongly continuous one-parameter group that acts trivially in $C^*(\mathcal{G}_\Lambda)$ we have $\D^\phi_{\Lambda^c}(f) = 0$. 
 Let us denote the group generated by $\D_{\Lambda^c}^\phi$ by $\tau^{\Lambda^c}_t$. One finds $\tau^P_t = \tau^\Lambda_t \otimes \tau^{\Lambda^c}_t$. Consider a positive operator $f_3 \in C^*(\mathcal{G}_{\Lambda^c})$ and consider the linear functional in $C^*(\mathcal{G}_\Lambda)$
 \[
 \mu^P_3(f) = \frac{\mu^P(f\cdot f_3)}{\mu^P(f_3)}
 \]
 Clearly, $\mu^P_3$ is positive, continuous, and $\mu_3^P(\mathbbm{1}) = 1$. Since $\mu^P$ is $\tau^P$ KMS and this dynamics is equal the local dynamics $\tau^\Lambda$ restricted to $C^*(\mathcal{G}_\Lambda)$ one finds
 \[
 \mu_3^P(f_1\cdot f_2) = \frac{\mu^P(f_1\cdot f_2 \cdot f_3)}{\mu^P(f_3)} = \frac{\mu^P(f_1\cdot f_3 \cdot f_2)}{\omega^P(f_3)} = \frac{\omega^P(\tau_i^\Lambda(f_2)\cdot f_1 \cdot f_3)}{\omega^P(g)} = \omega_3^P(\tau_i^\Lambda(f_2)\cdot f_1),
 \]
 for any $f_1,f_2 \in C^*(\mathcal{G}_\Lambda)$. But the only KMS state for the local dynamics is the local free Gibbs state and since we can extend this argument to all $f_3\in C^*(\mathcal{G}_{\Lambda^c})$ we arrive at the desired result. 
    \end{proof}

 \section{KMS for classical interactions}
	Define the function $\mathbb{E}:C^*(\mathcal{G}) \rightarrow C(\Omega)$ first on local elements $f \in C_c(\mathcal{G})$, 
	\[
	\mathbb{E}(f)(\sigma)= f(\sigma,1).
	\]
	One can readily see that $\|\mathbb{E}(f)\| \leq \|f\|$, and $\mathbb{E}$ is linear, so there is a bounded extension to all $C^*(\mathcal{G})$. There is also an important property with respect to the product
	\begin{lemma}\label{conditional_expectation}
		Let $f,g \in C_c(\mathcal{G})$. If $f\in C(\Omega)$, then 
		\[
		\mathbb{E}(f\cdot g) = f \cdot \mathbb{E}(g) \text{ and } \mathbb{E}(g \cdot f) = \mathbb{E}(g) \cdot f.
		\]
	\end{lemma}
	\begin{proof}
		We prove just the first property, the second one being analogous. Let $\Lambda \subset \Z^d$ be a finite subset such that $\supp(f) \cup \supp g \subset \Lambda$. By definition, we have
		\[
		\mathbb{E}(f\cdot g)(\sigma,1) = f \cdot g (\sigma,1) = \sum_{(\eta_\Lambda,h_\Lambda) \in \mathcal{G}_\Lambda^{\sigma_\Lambda}}f(\eta_\Lambda,h_\Lambda) g(\sigma,h_\Lambda^{-1}). 
		\]
		Since $f\in C(\Omega)$, $f(\eta_\Lambda,h_\Lambda) \neq 0$ only when $h_\Lambda =1$, therefore the sum above reads
		\[
		\mathbb{E}(f\cdot g)(\sigma) = f(\sigma,1) g(\sigma,1) = f \cdot \mathbb{E}(g) (\sigma).
		\]
	\end{proof}
	The map above is what is called a \emph{conditional expectation} in operator algebras and has important applications (see \cite{Kad}).
	\begin{definition}
		Let $\mu$ be a state. We call it a \textbf{classical state} if and only if for all $a \in \mathfrak{A}_q$ 
		\[
		\mu(a) = \mu(\mathbb{E}(a))
		\]
	\end{definition}
	In other words, classical states can only see the classical part of the observables. The Riesz-Markov theorem, $\omega$ can be identified with a probability measure $\mu$ on the configuration space $\Omega$. In spirit, the next theorem is already proved in Israel \cite{Is}.

\pagebreak
 
	\begin{theorem}\label{t2}
		Let $\phi$ be a classical interaction. Then a state $\mu$ satisfies the Gibbs-Araki-Ion condition if, and only if, the state is classical and satisfies the DLR equations.
	\end{theorem}
	\begin{proof}
		Assume, first, that our state satisfies the Gibbs-Araki-Ion condition. Notice that, since the interaction is classical, we have $\alpha_z(W_\Lambda(\phi)) = W_\Lambda(\phi)$ for all $z\in \mathbb{C}$, hence the operators $E_{P,i\beta}$ simplifies to
		\[
		E_{P,i\beta} = e^{\beta W_\Lambda(\phi)}.
		\]
		Let us show that if $\mu(a)=\mu(\mathbb{E}(a))$. Let $f\in C_c(\mathcal{G})$ be a local operator. Take $\Lambda \subset \Z^d$ a finite set such that $\supp f \subset \Lambda$, then
		\begin{align}\label{classical_eq1}
			\mu_\beta(f) &= \mu_\beta(e^{\frac{\beta}{2} W_\Lambda(\phi)}\cdot e^{-\frac{\beta}{2}W_\Lambda(\phi)}\cdot f \cdot e^{-\frac{\beta}{2}W_\Lambda(\phi)} \cdot e^{\frac{\beta}{2} W_\Lambda(\phi)}) \nonumber \\
			&= \mu^P_\beta(e^{-\frac{\beta}{2}W_\Lambda(\phi)}\cdot f \cdot e^{-\frac{\beta}{2}W_\Lambda(\phi)})\mu_\beta(e^{\beta W_\Lambda(\phi)})
		\end{align}
		due to the Gibbs-Araki-Ion condition, we have
		\[
		\mu^P_\beta(e^{-\frac{\beta}{2}W_\Lambda(\phi)}\cdot f \cdot e^{-\frac{\beta}{2}W_\Lambda(\phi)}) = \bar{\mu}_\beta(\text{Id}\otimes\mu_{\beta,\Lambda}(e^{-\frac{\beta}{2}W_\Lambda(\phi)}\cdot f \cdot e^{-\frac{\beta}{2}W_\Lambda(\phi)})).
		\]
		Calculating the innermost term in a point $(\omega,g) \in \mathcal{G}$, we have
		\begin{align*}
			\text{Id}\otimes\mu_{\beta,\Lambda}(e^{-\frac{\beta}{2}W_\Lambda(\phi)}\cdot f \cdot e^{-\frac{\beta}{2}W_\Lambda(\phi)})(\omega,g) = \frac{1}{Z_{\beta,\Lambda}}\sum_{\sigma_\Lambda \in \Omega_\Lambda} e^{-\frac{\beta}{2}W_\Lambda(\phi)}\cdot f \cdot e^{-\frac{\beta}{2}W_\Lambda(\phi)}(\sigma_\Lambda \omega_{\Lambda^c}, g_{\Lambda^c}),
		\end{align*}
		evaluating the product above we have
		\begin{align*}
			e^{-\frac{\beta}{2}W_\Lambda(\phi)}\cdot f \cdot e^{-\frac{\beta}{2}W_\Lambda(\phi)}(\sigma_\Lambda \omega_{\Lambda^c}, g_{\Lambda^c}) &= \hspace{1.0cm}\sum_{\mathclap{(\eta_{\Lambda_R},h_{\Lambda_R})\in \mathcal{G}_{\Lambda_R}^{\sigma_\Lambda g_{\Lambda^c}\omega_{\Lambda^c}}}}\hspace{0.5cm}e^{-\frac{\beta}{2}W_\Lambda(\phi)}\cdot f(\eta_{\Lambda_R},h_{\Lambda_R})  e^{-\frac{\beta}{2}W_\Lambda(\phi)}(\sigma_\Lambda\omega_{\Lambda^c},h_{\Lambda_R}^{-1}g_{\Lambda^c}) \\
			&= e^{-\frac{\beta}{2}W_\Lambda(\phi)}\cdot f(\eta_{\Lambda_R},g_{\Lambda_R}^{-1})  e^{-\frac{\beta}{2}W_\Lambda(\phi)}(\sigma_\Lambda\omega_{\Lambda^c},1)
		\end{align*}
		But,
		\[
		e^{-\frac{\beta}{2}W_\Lambda(\phi)}(\sigma_\Lambda\omega_{\Lambda^c},1)f(\sigma_\Lambda\omega_{\Lambda^c},1)e^{-\frac{\beta}{2}W_\Lambda(\phi)}(\sigma_\Lambda\omega_{\Lambda^c},1)\\
		= \mathbbm{E}(e^{-\frac{\beta}{2}W_\Lambda(\phi)}\cdot f \cdot e^{-\frac{\beta}{2}W_\Lambda(\phi)})(\sigma_\Lambda\omega_{\Lambda^c},g_{\Lambda^c}).
		\]
		By Lemma \ref{conditional_expectation}, we have
		\[
		\mathbb{E}(e^{-\frac{\beta}{2}W_\Lambda(\phi)}\cdot f \cdot e^{-\frac{\beta}{2}W_\Lambda(\phi)}) = e^{-\frac{\beta}{2}W_\Lambda(\phi)}\cdot \mathbb{E}(f) \cdot e^{-\frac{\beta}{2}W_\Lambda(\phi)},
		\]
		thus plugging again into Equation \eqref{classical_eq1} we get the desired result. The Riesz-Markov theorem tells us that there are $\mu, \mu_P$ probability measures on $\Omega$ such that for all $f \in C(\Omega)$
		\[
		\mu_\beta(f) = \int_\Omega f d\mu_\beta \quad \text{ and } \quad \mu^P_\beta(f) = \int_\Omega f d\mu^P_\beta.
		\]
		Both measures are related by the Radon-Nykodim derivative
		\[
		\frac{d \mu_P}{d\mu} = \frac{e^{\beta W_\Lambda(\phi)}}{\int_\Omega e^{\beta W_\Lambda(\phi)}d\mu},
		\]
		due to the Gibbs-Araki-Ion condition. Equation \ref{gai_equivalence} tells us that the conditional expectation of $\mu^P_\beta$ relative to the $\sigma$-algebra $\mathcal{F}_{\Lambda^c}$ is, for all $f \in C(\Omega)$
		\[
		\mathbb{E}_{\mu^P_\beta}(f)(\eta_{\Lambda^c}) = \mu_{\beta,\Lambda}(f)(\eta_{\Lambda^c}) = \frac{\sum_{\sigma_\Lambda \in \Omega_\Lambda}f(\sigma_\Lambda \eta_{\Lambda^c})e^{-\beta H_\Lambda(\phi)(\sigma_\Lambda)}}{\sum_{\sigma_\Lambda \in \Omega_\Lambda}e^{-\beta H_\Lambda(\phi)(\sigma_\Lambda)}}
		\]
		
		Denote $\left. \nu \right|_{\Lambda^c}$ the restriction of a measure $\nu$ to the $\sigma$-algebra $\mathcal{F}_{\Lambda^c}$. We want to calculate the conditional expectation for $\mu_\beta$. Note that
		\begin{equation}\label{classical_eq2}
			\int_\Omega \mathbb{E}_{\mu_\beta}(f) d\left.\mu\right|_{\Lambda^c} = \int_\Omega f d\mu_\beta = \int_\Omega f \frac{d\mu_\beta}{d\mu_\beta^P} d\mu_\beta^P = \int_\Omega \mathbb{E}_{\mu_\beta^P}\left(f \frac{d\mu_\beta}{d\mu_\beta^P}\right) d\left.\mu_\beta^P\right|_{\Lambda^c}.
		\end{equation}
		The restriction of absolute continuous measures are absolutely continuous. The relation between the Radon-Nykodim derivatives is
		\begin{equation}\label{classical_eq3}
			\int_\Omega \frac{d\left.\mu_\beta\right|_{\Lambda^c}}{d\left.\mu_\beta^P\right|_{\Lambda^c}}d\left.\mu_\beta^P\right|_{\Lambda^c} = \int_\Omega d\left.\mu_\beta\right|_{\Lambda^c} = \int_\Omega d\mu_\beta = \int_\Omega \frac{d\mu_\beta}{d\mu_\beta^P}d\mu_\beta^P = \int_\Omega \mathbb{E}_{\mu_\beta^P}\left(\frac{d\mu_\beta}{d\mu_\beta^P}\right)d\left.\mu_\beta^P\right|_{\Lambda^c}.
		\end{equation}
		
		The Equations \ref{classical_eq2} and \ref{classical_eq3} together give us
		\be\label{classical_eq4}
		\mathbb{E}_\mu(f)\mathbb{E}_{\mu_\beta^P}\left(\frac{d\mu_\beta}{d\mu_\beta^P}\right)= \mathbb{E}_{\mu_\beta^P}\left(f\frac{d\mu_\beta}{d\mu_\beta^P}\right). 
		\ee
		For all $f \in C(\Omega)$. Now, consider $\eta_{\Lambda^c} \in \Omega_{\Lambda^c}$ and let us calculate both sides of the Equation \eqref{classical_eq4}. The right-hand side is
		\[
		\mathbb{E}_{\mu_\beta^P}\left(f\frac{d\mu_\beta}{d\mu_\beta^P}\right)(\eta_{\Lambda^c}) = \frac{\sum_{\sigma_\Lambda \in \Omega_\Lambda}f(\sigma_\Lambda \eta_{\Lambda^c})e^{-\beta (H_\Lambda(\phi)(\sigma_\Lambda)+W_\Lambda(\phi)(\sigma_\Lambda \eta_{\Lambda^c}))}}{\sum_{\sigma_\Lambda \in \Omega_\Lambda}e^{-\beta H_\Lambda(\phi)(\sigma_\Lambda)}},
		\]
		then we get
		\[
		\sum_{i,j=1}^n \langle \psi, \pi^{\eta_\Lambda\omega_\Lambda}(f) \psi\rangle \geq 0
		\]
		since $f$ is positive. and the left-hand side is
		\[
		\mathbb{E}_{\mu_\beta^P}\left(\frac{d\mu_\beta}{d\mu_\beta^P}\right)(\eta_{\Lambda^c}) = \frac{\sum_{\sigma_\Lambda \in \Omega_\Lambda} e^{-\beta (H_\Lambda(\phi)(\sigma_\Lambda)+W_\Lambda(\phi)(\sigma_\Lambda \eta_{\Lambda^c}))}}{\sum_{\sigma_\Lambda \in \Omega_\Lambda}e^{-\beta H_\Lambda(\phi)(\sigma_\Lambda)}} 
		\]
		Since $H_\Lambda^\eta(\phi)(\sigma_{\Lambda}) = H_\Lambda(\phi)(\sigma_{\Lambda}) + W_\Lambda(\phi)(\sigma_\Lambda \eta_{\Lambda^c})$ we conclude that
		\[
		\mathbb{E}_{\mu_\beta}(f) = \mu_{\beta,\phi,\Lambda}^\eta(f),
		\]
		exactly the expression for the finite volume Gibbs measure \eqref{finiteclassical}. Assume that the state $\mu_\beta$ is classical and satisfies the DLR equations. Our previous calculations can be reversed and we conclude that all perturbation states $\mu^P_\beta$ satisfy \eqref{gai_equivalence}. We just need to show that the state $\mu_\beta$ is faithful. First, we will prove a relation between the left regular representation $\pi$ of a local element $f$ and $E(f)$. Let $\sigma$ be a configuration and $\Lambda$ a finite set containing $\supp f$. Then, 
		\[
		\pi^\sigma(f)\delta_{(\sigma,g)} = \sum_{k_\Lambda \in G_\Lambda}f(g_\Lambda\sigma_\Lambda, k_\Lambda g^{-1}_\Lambda) \delta_{(g_{\Lambda}\sigma_\Lambda,k_\Lambda g_\Lambda^{-1})}.
		\]
		
		\begin{align*}
			\|\pi^\sigma(f)\delta_{(\sigma, g)}\|^2 &= \langle\pi^\sigma(f)\delta_{(\sigma, g)},\pi^\sigma(f)\delta_{(\sigma, g)} \rangle \\
			&=\sum_{k_\Lambda \in G_\Lambda} \overline{f(g_\Lambda\sigma_\Lambda, k_\Lambda g^{-1}_\Lambda)}f(g_\Lambda\sigma_\Lambda, k_\Lambda g^{-1}_\Lambda) \\
			&= \sum_{h_\Lambda \in G_\Lambda} f^*(h_\Lambda^{-1}g_\Lambda\sigma_\Lambda, h_\Lambda)f(g_\Lambda\sigma_\Lambda,h_\Lambda) \\
			&=\sum_{(\omega_\Lambda,h_\Lambda)\in\mathcal{G}_\Lambda^{g_\Lambda\sigma_\Lambda}}f^*(\omega_\Lambda,h_\Lambda) f(g_\Lambda \sigma_\Lambda,h_\Lambda^{-1}) = E(f^*\cdot f) (g_\Lambda\sigma_\Lambda,1)
		\end{align*}
		Thus, if $f \in C^*(\mathcal{G})$ is such that $\mathbb{E}(f^*\cdot f) = 0$, we just need to take a sequence $f_n$ of local elements converging to $f$, then
		\[
		\|\pi^\sigma(f)\delta_{(\sigma, g)}\|^2 = \lim_{n\rightarrow \infty}\|\pi^\sigma(f_n)\delta_{(\sigma, g)}\|^2 = \lim_{n\rightarrow \infty} \mathbb{E}(f_n^*\cdot f_n) = \mathbb{E}(f^*\cdot f) = 0
		\]
		Since $\delta_{(\sigma,g)}$ is a basis for $\ell^2(\mathcal{G})$, we conclude that $\pi(f) = 0$ if $\mathbb{E}(f^*\cdot f) = 0$. Going back to the state $\mu_\beta$, if $f\in C^*(\mathcal{G})$ is such that $\mu_\beta(f^* \cdot f) =0$, since it is classical, we must have
		$\mathbb{E}(f^* \cdot f) =0$ because the integral of a positive function is zero if and only if the function is zero. By our previous reasoning, $f=0$ and we are done.  
	\end{proof}
	
One can argue that our definition of classical interaction may be too restrictive. Indeed, interactions $\phi$ that are polynomials only depending on the variables $\sigma^{(1)}_x$ can be seen as classical also although they most certainly do not satisfy the condition $\phi_X \in C(\Omega_X)$. Certainly a variant of the Theorem \ref{t2} must hold when one changes $C(\Omega)$ by the abelian sub-C*-algebra generated by the interactions $\phi_X$.

\chapter{Boundary Conditions and Quantum DLR equations}
\label{ch:groupoids}

This chapter contains an expanded version of some of the results announced in the preliminary version of \cite{Aff2}. Here we construct the random representation for the Gibbs density for all short-range interactions and introduce the quantum DLR equations for a subclass of interactions that we called \emph{admissible}. Then, we discuss the relation between the states satisfying the DLR equations and the states coming from the usual thermodynamic limit.

\section{The random representation}

In this chapter, the interactions will be considered to be short-range.  

\begin{proposition}\label{poirep}
	The Gibbs density operator
	$e^{-\beta H_{\Lambda}(\phi)}$ has a  Poisson point process representation. 
\end{proposition}
\begin{proof}
	We will construct the random representation for systems with $q=2$, the general case, see Remark \ref{generalcase} at the end of this proof. The Hamiltonian can be written as 
	\be\label{hamiltonian}
	H_\Lambda(\phi)= H_\Lambda^{(0)}(\phi)-\sum_{\emptyset\neq B\subset \Lambda} f_B \cdot \sigma_B^{(1)},
	\ee
	The Hamiltonian $H_{\Lambda}^{(0)}(\phi)$ is the classical part of the initial quantum Hamiltonian $H_{\Lambda}(\phi)$, defined as
	\[
	H_\Lambda^{(0)}(\phi) = -\sum_{A\subset  \Lambda} J_A\sigma_A^{(3)},
	\]
	and, for each $B\neq \emptyset$, the functions $f_B\in C(\Omega_\Lambda)$, that can be written as
\[
f_B = \sum_{A\subset X}c_{A,B}\sigma_A^{(3)}.
\] 
The interactions should be self-adjoint so there are more restrictions on the coefficients $c_{A,B}$
\[
(c_{A,B}\sigma_A^{(3)}\sigma_B^{(1)})^* = \overline{c_{A,B}}\sigma_B^{(1)}\sigma_A^{(3)} = \overline{c_{A,B}}(-1)^{|A\cap B|}\sigma_A^{(3)}\sigma_B^{(1)},
\]
using the polar decomposition for the constants $c_{A,B} = r_{A,B}e^{i\pi\theta_{A,B}}$ we discover that the angles must satisfy $\theta_{A,B}\in \{0, \pm 1/2, 1\}$. The constants $J_A = r_A e^{i\pi \theta_A}$ are real numbers since the self-adjointeness of the interaction $\phi$ implies that $\theta_A=0$ or $1$.  Let us call $V_\Lambda(\phi) = -\sum_{\emptyset \neq B\subset \Lambda}f_B\sigma_B^{(1)}$. The Lie-Trotter formula yields
	\be\label{eq1:poirep}
	e^{-\beta H_\Lambda(\phi)}= e^{\beta }\lim_{n\rightarrow \infty} \left[e^{-\frac{\beta}{n+1} H_\Lambda^{(0)}(\phi)}\cdot \left(\left(1-\frac{\beta}{n}\right)\mathbbm{1}+\frac{\beta}{n}V_\Lambda(\phi)\right)\right]^n \cdot e^{-\frac{\beta}{n+1} H_\Lambda^{(0)}(\phi)}.
	\ee
	For each $n\geq 1$, the right-hand-side of Equation \eqref{eq1:poirep} can be expanded as
	
	\begin{align}\label{eq2:poirep}
		\left[e^{-\frac{\beta}{n+1} H_\Lambda^{(0)}(\phi)}\left(\left(1-\frac{\beta}{n}\right)\mathbbm{1}+\frac{\beta}{n}V_\Lambda(\phi)\right)\right]^n& = \nonumber\\
		\sum_{j\in \{0,1\}^n} \Conv_{m=1}^n e^{-\frac{\beta}{n+1} H_\Lambda^{(0)}(\phi)}&\cdot V_\Lambda(\phi)^{j(m)}\left(1-\frac{\beta}{n}\right)^{n-\sum_{m=1}^n j(m)}\left(\frac{\beta}{n}\right)^{\sum_{m=1}^n j(m)},
	\end{align}
	where we use the convention $V_\Lambda(\phi)^0 = \mathbbm{1}$. By breaking the sum depending on the functions $j$ depending on how many $m$ the function $j(m)=1$ in the right-hand side of Equation \eqref{eq2:poirep}, we get that the right-hand side of Equation \eqref{eq1:poirep} is equal to
	\be\label{eq10:poirep}
	\sum_{\ell =0}^n\sum_{\substack{j\in \{0,1\}^n \\ |j(m)= 1|=\ell}} \Bigg[\Conv_{m=1}^n e^{-\frac{\beta}{n+1} H_\Lambda^{(0)}(\phi)}\cdot V_\Lambda(\phi)^{j(m)}\Bigg]\cdot e^{-\frac{\beta}{n+1} H_\Lambda^{(0)}(\phi)}\left(1-\frac{\beta}{n}\right)^{n-\ell}\left(\frac{\beta}{n}\right)^{\ell}.
	\ee
	Enumerate the points where the function $j$ is not zero in an increasing order $m_1 <\dots < m_\ell$, where given $j \in \{0,1\}^n$, we have $|j(m)=1|=\ell$. Hence,
	\begin{align*}
		\Bigg[\Conv_{m=1}^n e^{-\frac{\beta}{n+1} H_\Lambda^{(0)}(\phi)}\cdot V_\Lambda(\phi)^{j(m)}\Bigg]\cdot e^{-\frac{\beta}{n+1} H_\Lambda^{(0)}(\phi)} = \Bigg[\Conv_{j=1}^\ell e^{-\frac{\beta(m_{j}-m_{j-1})}{n+1} H_\Lambda^{(0)}(\phi)}\cdot V_\Lambda(\phi)\Bigg] \cdot e^{-\frac{\beta(n+1-m_{\ell})}{n+1} H_\Lambda^{(0)}(\phi)}.
	\end{align*}
	
	We will calculate the value of the operator on the left-hand side \eqref{eq1:poirep} in a point $(\sigma_\Lambda,g_\Lambda)$ and expand the right-hand side using the  partition of the identity of the algebra
	\[
	\mathbbm{1} = \sum_{\sigma_\Lambda \in \Omega_\Lambda}\delta_{\sigma_\Lambda},
	\]
	where $\delta_{\sigma_\Lambda}$ are delta functions on the unit $\sigma_\Lambda\in \mathcal{G}^{(0)}_\Lambda$. Hence, using Equations  \eqref{eq10:poirep} and \eqref{prod_classical_quantum},
	\begin{align*}
		&\delta_{g_\Lambda\sigma_\Lambda}\cdot \left(\Conv_{m=1}^n e^{-\frac{\beta}{n+1} H_\Lambda^{(0)}(\phi)}\cdot V_\Lambda(\phi)^{j(m)}\right)\cdot e^{-\frac{\beta}{n+1} H_\Lambda^{(0)}(\phi)}\cdot \delta_{\sigma_\Lambda} =\\
		&\sum_{\substack{\sigma_{\Lambda,k} \in \Omega_\Lambda \\ 1\leq k \leq \ell}}\delta_{g_\Lambda\sigma_\Lambda}\cdot \left(\Conv_{k=1}^\ell e^{-\frac{\beta(m_k -m_{k-1})}{n+1}H_\Lambda^{(0)}(\phi)(\sigma_{\Lambda,k})}\delta_{\sigma_{\Lambda,k}}\cdot V_\Lambda(\phi)\right)\cdot e^{-\frac{\beta(n+1-m_{\ell})}{n+1}H_\Lambda^{(0)}(\phi)}\cdot\delta_{\sigma_\Lambda} =\\
		&\sum_{\substack{\sigma_{\Lambda,k} \in \Omega_\Lambda \\ 1\leq k \leq \ell}}e^{-\beta E_\Lambda(\phi)((\sigma_{\Lambda,1},m_1/(n+1)), \dots, (\sigma_{\Lambda,\ell},m_\ell/(n+1)))}\delta_{g_\Lambda\sigma_\Lambda}\cdot\left( \Conv_{k=1}^\ell\delta_{\sigma_{\Lambda,k}}\cdot V_\Lambda(\phi)\right)\cdot\delta_{\sigma_\Lambda} ,
	\end{align*}
	where  the function $E_\Lambda(\phi)$ is
	\[
	E_\Lambda(\phi)((\sigma_{\Lambda,1},m_1/(n+1)), \dots, (\sigma_{\Lambda,\ell},m_\ell/(n+1)) = \frac{1}{n+1}\sum_{k=1}^{\ell+1} (m_k-m_{k-1})H_\Lambda^{(0)}(\phi)(\sigma_{\Lambda,k}),
	\]
	with $\sigma_{\Lambda,\ell+1}=\sigma_\Lambda$.  We have, by the iterated use of Lemma \ref{matrix_elements},
	\begin{align*}
		&\delta_{g_\Lambda\sigma_\Lambda}\cdot\left( \Conv_{k=1}^\ell\delta_{\sigma_{\Lambda,k}}\cdot V_\Lambda(\phi)\right)\cdot\delta_{\sigma_\Lambda}= \delta_{g_\Lambda\sigma_\Lambda}\cdot\delta_{\sigma_{\Lambda,1}}\cdot V_\Lambda(\phi)\cdot \delta_{\sigma_{\Lambda,2}}\cdot \left( \Conv_{k=2}^{\ell} V_\Lambda(\phi)\cdot\delta_{\sigma_{\Lambda,k+1}} \right)\\
		&= \delta_{\{g_\Lambda\sigma_\Lambda = \sigma_{\Lambda,1}\}}\cdot \delta_{g_\Lambda\sigma_\Lambda}\cdot V_\Lambda(\phi)\cdot \delta_{\sigma_{\Lambda,2}}\cdot \left( \Conv_{k=2}^{\ell} V_\Lambda(\phi)\cdot\delta_{\sigma_{\Lambda,k+1}} \right)\\
		&= \delta_{\{g_\Lambda\sigma_\Lambda = \sigma_{\Lambda,1}\}}V_\Lambda(\phi)(\sigma_{\Lambda,2},g_2)\delta_{g_\Lambda\sigma_\Lambda}\cdot \delta_{(\sigma_{\Lambda,2},g_2)}\cdot \left( \Conv_{k=2}^{\ell} V_\Lambda(\phi)\cdot\delta_{\sigma_{\Lambda,k+1}} \right) \\
		&=\delta_{\{g_\Lambda\sigma_\Lambda = \sigma_{\Lambda,1}\}}\Bigg(\prod_{k=2}^3V_\Lambda(\phi)(\sigma_{\Lambda,k},\widetilde{g}_k)\Bigg)\delta_{g_\Lambda\sigma_\Lambda}\cdot \delta_{(\sigma_{\Lambda,3},\widetilde{g}_3)}\cdot \left( \Conv_{k=3}^{\ell} V_\Lambda(\phi)\cdot\delta_{\sigma_{\Lambda,k+1}} \right)\\
		&\vdots \\
		&= \delta_{\{g_\Lambda\sigma_\Lambda = \sigma_{\Lambda,1}\}}\left(\prod_{k=2}^{\ell+1} V_\Lambda(\phi)(\sigma_{\Lambda,k},\widetilde{g}_k)\right)\delta_{g_\Lambda\sigma_\Lambda}\cdot \delta_{(\sigma_{\Lambda,\ell-1},\widetilde{g}_{\ell-1})}\cdot V_\Lambda(\phi)\cdot\delta_{\sigma_{\Lambda}} \\
		&= \delta_{\{g_\Lambda\sigma_\Lambda = \sigma_{\Lambda,1}\}}\left(\prod_{k=2}^{\ell+1} V_\Lambda(\phi)(\sigma_{\Lambda,k},\widetilde{g}_k)\right)\delta_{g_\Lambda\sigma_\Lambda}\cdot \delta_{(\sigma_{\Lambda,\ell},\widetilde{g}_\ell)}\cdot\delta_{\sigma_\Lambda} = \delta_{\{g_\Lambda\sigma_\Lambda = \sigma_{\Lambda,1}\}}\left(\prod_{k=2}^{\ell+1} V_\Lambda(\phi)(\sigma_{\Lambda,k},\widetilde{g}_k)\right) \delta_{(\sigma_\Lambda,\widetilde{g}_\Lambda)}
	\end{align*}
	where $\widetilde{g}_k$ come from Lemma \ref{matrix_elements}.
 %as the unique element of $G_\Lambda$ such that $g_k\sigma_{\Lambda,k}$ is the unique group element such that $g_k\sigma_{\Lambda,k}=g'_{k-1}\sigma_{\Lambda,k-1}$.
 Define the function
	\[
	V_\Lambda(\phi)(\sigma_{\Lambda,1},\dots,\sigma_{\Lambda,\ell};(\sigma_\Lambda,g_\Lambda)) = \prod_{k=2}^{\ell+1} V_\Lambda(\phi)(\sigma_{\Lambda,k},\widetilde{g}_k)\delta_{\{g_\Lambda\sigma_\Lambda=\sigma_{\Lambda,1}\}}.
	\]
\begin{figure}[H]
\centering

\scalebox{1}{

\tikzset{every picture/.style={line width=0.75pt}} %set default line width to 0.75pt        

\begin{tikzpicture}[x=0.75pt,y=0.75pt,yscale=-1,xscale=1]
%uncomment if require: \path (0,143); %set diagram left start at 0, and has height of 143

%Shape: Ellipse [id:dp33448768040698906] 
\draw  [fill={rgb, 255:red, 0; green, 0; blue, 0 }  ,fill opacity=1 ] (561.95,91.78) .. controls (561.95,88.36) and (564.65,85.59) .. (567.97,85.59) .. controls (571.3,85.59) and (574,88.36) .. (574,91.78) .. controls (574,95.2) and (571.3,97.98) .. (567.97,97.98) .. controls (564.65,97.98) and (561.95,95.2) .. (561.95,91.78) -- cycle ;
%Shape: Ellipse [id:dp008057253001043874] 
\draw  [fill={rgb, 255:red, 0; green, 0; blue, 0 }  ,fill opacity=1 ] (461.09,91.78) .. controls (461.09,88.36) and (463.79,85.59) .. (467.12,85.59) .. controls (470.45,85.59) and (473.14,88.36) .. (473.14,91.78) .. controls (473.14,95.2) and (470.45,97.98) .. (467.12,97.98) .. controls (463.79,97.98) and (461.09,95.2) .. (461.09,91.78) -- cycle ;
%Curve Lines [id:da6347564794513544] 
\draw [line width=1.5]    (468.46,81.66) .. controls (487.07,31.83) and (547.87,27.09) .. (567.97,85.59) ;
\draw [shift={(467.12,85.59)}, rotate = 287.45] [fill={rgb, 255:red, 0; green, 0; blue, 0 }  ][line width=0.08]  [draw opacity=0] (13.4,-6.43) -- (0,0) -- (13.4,6.44) -- (8.9,0) -- cycle    ;

%Shape: Ellipse [id:dp7900680663558509] 
\draw  [fill={rgb, 255:red, 0; green, 0; blue, 0 }  ,fill opacity=1 ] (384.34,91.78) .. controls (384.34,88.36) and (387.04,85.59) .. (390.36,85.59) .. controls (393.69,85.59) and (396.39,88.36) .. (396.39,91.78) .. controls (396.39,95.2) and (393.69,97.98) .. (390.36,97.98) .. controls (387.04,97.98) and (384.34,95.2) .. (384.34,91.78) -- cycle ;
%Shape: Ellipse [id:dp8296014345110161] 
\draw  [fill={rgb, 255:red, 0; green, 0; blue, 0 }  ,fill opacity=1 ] (283.48,91.78) .. controls (283.48,88.36) and (286.18,85.59) .. (289.51,85.59) .. controls (292.84,85.59) and (295.53,88.36) .. (295.53,91.78) .. controls (295.53,95.2) and (292.84,97.98) .. (289.51,97.98) .. controls (286.18,97.98) and (283.48,95.2) .. (283.48,91.78) -- cycle ;
%Curve Lines [id:da9634908355733948] 
\draw [line width=1.5]    (290.85,81.66) .. controls (309.46,31.83) and (370.26,27.09) .. (390.36,85.59) ;
\draw [shift={(289.51,85.59)}, rotate = 287.45] [fill={rgb, 255:red, 0; green, 0; blue, 0 }  ][line width=0.08]  [draw opacity=0] (13.4,-6.43) -- (0,0) -- (13.4,6.44) -- (8.9,0) -- cycle    ;
%Shape: Ellipse [id:dp7441025191909537] 
\draw  [fill={rgb, 255:red, 0; green, 0; blue, 0 }  ,fill opacity=1 ] (181.99,91.78) .. controls (181.99,88.36) and (184.69,85.59) .. (188.02,85.59) .. controls (191.35,85.59) and (194.04,88.36) .. (194.04,91.78) .. controls (194.04,95.2) and (191.35,97.98) .. (188.02,97.98) .. controls (184.69,97.98) and (181.99,95.2) .. (181.99,91.78) -- cycle ;
%Curve Lines [id:da7710594305736695] 
\draw [line width=1.5]    (189.36,81.66) .. controls (207.97,31.83) and (268.77,27.09) .. (288.87,85.59) ;
\draw [shift={(188.02,85.59)}, rotate = 287.45] [fill={rgb, 255:red, 0; green, 0; blue, 0 }  ][line width=0.08]  [draw opacity=0] (13.4,-6.43) -- (0,0) -- (13.4,6.44) -- (8.9,0) -- cycle    ;

%Shape: Ellipse [id:dp7234170096643193] 
\draw  [fill={rgb, 255:red, 0; green, 0; blue, 0 }  ,fill opacity=1 ] (80.5,91.78) .. controls (80.5,88.36) and (83.2,85.59) .. (86.53,85.59) .. controls (89.85,85.59) and (92.55,88.36) .. (92.55,91.78) .. controls (92.55,95.2) and (89.85,97.98) .. (86.53,97.98) .. controls (83.2,97.98) and (80.5,95.2) .. (80.5,91.78) -- cycle ;
%Curve Lines [id:da8948892630223233] 
\draw [line width=1.5]    (87.87,81.66) .. controls (106.48,31.83) and (167.28,27.09) .. (187.38,85.59) ;
\draw [shift={(86.53,85.59)}, rotate = 287.45] [fill={rgb, 255:red, 0; green, 0; blue, 0 }  ][line width=0.08]  [draw opacity=0] (13.4,-6.43) -- (0,0) -- (13.4,6.44) -- (8.9,0) -- cycle    ;

% Text Node
\draw (410.88,84.5) node [anchor=north west][inner sep=0.75pt]  [font=\LARGE] [align=left] {$\displaystyle \cdots $};
% Text Node
\draw (549,105) node [anchor=north west][inner sep=0.75pt]   [align=left] {$\displaystyle \sigma _{\Lambda ,\ell+1} =\sigma _{\Lambda }$};
% Text Node
\draw (108,14) node [anchor=north west][inner sep=0.75pt]   [align=left] {$\displaystyle ( \sigma _{\Lambda ,2} ,\widetilde{g}_{2})$};
% Text Node
\draw (208,13) node [anchor=north west][inner sep=0.75pt]   [align=left] {$\displaystyle ( \sigma _{\Lambda ,3} ,\widetilde{g}_{3})$};
% Text Node
\draw (309,13) node [anchor=north west][inner sep=0.75pt]   [align=left] {$\displaystyle ( \sigma _{\Lambda ,4} ,\widetilde{g}_{4})$};
% Text Node
\draw (470,13) node [anchor=north west][inner sep=0.75pt]   [align=left] {$\displaystyle ( \sigma _{\Lambda ,\ell+1} ,\widetilde{g}_{\ell+1})$};
% Text Node
\draw (14,104) node [anchor=north west][inner sep=0.75pt]   [align=left] {$\displaystyle \widetilde{g}_{\Lambda } \sigma _{\Lambda } =\sigma _{\Lambda ,1}$};
% Text Node
\draw (173,104) node [anchor=north west][inner sep=0.75pt]   [align=left] {$\displaystyle \sigma _{\Lambda ,2}$};
% Text Node
\draw (274,104) node [anchor=north west][inner sep=0.75pt]   [align=left] {$\displaystyle \sigma _{\Lambda ,3}$};
% Text Node
\draw (377,105) node [anchor=north west][inner sep=0.75pt]   [align=left] {$\displaystyle \sigma _{\Lambda ,4}$};
% Text Node
\draw (453,105) node [anchor=north west][inner sep=0.75pt]   [align=left] {$\displaystyle \sigma _{\Lambda ,n}$};

\end{tikzpicture}

}

\caption{The path of arrows.}
\end{figure}
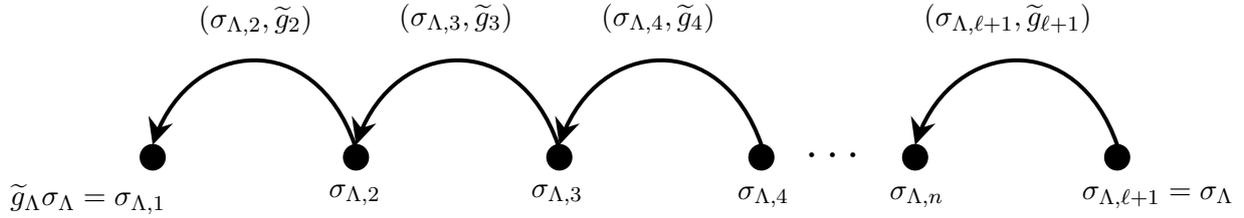	
 
 Equation \eqref{eq10:poirep} becomes
	\be\label{eq4:poirep}
	\begin{split}			
		\sum_{\ell =0}^n\sum_{\substack{j\in \{0,1\}^n \\ |j(m)= 1|=\ell}}\sum_{\substack{\sigma_{\Lambda,k} \in \Omega_\Lambda \\ 1\leq k \leq \ell}}&e^{-\beta E_\Lambda(\phi)((\sigma_{\Lambda,1},m_1/n), \dots, (\sigma_{\Lambda,\ell},m_\ell/n))} \times \\
		&V_\Lambda(\phi)(\sigma_{\Lambda,1},\dots,\sigma_{\Lambda,\ell};(\sigma_\Lambda,g_\Lambda))\left(1-\frac{\beta}{n}\right)^{n-\ell}\left(\frac{\beta}{n}\right)^{\ell}.
	\end{split}
	\ee
	Define the time ordering functions $T_\ell:[0,1]^\ell \rightarrow [0,1]^\ell$
	\[
	T_\ell(t_1,\dots,t_\ell) = (t'_1,\dots, t'_\ell),
	\]
	where the right-hand side is a permutation of $(t_1,\dots,t_\ell)$ satisfying $t'_k \leq t'_{k+1}$, $k=1,\dots, \ell-1$. One can write the function $T_k$ more explicitly using characteristics functions 
	\[
	T_\ell(t_1,\dots,t_\ell) = \sum_{p\in\mathfrak{S}_\ell}\left(\prod_{k=1}^{\ell-1} \mathbbm{1}_{\{x>0\}}(t_{p(k)}-t_{p(k+1)})\right)(t_{p(1)},\dots,t_{p(\ell)}),
	\] 
	where $\mathfrak{S}_\ell$ is the permutation group of $n$ points. Notice that this function is zero whenever we have two coordinates $t_k$ and $t_j$ that are equal. Since the set of points of $[0,1]^\ell$ where at least two coordinates are equal have measure zero with respect to the Lebesgue measure, we can define it as we want on this set. Using the functions $T_n$ we can construct a time ordering function $T$ on the coproduct (see Proposition \ref{prop_measurable_coproduct}). Also,  let $F^\phi_{\beta,\Lambda}$ be the function defined on the coproduct of $\widetilde{\Omega}_\Lambda^\ell$, for $n\geq 0$ using also Proposition \ref{prop_measurable_coproduct} and the functions $F_{\beta,\ell}^\phi:\widetilde{\Omega}_\Lambda^\ell \rightarrow \mathbb{C}$ given by
	\[
	F_{\beta,\Lambda,\ell}^\phi((\sigma_{\Lambda,k},t_k)_\ell;(\sigma_\Lambda,g_\Lambda)) = e^{-\beta E_\Lambda(\phi)((\sigma_{\Lambda,k},t_k)_\ell;(\sigma_\Lambda,g_\Lambda))}V_\Lambda(\phi)((\sigma_{\Lambda,k},t_k)_\ell;(\sigma_\Lambda,g_\Lambda)),
	\]
	where $(\sigma_{\Lambda,k},t_k)_\ell = ((\sigma_{\Lambda,1},t_1),\dots,(\sigma_{\Lambda,n},t_\ell))$ with
	\begin{align*}
		&E_\Lambda(\phi)((\sigma_{\Lambda,k},t_k)_\ell;(\sigma_\Lambda,g_\Lambda)) = \sum_{k=1}^{\ell+1} (t'_k-t'_{k-1})H_\Lambda^{(0)}(\phi)(\sigma_{\Lambda,k}) \quad \text{and}\quad \\
		&V_\Lambda(\phi)((\sigma_{\Lambda,k},t_k)_\ell;(\sigma_\Lambda,g_\Lambda)) = \prod_{k=2}^{\ell+1} V_\Lambda(\phi)(\sigma_{\Lambda,k},g_k)\delta_{\{g_\Lambda\sigma_\Lambda=\sigma_{\Lambda,1}\}},
	\end{align*}
	
	where $t_0 = 0$, $t_{\ell+1}=1$, $\sigma_{\Lambda,1} = g_\Lambda\sigma_\Lambda$, and $\sigma_{\Lambda,\ell+1} = \sigma_\Lambda$. Consider the Bernoulli point process
	\[
	N_n(s,C) = \sum_{j=1}^n \chi_{n,j}(s)\delta_{(\zeta_j(s),\frac{j}{n+1})}(C), 
	\]
	where $C$ is a Borel subset of $\widetilde{\Omega}_\Lambda = \Omega_\Lambda\times [0,1]$ and $\{\chi_{i,j}\}_{i \in \mathbb{N}, 1\leq j \leq n}$ and $\{\zeta_j\}_{j\in\mathbb{N}}$ are  families of i.i.d random variables with distribution
	\[
	\mathbb{P}(\chi_{n,j}=1) = 1- \mathbb{P}(\chi_{n,j}=0)= \frac{\beta}{n}\quad \text{  and  } \quad \mathbb{P}(\zeta_j=\sigma_\Lambda)= 1.
	\] 
	The expression \eqref{eq4:poirep} is then, an integral of the function $F_{\beta,\Lambda}^\phi$ with respect to the Binomial point process (see Proposition \ref{integformpoirep:app1}). Since the Binomial point process converges in distribution to a Poisson point process we have 
	\[
	e^{-\beta H_\Lambda(\phi)}(\sigma_\Lambda,g_\Lambda) = \lim_{n\rightarrow \infty}\int_{\mathbb{N}(\widetilde{\Omega}_\Lambda)} F_{\beta,\Lambda}^\phi(\nu) dN_n(\nu) = \int_{\mathbb{N}(\widetilde{\Omega}_\Lambda)} F_{\beta,\Lambda}^\phi(\nu) dN(\nu).
	\]
	We know that the function $V_\Lambda(\phi)$ can be expanded by using the fact that $V_\Lambda(\phi)= \sum_{\emptyset\neq B\subset X} f_B \cdot \sigma_B^{(1)}$, in the following way
	\begin{align*}
		V_\Lambda(\phi)((\sigma_{\Lambda_R,k},t_k)_\ell;(\sigma_\Lambda,g_\Lambda)) &= \prod_{k=2}^{\ell+1} V_\Lambda(\phi)(\sigma_{\Lambda,k},g_k)\delta_{\{g_\Lambda\sigma_\Lambda=\sigma_{\Lambda,1}\}} \\
		& = \prod_{k=2}^{\ell+1}\left(\sum_{\emptyset \neq B \subset X}f_B\cdot\sigma_B^{(1)}(\sigma_{\Lambda,k},g_k)\right)\delta_{\{g_\Lambda\sigma_\Lambda=\sigma_{\Lambda,1}\}} \\
		& = \sum_{\substack{\emptyset \neq B_k \subset  X \\ 1\leq k \leq n}} \prod_{k=2}^{\ell+1} f_{B_{k-1}}\cdot\sigma_{B_{k-1}}^{(1)}(\sigma_{\Lambda,k-1},g_{k-1})\delta_{\{g_\Lambda\sigma_\Lambda=\sigma_{\Lambda,1}\}}.
	\end{align*}
	The Poisson point process representation presented here can be understood as a rigorous path integral for quantum spin systems. We will proceed to write it in a way that the analogy with the integration with respect to paths becomes more evident. The integration formula for Poisson point processes in Proposition \ref{Poisson_int} yields
	
	\begin{equation}\label{eq_randomrep}
		\begin{split}
			\int_{\mathbb{N}(\widetilde{\Omega}_\Lambda)} F_{\beta,\Lambda}^\phi(\nu) dN(\nu) = \sum_{n\geq 0} \frac{\beta^n}{n!}\int_{[0,1]^n}\sum_{\substack{\sigma_{\Lambda,k} \in \Omega_\Lambda \\ 1\leq k \leq n}}e^{-\beta E_\Lambda(\phi)((\sigma_{\Lambda,k},t_k)_n;(\sigma_\Lambda,g_\Lambda))}V_\Lambda(\phi)((\sigma_{\Lambda,k},t_k)_n;(\sigma_\Lambda,g_\Lambda))dt^n& \\
			= \sum_{n\geq 0} \frac{\beta^n}{n!}\int_{[0,1]^n}\sum_{\substack{\sigma_{\Lambda,k} \in \Omega_\Lambda \\ 1\leq k \leq n}}\sum_{\substack{\emptyset\neq B_k \subset \Lambda \\ 1\leq k \leq n}}e^{-\beta E_\Lambda(\phi)((\sigma_{\Lambda,k},t_k)_n;(\sigma_\Lambda,g_\Lambda))} \prod_{k=1}^n f_{B_k}\cdot\sigma_{B_k}^{(1)}(\sigma_{\Lambda,k},g_k)\delta_{\{g_\Lambda\sigma_\Lambda=\sigma_{\Lambda,1}\}}dt^n&.
		\end{split}
	\end{equation}
	By Corollary \ref{Corol_ppp}, the formula above is just the integration of the function
	\[
	(\sigma_{\Lambda,k},t_k, B_k)_n \mapsto  e^{-\beta E_\Lambda(\phi)((\sigma_{\Lambda,k},t_k)_n;(\sigma_\Lambda,g_\Lambda))} \prod_{k=1}^n  f_{B_k}\cdot\sigma_{B_k}^{(1)}(\sigma_{\Lambda,k},g_k),
	\]
	with respect to the Poisson Point process $N_\Lambda\coloneqq \sum_{\emptyset \neq B\subset X} N_B$, where each $N_B$ is a Poisson point process with intensity measure $\beta dt$.
\end{proof}
\begin{remark}
	For Ising spin quantum systems, Ioffe \cite{Iof} constructed a random representation using Poisson point processes in a Hilbert space language. We proved it using the groupoid structure introduced in Chapter \ref{ch:quantstatmech}.
\end{remark}	

\begin{remark}\label{generalcase}
    By the definition of interaction $\phi$, we know that every $\phi_X$ is a local operator. Therefore, we can write the Hamiltonian $H_\Lambda(\phi)$ separating the classical and quantum part as 
    \[
    H_\Lambda(\phi) = \sum_{\substack{A \subset \Lambda \\ k_A \in \{1,\dots,q-1\}^A}} J_A (u_A^{k_A}+u_A^{q-k_A}) + \sum_{\substack{B \subset \Lambda \\ \ell_B \in \{1,\dots,q-1\}^B}} (f_{B,\ell_B}\cdot v^{\ell_B}_B+f_{B,q-\ell_B}\cdot v^{q-\ell_B}_B),
    \]
    where we used the multi-index notation for $k_A$ and $\ell_B$. Also, the operators $$f_{B,\ell_B} = \sum_{\substack{A \subset \Lambda \\ k_A \in \{1,\dots,q-1\}^A}} c_{A,B,\ell_B} u_A^{k_A}.$$
   The selfadjointness of $H_\Lambda(\phi)$ implies that the coefficients $J_A$ must be real, as in the case $q=2$, but a more involved relation to the polynomials $f_{B,\ell_B}$ are needed. Also, the point processes $N_B$ will depend not only on the set $B$ but also on a collection of multi indices $\ell_B$.
\end{remark}

\begin{remark}
	The random representation for the Gibbs density operator has the defining feature of always coming with a preferred permutation for the arrival times: it is always in increasing order, realized by the composition with the time ordering function $T$ introduced earlier. Hence a good way of interpreting this random representation is through the notion of what is known as a \emph{path integral}. For the Poisson point process, the paths are usually functions with jumps, known as cadlag functions, and an integral on the space of cadlag functions can be constructed as in \cite{Bil}. 
\end{remark}
Let $\phi$ be a short-ranged interaction and $R$ its range. For each $\Lambda \subset \Z^d$, we define the set
\[
\Lambda_R\coloneqq \{x\in \Z^d: \exists y \in \Lambda \text{ 
	s.t.  }|x-y|\leq R\}.
\]	 
The point process $N_{\Lambda_R}$ can be decomposed as the sum of two independent Poisson point processes
\be\label{ppp_lambda}
N_{\Lambda_R}  = \sum_{\emptyset \neq B' \subset \Lambda} N_B + \sum_{B' \cap(\Lambda_R\setminus \Lambda) \neq \emptyset} N_B = N_\Lambda + N_{\Lambda_R,\Lambda},
\ee
where $B' = B \cup \Lambda_{f_B}$, where $\Lambda_{f_B}$ is the smallest set $\Lambda$ in the partial order given by the inclusion, where $f_B\in C(\Omega_{\Lambda})$. Remember from Equation \eqref{prod_classical_quantum} that the functions $f_B\cdot\sigma_B^{(1)}$ satisfy $$f_B\cdot\sigma_B^{(1)}(\sigma_{\Lambda_R},g_{\Lambda_R}) = f_B(g_{\Lambda_R}\sigma_{\Lambda_R})\delta_{\supp g_{\Lambda_R},B}.$$ Not every counting measure in $\mathbb{N}(\widetilde{\Omega}_{\Lambda_R}\times (\mathcal{P}(\Lambda_R)\setminus\{\emptyset\}))$ will contribute to the integral representation, only those that after the time ordering operation are coherent with respect to the sets appearing in the jumps for the interaction $\phi$. 
Hence, the idea of an integral over paths motivates us to introduce the following definitions to make this analogy more explicit. Let $\Lambda \subset \Z^d$ be a finite set and $(\sigma_\Lambda,g_\Lambda) \in \mathcal{G}_\Lambda$. Then,
\be\label{path_def}
\mathfrak{P}^{\sigma_{\Lambda},g_{\Lambda}} \coloneqq \Bigg\{\begin{array}{@{}c|c@{}}
	\mathfrak{p}=\sum_{k=1}^n\delta_{(\sigma_{\Lambda,k},t_k, B_k)}, n\geq 0 
	&
	\begin{matrix}
g_k\sigma_{\Lambda, k} = \sigma_{\Lambda,k+1} , 1\leq k\leq n \\[2ex]
  \sigma_{\Lambda,1} = \sigma_{\Lambda}, \sigma_{\Lambda,n+1}= g_{\Lambda}\sigma_{\Lambda}
  \end{matrix}
\end{array}
\Bigg\},
\ee
where $g_k \in G_{\Lambda}$ is the unique element such that $\supp g_k = B_k$. We are assuming that the times $t_k$ are already ordered, i.e., $t_k\leq t_{k+1}$. By convention, the case $n=0$ corresponds to the null measure. 
\begin{lemma}\label{lemma_path_measurable}
	The set $\mathfrak{P}^{\sigma_\Lambda,g_\Lambda}$ is a measurable subset of $\mathbb{N}(\tilde{\Omega}_\Lambda \times (\mathcal{P}(\Lambda) \setminus \{\emptyset\}))$ 
\end{lemma}
\begin{proof}
	Let 
	\be \label{2.15}	\mathfrak{P}^{\sigma_{\Lambda},g_{\Lambda},n} \coloneqq \Bigg\{\begin{array}{@{}c|c@{}}
		\mathfrak{p}=\sum_{k=1}^n\delta_{(\sigma_{\Lambda,k},t_k, B_k)} 
		&
		\begin{matrix}
g_k\sigma_{\Lambda, k} = \sigma_{\Lambda,k+1} , 1\leq k\leq n \\[2ex]
  \sigma_{\Lambda,1} = \sigma_{\Lambda}, \sigma_{\Lambda,n+1}= g_{\Lambda}\sigma_{\Lambda}		\end{matrix}
	\end{array}
	\Bigg\},
	\ee
	It is easy to see that $\mathfrak{P}^{\sigma_{\Lambda},g_{\Lambda}} = \cup_{n\geq 0}\mathfrak{P}^{\sigma_{\Lambda},g_{\Lambda},n}$. Let $S_n\subset [0,1]^n$ be the $n$-simplex. Using the notation of Lemma \ref{lema_measurable_countmeasure}, we can see that
	\[
	\mathfrak{P}^{\sigma_{\Lambda},g_{\Lambda},n} = \widetilde{S_n \times A},
	\]
	where $A \subset (\Omega_\Lambda \times (\mathcal{P}(\Lambda)\setminus \{\emptyset\}))^n$ is the set all $(\sigma_{\Lambda,1},B_1,\dots,\sigma_{\Lambda,n},B_n)$ satisfying the conditions in the right-hand-side of Equation \eqref{2.15}.
	
\end{proof}
Each counting measure in $\mathfrak{P}^{\sigma_{\Lambda_R},g_{\Lambda_R}}$ can be viewed as a \emph{path} by rearranging the jumps following the increasing order of time, 
\[
\mathfrak{p}(t) = \sigma_{\Lambda_R,k}, \quad \text{ if} \;\; t_{k-1}\leq t <t_k,
\]
for $k=1,\dots,n+1$ and $\sigma_{\Lambda_R,n+1}=g_{\Lambda_R}\sigma_{\Lambda_R}$, see Figure \ref{path_figure}. 
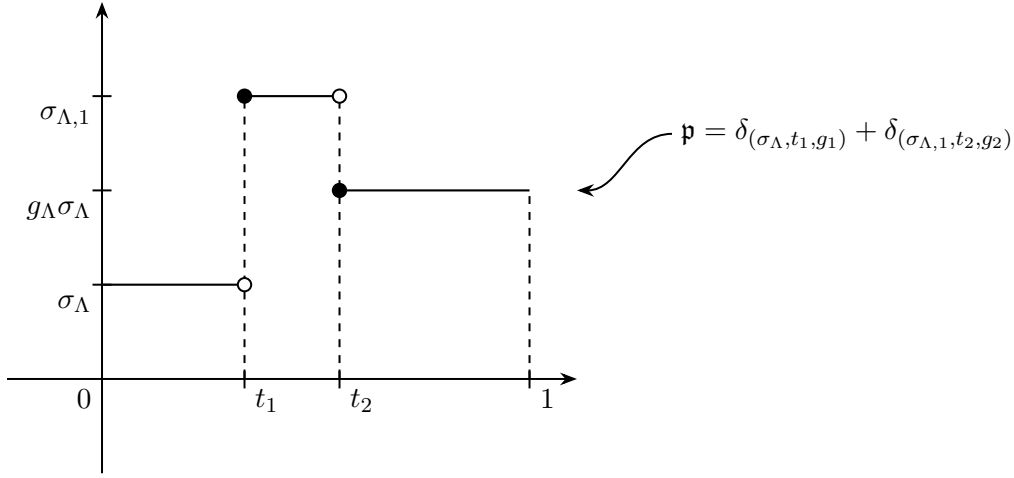
\begin{figure}[H]
	\centering
	
	\tikzset{every picture/.style={line width=0.75pt}} 
	
	\begin{tikzpicture}[scale=1.25]
		
		%Axes
		
		\draw[-{Stealth}, black] (-1,0)--(5,0);
		\draw[-{Stealth}, black] (0,-1)--(0,4);
		\draw[-{Stealth}, black] (6,2.6)to[out=180, in=0](5,2);
		
		%Graph
		\draw[-, dashed] (1.5,0)--(1.5,3);
		\draw[-, dashed] (2.5,0)--(2.5,3);
		\draw[-, dashed] (4.5,0)--(4.5,2);
		\draw[-, thick] (0,1)--(1.5,1);
		\draw[-, thick] (1.5,3)--(2.5,3);
		\draw[-, thick] (2.5,2)--(4.5,2);
		
		%Nodes
		
		\draw (0,0) node[anchor=north east, scale=1]         {$0$};
		\draw (4.5,0) node[anchor=north west, scale=1]       {$1$};
		\draw (2.5,0) node[anchor=north west, scale=1] {$t_2$};
		\draw (1.5,0) node[anchor=north west, scale=1] {$t_1$};
		\draw (0,3) node[anchor=north east, scale=1] {$\sigma_{\Lambda,1}$};
		\draw (0,2) node[anchor=north east, scale=1] {$g_\Lambda\sigma_\Lambda$};
		\draw (0,1) node[anchor=north east, scale=1] {$\sigma_\Lambda$};
		\draw (0.1,2)--(-0.1,2);
		\draw (0.1,3)--(-0.1,3);
		\draw (0.1,1)--(-0.1,1);
		\draw (2.5,0.1)--(2.5,-0.1);
		\draw (1.5,0.1)--(1.5,-0.1);
		\draw (4.5,0.1)--(4.5,-0.1);
		\draw [fill=white] (1.5,1) circle (2pt);
		\draw [fill=black] (1.5,3) circle (2pt);
		\draw [fill=white] (2.5,3) circle (2pt);
		\draw [fill=black] (2.5,2) circle (2pt);
		\draw(5.9,2.3) node[anchor=south west, scale=1] {\;$\mathfrak{p} = \delta_{(\sigma_\Lambda,t_1,g_1)}+\delta_{(\sigma_{\Lambda,1},t_2,g_2)}$};

	\end{tikzpicture}
	\label{figpt}
	\caption{The path generated by a counting measure $\mathfrak{p}_\Lambda$ consisting of two points.}
 \label{path_figure}
\end{figure}

We will use the same notation $\mathfrak{p}$ for the path constructed above or the counting measure. In what follows we will denote the functions $E_{\Lambda_R}(\phi)$ and $V_{\Lambda_R}(\phi)$ by
\begin{equation}\label{integralforms}
\begin{split}	
 &E_{\Lambda_R}(\phi)((\sigma_{\Lambda_R,k},t_k)_n;(\sigma_{\Lambda_R},g_{\Lambda_R})) = \sum_{k=1}^{n+1} (t'_k-t'_{k-1})H_{\Lambda_R}^{(0)}(\phi)(\sigma_{{\Lambda_R},k}) =\int_{\mathfrak{p}} H_{\Lambda_R}^{(0)}(\phi) \quad \text{and}\quad \\
	&\prod_{k=1}^n f_{B_k}\cdot\sigma_{B_k}^{(1)}(\sigma_{\Lambda_R,k},g_{\Lambda_R,k})= \prod_{k=1}^n f_{B_k}(g_{\Lambda_R,k}\sigma_{\Lambda_R,k})\delta_{B_k,\supp g_{\Lambda_R,k}} = V_{\Lambda_R}(\phi)(\mathfrak{p}).
 \end{split}
\end{equation}
\begin{comment}
The function $E_\Lambda(\phi)$ is the one where the time variables are being integrated. Since we are rearranging the time variables in increasing order, this imposes a preferred permutation on them. The following lemma allows us to reduce the integration in the cube $[0,1]^n$ into an integral with respect to the $n$-simplex $S_n$ defined as
\[
S_n =\{(t_1,\dots,t_n) \in [0,1]^n: 0\leq t_1 \leq t_2 \leq  \dots \leq t_n \leq 1\}.
\]
\end{comment}

Since the set of counting measures where $V_{\Lambda_R}(\phi)(\mathfrak{p}) \neq 0$ is contained in the set $\mathfrak{P}_{\Lambda_R}^{\sigma_{\Lambda_R},g_{\Lambda_R}}$ we can write left-hand side of Equation \eqref{eq_randomrep} in the following way
\[
e^{-\beta H_{\Lambda_R}(\phi)}(\sigma_{\Lambda_R},g_{\Lambda_R})= e^{\beta}\int_{\mathfrak{P}^{\sigma_{\Lambda_R},g_{\Lambda_R}}} e^{-\beta \int_{\mathfrak{p}} H_{\Lambda_R}^{(0)}(\phi)} V_{\Lambda_R}(\phi)(\mathfrak{p})d\mathfrak{p}.
\]

\subsubsection{A Decomposition for the Gibbs densities using the Poisson Point Process}

The classical Hamiltonian $H_{\Lambda_R}^{(0)}(\phi)$ can be written as
\begin{align*}
	H_{\Lambda_R}^{(0)}(\phi)= \sum_{A \cap \Lambda \neq \emptyset} J_A \sigma_{A}^{(3)} +\sum_{A \subset \Lambda_R\setminus\Lambda}J_A \sigma_A^{(3)} = H_\Lambda^{(0)}(\phi)+W_\Lambda^{(0)}(\phi) + H_{\Lambda_R\setminus\Lambda}^{(0)}(\phi),
\end{align*}

hence

\begin{align*}
	\int_{\mathfrak{p}} H_{\Lambda_R}^{(0)}(\phi) = \int_{\mathfrak{p}}(H_\Lambda^{(0)}(\phi)+W_\Lambda^{(0)}(\phi))+\int_{\mathfrak{p}} H_{\Lambda_R\setminus\Lambda}^{(0)}(\phi).
\end{align*}

The path $\mathfrak{p}_{\Lambda_R}$ as a counting measure can be broken depending on the sets where the jumps occur, 
\begin{equation}\label{decomp_counting_measure}
\begin{split}
\mathfrak{p} = \sum_{k=1}^n \delta_{(\sigma_{\Lambda_R,k},t_k,B_k)} &= \sum_{k: B'_k \subset  \Lambda}\delta_{(\sigma_{\Lambda_R,k},t_k,B_k)}+\sum_{k: B'_k \cap (\Lambda_R\setminus \Lambda)\neq \emptyset} \delta_{(\sigma_{\Lambda_R,k},t_k,B_k)} \\
&\coloneqq \mathfrak{p}_\Lambda+\mathfrak{p}_{\Lambda_R,\Lambda}.
\end{split}
\end{equation}
Remember that $B' = B \cup \Lambda_{f_B}$. If there is no jump in $B_k$ intersecting the set $\Lambda_R\setminus \Lambda$ then $\mathfrak{p}_{\Lambda_R,\Lambda}=\mu_\emptyset$, the null measure. Let $0<t_{\ell_1}<\dots <t_{\ell_m}<1$ the times when the jumps of $\mathfrak{p}_{\Lambda_R,\Lambda}$ occur, also let $\ell_0=0$ and $\ell_{m+1}=1$. Then, 
\be\label{eq1_decomp_integral_classical_hamiltonian}
\int_{\mathfrak{p}} H_{\Lambda_R\setminus\Lambda}^{(0)}(\phi) = \sum_{k=0}^n (t_{k+1}-t_k) H_{\Lambda_R\setminus \Lambda}^{(0)}(\phi)(\sigma_{\Lambda_R,k}) = \sum_{j=0}^m\Bigg(\sum_{k=\ell_j}^{\ell_{j+1}-1}(t_{k+1}-t_k)H_{\Lambda_R\setminus\Lambda}(\phi)(\sigma_{\Lambda_R,k})\Bigg).
\ee
Since the Hamiltonian $H_{\Lambda_R\setminus \Lambda}^{(0)}(\phi)$ is a local function, it does not depend on the values of the configuration inside $\Lambda$, therefore is not sensitive to any jumps occurring inside $\Lambda$. Hence, 
\be\label{eq2_decomp_integral_classical_hamiltonian}
\sum_{k=\ell_j}^{\ell_{j+1}-1}(t_{k+1}-t_k)H_{\Lambda_R\setminus\Lambda}(\phi)(\sigma_{\Lambda_R,k}) = (t_{\ell_{j+1}}-t_{\ell_j})H_{\Lambda_R\setminus\Lambda}^{(0)}(\phi)(\sigma_{\Lambda_R,\ell_j}) = \int_{\mathfrak{p}_{\Lambda_R,\Lambda}}H_{\Lambda_R\setminus\Lambda}^{(0)}(\phi).
\ee
In contrast to the Hamiltonian $H_{\Lambda_R\setminus\Lambda}^{(0)}(\phi)$,
we cannot eliminate the dependence on the jumps occurring at $\Lambda_R\setminus \Lambda$ from the integral of Hamiltonian $H_\Lambda^{(0)}(\phi)+W_\Lambda^{(0)}(\phi)$ with respect to the path $\mathfrak{p}$, since the term $W_\Lambda^{(0)}(\phi)$ depends on the configurations outside $\Lambda$. We write
\[
\int_{\mathfrak{p}}(H_\Lambda^{(0)}(\phi)+W_\Lambda^{(0)}(\phi)) \coloneqq \int_{\mathfrak{p}_\Lambda}H_{\Lambda}^{\mathfrak{p}_{\Lambda_R,\Lambda}}(\phi).
\]
A similar decomposition is possible for the operator $V_{\Lambda_R}(\phi)$. First, we need to separate the product in \eqref{integralforms} depending not only on the jumps but also depending on the support of the functions $f_{B_k}$. Remember that $B_k' = B_k \cup \Lambda_{f_B}$, where $\Lambda_{f_B}$ is the smallest set $\Lambda$ in the partial order given by the inclusion, where $f_B\in C(\Omega_{\Lambda})$. 
\begin{align*}
	V_{\Lambda_R}(\phi)(\mathfrak{p}) &= \left(\prod_{k:B'_k\cap\Lambda\neq \emptyset}f_{B_k}(g_k\sigma_{\Lambda_R,k})\right)\left(\prod_{k:B'_k\subset \Lambda_R\setminus \Lambda}f_{B_k}(g_k\sigma_{\Lambda_R,k})\right) \coloneqq V_\Lambda^{\phi,\mathfrak{p}_{\Lambda_R,\Lambda}}(\mathfrak{p}_{\Lambda})V_{\Lambda_R\setminus\Lambda}^{\phi}(\mathfrak{p}_{\Lambda_R,\Lambda}).
\end{align*}
In light of the decomposition \eqref{decomp_counting_measure}, notice that the counting measures $\mathfrak{p}_\Lambda$ only have jumps inside $\Lambda$, therefore the initial and final configuration in $\Lambda_R\setminus \Lambda$, although they can differ from one another, must be the same as long as $\mathfrak{p}_{\Lambda_R,\Lambda}$ is fixed. 
We want to consider the counting measure $\mathfrak{p}_{\Lambda_R,\Lambda}$ as a boundary condition. However, it contains the information of the whole configuration in $\Lambda_R$, despite only having jumps in sets $B$ where $B'\cap(\Lambda_R\setminus\Lambda) \neq \emptyset$. Thus, in what follows, we will tacitly assume an identification between counting measures. 
\begin{definition}
    We say that two counting measures $\mathfrak{p}= \sum_{k=1}^n \delta_{(\sigma_k,t_k,B_k)},\mathfrak{q}=\sum_{j=1}^m \delta_{(\omega_j,s_j,C_j)}$ are $\Gamma$-boundary equivalent if it holds that
    \begin{itemize}
        \item[(i)] $|k: B'_k \cap \Gamma \neq \emptyset|=|j: C'_j \cap \Gamma \neq \emptyset|=n'$.
        \item[(ii)] $(t_{k_\ell},B_{k_\ell})=(s_{j_\ell},C_{j_\ell})$, where $\ell=1,\dots, n'$ is an enumeration of the set above preserving the time ordering,
    \end{itemize}
    we denote the equivalence between two counting measures $\mathfrak{p},\mathfrak{q}$ by $\mathfrak{p}\sim_\Gamma \mathfrak{q}$. Then a \textbf{path boundary condition} $\mathfrak{p}$ for $\Lambda$ is an equivalence class of $\Lambda^c$-boundary equivalent counting measures. 
\end{definition}
\begin{remark}
    The equivalence relation above says that the only part that is important in the definition of the path boundary conditions are the jumps (the times and the places occurring spin-flips) on the boundary of the set. 
\end{remark}
This motivates us to introduce the following set
\be\label{path_bc_def}
\mathfrak{P}^{\sigma_\Lambda,g_\Lambda,\mathfrak{p}_{\Lambda^c}} \coloneqq 	
\{\mathfrak{q}\in \mathfrak{P}^{\sigma_{\Lambda_R},g_{\Lambda_R}}:\mathfrak{q}\sim_{\Lambda^c}\mathfrak{p}\}.
\ee
\pagebreak

\begin{lemma}
    For any path boundary condition $\mathfrak{p}$, the set $\mathfrak{P}^{\sigma_\Lambda,g_\Lambda,\mathfrak{p}_{\Lambda^c}}$ is measurable.
\end{lemma}
\begin{proof}
    The proof follows similar lines as the one in Lemma \ref{lemma_path_measurable}. Let $s_1,...,s_m$ and $B_1,..., B_m$ be the times and jumps associated with the path boundary condition $\mathfrak{p}$. The set $\mathfrak{P}^{\sigma_\Lambda,g_\Lambda,\mathfrak{p}_{\Lambda^c}}$ can be described as the subset of counting measures $\mathfrak{q} = \sum_{k=1}^{n+m}\delta_{(\sigma_{\Lambda_R,k},t_k,C_k})$, with variable $n\geq 0$, satisfying the following conditions
    \begin{enumerate}
        \item[$(i)$] $g_k\sigma_{\Lambda_R, k} = \sigma_{\Lambda_R,k+1} , 1\leq k\leq n+m,
  \sigma_{\Lambda_R,1} = \sigma_{\Lambda_R}, \sigma_{\Lambda_R,n+m+1}= g_{\Lambda_R}\sigma_{\Lambda_R}$;
  \item[$(ii)$]  $\exists t_{k_1}<t_{k_2}<\dots<t_{k_m}$ such that $t_{k_j} = s_j$ for $j=1,\dots,m$.
    \end{enumerate}
        Let us call $\mathfrak{P}^{\sigma_{\Lambda},g_{\Lambda},\mathfrak{p}_{\Lambda^c},n}$ the set of counting measures $\mathfrak{q}$ that satisfies condition $(i),(ii)$ above and have exactly $n+m$ Dirac measures in the sum. It is easy to see that $\mathfrak{P}^{\sigma_{\Lambda},g_{\Lambda},\mathfrak{p}_{\Lambda^c}} = \cup_{n\geq 0}\mathfrak{P}^{\sigma_{\Lambda},g_{\Lambda},\mathfrak{p}_{\Lambda^c},n}$. Call $\mathbf{s} = (s_1,\dots, s_m)$ Let $S_{n,\mathbf{s}}\subset S_{n+m}$ be the subset of the $n+m$-simplex given by
        \[
        S_{n,\mathbf{s}} = \bigcup_{\substack{1\leq k_j\leq n+m \\ k_j\leq k_{j+1}}}\{(t_1,\dots,t_{n+m}\in S_{n+m}: t_{k_j}=s_j\}.
        \]
        The sets above are clearly measurable subsets of $[0,1]^{n+m}$. Using the notation of Lemma \ref{lema_measurable_countmeasure}, we can see that
	\[
	\mathfrak{P}^{\sigma_{\Lambda},g_{\Lambda},\mathfrak{p}_{\Lambda^c},n} = \widetilde{S_{n,\mathbf{s}} \times A},
	\]
	where $A \subset (\Omega_{\Lambda_R} \times (\mathcal{P}(\Lambda_R)\setminus \{\emptyset\}))^{n+m}$ is the set all $(\sigma_{\Lambda_R,1},B_1,\dots,\sigma_{\Lambda_R,n+m},B_{n+m})$ satisfying the conditions in item $(i)$ above.
\end{proof}
\begin{remark}\label{groupelement}
    Once the path boundary condition is fixed, since the counting measures $\mathfrak{q} \in \mathfrak{P}^{\sigma_\Lambda,g_\Lambda,\mathfrak{p}_{\Lambda^c}}$ can only differ on jumps inside $\Lambda$, once we know that $\sigma_{\Lambda_R\setminus\Lambda}$ the endpoint  $g_{\Lambda_R\setminus\Lambda}\sigma_{\Lambda_R\setminus\Lambda}$ is fixed for all paths. We will denote, then, by $g_{\mathfrak{p}}$ the unique group element that takes $\mathfrak{p}(0)=\sigma_{\Lambda_R}$ to $\mathfrak{p}(1)=g_{\mathfrak{p}}\sigma_{\Lambda_R}$.
\end{remark}
\begin{remark}
    In order to simplify the notation, we will write $\mathfrak{p}_{\Lambda^c}$ instead of $\mathfrak{p}_{\Lambda_R,\Lambda}$.
\end{remark}
We use the same symbol to denote the path boundary condition as the usual counting measures in order to simplify the notation. Notice that it is not true that for every $\mathfrak{P}^{\sigma_{\Lambda_R},g_{\Lambda_R}}$ there is a representative of $\mathfrak{p}$. Indeed, since the flips can have an effect at the boundary, there may be a disagreement between the paths only acting on the spins in $\Lambda$ and the path boundary condition. 
\begin{figure}[H]
\centering

\scalebox{1}{

\tikzset{every picture/.style={line width=0.75pt}} %set default line width to 0.75pt        

\begin{tikzpicture}[x=0.75pt,y=0.75pt,yscale=-1,xscale=1]
%uncomment if require: \path (0,310); %set diagram left start at 0, and has height of 310

%Shape: Square [id:dp7133105000947653] 
\draw  [line width=3.75]  (24,23) -- (287,23) -- (287,286) -- (24,286) -- cycle ;
%Shape: Square [id:dp42082651460725606] 
\draw  [line width=3.75]  (94.5,93.5) -- (218.5,93.5) -- (218.5,217.5) -- (94.5,217.5) -- cycle ;
%Curve Lines [id:da3816803845379865] 
\draw [color={rgb, 255:red, 13; green, 25; blue, 231 }  ,draw opacity=1 ][line width=3]    (218.5,161) .. controls (258.5,131) and (250.5,174) .. (285.5,154) ;

% Text Node
\draw (183,176.4) node [anchor=north west][inner sep=0.75pt]  [font=\LARGE]  {$\Lambda $};
% Text Node
\draw (239,160.4) node [anchor=north west][inner sep=0.75pt]  [font=\Large,color={rgb, 255:red, 13; green, 25; blue, 231 }  ,opacity=1 ]  {$\mathfrak{p}_{\Lambda^c}$};

\end{tikzpicture}

}

\caption{The path boundary condition.}
\end{figure}
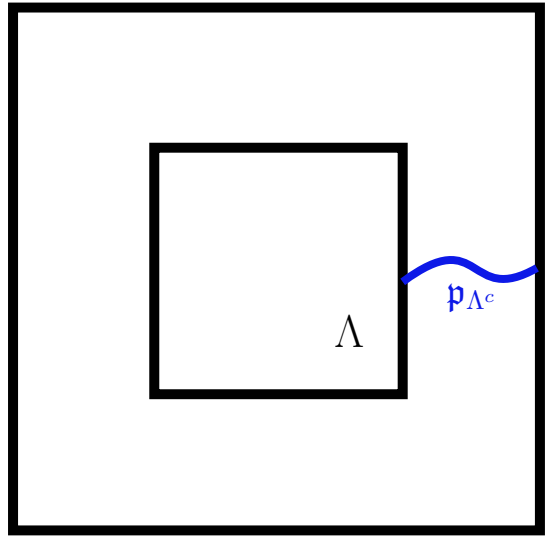	
 
For a bounded and measurable function $f:\mathbb{N}(\widetilde{\Omega}_{\Lambda_R}\times \mathcal{P}(\Lambda_R)\setminus\{\emptyset\})\rightarrow \mathbb{C}$ , we define
\[
\int_{\mathfrak{P}^{\sigma_\Lambda,g_\Lambda,\mathfrak{p}_{\Lambda^c}}} f(\mathfrak{p}_\Lambda)d\mathfrak{p}_\Lambda = \int_{\mathbb{N}(\widetilde{\Omega}_{\Lambda_R}\times \mathcal{P}(\Lambda_R)\setminus\{\emptyset\})}\mathbbm{1}_{\mathfrak{P}^{\sigma_\Lambda,g_\Lambda,\mathfrak{p}_{\Lambda^c}}}(\nu)f(\mathfrak{\nu}) dN_\Lambda(\nu),
\]
where $N_\Lambda$ is the Poisson point process defined in \eqref{ppp_lambda}. The next is the definition of the density that depends on the path boundary condition.

\begin{definition}[\textbf{Path boundary density}]
	Given $\phi$ a short-range interaction, a path boundary condition $\mathfrak{p}$, and a configuration $\omega_{\Lambda^c}$. The density operator $D_{\beta,\Lambda}^{\mathfrak{p},\omega}: \mathcal{G}_\Lambda \rightarrow \mathbb{C}$ be
	\be\label{def_density_path}
	D_{\beta,\Lambda}^{\mathfrak{p},\omega}(\sigma_\Lambda,g_\Lambda) \coloneqq \int_{\mathfrak{P}^{\sigma_\Lambda,g_\Lambda,\mathfrak{p}_{\Lambda^c}}}e^{-\beta \int_{\mathfrak{p}_\Lambda}H_\Lambda^{\mathfrak{p}_{\Lambda^c}}(\phi)}V_\Lambda^{\mathfrak{p}_{\Lambda^c}}(\phi)(\mathfrak{p}_\Lambda)d\mathfrak{p}_\Lambda,
	\ee
 where the paths being integrated are
 \[
 \mathfrak{P}^{\sigma_\Lambda,g_\Lambda, \mathfrak{p}_{\Lambda^c}} = \{\mathfrak{q}\in\mathfrak{P}^{\sigma_\Lambda\omega_{\Lambda_R\setminus\Lambda},g_{\Lambda_R}}:\mathfrak{q}\sim_{\Lambda^c}\mathfrak{p}\}.
 \]
\end{definition}
Note that the density operator $D_{\beta,\Lambda}^{\mathfrak{p},\omega}$ is a continuous function on $\mathcal{G}_\Lambda$ for every $\mathfrak{p}$ since the groupoid has the discrete topology. Before we further present the properties of the densities $D_{\beta,\Lambda}^{\mathfrak{p}}$, let us introduce a few definitions and that $\mathfrak{q}(0)=\sigma_\Lambda\omega_{\Lambda_R\setminus\Lambda}$, for every $\mathfrak{q}\in \mathfrak{P}^{\sigma_\Lambda,g_\Lambda,\mathfrak{p}_{\Lambda^c}}$. The \emph{concatenation of paths} can only be defined on paths $\mathfrak{p}$, $\mathfrak{q}$ where $\mathfrak{p}(1)=\mathfrak{q}(0)$, by the composition rules depending on a real number $0<c<1$ 
\be\label{concatenationofpaths}
\mathfrak{p} \circ_c \mathfrak{q}(t) =  \begin{cases}
	\mathfrak{p}(t/c) & 0\leq t\leq c \\
	\mathfrak{q}(\frac{c-t}{c-1}) & c \leq t \leq 1.
\end{cases}
\ee

\begin{lemma}\label{lemma_property_path}
	Let $\omega$ be a configuration and $\mathfrak{p}$ and $\mathfrak{q}$ be two path boundary condition such that $\mathfrak{p}_{\Lambda^c}(1)=\mathfrak{q}_{\Lambda^c}(0)$ and $\beta,\beta'$ it holds that 
	\be\label{lemma_path_density}
	D_{\beta,\Lambda}^{\mathfrak{p},\omega}\cdot D_{\beta',\Lambda}^{\mathfrak{q},\omega'} = D_{\beta+\beta',\Lambda}^{\mathfrak{p}\circ_c \mathfrak{q},\omega},
	\ee
 where $c = \beta/(\beta+\beta')$ and $\omega'_{\Lambda^c}=\mathfrak{p}_{\Lambda^c}(1)$.
\end{lemma}
\begin{proof}
	Given an element $(\sigma_\Lambda, g_\Lambda)$ of the groupoid $\mathcal{G}_\Lambda$, the product is given by
	\be\label{property_path_eq1}	
	D_{\beta,\Lambda}^{\mathfrak{p},\omega}\cdot D_{\beta',\Lambda}^{\mathfrak{q},\omega'}(\sigma_\Lambda, g_\Lambda) = \sum_{(\omega_\Lambda,k_\Lambda)\in \mathcal{G}_\Lambda^{g_\Lambda\sigma_\Lambda}}D_{\beta,\Lambda}^{\mathfrak{p},\omega}(\omega_\Lambda, k_\Lambda)D_{\beta',\Lambda}^{\mathfrak{q},\omega'}(\sigma_\Lambda,k_\Lambda^{-1}g_\Lambda).	
	\ee
	For each $(\omega_\Lambda,k_\Lambda)$, using the definition for the densities \eqref{def_density_path}, we have
	\begin{equation}\label{property_path_eq2}
		\begin{split}
			&D_{\beta,\Lambda}^{\mathfrak{p},\omega}(\omega_\Lambda,k_\Lambda)D_{\beta',\Lambda}^{\mathfrak{q},\omega'}(\sigma_\Lambda, k_\Lambda^{-1}g_\Lambda) \\[2ex]
        &=\int_{\mathfrak{P}^{\omega_\Lambda,k_\Lambda,\mathfrak{p}_{\Lambda^c}}}\hspace{-0.8cm}e^{-\beta \int_{\mathfrak{p}_\Lambda}H_\Lambda^{\mathfrak{p}_{\Lambda^c}}(\phi)}V_\Lambda^{\mathfrak{p}_{\Lambda^c}}(\phi)(\mathfrak{p}_\Lambda)d\mathfrak{p}_\Lambda\int_{\mathfrak{P}^{\sigma_\Lambda,k_\Lambda^{-1}g_\Lambda,\mathfrak{q}_{\Lambda^c}}}\hspace{-0.8cm}e^{-\beta' \int_{\mathfrak{p}_\Lambda}H_\Lambda^{\mathfrak{q}_{\Lambda^c}}(\phi)}V_\Lambda^{\phi,\mathfrak{q}_{\Lambda^c}}(\mathfrak{p}_\Lambda)d\mathfrak{p}_\Lambda \\[2ex]
        &=\int_{\mathfrak{P}^{\omega_\Lambda,k_\Lambda,\mathfrak{p}_{\Lambda^c}}}\int_{\mathfrak{P}^{\sigma_\Lambda,k_\Lambda^{-1}g_\Lambda,\mathfrak{q}_{\Lambda^c}}}\hspace{-0.8cm}e^{-\beta \int_{\mathfrak{p}_\Lambda}H_\Lambda^{\mathfrak{p}_{\Lambda^c}}(\phi)-\beta' \int_{\mathfrak{q}_\Lambda}H_\Lambda^{\mathfrak{q}_{\Lambda^c}}(\phi)}V_\Lambda^{\mathfrak{p}_{\Lambda^c}}(\phi)(\mathfrak{p}_\Lambda)V_\Lambda^{\phi,\mathfrak{q}_{\Lambda^c}}(\mathfrak{q}_\Lambda)d\mathfrak{q}_\Lambda d\mathfrak{p}_\Lambda.
		\end{split}	
	\end{equation}
	We know that the paths $\mathfrak{p}_\Lambda \in \mathfrak{P}_{\Lambda}^{\omega_\Lambda,k_\Lambda,\mathfrak{p}_{\Lambda^c}}$ and $\mathfrak{q}_\Lambda \in \mathfrak{P}_\Lambda^{\sigma_\Lambda,k_\Lambda^{-1}g_\Lambda,\mathfrak{q}_{\Lambda^c}}$ satisfy $\mathfrak{p}_\Lambda(1) = \mathfrak{q}_\Lambda(0)$. Together with a trivial change of variables, this observation allows us to show that
	\begin{equation}\label{property_path_eq3}
		\begin{split}
			\beta \int_{\mathfrak{p}_\Lambda}H_\Lambda^{\mathfrak{p}_{\Lambda^c}}(\phi)+\beta' \int_{\mathfrak{q}_\Lambda}&H_\Lambda^{\mathfrak{q}_{\Lambda^c}}(\phi) = \beta\int_{\mathfrak{p}}(H_\Lambda^{(0)}(\phi)+W_\Lambda^{(0)}(\phi))+\beta'\int_{\mathfrak{q}}(H_\Lambda^{(0)}(\phi)+W_\Lambda^{(0)}(\phi))\\[2ex] 
			&=\beta \int_0^1 (H_\Lambda^{(0)}(\phi)+W_\Lambda^{(0)}(\phi))(\mathfrak{p}(t))dt+\beta'\int_0^1(H_\Lambda^{(0)}(\phi)+W_\Lambda^{(0)}(\phi))(\mathfrak{q}(t))dt \\[2ex]
			&=(\beta+\beta')\int_0^1 (H_\Lambda^{(0)}(\phi)+W_\Lambda^{(0)}(\phi))(\mathfrak{p}\circ\mathfrak{q}(t))dt \\[2ex]
			&=(\beta+\beta')\int_{\mathfrak{p}_\Lambda\circ_c \mathfrak{q}_\Lambda} H_\Lambda^{\mathfrak{p}_{\Lambda^c}\circ_c \mathfrak{q}_{\Lambda^c}}(\phi),
		\end{split}
	\end{equation}
	where $c=\beta/(\beta+\beta')$. A similar procedure can be carried for the operators $V^{\phi,\mathfrak{p}_{\Lambda^c}}_\Lambda$. Let us write the paths $\mathfrak{p}$ and $\mathfrak{q}$ in counting measure form as $\mathfrak{p}=\sum_{k=1}^n \delta_{(\omega_k, t_k, B_k)}$ and $\mathfrak{q} = \sum_{j=1}^m\delta_{(\omega_j,t_j,B_j)}$, then
	\begin{equation}\label{property_path_eq4}
		\begin{split}
			V^{\mathfrak{p}_{\Lambda^c}}_\Lambda(\phi)(\mathfrak{p}_\Lambda)V^{\mathfrak{q}_{\Lambda^c}}_\Lambda(\phi)(\mathfrak{q}_\Lambda) &= \prod_{k:B'_k\cap\Lambda\neq \emptyset}f_{B_k}(g_k\sigma_k)\prod_{j:B'_j\cap\Lambda\neq \emptyset}f_{B_j}(g_j\sigma_j) \\[2ex]
			&=V^{\mathfrak{p}_{\Lambda^c}\circ\mathfrak{q}_{\Lambda^c}}_\Lambda(\phi)(\mathfrak{p}_\Lambda\circ_c\mathfrak{q}_\Lambda).
		\end{split}
	\end{equation}
	Using Lemma \ref{lemma_ppp_decomp} and summing over all $(\omega_\Lambda,k_\Lambda) \in \mathcal{G}_{\Lambda}^{g_\Lambda \sigma_\Lambda}$ we get
	\[
	\sum_{(\omega_\Lambda,k_\Lambda)\in \mathcal{G}_\Lambda^{g_\Lambda\sigma_\Lambda}} \int_{\mathfrak{P}^{\omega_\Lambda,k_\Lambda,\mathfrak{p}_{\Lambda^c}}}\int_{\mathfrak{P}^{\sigma_\Lambda,k_\Lambda^{-1}g_\Lambda,\mathfrak{q}_{\Lambda^c}}}d\mathfrak{q}_\Lambda d\mathfrak{p}_\Lambda = \int_{\mathfrak{P}^{\sigma_\Lambda,g_\Lambda, \mathfrak{p}_{\Lambda^c}\circ \mathfrak{q}_{\Lambda^c}}} d\mathfrak{p}_\Lambda.
	\]
	Plugging Equations \eqref{property_path_eq3}, \eqref{property_path_eq4} into \eqref{property_path_eq2} and using the identity above we get the result.
\end{proof}

Given a path $\mathfrak{p}_\Lambda$, we can also define an involutive operation that reverses the path 
\be\label{involutionpath}
\mathfrak{p}^{-1}(t)= \mathfrak{p}(1-t).
\ee
\begin{lemma}\label{lemma_adjoint_density}
	Let $\omega$ be a configuration and $\mathfrak{p}$ be a path boundary condition. Let also $\omega'$ be a configuration such that $\mathfrak{p}_{\Lambda^c}(1)=\omega'$. Then,
	\be\label{adjoint_density}
	(D_{\beta,\Lambda}^{\mathfrak{p},\omega})^* = D_{\beta,\Lambda}^{\mathfrak{p}^{-1},\omega'}.
	\ee
\end{lemma}
\begin{proof}
	We just need to check what happens to the functions $\int_{\mathfrak{p}_\Lambda}H_{\Lambda}^{\mathfrak{p}_{\Lambda^c}}$ and $V_{\Lambda}^{\phi,\mathfrak{p}_{\Lambda^c}}(\mathfrak{p}_\Lambda)$. For the first one, we have,
	\begin{align*}
		&\int_{\mathfrak{p}} (H_\Lambda^{(0)}(\phi)+W_\Lambda^{(0)}(\phi)) = \sum_{k=0}^n (t'_{k+1}-t'_k)(H_\Lambda^{(0)}(\phi)+W_\Lambda^{(0)}(\phi))(\sigma_k) \\
		&=\sum_{k=0}^n ((1-t'_k)-(1-t'_{k+1}))(H_\Lambda^{(0)}(\phi)+W_\Lambda^{(0)}(\phi))(\sigma_k) = \int_{\mathfrak{p}^{-1}}(H_\Lambda^{(0)}(\phi)+W_\Lambda^{(0)}(\phi)),
	\end{align*}     
	and for the second,
	\begin{align*}
		V_\Lambda(\phi)(\mathfrak{p})=\prod_{k=1}^n f_{B_k}\cdot\sigma_{B_k}^{(1)}(\sigma_{\Lambda,k},g_k) = \prod_{k=1}^n \overline{f_{B_k}\cdot\sigma_{B_k}^{(1)}(g_k\sigma_{\Lambda,k},g_k^{-1})} = \overline{V_\Lambda(\phi)(\mathfrak{p}^{-1})}.
	\end{align*}
	Notice that $H_\Lambda^{(0)}(\phi)+W_\Lambda^{(0)}(\phi)$ is a real-valued function, therefore we get the desired result. 
\end{proof}

\subsubsection{On previous definitions of boundary conditions}

A special class of path boundary conditions $\mathfrak{p}_{\Lambda^c}$ is the one related to constant paths. We will refer to these boundary conditions as \emph{classical boundary conditions}. These are important path boundary conditions since they can be obtained directly by a Poisson point process representation of the Gibbs density of a specific Hamiltonian, which we will proceed to describe now. Let $\omega \in \Omega_{\Lambda^c}$ be a configuration and define the evaluation functional $\text{ev}_\omega$ on the dense subalgebra $C_c(\mathcal{G}_{\Lambda^c})$. By the definition of the norm in the regular representation, the evaluation functionals are actually states. Thus, we can form the conditional expectation $\text{Id} \otimes \text{ev}_{\omega_{\Lambda^c}}:C_c(\mathcal{G})\rightarrow C_c(\mathcal{G}_\Lambda)$ by 
\[
\text{Id}\otimes \text{ev}_{\omega_{\Lambda^c}}(f)(\sigma_\Lambda,g_\Lambda) = f(\sigma_\Lambda\omega_{\Lambda^c},g_\Lambda),
\]
and define the \emph{Hamiltonian with boundary condition $\omega$} by the expression 
\be\label{classicalbc}
H^\omega_\Lambda(\phi) \coloneqq \text{Id}\otimes \text{ev}_{\omega_{\Lambda^c}}(H_\Lambda(\phi) + W_\Lambda(\phi)).
\ee
The classical boundary conditions in \eqref{classicalbc} are the base for extensions of Pirogov-Sinai theory for quantum spin and fermionic systems that appeared previously in Datta, Fernand\'ez, and Fr\"ohlich \cite{Datta1} and also in Borgs, Koteck\'y, and Ueltschi \cite{BKU} for quantum spin systems, and Borgs and Koteck\'y \cite{BK} for fermionic systems. A proposal for boundary conditions for quantum spin systems appeared before in Israel's book \cite{Is} in much greater generality. See also Simon's book \cite{Simon}. What Simon proposes is to use a general state $\nu$ in $C^*(\mathcal{G}_{\Lambda^c})$ and use the product structure of $C^*(\mathcal{G}) \simeq C^*(\mathcal{G}_\Lambda) \otimes C^*(\mathcal{G}_{\Lambda^c})$ to define $\text{Id}\otimes \nu (f \otimes g) = \nu(g) f$ and extend it by linearity. In Israel, the proposal is to consider $\nu$ to be a pure state. The Hamiltonian is
\begin{equation}\label{israelbc}
H_\Lambda^\nu(\phi) \coloneqq  H_\Lambda(\phi) + \text{Id}\otimes \nu (W_\Lambda(\phi)).
\end{equation}
The map $\text{Id}\otimes \nu$ preserves adjoints since $\nu$ is a state, hence $H_\Lambda^\nu(\phi)$ is self-adjoint. We can define Gibbs states depending on $\nu$ similarly to Equation \eqref{freegibbsstate}. Notice that for every local operator $f$ we have $\lim_{\Lambda \nearrow \Z^d} i[H_\Lambda^\nu(\phi),f] = \D^\phi(f)$ and Theorem \eqref{dynamicsexistence} implies then that the dynamics $\tau^\nu_{\Lambda,t}$ converges strongly to $\tau_t$. By Theorem \ref{limitkms}, every accumulation point of the sequences $\{\mu_{\beta,\phi,\Lambda}^\nu\}_{\Lambda \in \mathcal{P}_f(\Z^d)}$ is a $(\tau,\beta)$-KMS state. 

It is instructive to think about the case of classical interactions. Since $W_\Lambda(\phi)$ will be a continuous function in $C(\Omega)$, by the Riesz-Markov theorem the boundary condition by Israel is a probability measure $\nu$ in $\Omega_{\Lambda^c}$, therefore Simon and Israel proposals for boundary conditions have the form
\[
H_\Lambda^\nu(\phi) = H_\Lambda(\phi) + \int_{\Omega_{\Lambda^c}}W_\Lambda(\phi)d\nu.
\]
This procedure generates any probability measure in $\Omega_{\Lambda^c}$, even when one restricts $\nu$ to be a pure state. 

\section{Path Gibbs functionals}

We readily see that the densities $D_{\beta,\Lambda}^{\mathfrak{p},\omega}$ are not always self-adjoint (they are self-adjoint, for example, when the path is symmetric, i.e., $\mathfrak{p} = \mathfrak{q} \circ_{1/2} \mathfrak{q}^{-1}$, for some path $\mathfrak{q}$). The densities $D_{\beta,\Lambda}^{\mathfrak{p},\omega}$ can be zero sometimes. Indeed, we must recall that a classical interaction $\phi$ is zero whenever $g\neq 1$, therefore there is no quantum part $V_\Lambda(\phi)$, yielding us a trivial random representation. Another way of having the density $D_{\beta,\Lambda}^{\mathfrak{p},\omega}$ identically zero is when the jumps in the given boundary path $\mathfrak{p}_{\Lambda^c}$ cannot be realized by the interaction, i.e., there are no sets $B_1,\dots, B_n$ such that $f_{B_i} \neq 0$ in the Hamiltonian \eqref{hamiltonian} and 
\[
B_1\Delta B_2\Delta \dots \Delta B_n = \{x\in \Lambda_R\setminus \Lambda: \mathfrak{p}_{\Lambda_R}(0)_x\neq \mathfrak{p}_{\Lambda_R}(1)_x\}=\supp g_{\mathfrak{p},\Lambda^c},
\] 
where $g_{\mathfrak{p},\Lambda^c}$ is the restriction to $\Lambda^c$ of the group element defined in Remark \ref{groupelement}. But different phenomena can happen since the only restriction on the quantum part $V_\Lambda(\phi)$ is to be self-adjoint, meaning that it can take potentially complex values when the arrow $(\sigma,g)$ have $g\neq 1$. Hence, even if the densities are equal to $0$, this does not guarantee that the partition functions
\be\label{path_partition_function}
Z_{\beta,\phi,\Lambda}^{\mathfrak{p},\omega}\coloneqq\sum_{\sigma_\Lambda \in \Omega_\Lambda}  D_{\beta,\Lambda}^{\mathfrak{p},\omega}(\sigma_\Lambda,g_{\mathfrak{p},\partial_R\Lambda}),  
\ee
are different from zero, where $\partial_R\Lambda = \{x\in \Lambda:\exists y\in \Lambda^c \text{ s.t. } |x-y|\leq R\}$. Then Gibbs functionals defined using the densities $D_{\beta,\Lambda}^{\mathfrak{p},\omega}$ may not make sense, even if we allow the functionals to not be states. We introduce the following definition inspired by this problem:
\begin{definition}
	An interaction $\phi$ is said to be \textbf{admissible} when for all $\Lambda \in \mathcal{P}_f(\Z^d)$ we have	
 \be\label{decoherence}
	D_{\beta,\Lambda}^{\mathfrak{p},\omega}(\sigma_\Lambda,g_\Lambda) \neq 0 \text{ for some } (\sigma_\Lambda,g_\Lambda) \in \mathcal{G}_\Lambda \Longrightarrow Z_{\beta,\phi,\Lambda}^{\mathfrak{p},\omega}\neq 0.
	\ee
 \end{definition}
 Another important class of interactions is the subject of the next definition. 
\begin{definition}
	An interaction is said to be \textbf{stoquastic} if and only if for any $\Lambda \in \mathcal{P}_f(\mathbb{Z}^d)$ it holds that
	\[
	g \neq 1 \Rightarrow \phi_\Lambda(\sigma_\Lambda,g_\Lambda) \leq 0
	\]
\end{definition}

Stoquastic interactions, and consequently stoquastic Hamiltonians \cite{Klassen}, are ubiquitous in quantum statistical mechanics due to the absence of what is known as the \emph{sign problem} \cite{Lohr}. Although the definition is basis dependent, we use the same terminology since with our definition a stoquastic interaction will be stoquastic in the usual definition for the basis in $\ell^2(\mathcal{G}_\Lambda)$ given by the delta functions on the arrows (See the construction of the regular representation in Chapter 4). For $q=2$, a fairly large class of stoquastic Hamiltonians is given by \eqref{hamiltonian} when the polynomials $f_B$ are nonnegative functions in $C(\Omega_\Lambda)$. When the interaction is stoquastic, we have the property 
\[
D_{\beta,\Lambda}^{\mathfrak{p},\omega}(\sigma_\Lambda,g_\Lambda) \geq 0,
\]
for every $(\sigma_\Lambda,g_\Lambda)\in \mathcal{G}_\Lambda$. 
\begin{proposition}
    Every stoquastic interaction is admissible.
\end{proposition}
\begin{proof}
    Suppose that $Z_{\beta, \phi,\Lambda}^{\mathfrak{p},\omega}=0$. Since the interaction is stoquastic, every term in the series calculating $D_{\beta,\Lambda}^{\mathfrak{p},\omega}$ is nonnegative and that exponentials of real numbers are positive numbers, this implies that
    \[
    V_{\Lambda}^{\mathfrak{p}_{\Lambda^c}}(\phi)(\mathfrak{p}_\Lambda)=0,
    \]
    for every path $\mathfrak{p}_\Lambda$ in the set $\mathfrak{P}^{\sigma_\Lambda,g_{\mathfrak{p}_{\partial_R\Lambda}},\mathfrak{p}}$. If one considers the case where $\mathfrak{p}_\Lambda = \emptyset$ and let $B_k$ be the jumps associated with $\mathfrak{p}_{\Lambda^c}$. Our conclusion is that 
    $$\prod_{k=1}^n f_{B_k}(g_k\sigma_{\Lambda,k})=0,$$ for any starting configuration. Therefore, this polynomial is identically zero for every configuration. Since this polynomial always appears as a factor during the calculation of $D_{\beta,\Lambda}^{\mathfrak{p},\omega}$, we conclude that the latter must be identically 0. 
\end{proof}
Important examples of two-body interactions described in Example \ref{ex.heisenberg} are stoquastic, 
\begin{example}
	Let $J^{(1)}_{x,y}, J^{(2)}_{x,y}, J^{(3)}_{x,y}, h_x, \varepsilon_x, \varrho_x$ be real numbers defining a interaction $\phi_\Lambda$ in \ref{ex.heisenberg}. Then, the interaction $\phi_\Lambda$ is stoquastic when $J_{x,y}^{(1)}\geq 0$, $|J_{x,y}^{(2)}|\leq J_{x,y}^{(1)}$, $\varepsilon_x \geq 0$, and $|\varrho_x|\leq \varepsilon_x$. Any choices of $J_{x,y}^{(3)}$ and $h_x$ are possible. 
\end{example}

In particular, the ferromagnetic nearest neighbor Heisenberg model and the XY model are stoquastic, and the classical part can have any sign in the coupling constants. There are many examples of non-stoquastic systems that can in principle be treated by our methods. For instance, the constraint that the fields $\varepsilon_x$ are nonnegative seems artificial. In these cases, we can transform the interaction, through an *-automorphism, that preserves the structure of equilibrium states into a stoquastic interaction, thus eliminating this apparent sign problem \cite{Lohr}. The sign problem appears in the physics literature when one tries to use the Monte Carlo method to estimate means of relevant quantities for your system. To be able to do this, one usually relies on a path-integral representation. However, since the exponential of a self-adjoint operator may have complex off-diagonal terms, the behavior of the sign of the weights coming from the path-integral representation may pose some complications to this approach. 

The sign problem is not exclusively a theoretical/numerical problem in statistical mechanics, the sign problem can also appear when trying to study the models through rigorous methods. Ignoring the sign may lead to diverging expansions \cite{Datta2} or phase transition results that get increasingly worse as the temperature gets lower \cite{LM}. However, this sign does not prevent the system from having nice spatial decomposition properties, as shown in \cite{BK, Datta1},  motivating us to possibly include these systems in the quantum specification theory developed here.

\begin{definition}[\textbf{Finite Volume Path Gibbs functional}]\label{def_gibbs_path} Given a path $\mathfrak{p}_{\Lambda^c}$ we can define a linear functional for each $f\in C_c(\mathcal{G})$ by the following expression
	\be\label{f_gibbs_path}
	\mu^{\mathfrak{p},\omega}_{\beta,\phi,\Lambda}(f) = \frac{1}{Z_{\beta,\phi,\Lambda}^{\mathfrak{p},\omega}}\sum_{\substack{\sigma_\Lambda \in \Omega_\Lambda \\ (\eta_\Lambda,g_\Lambda)\in \mathcal{G}_\Lambda^{\sigma_\Lambda}}}\hspace{-0.5cm}f(\eta_\Lambda\mathfrak{p}_{\Lambda^c}(1), g_\Lambda g_{\mathfrak{p},\Lambda^c}) D_{\beta,\Lambda}^{\mathfrak{p},\omega}(\sigma_\Lambda,g_\Lambda^{-1}),
	\ee
	where $g_{\mathfrak{p},\Lambda^c}$ is the unique element of the group $G$ such that $\mathfrak{p}_{\Lambda^c}(0) =  g_{\mathfrak{p},\Lambda^c}\mathfrak{p}_{\Lambda^c}(1)$, and the normalization is given by the partition function defined in Equation \eqref{path_partition_function}. When the path $\mathfrak{p}$ is empty, there are no jumps, and one recovers the classical boundary condition $\omega$ in \eqref{classicalbc}.  
\end{definition}

The finite volume path Gibbs functionals are obviously linear and continuous on $C^*(\mathcal{G})$. There is also a version of the \emph{consistency condition} known to classical systems.

\begin{proposition}[\textbf{Consistency Condition}]\label{Consistency}
	Let $\phi$ be a short-range admissible interaction and $\Lambda \in \mathcal{F}(\Z^d)$. Then for any $\Delta \subset \Lambda$ and $f \in C_c(\mathcal{G})$ we have
	\be
	\mu_{\beta,\phi,\Lambda}^{\mathfrak{p},\omega}(f) = \mu_{\beta,\phi,\Lambda}^{\mathfrak{p},\omega}(\mu_{\beta,\phi,\Delta}^{(\cdot)}(f))
	\ee
\end{proposition}
\begin{proof}
	The finite volume path Gibbs functional is         \begin{equation}\label{cons_eq1}
		\begin{split} \mu_{\beta,\phi,\Lambda}^{\mathfrak{p},\omega}(f) &= \frac{1}{Z_{\beta,\phi,\Lambda}^{\mathfrak{p},\omega}}\sum_{\substack{\sigma_\Lambda \in \Omega_\Lambda \\ (\eta_\Lambda,g_\Lambda) \in \mathcal{G}_\Lambda^{\sigma_\Lambda}}} f(\eta_\Lambda \mathfrak{p}_{\Lambda^c}(1), g_\Lambda g_{\mathfrak{p},\Lambda^c}) D_{\beta,\Lambda}^{\mathfrak{p},\omega}(\sigma_\Lambda,g_\Lambda^{-1})\\
			&=\frac{1}{Z_{\beta,\phi,\Lambda}^{\mathfrak{p},\omega}}\sum_{\substack{\sigma_\Lambda \in \Omega_\Lambda \\ (\eta_\Lambda,g_\Lambda) \in \mathcal{G}_\Lambda^{\sigma_\Lambda}}} \int_{\mathfrak{P}^{\sigma_\Lambda,g^{-1}_\Lambda,\mathfrak{p}_{\Lambda^c}}}f(\eta_\Lambda \mathfrak{p}_{\Lambda^c}(1), g_\Lambda g_{\mathfrak{p},\Lambda^c}) e^{-\beta \int_{\mathfrak{p}_\Lambda}H_\Lambda^{\mathfrak{p}_{\Lambda^c}}(\phi)}V_\Lambda^{\mathfrak{p}_{\Lambda^c}}(\phi)(\mathfrak{p}_\Lambda)d\mathfrak{p}_\Lambda.
		\end{split}
	\end{equation}
	Then we can break the first sum depending on the configurations into
	\be\label{cons_eq2}
	\sum_{\substack{\sigma_\Lambda \in \Omega_\Lambda \\ (\eta_\Lambda,g_\Lambda) \in \mathcal{G}_\Lambda^{\sigma_\Lambda}}} = \sum_{\substack{\sigma_{\Lambda\setminus\Delta}\in \Omega_{\Lambda\setminus\Delta} \\ (\eta_{\Lambda\setminus\Delta},g_{\Lambda\setminus\Delta}) \in \mathcal{G}_{\Lambda\setminus\Delta}^{\sigma_{\Lambda\setminus\Delta}}}}\sum_{\substack{\sigma_{\Delta}\in \Omega_\Delta \\ (\eta_\Delta,g_\Delta) \in \mathcal{G}_\Delta^{\sigma_\Delta}}}.
	\ee
  
Since $N_\Lambda = N_\Delta + N_{\Lambda,\Delta}$ as in Equation \eqref{ppp_lambda}. By Lemma \ref{lemma_ppp_decomp} we have
\begin{align}\label{cons_eq11}
&\int_{\mathfrak{P}^{\sigma_\Lambda,g_\Lambda^{-1},\mathfrak{p}_{\Lambda^c}}}f(\mathfrak{p}_\Lambda)d\mathfrak{p}_\Lambda = \int_{\mathbb{N}(\widetilde{\Omega}_{\Lambda_R}\times \mathcal{P}(\Lambda_R)\setminus\{\emptyset\})}\mathbbm{1}_{\mathfrak{P}^{\sigma_\Lambda,g_\Lambda^{-1},\mathfrak{p}_{\Lambda^c}}}(\nu)f(\nu)dN_\Lambda(\nu) \\
&= \int_{\mathbb{N}(\widetilde{\Omega}_{\Lambda_R}\times \mathcal{P}(\Lambda_R)\setminus\{\emptyset\})}\int_{\mathbb{N}(\widetilde{\Omega}_{\Lambda_R}\times \mathcal{P}(\Lambda_R)\setminus\{\emptyset\})}\mathbbm{1}_{\mathfrak{P}^{\sigma_\Lambda,g_\Lambda^{-1},\mathfrak{p}_{\Lambda^c}}}(\nu_1+\nu_2)f(\nu_1+\nu_2)dN_\Delta(\nu_1)dN_{\Lambda\setminus \Delta}(\nu_2). \nonumber
\end{align}
Notice that for the innermost integral, the counting measure $\nu_2$ is fixed and $\nu_1$ is a counting measure drawn accordingly to the point process $N_\Delta$. Thus it holds that if $\nu_1+\nu_2 +\mathfrak{p}_{\Lambda^c}\sim_{\Lambda^c} \mathfrak{p}$ then $\nu_1+\nu_2 +\mathfrak{p}_{\Lambda^c}\sim_{\Delta^c} \nu_2+\mathfrak{p}$. Since $\nu_2$ is a counting measure draw accordingly with $N_{\Lambda,\Delta}$, it only has jumps such that $B'\cap (\Lambda\setminus\Delta) \neq \emptyset$, where $B'=\Lambda_{f_B}\cup B$. Then, define the set
\[
\mathfrak{B}_{\Lambda\setminus\Delta}^{\sigma_{\Lambda\setminus\Delta},g_{\Lambda\setminus\Delta}}\coloneqq \Bigg\{\begin{array}{@{}c|c@{}}
	\nu_2=\sum_{k=1}^n\delta_{(\sigma_{\Lambda_R,k},t_k, B_k)}, n\geq 0 
	&
	\begin{matrix}
\sigma_{\Lambda_R,1} = \sigma_{\Lambda_R}, \sigma_{\Lambda,n+1}= g_{\Lambda_R}\sigma_{\Lambda_R}
  \\[2ex]
 g_k\sigma_{\Gamma, k} = \sigma_{\Gamma,k+1} , B'_k\cap (\Lambda \setminus \Delta)^c \neq \emptyset, 1\leq k\leq n
  \end{matrix}
\end{array}
\Bigg\},
\]
where $g_k\in G_{\Lambda_R}$ such that $\supp g_k = B_k$ and $\Gamma = \Lambda_R\setminus\Delta$. It is measurable by Lemma \ref{sec:appendix:ppp}. Although the counting measures in $\mathfrak{B}_{\Lambda\setminus\Delta}^{\sigma_{\Lambda\setminus\Delta},g_{\Lambda\setminus\Delta}}$ do not form paths as we defined previously since the configurations inside $\Delta$ do not need to be connected by the flips $B_k$. But they can always be completed into a path by adding the relevant transitions inside $\Delta$. Hence it holds for all bounded and measurable functions $f:\mathbb{N}(\widetilde{\Omega}_{\Lambda_R}\times \mathcal{P}(\Lambda_R)\setminus\{\emptyset\})\rightarrow \mathbb{C}$
\be\label{cons_eq10}
\mathbbm{1}_{\mathfrak{P}^{\sigma_\Lambda,g_\Lambda^{-1},\mathfrak{p}_{\Lambda^c}}}(\nu_1+\nu_2) = \mathbbm{1}_{\mathfrak{B}_{\Lambda\setminus\Delta}^{\sigma_{\Lambda\setminus\Delta},g_{\Lambda\setminus\Delta}^{-1}}}(\nu_2)\mathbbm{1}_{\mathfrak{P}^{\sigma_\Delta,g_\Delta^{-1},\nu_2+\mathfrak{p}_{\Lambda^c}}}(\nu_1).
\ee
In order to simplify the notation, we will denote the elements of $\mathfrak{B}_{\Lambda\setminus\Delta}^{\sigma_{\Lambda\setminus\Delta},g_{\Lambda\setminus\Delta}^{-1}}$ by $\mathfrak{p}_{\Lambda,\Delta}$ and write $\mathfrak{p}_{\Lambda,\Delta}+\mathfrak{p}_{\Lambda^c} = \mathfrak{p}_{\Delta^c}$. Hence using Equation \eqref{cons_eq11} and \eqref{cons_eq10} we get 
	\be\label{cons_eq3}
	\int_{\mathfrak{P}^{\sigma_\Lambda,g^{-1}_\Lambda,\mathfrak{p}_{\Lambda^c}}} f(\mathfrak{p}_\Lambda) d\mathfrak{p}_\Lambda
	=\int_{\mathfrak{B}_{\Lambda\setminus\Delta}^{\sigma_{\Lambda\setminus\Delta},g_{\Lambda\setminus\Delta}^{-1}}}\int_{\mathfrak{P}^{\sigma_{\Delta},g^{-1}_{\Delta},\mathfrak{p}_{\Delta^c}}} f(\mathfrak{p}_\Lambda)d \mathfrak{p}_\Delta d\mathfrak{p}_{\Lambda,\Delta}.
	\ee
\begin{comment}
     Since all the paths $\mathfrak{p}_\Delta \in \mathfrak{P}^{\sigma_\Delta,g_\Delta^{-1},\mathfrak{p}_{\Delta^c}}$ have the same beginning and end, we know that $g_{\mathfrak{p}}$ must be equal for all $\mathfrak{p}$, namely, $g_{\Lambda \setminus\Delta} g_{\mathfrak{p},\Lambda^c}$.
\end{comment} 
Let $\varrho$ be any configuration such that $\varrho_{\Delta^c}=\eta_{\Lambda\setminus\Delta}\omega_{\Lambda^c}$. Using Equations \eqref{cons_eq2} and \eqref{cons_eq3} into Equation \eqref{cons_eq1} gives us to the innermost sum is equal to
	\be\label{cons_eq4}
	\sum_{\substack{\sigma_{\Delta}\in \Omega_\Delta \\ (\eta_\Delta,g_\Delta) \in \mathcal{G}_\Delta^{\sigma_\Delta}}} f(\eta_\Delta \varrho_{\Delta^c}, g_\Delta g_{\mathfrak{p},\Delta^c})\int_{\mathfrak{P}^{\sigma_{\Delta},g^{-1}_{\Delta},\mathfrak{p}_{\Delta^c}}}  F_\Lambda(\mathfrak{p}_\Lambda)d \mathfrak{p}_\Delta,
	\ee
	where $$F_\Lambda(\mathfrak{p}_\Lambda)=e^{-\beta \int_{\mathfrak{p}_\Lambda}H_\Lambda^{\mathfrak{p}_{\Lambda^c}}(\phi)}V_\Lambda^{\mathfrak{p}_{\Lambda^c}}(\phi)(\mathfrak{p}_\Lambda).$$  We can decompose the classical Hamiltonian as 
	\[
	H_\Lambda^{(0)}(\phi)+W_\Lambda^{(0)}(\phi)= H_\Delta^{(0)}(\phi)+W_\Delta^{(0)}(\phi) +W_{\Lambda,\Delta}^{(0)}(\phi),
	\]
	where
	\[
	W_{\Lambda,\Delta}^{(0)}(\phi)\coloneqq-\sum_{\substack{A\cap \Lambda \neq \emptyset \\ A\cap \Delta = \emptyset}}J_A\sigma_A,
	\]
	yielding
	\[
	\int_{\mathfrak{p}}(H_\Lambda^{(0)}(\phi)+W_\Lambda^{(0)}(\phi)) = \int_{\mathfrak{p}}(H_\Delta^{(0)}(\phi)+W_\Delta^{(0)}(\phi))+\int_{\mathfrak{p}_{\Delta^c}}W_{\Lambda,\Delta}^{(0)}(\phi),
	\]
	by arguments similar to the ones that gave us Equations \eqref{eq1_decomp_integral_classical_hamiltonian} and \eqref{eq2_decomp_integral_classical_hamiltonian}. A similar decomposition holds for $V_\Lambda(\phi)$,
	\begin{equation*}
		\begin{split}
			V^{\mathfrak{p}_{\Lambda^c}}_\Lambda(\phi)(\mathfrak{p}) = \prod_{k:B'_k\cap\Lambda\neq \emptyset}f_{B_k}(g_k\sigma_{\Lambda_R,k}) &= \prod_{k:B'_k\cap\Delta\neq \emptyset}f_{B_k}(g_k\sigma_{\Lambda_R,k})\prod_{\substack{k:B'_k\cap\Lambda\neq \emptyset \\ B'_k\cap\Delta = \emptyset}}f_{B_k}(g_k\sigma_{\Lambda_R,k}) 
			\\
			&\coloneqq V_\Delta^{\mathfrak{p}_{\Delta^c}}(\phi)(\mathfrak{p}_\Delta)V_{\Lambda,\Delta}(\phi)(\mathfrak{p}_{\Delta^c}).
		\end{split}
	\end{equation*}
	Since $V_{\Lambda,\Delta}(\phi)$ and $W_{\Lambda,\Delta}(\phi)$ do not depend on the configuration inside $\Delta$ we can write
	\begin{equation}\label{cons_eq5}
		\begin{split}
			\int_{\mathfrak{P}^{\sigma_\Delta,g_\Delta^{-1},\mathfrak{p}_{\Delta^c}}}F_\Lambda(\mathfrak{p}_\Lambda)d\mathfrak{p}_\Delta=e^{-\beta \int_{\mathfrak{p}_{\Delta^c}}W_{\Lambda,\Delta}(\phi)}V_{\Lambda,\Delta}(\phi)(\mathfrak{p}_{\Delta^c})\int_{\mathfrak{P}^{\sigma_\Delta,g_\Delta^{-1},\mathfrak{p}_{\Delta^c}}}F_\Delta(\mathfrak{p}_\Delta)d\mathfrak{p}_\Delta.
		\end{split}
	\end{equation}
	Plugging Equation \eqref{cons_eq5} into \eqref{cons_eq4}, we get
	\be\label{cons_eq6}
	\Bigg(\sum_{\substack{\sigma_{\Delta}\in \Omega_\Delta \\ (\eta_\Delta,g_\Delta) \in \mathcal{G}_\Delta^{\sigma_\Delta}}} f(\eta_\Delta \varrho_{\Delta^c}, g_\Delta g_{\mathfrak{p},\Delta^c})\int_{\mathfrak{P}^{\sigma_\Delta,g_\Delta^{-1},\mathfrak{p}_{\Delta^c}}}F_\Delta(\mathfrak{p}_\Delta)d\mathfrak{p}_\Delta\Bigg) e^{-\beta \int_{\mathfrak{p}_{\Delta^c}}W_{\Lambda,\Delta}(\phi)}V_{\Lambda,\Delta}(\phi)(\mathfrak{p}_{\Delta^c}),
	\ee
	We can multiply and divide Equation \eqref{cons_eq6} by the partition function $Z_{\beta,\phi,\Delta}^{\mathfrak{p},\sigma_{\Lambda\setminus\Delta}\omega_{\Delta^c}}$ and writing it as
 \[
 Z_{\beta,\phi,\Delta}^{\mathfrak{p},\sigma_{\Lambda\setminus\Delta}\omega_{\Delta^c}} = \sum_{\sigma_\Delta \in \Omega_\Delta}  D_{\beta,\Delta}^{\mathfrak{p},\sigma_{\Lambda\setminus\Delta}\omega_{\Delta^c}}(\sigma_\Delta,g_{\mathfrak{p},\partial_R\Delta}) = \sum_{\sigma_\Delta\in \Omega_\Delta}\int_{\mathfrak{P}^{\sigma_\Delta,g_{\mathfrak{p},\partial_R\Delta},\mathfrak{p}_{\Delta^c}}}F_{\Delta}(\mathfrak{p}_\Delta)d\mathfrak{p}_\Delta.
 \]
 substituting again the variables we get
	\be\label{cons_eq7}
	\begin{split}
		\Bigg(\sum_{\sigma_{\Delta}\in \Omega_\Delta } \mu_{\beta,\phi,\Delta}^{\mathfrak{p},\sigma_{\Lambda\setminus\Delta}\omega_{\Lambda^c}}(f)\int_{\mathfrak{P}^{\sigma_\Delta,g_{\mathfrak{p},\partial_R\Delta},\mathfrak{p}_{\Delta^c}}}F_\Delta(\mathfrak{p}_\Delta)d\mathfrak{p}_\Delta\Bigg)e^{-\beta \int_{\mathfrak{p}_{\Delta^c}}W_{\Lambda,\Delta}(\phi)}V_{\Lambda,\Delta}(\phi)(\mathfrak{p}_{\Delta^c}) =\\
		\sum_{\substack{\sigma_{\Delta}\in \Omega_\Delta \\ (\eta_\Delta,g_\Delta) \in \mathcal{G}_\Delta^{\sigma_\Delta}}} \mathbbm{1}(\eta_{\Delta},g_\Delta)\mu_{\beta,\phi,\Delta}^{\mathfrak{p},\sigma_{\Lambda\setminus\Delta}\omega_{\Lambda^c}}(f)\int_{\mathfrak{P}^{\sigma_\Delta,g_\Delta^{-1},\mathfrak{p}_{\Delta^c}}}F_\Lambda(\mathfrak{p}_\Lambda)d\mathfrak{p}_\Delta.
	\end{split}
	\ee
	
	Using Equations \eqref{cons_eq2} and \eqref{cons_eq3} together with \eqref{cons_eq7} finally gives us the desired result.
\end{proof}
\begin{remark}
    The proof is also valid if we change the boundary condition for the evaluation with respect to a state as in \eqref{israelbc}, since we can also use the random representation for the modified Hamiltonian.
\end{remark}

The finite volume path Gibbs measures depends on the paths, so it is not straightforward to define a function on the groupoid, depending on the boundary conditions. Given a groupoid $\mathcal{G}_\Lambda$ with $\Lambda$ finite and let
\[
\mathfrak{F}(\mathcal{G}_\Lambda) = \bigcup_{(\sigma_\Lambda,g_\Lambda) \in \mathcal{G}_\Lambda} \mathfrak{P}^{\sigma_\Lambda,g_\Lambda},
\]
be the set of all paths $\mathfrak{p}$ in $\mathcal{G}_\Lambda$. We can define a metric 
\[
 d_2(\mathfrak{p},\mathfrak{q}) = \begin{cases}
    d_1(\mathfrak{p}(0),\mathfrak{q}(0))& j_{\mathfrak{p}}=0.\\
     \frac{1}{2(j_{\mathfrak{p}}+1)}\sum_{k=1}^{j_{\mathfrak{p}}}(|s_k-t_k|+d_1(\mathfrak{p}(t_k),\mathfrak{q}(s_k)))  & j_{\mathfrak{p}}= j_{\mathfrak{q}}\neq 0 \\
     1 & j_{\mathfrak{p}}\neq  j_{\mathfrak{q}}
 \end{cases}
 \]
where $j_{\mathfrak{p}}$ is the number of jumps occuring in $\mathfrak{p}$ and $d_1$ is the metric on $\Omega_\Lambda$ given by
\[
d_1(\sigma_\Lambda,\sigma'_\Lambda)=\sum_{n\geq 0}\frac{1}{|B_n(0)\cap \Lambda|}\sum_{x\in B_n(0)\cap\Lambda}\frac{|\sigma_x-\sigma'_x|}{2^{|x|+1}},
\]
where $B_n(0)\cap \Lambda = \{x\in \Lambda: |x|=n\}$. Every continuous function $f\in C_c(\mathcal{G}_\Lambda)$ can be lifted to a continuous function $\mathfrak{F}(\mathcal{G}_\Lambda)$ by the map $f'(\mathfrak{p}) = f(\sigma_\Lambda,g_\Lambda)$, when $\mathfrak{p}\in \mathfrak{P}^{\sigma_\Lambda,g_\Lambda}$. Notice that every finite volume Gibbs state can be lifted to a measure on $\mathfrak{F}(\mathcal{G}_\Lambda)$ using the random representation
\be
\begin{split}
\mu_{\beta,\phi,\Lambda}(f) &= \frac{1}{Z_{\beta,\phi,\Lambda}}\sum_{\substack{\sigma_\Lambda \in \Omega_\Lambda \\ (\omega_\Lambda,g_\Lambda)\in \mathcal{G}_\Lambda^{\sigma_\Lambda}}}f(\omega_\Lambda,g_\Lambda)e^{-\beta H_\Lambda(\phi)}(\sigma_\Lambda,g_\Lambda^{-1}) \\
&=\frac{1}{e^{-\beta}Z_{\beta,\phi,\Lambda}} \sum_{\substack{\sigma_\Lambda \in \Omega_\Lambda \\ (\omega_\Lambda,g_\Lambda)\in \mathcal{G}_\Lambda^{\sigma_\Lambda}}}f(\omega_\Lambda,g_\Lambda)\int_{\mathfrak{P}^{\sigma_\Lambda,g^{-1}_\Lambda}} e^{-\beta \int_{\mathfrak{p}} H_\Lambda^{(0)}(\phi)} V_\Lambda(\phi)(\mathfrak{p})d\mathfrak{p} 
\\
&=\frac{1}{e^{-\beta}Z_{\beta,\phi,\Lambda}} \sum_{ (\sigma_\Lambda,g_\Lambda)\in \mathcal{G}_\Lambda}\int_{\mathfrak{P}^{\sigma_\Lambda,g_\Lambda}} f'(\mathfrak{p}^{-1})e^{-\beta \int_{\mathfrak{p}} H_\Lambda^{(0)}(\phi)} V_\Lambda(\phi)(\mathfrak{p})d\mathfrak{p}
\end{split}
\ee
In a similar fashion, every continuous linear function on $C_c(\mathfrak{F}(\mathcal{G}_\Lambda))$ can be projected to be a linear continuous functional in $C^*(\mathcal{G}_\Lambda)$.

\begin{proposition}[\textbf{Feller Continuity}]\label{Feller}
	Let $\phi$ be a short-range admissible interaction and $\Lambda \subset \Z^d$ be a finite set. Then, for every $f\in C_c(\mathcal{G})$, the function $\mathfrak{p}\mapsto \mu_{\beta,\phi,\Lambda}^{\mathfrak{p},\omega}(f)$ is continuous on $\mathfrak{F}(\mathcal{G}_{\Lambda_R})$. Moreover, it only depends on paths $\mathfrak{p}$ such that $\mathfrak{p}_{\Lambda}=\mu_{\emptyset}$.
\end{proposition}
\begin{proof}
It is a straightforward consequence of the Lebesgue-dominated convergence theorem. 
\end{proof}

\begin{proposition}[\textbf{Proper}]\label{Proper}
	Let $\phi$ be an admissible short-range interaction and $\Lambda \subset \Z^d$. If $f$ is in $C_c(\mathcal{G}_{\Lambda_R^c})$, then $\mu_{\beta,\phi,\Lambda}^{\mathfrak{p},\omega}(f) = f$.
\end{proposition}
\begin{proof}
	Notice that when a local operator $f$ is in $C_c(\mathcal{G}_{\Lambda_R^c})$ there exists a $\Delta \subset \Lambda_R^c$ such that $f(\sigma,g) = \mathbbm{1}(\sigma_{\Delta^c},g_{\Lambda_R^c \setminus \Delta}) f(\sigma_\Delta,g_\Delta)$. Hence
	\begin{align*}
		\sum_{\substack{\sigma_\Lambda \in \Omega_\Lambda \\ (\eta_\Lambda,g_\Lambda)\in \mathcal{G}_\Lambda^{\sigma_\Lambda}}}\hspace{-0.5cm}f(\eta_\Lambda\mathfrak{p}_{\Lambda^c}(1), g_\Lambda g_{\mathfrak{p},\Lambda^c}) D_{\beta,\Lambda}^{\mathfrak{p}}(\sigma_\Lambda,g_\Lambda^{-1})=f(\mathfrak{p}_\Delta(1),g_{\mathfrak{p},\Delta})\hspace{-0.5cm}\sum_{\substack{\sigma_\Lambda \in \Omega_\Lambda \\ (\eta_\Lambda,g_\Lambda)\in \mathcal{G}_\Lambda^{\sigma_\Lambda}}}\hspace{-0.5cm} \mathbbm{1}(\eta_{\Lambda}\mathfrak{p}_{\Lambda^c}(1),g_\Lambda g_{\mathfrak{p},\Lambda_R^c})  D_{\beta,\Lambda}^{\mathfrak{p}}(\sigma_\Lambda,g_\Lambda^{-1}), 
	\end{align*}
	and the last line is equal to $f(\mathfrak{p}_\Delta(1),g_{\mathfrak{p},\Delta})Z_{\beta,\phi,\Lambda}^{\mathfrak{p}}$, yielding us the desired result. 
\end{proof}

\begin{definition}\label{perron-positive}
    An operator $f\in C^*(\mathcal{G})$ is said to be \textbf{Perron-positive} if it is the limit of $f_n \in C_c(\mathcal{G})$ satisfying $f_n(\sigma,g) \geq 0$.
\end{definition}

\begin{comment}
    Not every Perron-positive operator is positive in the usual sense of C*-algebras, and when this happens we say that the operator is \emph{doubly nonnegative}. Every stoquastic interaction is the difference between a continuous function on the unit space and a Perron-positive operator.
Colocar a conta
Although we cannot guarantee that all the Gibbs path functionals are positive, they can easily be shown to preserve the cone of all Perron-positive operators when the interaction is stoquastic. 
\end{comment}

If the path boundary condition $\mathfrak{p}_{\Lambda^c}$ is symmetric, then they can be shown to be states, even if the interaction is only admissible. 

\begin{proposition}
	For $\mathfrak{p}_{\Lambda^c}$ symmetric and $\Lambda \in \mathcal{F}(\Z^d)$ we have that $\mu_{\beta,\phi,\Lambda}^{\mathfrak{p},\omega}$ is a state.
\end{proposition}
\begin{proof}
	Let $\mathfrak{p}_{\Lambda^c} = \mathfrak{q}_{\Lambda^c}^{-1}\circ_{1/2} \mathfrak{q}_{\Lambda^c}$ be the symmetric path. Due to Lemmas \ref{lemma_path_density} and \ref{lemma_adjoint_density} it holds that
	\[
	D_{\beta,\Lambda}^{\mathfrak{p},\omega} = D_{\beta/2,\Lambda}^{\mathfrak{q}^{-1},\omega}\cdot D_{\beta/2,\Lambda}^{\mathfrak{q},\omega'} = (D_{\beta/2,\Lambda}^{\mathfrak{q}},\omega')^*\cdot D_{\beta/2,\Lambda}^{\mathfrak{q},\omega}. 
	\]
 Thus using the cyclic property of the trace, we conclude the proof.
\end{proof}

 Inspired by classical statistical mechanics, we introduce the following definition for the infinite volume Gibbs functionals
\be
\mathscr{G}_\beta(\phi) \coloneqq \overline{\text{co}}\Bigg\{\begin{array}{@{}c|c@{}}
	\mu_{\beta,\phi} 
	&
	\begin{matrix}
		\exists \{\Lambda_m\}_{m \geq 1}, \Lambda_m \nearrow \Z^d, \text{   and   } \{\mathfrak{p}_m,\omega_m\}_{m\geq 1},\\[2ex]
		 \text{ s.t. } \mu_{\beta,\phi}=w^*-\underset{m\rightarrow \infty}{\lim} \mu_{\beta,\phi,\Lambda_m}^{\mathfrak{p}_m,\omega_m}
	\end{matrix}
\end{array}
\Bigg\}
\ee

There is an important consequence for the Gibbs functionals to be positive and normalized on the cone of Perron-positive operators. It implies that they are positive normalized functionals in $C_c(\mathcal{G})$, therefore by the Riesz-Markov theorem, the elements of $\mathscr{G}_\beta(\phi)$ are \emph{probability measures} on the groupoid $\mathcal{G}$. This is in contrast to states that can only be guaranteed to be complex measures using the Riesz-Markov theorem. We arrived at the following proposal for a specification for quantum spin systems. 

\begin{definition}
	Let $\mathcal{G}$ be the transformation groupoid introduced in Chapter 4. Then, a family of functions $\mu_\Lambda:C_c(\mathcal{G}_\Lambda)\times
    \mathfrak{F}(\mathcal{G}_{\Lambda^c})\rightarrow \mathbb{C}$ indexed by the finite subsets $\Lambda$ of $\Z^d$ is called a \textbf{proper quantum specification} if and only if
	\begin{itemize}
		\item[(i)]\textbf{(Linear Functionals)} For every $\mathfrak{p} \in \mathfrak{F}(\mathcal{G}_{\Lambda^c})$, and $\omega\in \Omega_{\Lambda^c}$ $\mu_\Lambda^{\mathfrak{p},\omega}$ is a continuous linear functional; if $\mathfrak{p}$ is symmetric, then it is a state.
		\item[(ii)]\textbf{(Consistency)} For every $\Delta\subset \Lambda$, it holds $\mu_\Lambda(\mu_{\Lambda'}(f)) = \mu_\Lambda(f)$.
        \item [(iii)] \textbf{(Feller Continuity)} For every $f \in C_c(\mathcal{G}_\Lambda)$, we know that $\mu_{\Lambda}(f)$ is a continuous function in $\mathfrak{F}(\mathcal{G}_{\Lambda^c})$.
		\item[(iv)]\textbf{(Proper)} For each $\Lambda$ there is $\Delta$ a subset containing $\Lambda$ such that $f \in C_c(\mathcal{G}_{\Delta^c})$, then $\mu_\Lambda(f)=f$.
        \end{itemize}
\end{definition}

As we showed in Propositions \ref{Proper}, \ref{Consistency}, and \ref{Feller}, the family of finite volume path Gibbs functionals we introduced in \ref{def_gibbs_path} is a quantum specification for the transformation groupoid $\mathcal{G}$. When a specification is constructed in this way, using an interaction, we call this a \emph{quantum Gibbs specification}. We are motivated to introduce the following definition for quantum DLR states:

\begin{definition}[\textbf{Quantum DLR equations}]
	A state $\mu$ of $C^*(\mathcal{G})$ is said to be a quantum DLR state when, for every $\Lambda$, 
	\[
	\mu_\beta(f)= \mu_\beta(\mu^{(\cdot)}_{\beta,\Lambda}(f)).
	\]
\end{definition}
The set of all quantum $DLR$ states is
\be
\mathscr{G}_{\beta, DLR}(\phi) = \{\mu_\beta: \mu_\beta= \mu_\beta(\mu^{(\cdot)}_{\beta,\Lambda}), \forall \Lambda \in \mathcal{F}(\Z^d)\},
\ee

and it is a compact convex set. 

\begin{theorem}
Let $\phi$ be a short-range stoquastic interaction. Then $\mathscr{G}_\beta(\phi)=\mathscr{G}_{\beta, DLR}(\phi)$	
\end{theorem}
\begin{proof}
	The fact that $\mathscr{G}_\beta(\phi)\subset\mathscr{G}_{\beta, DLR}(\phi)$	follows from the consistency condition and the definition of $w^*$-convergence. For the other inclusion, the proof is almost the same as the classical case (see Theorem \ref{classical_gibbs=dlr}). Suppose that there exists $\mu_\beta \in \mathscr{G}_{\beta,DLR}(\phi)\setminus\mathscr{G}_{\beta}(\phi)$. Since $\phi$ is stoquastic, the functionals $\mu_\beta$ are probability measures on $\mathcal{G}$, therefore we have that $\mu_\beta$ is in the closed convex hull of the finite volume path Gibbs functionals. The rest of the argument follows in the same way.
\end{proof}

\chapter{Conclusions and Further Research in Quantum Statistical Mechanics}
\label{ch:conclusion}
\epigraph{We escape into dream [fantasy] to avoid a deadlock in our real life. But then, what we encounter in the dream is even more horrible, so that at the end we literally escape back into reality. It starts with, "dreams are for those who cannot endure, who are not strong enough for reality", and ends with "reality is for those who are not strong enough to endure, to confront their dreams".}{\textbf{Slavoj \v{Z}i\v{z}ek} \\ \textit{The Pervert's Guide to Cinema}}

In Chapter \ref{ch:quantstatmech} we made a review of quantum statistical mechanics using the language of transformation groupoids, introducing important examples of quantum spin systems, for an arbitrary discrete spin, and also fermionic models. In particular, we introduced the $d$-dimensional Jordan-Wigner transformation in the groupoid language, based on ideas from \cite{kochmanski1998jordanwigner}. We also collected some well-known results on the existence of the dynamics for the thermodynamic limit and discussed the equivalence between the KMS condition and the Gibbs-Araki-Ion condition, reproving in modern language the equivalence between the Gibbs-Araki-Ion condition and the DLR equations discussed in Chapter \ref{ch:classtatmech} for arbitrary finite spin systems, see Theorem \ref{t2}. In view of the discussions of Chapter \ref{ch:groupoids}, every KMS state is a complex measure on the groupoid $\mathcal{G}$ and one could ask if the support of the measure can characterize the interaction, i.e., the KMS state from a nonclassical interaction has support outside the unit space $\Omega$. 

In Chapter \ref{ch:groupoids} we obtained the Poisson point process representation for an arbitrary short-range interaction in the case $q=2$. The extension of the calculations to arbitrary finite spin is straightforward using the general form of the Hamiltonian operators discussed in Remark \ref{generalcase}. The extension of the random representation to long-range interactions seems feasible with some modification; one can try to construct the random representation with a fixed boundary condition outside a large box. The random representation for long-range quantum spin systems could be used to study phase transitions in some models such as the one in \cite{Lapa2023} using the techniques of Chapter \ref{ch:longrange}. We proposed a theory of Quantum Gibbsian specification in the spirit of classical statistical mechanics and proved the extension of the classical result of the convex hull of the thermodynamic limit of finite volume path Gibbs functionals is exactly the set of all the states satisfying the Quantum DLR states for stoquastic interactions. Since we can recover the classical boundary condition states present in \cite{BKU, Datta1} we expect that more states can be constructed using our procedure. Moreover, since their boundary conditions are imposed in the Hamiltonians, the finite volume dynamics can be readily defined by exponentiation of the corresponding derivation (see the discussion below Equation \eqref{israelbc}). To investigate if such a picture can be obtained for the path boundary condition to describe their dynamics seems an interesting question to investigate. 

In \cite{Klein}, Klein and Landau constructed a stochastic process for stochastically positive KMS states and related properties from the constructed stochastic process to the Tomita-Takesaki theory for the state. One interesting question is to study the same problem for quantum spin and fermionic systems using the Poisson point process representation, i.e., relate the Tomita-Takesaki theory for the state to the properties of a Poisson point process. One can investigate if this approach can have any impact on the perturbation theory for these systems.

Also, differently from \cite{Datta1}, Datta, Fernand\'ez, and Fr\"ohlich and Borgs, Koteck\'y, and Ueltschi, our methods apply to any temperature regime. But \cite{Datta1} extended the Pirogov-Sinai theory for bosonic lattice systems also, and one interesting question is to extend the Quantum DLR approach to these systems. An approach using point processes for the continuum already exists and can be consulted in \cite{Fich1, Fich2}. The theory of specifications for classical statistical mechanics is so robust it can handle very general state spaces, so one can explore the possibility of extending the quantum specification theory to more general systems.

About extensions of the Quantum DLR formalism developed here, the Poisson point process representation is used in many situations to study ground states of the system (see \cite{Kennedy1991} for instance). But a DLR theory for ground states of \emph{classical interactions} already exists, although it is not Gibbsian, and can be found in the paper by van Enter, Fernand\'ez, and Sokal \cite{vanEnter1993}. In light of the results of Cha, Naaijkens, and Nachtergaele \cite{Pieter} and the fact that the Toric Code is stoquastic, one can wonder if the quantum DLR theory can be extended to ground states. 

% cabe\c{c}alho para os ap\^endices
\renewcommand{\chaptermark}[1]{\markboth{\MakeUppercase{\appendixname\ \thechapter}} {\MakeUppercase{#1}} }
\fancyhead[RE,LO]{}
\appendix

% !TEX root = ../my-thesis.tex
%
\chapter{Strongly Continuous (semi)groups}
\label{sec:appendix:dynamics}

In this chapter, we will collect standard theorems in the theory of strongly continuous (semi)groups, as can be found in \cite{En}. 

\section{Strongly Continuous (semi)groups}

 \begin{definition}
 Let $X$ be a complex Banach space and $L(X)$ the space of all bounded linear operators $A:X\rightarrow X$. A function $\T:[0,\infty) \rightarrow L(X)$ is called a \textbf{one-parameter semigroup} if 
\begin{itemize}
 \item[(i)] $\T_0 = \mathbbm{1}$. 
 \item[(ii)] $\T_{t+s} = \T_t \T_s, \; \forall t,s\geq 0$. 
\end{itemize}
 When $\T_t$ is defined in the whole real line and still satisfies $(i)$ and $(ii)$ it is called a \textbf{one-parameter group}. In the case of the function is continuous in the strong operator topology\footnote{The strong operator topology in $L(X)$ is the topology defined by the seminorms $A \rightarrow \|Ax\|$, for all $x \in X$} of $L(X)$, the semigroup is called  \textbf{strongly continuous}.
\end{definition}

\begin{example} \label{ex1}
Let $X = M_n(\mathbb{C})$ and $A \in M_n(\mathbb{C})$. We can define $\T_t = e^{tA}$, for all $t \in \mathbb{R}$.
\end{example}

\begin{definition}\label{def_generator}
Let $X$ be a Banach space and $\{\T_t\}_{t\geq 0}$ a strongly continuous one-parameter semigroup. The operator $A:\mathrm{dom}(A) \subset X \rightarrow X$ defined by
\[
Ax = \lim_{t \rightarrow 0} \frac{\T_tx - x}{t},
\]
for all $x \in \mathrm{dom}(A) = \{ x \in X: \lim_{t \rightarrow 0} \frac{\T_tx - x}{h} \; \text{exists}\}$ is called the \textbf{generator} of $\T$.
\end{definition}
It is easy to see that the generator is always a linear operator.
\begin{lemma}\label{l1}
Let $X$ be a Banach space and $\T:[0,\infty) \rightarrow L(X)$ a strongly continuous one-parameter semigroup. Then, the generator $A$ of the semigroup satisfies
\begin{itemize}
\item[(i)] $\frac{d}{dt}\T_tx = A\T_tx = \T_tAx, x \in \mathrm{dom}(A)$.
\item[(ii)] $\int_0^t \T_sx ds \in \mathrm{dom}(A)$, $\forall t\geq 0 $, $\forall x \in X$.
\item[(iii)] $\T_tx - x = A\int_0^t \T_sx ds$ $\forall x \in X$. Moreover, $\T_tx - x = \int_0^t \T_sAx ds$ whenever $x \in \mathrm{dom}(A)$.
\end{itemize}
\end{lemma}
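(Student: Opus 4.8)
I would organize the argument around a single analytic input --- the fundamental theorem of calculus for continuous $X$-valued functions --- after first recording two elementary facts about $\T$. The first is that the orbit map $s \mapsto \T_s x$ is continuous on $[0,\infty)$ for every $x \in X$; the second is the local boundedness estimate $M_t := \sup_{0 \le s \le t}\|\T_s\| < \infty$ for each $t \ge 0$. Local boundedness is obtained from the uniform boundedness principle: if $\|\T_s\|$ were unbounded on every neighbourhood of $0$ one could choose $s_n \downarrow 0$ with $\|\T_{s_n}\| \to \infty$, contradicting the fact that $\{\T_{s_n}x\}_n$ is bounded for each $x$ (because $\T_{s_n}x \to x$ by strong continuity); boundedness on some $[0,\delta]$ then propagates to $[0,t]$ through $\T_s = \T_\delta^{k}\T_r$. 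Continuity of the orbit follows from $\T_{s+h}x - \T_s x = \T_s(\T_h x - x)$ for $h>0$ and $\T_s x - \T_{s-h}x = \T_{s-h}(\T_h x - x)$ for $0<h\le s$, using strong continuity at $0$ together with $\|\T_{s-h}\| \le M_s$. In particular $\int_0^t \T_s x\, ds$ is a well-defined vector of $X$ and $s \mapsto \int_0^s \T_u x\, du$ is differentiable with derivative $s \mapsto \T_s x$.

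Next I would prove (ii) and the first identity of (iii) simultaneously (the case $t=0$ being trivial). For $0 < h < t$, linearity of the integral and the substitution $s \mapsto s+h$ give
\[
\frac{\T_h - \mathbbm{1}}{h}\int_0^t \T_s x\, ds \;=\; \frac{1}{h}\int_0^t\bigl(\T_{s+h}x - \T_s x\bigr)\, ds \;=\; \frac{1}{h}\int_t^{t+h}\T_s x\, ds \;-\; \frac{1}{h}\int_0^h \T_s x\, ds .
\]
As $h \to 0^+$, continuity of the orbit and the fundamental theorem of calculus show that the right-hand side tends to $\T_t x - \T_0 x = \T_t x - x$. Hence $A$ is defined at the vector $\int_0^t \T_s x\, ds$, i.e. $\int_0^t \T_s x\, ds \in \mathrm{dom}(A)$, with value $A\int_0^t \T_s x\, ds = \T_t x - x$.

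For (i), fix $x \in \mathrm{dom}(A)$. Since $\T_t$ is bounded, $\frac{\T_h x - x}{h} \to Ax$ forces $\T_t\frac{\T_h x - x}{h} \to \T_t A x$; but $\T_t\frac{\T_h x - x}{h} = \frac{\T_{t+h}x - \T_t x}{h} = \frac{\T_h(\T_t x) - \T_t x}{h}$, so $\T_t x \in \mathrm{dom}(A)$, $A\T_t x = \T_t A x$, and this common value is the right derivative of $s \mapsto \T_s x$ at $t$. For the left derivative (for $t>0$) I would estimate, for $0 < h \le t$,
\[
\frac{\T_t x - \T_{t-h}x}{h} - \T_t A x \;=\; \T_{t-h}\!\left(\frac{\T_h x - x}{h} - A x\right) \;+\; \bigl(\T_{t-h}A x - \T_t A x\bigr),
\]
whose first summand has norm at most $M_t\,\bigl\|\tfrac{\T_h x - x}{h} - Ax\bigr\| \to 0$, and whose second summand tends to $0$ by continuity of the orbit of $Ax$. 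Thus $\tfrac{d}{dt}\T_t x = A\T_t x = \T_t A x$.

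Finally, the second identity of (iii) follows by integrating (i): for $x \in \mathrm{dom}(A)$ the map $s \mapsto \T_s x$ is continuously differentiable with derivative $s \mapsto \T_s A x$ (continuous by strong continuity), so the fundamental theorem of calculus gives $\T_t x - x = \T_t x - \T_0 x = \int_0^t \T_s A x\, ds$. The only non-formal ingredients are the two preliminaries of the first paragraph --- especially the appeal to the uniform boundedness principle, which is the single place where completeness of $X$ is used --- and the validity of the fundamental theorem of calculus for Banach-space-valued continuous functions; granting these, the remainder is bookkeeping with the semigroup law and with pulling bounded operators through limits. I expect the local-boundedness step to be the one most easily overlooked, and hence the genuine obstacle.
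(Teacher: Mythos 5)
Your proof is correct and follows essentially the same route as the paper's: the change-of-variables identity $\frac{1}{h}(\T_h-\mathbbm{1})\int_0^t\T_s x\,ds = \frac{1}{h}\int_t^{t+h}\T_s x\,ds - \frac{1}{h}\int_0^h\T_s x\,ds$ for (ii)--(iii), pulling the bounded operator $\T_t$ through the difference quotient for (i), and the fundamental theorem of calculus for the final identity. You are in fact more complete than the paper, which omits the local-boundedness preliminary and the left-derivative estimate that you supply explicitly.
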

\begin{proof}
We can calculate the limit explicitly, obtaining
\[
\lim_{h \rightarrow 0} \frac{\T_{t+h}x-\T_{t}x}{h}=\T_{t}\lim_{h \rightarrow 0} \frac{\T_hx-x}{h}=\T_tAx.
\]
The other equality in $(i)$ can be obtained in a similar fashion. The last two items will be proved at the same time. We have
\[
\begin{aligned}
&\lim_{h \rightarrow 0} \frac{(\T_h-\mathbbm{1})}{h}\int_0^t\T_sx ds=\lim_{h \rightarrow 0} \frac{1}{h}\Bigg(\int_0^t\T_{s+h}x ds-\int_0^t\T_s x ds\Bigg) \\ 
&=\lim_{h \rightarrow 0}\frac{1}{h}\Bigg(\int_h^{t+h}\T_s x ds-\int_0^t\T_s x ds\Bigg) =\lim_{h \rightarrow 0}\frac{1}{h}\Bigg(\int_t^{t+h}\T_s x ds-\int_0^h\T_s x ds\Bigg)
\end{aligned}
\] 
The Fundamental Theorem of Calculus implies that both limits exist and are, respectively, $\T_tx$ and $x$. The second equality of the item $(iii)$ follows easily from the Fundamental Theorem of Calculus together with $(i)$
\[
\T_tx -x = \int_0^t \frac{d}{ds}\T_s x ds = \int_0^t\T_sAx ds.
\]
\end{proof}

Remember that a linear operator $A:\mathrm{dom}(A)\subset X\rightarrow X$ is said to be \emph{closed} when for each sequence $x_n\in X$ converging to $x$ such that $Ax_n$ converges to $y$, then we have that $x \in \mathrm{dom}(A)$ and $Ax = y$.

\begin{proposition}
Let $X$ be a Banach space and $\T$ a strongly continuous one-parameter semigroup. Then, its generator $A$ is closed and $\mathrm{dom}(A)$ is dense. Moreover, every generator determines its semigroup uniquely.
\end{proposition}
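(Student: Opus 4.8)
The plan is to establish the three assertions one at a time, using only Lemma~\ref{l1} and a single appeal to the uniform boundedness principle. First I would prove that $\mathrm{dom}(A)$ is dense. Given $x\in X$ and $t>0$, set $x_t=\frac1t\int_0^t\T_sx\,ds$. By Lemma~\ref{l1}(ii) we have $x_t\in\mathrm{dom}(A)$, and since $s\mapsto\T_sx$ is continuous with $\T_0x=x$, the Fundamental Theorem of Calculus gives $x_t\to x$ as $t\to0^+$. Hence every element of $X$ is a limit of elements of $\mathrm{dom}(A)$.

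Next I would record a local bound that is needed twice below: $\sup_{s\in[0,t]}\|\T_s\|<\infty$ for every $t\ge0$. It suffices to bound $\|\T_s\|$ for $s\in[0,1]$, since for general $s$ one splits $s=n+r$ with $r\in[0,1)$ and uses $\|\T_s\|\le\|\T_1\|^{n}\|\T_r\|$. If $\sup_{s\in[0,1]}\|\T_s\|=\infty$, pick $s_n\to s_0\in[0,1]$ with $\|\T_{s_n}\|\to\infty$; but for each fixed $x$ the sequence $\T_{s_n}x$ converges, hence is bounded, so the uniform boundedness principle yields $\sup_n\|\T_{s_n}\|<\infty$, a contradiction. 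Now for closedness, let $x_n\in\mathrm{dom}(A)$ with $x_n\to x$ and $Ax_n\to y$. By Lemma~\ref{l1}(iii), $\T_tx_n-x_n=\int_0^t\T_sAx_n\,ds$, and
\[
\Bigl\|\int_0^t\T_sAx_n\,ds-\int_0^t\T_sy\,ds\Bigr\|\le t\,\Bigl(\sup_{s\in[0,t]}\|\T_s\|\Bigr)\,\|Ax_n-y\|\to0,
\]
so passing to the limit gives $\T_tx-x=\int_0^t\T_sy\,ds$. Dividing by $t$ and letting $t\to0^+$, strong continuity shows the right-hand side tends to $y$; hence the limit defining $Ax$ exists and equals $y$, i.e. $x\in\mathrm{dom}(A)$ and $Ax=y$, so $A$ is closed.

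Finally, for uniqueness, suppose $\T$ and $\mathcal{S}$ are strongly continuous one-parameter semigroups with the same generator $A$. Fix $x\in\mathrm{dom}(A)$ and $t>0$ and consider $g:[0,t]\to X$, $g(s)=\T_{t-s}\mathcal{S}_sx$. Since $\mathcal{S}_sx\in\mathrm{dom}(A)$ with $A\mathcal{S}_sx=\mathcal{S}_sAx$ (Lemma~\ref{l1}(i) for $\mathcal{S}$), and $u\mapsto\T_u(\mathcal{S}_sx)$ is differentiable with derivative $\T_uA\mathcal{S}_sx$ (Lemma~\ref{l1}(i) for $\T$), a product-rule computation — in which the local bound on $\|\T_\cdot\|$ is used to control the difference quotients — gives $g'(s)=-\T_{t-s}A\mathcal{S}_sx+\T_{t-s}A\mathcal{S}_sx=0$ for all $s\in[0,t]$. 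Thus $g$ is constant, so $\T_tx=g(0)=g(t)=\mathcal{S}_tx$; since $\mathrm{dom}(A)$ is dense and $\T_t,\mathcal{S}_t$ are bounded, $\T_t=\mathcal{S}_t$ for every $t\ge0$.

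I expect the only genuinely non-routine step to be the local boundedness $\sup_{s\in[0,t]}\|\T_s\|<\infty$, which is where the Baire category theorem (via uniform boundedness) enters; once that is available, interchanging limit and integral in the closedness argument and justifying the product rule in the uniqueness argument are both straightforward.
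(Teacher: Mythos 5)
Your proposal is correct and follows essentially the same route as the paper: density via the averages $\frac1t\int_0^t\T_sx\,ds$, closedness by passing to the limit in $\T_tx_n-x_n=\int_0^t\T_sAx_n\,ds$ using the Banach--Steinhaus local bound on $\|\T_s\|$, and uniqueness via the interpolating function $\T_{t-s}\mathcal{S}_sx$. Your version is in fact slightly more careful at the end, restricting the uniqueness computation to $x\in\mathrm{dom}(A)$ and then extending by density and boundedness.
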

\begin{proof}
Let $\{x_n\}_{n \in \mathbb{N}} \in \mathrm{dom}(A)$ such that $\lim_{n\rightarrow \infty }Ax_n = y$ and $\lim_{n\rightarrow \infty }x_n = x$. Lemma \ref{l1} implies $\T_tx_n - x_n= \int_0^t \T_sAx_n ds$. We claim that
\be\label{generator_closed_eq1}
\lim_{n\rightarrow \infty}  \int_0^t \T_sAx_n ds= \int_0^t \T_sy ds.
\ee
Indeed, let $f_n:[0,t] \rightarrow X$ be a sequence of functions defined as $f_n(s) = \T_sAx_n$ and $f:[0,t] \rightarrow X$ as the function $f(s) = \T_sy$.
These functions are all continuous because $\T$ is strongly continuous. The strong continuity also implies for fixed $x \in X$ that the set $B_x = \{\T_sx: 0\leq s \leq t\}$ is compact, therefore
\[
\underset{0 \leq s \leq t}{\sup}\|\T_sx\| \leq \infty.
\]
The Banach-Steinhauss Theorem implies $M = \underset{0 \leq s \leq t}{\sup}\|\T_s\| \leq \infty$. Thus, the sequence $f_n$ converges uniformly to $f$, since
\[
\|f_n - f\| = \underset{0\leq s \leq t}{\sup}\|\T_sAx_n - \T_sy\| \leq M \|Ax_n -y\|.
\]
Taking the limit in both sides in Equation \eqref{generator_closed_eq1} yields $\T_tx - x = \int_0^t \T_sy ds$. We can divide both sides by $t$ and, taking the limit, we conclude that $x \in \mathrm{dom}(A)$, because the right-hand side limit exists, and $Ax = y$. Item (ii) of the Lemma \ref{l1} tells us that $\int_0^t \T_sx ds \in \mathrm{dom}(A)$ for all $t>0$. Choosing $t_n$ a sequence of positive numbers converging to zero yields
\[
\lim_{n\rightarrow \infty} \int_0^{t_n} \T_sx ds = x, \;\; \forall x \in X,
\]
concluding that $\mathrm{dom}(A)$ is dense. Suppose there are $\T_t$ and $\mathcal{S}_t$ two different strongly continuous one-parameter semigroups with the same generator $A$. Fix $t>0$ and define, for $0\leq s \leq t$, the following function $\mathcal{U}_sx = \T_{t-s}\mathcal{S}_sx$. The Fundamental Theorem of Calculus yields
\[
\mathcal{S}_tx-\T_tx = \int_0^t \frac{d}{ds}\mathcal{U}_s ds =  \int_0^t \mathcal{U}_s(Ax-Ax) ds =0. 
\]
\end{proof}

The situation we usually encounter in applications, such as the one in Chapter \ref{ch:quantstatmech}, is  that we already have a linear operator in a certain Banach space of interest and we want to say whether or not it generates a strongly continuous semigroup. For this, we will use the notion of the resolvent of an operator.

\begin{lemma}
Let $A$ be the generator of a strongly continuous semigroup $\T$. Then, for all $\lambda \in \mathbb{C}$ and $t>0$, it holds
\[
e^{-\lambda t}\T_tx-x = (A-\lambda) \int_0^t e^{-\lambda s} \T_s x ds.
\]
\end{lemma}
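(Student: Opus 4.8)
The plan is to reduce the claim to item (iii) of Lemma \ref{l1} applied to a suitably rescaled semigroup. First I would set $\mathcal{S}_t := e^{-\lambda t}\T_t$ for $t\geq 0$ and check that $\{\mathcal{S}_t\}_{t\geq 0}$ is again a strongly continuous one-parameter semigroup: the normalization $\mathcal{S}_0 = \mathbbm{1}$ is immediate, the semigroup law follows from
\[
\mathcal{S}_{t+s} = e^{-\lambda(t+s)}\T_{t+s} = \big(e^{-\lambda t}\T_t\big)\big(e^{-\lambda s}\T_s\big) = \mathcal{S}_t\mathcal{S}_s,
\]
and strong continuity is inherited from that of $\T$ together with the continuity of $t\mapsto e^{-\lambda t}$ and the local boundedness $\sup_{0\le s\le t}\|\T_s\|<\infty$ already used above.

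Next I would identify the generator of $\mathcal{S}$. For $x\in X$ one writes
\[
\frac{\mathcal{S}_t x - x}{t} = e^{-\lambda t}\,\frac{\T_t x - x}{t} + \frac{e^{-\lambda t} - 1}{t}\,x,
\]
and lets $t\to 0$: the first term on the right converges if and only if $x\in\mathrm{dom}(A)$, in which case it tends to $Ax$, while the second always tends to $-\lambda x$. Hence the generator of $\mathcal{S}$ is precisely $A-\lambda$ with $\mathrm{dom}(A-\lambda)=\mathrm{dom}(A)$.

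Finally I would invoke Lemma \ref{l1}(iii) for the semigroup $\mathcal{S}$ and its generator $A-\lambda$: for every $x\in X$,
\[
\mathcal{S}_t x - x = (A-\lambda)\int_0^t \mathcal{S}_s x\, ds,
\]
which, after substituting $\mathcal{S}_s = e^{-\lambda s}\T_s$, is exactly the asserted identity. I do not expect a genuine obstacle here; the only point needing a little care is the bookkeeping that $\mathcal{S}$ verifies the hypotheses of Lemma \ref{l1} — in particular strong continuity — so that item (iii) applies to arbitrary $x\in X$ and not only to $x\in\mathrm{dom}(A)$. Alternatively, for $x\in\mathrm{dom}(A)$ one could argue directly by differentiating $s\mapsto e^{-\lambda s}\T_s x$ with the product rule and Lemma \ref{l1}(i), obtaining $\frac{d}{ds}\big(e^{-\lambda s}\T_s x\big) = (A-\lambda)e^{-\lambda s}\T_s x$, and then integrating from $0$ to $t$; but the rescaling argument has the advantage of covering all of $X$ at once.
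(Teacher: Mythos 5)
Your argument is correct and is essentially the same as the paper's: both define the rescaled semigroup $\mathcal{S}_t = e^{-\lambda t}\T_t$, identify its generator as $A-\lambda$ by the same splitting of the difference quotient, and then apply item (iii) of Lemma \ref{l1}. Your additional verification that $\mathcal{S}$ is strongly continuous (so that item (iii) applies to all $x\in X$) is a welcome bit of care that the paper leaves implicit.
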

\begin{proof}
Define the semigroup $\mathcal{S}_t = e^{-\lambda t} \T_t$. Item $(iii)$ of Lemma \ref{l2} implies
\[
\mathcal{S}_tx -x = B \int_0^t \mathcal{S}_s x ds,
\]
where $B$ is the generator of $\mathcal{S}_t$. A direct computation yields
\[
Bx = \lim_{h\rightarrow 0} \frac{\mathcal{S}_hx-x}{h} =\lim_{h\rightarrow 0} \frac{e^{-\lambda h} \T_hx-e^{-\lambda h}x+ e^{-\lambda h}x-x}{h} = Ax - \lambda x.
\]
\end{proof}

\begin{lemma}\label{l2}
Let $X$ be a Banach space and $T$ a strongly continuous semigroup. Then, there is $M\geq 0$ and $\lambda \in \mathbb{R}$ such that
\[
\|\T_t\| \leq M e^{\lambda^* t},
\]
where $\lambda^* = \log \|T(1)\|$.
\end{lemma}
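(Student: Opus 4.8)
The plan is to prove the estimate in two stages: first show that $\T$ is uniformly bounded on the unit interval $[0,1]$, and then bootstrap this to all times $t\ge 0$ using the semigroup law, by splitting $t$ into its integer and fractional parts. For the first stage I would argue exactly as in the proof of the earlier Proposition: fix $x\in X$; by strong continuity the map $t\mapsto \T_t x$ is continuous on the compact interval $[0,1]$, so its image is compact and in particular $\sup_{0\le t\le 1}\|\T_t x\|<\infty$. Hence $\{\T_t:0\le t\le 1\}$ is a pointwise bounded family of bounded operators, and the Banach--Steinhaus theorem yields a finite constant
\[
M_0:=\sup_{0\le t\le 1}\|\T_t\|<\infty ,
\]
with $M_0\ge \|\T_0\|=\|\mathbbm{1}\|=1$.

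For the second stage, given $t\ge 0$ write $t=n+r$ with $n=\lfloor t\rfloor$ a nonnegative integer and $r\in[0,1)$. The semigroup property gives $\T_t=\T_r\,\T_1^{\,n}$, so
\[
\|\T_t\|\ \le\ \|\T_r\|\,\|\T_1\|^{\,n}\ \le\ M_0\,\|\T_1\|^{\,n}.
\]
Assuming $\|\T_1\|>0$ (otherwise $\T$ is nilpotent for times $\ge 1$ and the bound is immediate once $M\ge M_0$), set $\lambda^{*}=\log\|\T_1\|$, so that $\|\T_1\|^{\,n}=e^{\lambda^{*}n}$. Using $n\le t$ when $\lambda^{*}\ge 0$ and $n>t-1$ when $\lambda^{*}<0$, in either case $e^{\lambda^{*}n}\le e^{|\lambda^{*}|}e^{\lambda^{*}t}$, and therefore $\|\T_t\|\le M\,e^{\lambda^{*}t}$ with $M:=M_0\,e^{|\lambda^{*}|}$, which is the asserted inequality.

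The step I expect to be the main obstacle is the local boundedness $M_0<\infty$: although it appears routine, it genuinely requires the uniform boundedness principle and does not follow from the algebraic axioms of a semigroup alone, so the point is to combine strong continuity, compactness of $[0,1]$, and Banach--Steinhaus correctly. Everything after that is bookkeeping, the only subtlety being to track the sign of $\lambda^{*}$ when passing from the exponent $\lfloor t\rfloor$ to $t$; this is harmless since it is absorbed into the multiplicative constant $M$.
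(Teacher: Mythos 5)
Your proof is correct and takes essentially the same route as the paper's: local boundedness of $\{\T_s: 0\le s\le 1\}$ via strong continuity, compactness of $[0,1]$ and Banach--Steinhaus, followed by the semigroup law applied to the decomposition $t=\lfloor t\rfloor + r$. You are in fact slightly more careful than the paper, which passes from $e^{\lambda^*\lfloor t\rfloor}$ to $e^{\lambda^* t}$ without addressing the case $\lambda^*<0$; your absorption of the factor $e^{|\lambda^*|}$ into the constant $M$ handles that point cleanly.
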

\begin{proof}
The property of semigroup gives
\[
\|\T_t\| = \|\T_k\T_s\|\leq \|\T_1\|^k \|\T_k\| \leq M e^{k \log\|\T_1\|} \leq M e^{\lambda t} 
\]
where $M = \underset{0\leq s <1}{\sup}\|\T_s\|$, $k$ is the integer part of $t$. Since the semigroup is strongly continuous, the Banach-Steinhauss theorem implies that $M$ is finite.
\end{proof}

\begin{proposition}
Let $\T$ be a strongly continuous semigroup and $A$ its generator. Then,
\begin{itemize}
\item[(i)] If $\lambda \in \mathbb{C}$ and $R(\lambda)x = \int_0^\infty e^{-\lambda s} \T_s x ds$ exists for all $x \in X$ then $\lambda \in \rho(A)$ and it is equal to the resolvent of $A$.
\item[(ii)] If $\mathrm{Re}(\lambda) > \lambda^*$, then $\lambda \in \rho(A)$.
\item[(iii)] $\|R(\lambda, A)\| \leq \frac{M}{\mathrm{Re}(\lambda) - \lambda^*}$, for all  $\mathrm{Re}(\lambda) > \lambda^*$.
\end{itemize}
\end{proposition}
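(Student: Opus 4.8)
The plan is to prove (i) first and then obtain (ii) and (iii) quickly from it together with the exponential bound of Lemma \ref{l2}. Assume the integral $R(\lambda)x = \int_0^\infty e^{-\lambda s}\T_s x\,\d s$ exists for every $x\in X$; the goal is to show that $R(\lambda)$ is a two-sided inverse of $\lambda-A$ on $\mathrm{dom}(A)$. I would first show $R(\lambda)x\in\mathrm{dom}(A)$ with $(\lambda-A)R(\lambda)x=x$ by computing the difference quotient directly: applying $\T_h$ and substituting $s\mapsto s-h$ in the resulting integral gives, for $h>0$,
\[
\frac{\T_h R(\lambda)x-R(\lambda)x}{h}=\frac{e^{\lambda h}-1}{h}\int_h^\infty e^{-\lambda s}\T_s x\,\d s-\frac1h\int_0^h e^{-\lambda s}\T_s x\,\d s,
\]
and letting $h\to 0^+$, the strong continuity of $\T$ and the convergence of the defining integral force the right-hand side to tend to $\lambda R(\lambda)x-x$; hence $R(\lambda)x\in\mathrm{dom}(A)$ and $AR(\lambda)x=\lambda R(\lambda)x-x$. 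Next, for $x\in\mathrm{dom}(A)$ one has $\T_sAx=A\T_sx$ by item (i) of Lemma \ref{l1}, and since $A$ is closed it commutes with the integral, so $R(\lambda)Ax=AR(\lambda)x$; combining this with the previous step yields $R(\lambda)(\lambda-A)x=x$. Therefore $\lambda-A:\mathrm{dom}(A)\to X$ is a bijection with inverse $R(\lambda)$, and $R(\lambda)$ is bounded by the closed graph theorem (being the everywhere-defined inverse of the closed operator $\lambda-A$); thus $\lambda\in\rho(A)$ and $R(\lambda)=R(\lambda,A)$.

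For (ii) it suffices to verify the hypothesis of (i) when $\mathrm{Re}(\lambda)>\lambda^*$. By Lemma \ref{l2}, $\|\T_s\|\le M e^{\lambda^* s}$, so $\|e^{-\lambda s}\T_s x\|\le M\|x\|\,e^{(\lambda^*-\mathrm{Re}(\lambda))s}$, which is integrable on $[0,\infty)$ because $\lambda^*-\mathrm{Re}(\lambda)<0$; hence $R(\lambda)x$ converges (absolutely) for every $x$, and (i) gives $\lambda\in\rho(A)$. For (iii), the same estimate yields
\[
\|R(\lambda,A)x\|\le\Big(\int_0^\infty e^{-\mathrm{Re}(\lambda)s}\|\T_s\|\,\d s\Big)\|x\|\le M\|x\|\int_0^\infty e^{(\lambda^*-\mathrm{Re}(\lambda))s}\,\d s=\frac{M}{\mathrm{Re}(\lambda)-\lambda^*}\|x\|,
\]
which is exactly the claimed bound on $\|R(\lambda,A)\|$.

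The main obstacle is the first step of (i): carrying out the manipulations of the improper integral over the half-line — the change of variables, the splitting $\int_0^\infty=\int_0^h+\int_h^\infty$, and the passage to the limit $h\to 0^+$ — using only that the integral converges, and then invoking the closedness of $A$ to pull it through $\int_0^\infty$ in the second step. One could instead start from the identity $e^{-\lambda t}\T_t x-x=(A-\lambda)\int_0^t e^{-\lambda s}\T_s x\,\d s$ of the lemma preceding the proposition and let $t\to\infty$, but that route additionally requires showing $e^{-\lambda t}\T_t x\to 0$, so the direct difference-quotient computation is preferable. Everything else — the bijectivity bookkeeping and the elementary integral estimates in (ii) and (iii) — is routine.
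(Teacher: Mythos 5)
Your proposal is correct and follows essentially the same route as the paper: apply the difference quotient to $R(\lambda)x$ (the paper uses the rescaled semigroup $e^{-\lambda h}\T_h$, whose generator is $A-\lambda$ by the preceding lemma, while you expand $\T_h R(\lambda)x$ directly by a change of variables — the same computation), then use closedness of $A$ and $\T_sAx=A\T_sx$ to get $R(\lambda)(\lambda-A)x=x$ on $\mathrm{dom}(A)$, and deduce (ii) and (iii) from the exponential bound of Lemma \ref{l2}. Your explicit appeal to the closed graph theorem for the boundedness of $R(\lambda)$ is a small point the paper leaves implicit, but the argument is the same.
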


\begin{proof}
To show $(i)$ we can calculate $(\lambda - A)R(\lambda)$ and show that this is the identity. 
\begin{align*}
\frac{e^{-\lambda h}\T_h- \mathbbm{1}}{h} \int_0^\infty e^{-\lambda s} \T_sx ds = \frac{1}{h}\left( \int_0^\infty e^{-\lambda (s+h)} \T_{s+h}x ds -\int_0^\infty e^{-\lambda s} \T_sx ds \right) = \frac{-1}{h}\int_0^h e^{\lambda s} \T_s x ds.
\end{align*}
Making $h \rightarrow 0$, the Fundamental Theorem of Calculus implies that $(A - \lambda)R(\lambda)x = -x$. Our hypothesis is that the integral $R(\lambda)x$ exists for all $x \in X$. This fact together with Lemma \ref{l1} yields
\[
(A-\lambda)\int_0^\infty e^{-\lambda s} \T_sx ds = \int_0^\infty e^{-\lambda s} \T_s(A-\lambda) x ds,
\]
since $A$ is closed. For $(ii)$, the previous item says that is sufficient to show that the integral $\int_0^\infty e^{-\lambda s} \T_s x ds$ exists for all $x \in X$. Lemma \ref{l2} together with the inequality
\[
\Bigg\|\int_0^t e^{-\lambda s} \T_s ds \Bigg\| \leq M  \int_0^t e^{\lambda^* - \mathrm{Re}(\lambda)} ds,
\]
yields the desired result. Item $(iii)$ follows easily from the last inequality.
\end{proof}

\begin{theorem}[\textbf{Hille-Yosida}]
Let $X$ be a Banach space and $A:\mathrm{dom}(A) \subset X \rightarrow X$ a linear operator. The following assertions are equivalent
\begin{itemize}
\item[(i)] $A$ generates a strongly continuous semigroup of contractions.
\item[(ii)] $A$ is closed, $\mathrm{dom}(A)$ is dense and, for all $\lambda>0$, we have that $\lambda \in \rho(A)$ and $\|\lambda R(\lambda , A)\| \leq 1$.
\end{itemize}
\end{theorem}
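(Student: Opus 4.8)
The plan is to prove the Hille--Yosida theorem by establishing both implications, with the forward direction $(i)\Rightarrow(ii)$ being essentially a collection of facts already assembled in the preceding propositions, and the reverse direction $(ii)\Rightarrow(i)$ requiring the genuine work of constructing the semigroup from the resolvent.

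For $(i)\Rightarrow(ii)$: assuming $A$ generates a strongly continuous contraction semigroup $\T$, the previous proposition shows $A$ is closed and $\mathrm{dom}(A)$ is dense. Since the semigroup is contractive we may take $M=1$ and $\lambda^*=\log\|\T_1\|\leq 0$, so by item $(ii)$ of the last proposition every $\lambda$ with $\mathrm{Re}(\lambda)>0$ lies in $\rho(A)$, and item $(iii)$ gives $\|R(\lambda,A)\|\leq \frac{1}{\mathrm{Re}(\lambda)-\lambda^*}\leq \frac{1}{\lambda}$ for real $\lambda>0$, i.e.\ $\|\lambda R(\lambda,A)\|\leq 1$. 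This direction is therefore a short bookkeeping argument.

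For $(ii)\Rightarrow(i)$: the heart of the proof is the Yosida approximation. Given $(ii)$, I would define for each $n\in\bbn$ the bounded operator $A_n := nAR(n,A) = n^2 R(n,A) - n\mathbbm{1}$, using the resolvent identity $nAR(n,A)=n(nR(n,A)-\mathbbm{1})$. The key preliminary facts to establish, in order, are: (a) $nR(n,A)x\to x$ for every $x\in X$ --- first for $x\in\mathrm{dom}(A)$, where $nR(n,A)x - x = R(n,A)Ax$ has norm at most $\frac{1}{n}\|Ax\|\to 0$, then for general $x$ by the density of $\mathrm{dom}(A)$ together with the uniform bound $\|nR(n,A)\|\leq 1$ and an $\varepsilon/3$ argument; (b) consequently $A_n x = nR(n,A)Ax \to Ax$ for every $x\in\mathrm{dom}(A)$; (c) since $A_n$ is bounded, the semigroups $\T^{(n)}_t := e^{tA_n}$ are well-defined (Example \ref{ex1} in finite dimensions, and the same power series in a Banach space), and each is a contraction semigroup because $\|e^{tA_n}\| = e^{-nt}\|e^{n^2 t R(n,A)}\| \leq e^{-nt}e^{n^2 t \|R(n,A)\|} \leq e^{-nt}e^{nt} = 1$.

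Then I would show $\{\T^{(n)}_t\}$ is Cauchy uniformly for $t$ in compact intervals: using that $A_n$ and $A_m$ commute (both being polynomials in commuting resolvents), write $\T^{(n)}_t x - \T^{(m)}_t x = \int_0^t \frac{d}{ds}\big(\T^{(m)}_{t-s}\T^{(n)}_s x\big)\,ds = \int_0^t \T^{(m)}_{t-s}\T^{(n)}_s (A_n - A_m)x\,ds$, so for $x\in\mathrm{dom}(A)$ the contraction bounds give $\|\T^{(n)}_t x - \T^{(m)}_t x\| \leq t\,\|A_n x - A_m x\| \to 0$; extend to all $x\in X$ by density and uniform boundedness. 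Define $\T_t x := \lim_n \T^{(n)}_t x$. The semigroup laws $\T_0=\mathbbm{1}$, $\T_{t+s}=\T_t\T_s$ pass to the limit, strong continuity follows from local uniformity of the convergence plus strong continuity of each $\T^{(n)}$, and contractivity is inherited. Finally I must verify that the generator $B$ of $\T$ is exactly $A$: for $x\in\mathrm{dom}(A)$ one passes to the limit in $\T^{(n)}_t x - x = \int_0^t \T^{(n)}_s A_n x\,ds$ (controlling the integrand by $\|\T^{(n)}_s A_n x - \T_s Ax\|\leq \|A_n x - Ax\| + \|\T^{(n)}_s Ax - \T_s Ax\|$) to get $\T_t x - x = \int_0^t \T_s Ax\,ds$, hence $x\in\mathrm{dom}(B)$ with $Bx=Ax$, so $A\subseteq B$; then, since $1\in\rho(A)\cap\rho(B)$ (the latter by the already-proved direction applied to $\T$) and $A\subseteq B$, the operator $(1-B)$ restricted to $\mathrm{dom}(A)$ is a bijection onto $X$ equal to $(1-A)$, forcing $\mathrm{dom}(A)=\mathrm{dom}(B)$ and $A=B$.

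The main obstacle is the convergence argument establishing that $\{\T^{(n)}_t x\}$ is Cauchy and that the limiting object genuinely has $A$ as its generator; everything else is either routine functional-analytic bookkeeping or already recorded in the lemmas above. In particular the delicate point is the last step --- ruling out that the limit semigroup has a strictly larger generator --- which is where the resolvent hypothesis $\lambda\in\rho(A)$ (rather than merely $A$ closed and densely defined) is indispensable.
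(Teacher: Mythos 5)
Your proposal follows essentially the same route as the paper: the Yosida approximation $A_n = nAR(n,A) = n^2R(n,A)-n\mathbbm{1}$, the contraction bound $\|e^{tA_n}\|\leq e^{-nt}e^{nt}=1$, the integral identity giving $\|\T^{(n)}_tx-\T^{(m)}_tx\|\leq t\|A_nx-A_mx\|$, and passage to the limit. In fact your write-up is more complete than the paper's at the final step --- the paper asserts but never verifies that the limit semigroup has generator exactly $A$, whereas you correctly close this by showing $A\subseteq B$ and then using $1\in\rho(A)\cap\rho(B)$ to force equality.
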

\begin{proof}
That $(i)$ implies $(ii)$ follows from the previous Lemmas and Propositions. Assume $(ii)$ and define the operators $A_n = nAR(n,A)$. These are well-defined bounded operators, since $n \in \rho(A)$ by hypothesis then $n(A - n\mathbbm{1} + n\mathbbm{1})R(n,A) = n^2R(n,A)-n\mathbbm{1}$.
We can define the semigroups $\T_{n,t} = e^{tA_n}$ by the series of the exponential since the boundedness of $A_n$ implies the convergence of it for all $x \in X$. We now will use these semigroups to define a new one, that will have $A$ as its generator. First, we claim that the limit $\lim_{n \rightarrow \infty} \T_{n,t}x $ exists. Indeed, for all $t>0$ and $x \in X$ we have
\[
\T_{n,t}x- \T_{m,t}x = \int_0^t \frac{d}{ds}\T_{n,s}\T_{m,t-s} ds =  \int_0^t \T_{n,s}\T_{m,t-s} (A_n x - A_m x) ds.
\]
But each $\T_{n,t}$ is a contraction semigroup, since $$\|T_n(t)\| \leq e^{-nt} \|e^{n^2R(n,A)t}\| \leq e^{-nt}e^{nt} = 1,$$ where the last inequality follows from our hypothesis on the norm of the resolvent $\|\lambda R(\lambda,A)\|<1$. Hence,
\[
\|\T_{n,t}x- T_{m,t}x\| \leq t\|A_n x - A_m x\|.
\]
We claim that the sequence $A_n x$ converges to $Ax$ if $x \in \mathrm{dom}(A)$. Indeed, for $x \in \mathrm{dom}(A)$ we know that $nR(n,A) x = R(n,A)Ax + x$ and the bound on the norm of the resolvent implies that $\lim_{n\rightarrow \infty} nR(n,A)x =x$. Because the domain is dense by $(ii)$, $nR(n,A)x \rightarrow x$ for all $x in X$, in particular, for $Ax$. We conclude that the sequence $\T_{n,s}x$ converges for all $x \in \mathrm{dom}(A)$ and for all $t\geq 0$. 
Define the semigroup $\T_tx = \lim_{n \rightarrow \infty} \T_{n,t}x$. This is clearly a semigroup of contractions. It remains to show that $\T_t$ is strongly continuous. For each $x \in X$ we have
\[
\|\T_tx -\T_sx\| \leq \|\T_s\| \|\T_{t-s}x - x\| \leq \|\T{t-s}x - x\| \leq \|\T_{t-s}x-\T_{n,t-s}x\|+\|\T_{n,t-s}x-x\|. 
\]
Choosing $n$ sufficiently large, the first part gets small and the uniform continuity of the groups $\T_{n,t}$ implies that the second part is small for small $t-s$.
\end{proof}

\begin{corollary}
Let $X$ be a Banach space and $A:\mathrm{dom}(A) \subset X \rightarrow X$ a linear operator. The following assertions are equivalent
\begin{enumerate}
\item $A$ generates a strongly continuous semigroup such that
\[
\|\T_t\|\leq e^{\lambda^* t}
\]
\item $A$ is closed, $\mathrm{dom}(A)$ is dense and, for all $\lambda > \lambda^*$ we have that $\lambda \in \rho(A)$ and
\[
\|(\lambda-\lambda^*)R(\lambda,A)\|\leq 1 
\]
\end{enumerate}
\end{corollary}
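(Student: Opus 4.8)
The plan is to reduce everything to the Hille--Yosida theorem by the standard rescaling trick: replace $A$ by the shifted operator $B := A - \lambda^*\mathbbm{1}$, which has the same domain as $A$ and is closed precisely when $A$ is. The point is that $A$ generates a strongly continuous semigroup $\T$ with $\|\T_t\|\leq e^{\lambda^* t}$ if and only if $B$ generates the strongly continuous semigroup $\mathcal{S}_t := e^{-\lambda^* t}\T_t$, and the latter is a semigroup of contractions exactly under the bound $\|\T_t\|\leq e^{\lambda^* t}$. That $\mathcal{S}$ is again a strongly continuous semigroup is immediate from the semigroup property of $\T$ and continuity of $t\mapsto e^{-\lambda^* t}$, and the computation that its generator is $A-\lambda^*\mathbbm{1}$ is exactly the one carried out in the lemma preceding this corollary (with $\lambda$ there replaced by $\lambda^*$).

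First I would record the resolvent identity under the shift: for any $\mu\in\mathbb{C}$ one has $\mu-B=(\mu+\lambda^*)-A$, so $\mu\in\rho(B)$ if and only if $\mu+\lambda^*\in\rho(A)$, and in that case $R(\mu,B)=R(\mu+\lambda^*,A)$. Consequently the condition ``$\lambda\in\rho(A)$ and $\|(\lambda-\lambda^*)R(\lambda,A)\|\leq 1$ for all $\lambda>\lambda^*$'' is, after the substitution $\mu=\lambda-\lambda^*$, literally the condition ``$\mu\in\rho(B)$ and $\|\mu R(\mu,B)\|\leq 1$ for all $\mu>0$''. Since also $\mathrm{dom}(B)=\mathrm{dom}(A)$ is dense if and only if $\mathrm{dom}(A)$ is, and $B$ is closed if and only if $A$ is closed, statement (2) of the corollary for $A$ is equivalent to statement (ii) of Hille--Yosida for $B$.

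It then remains only to glue the two equivalences together. Assume (1): then $B$ generates the contraction semigroup $\mathcal{S}$, so by Hille--Yosida $B$ is closed, densely defined, and $\|\mu R(\mu,B)\|\leq 1$ for $\mu>0$; translating back via the shift gives (2). Conversely, assume (2): then $B$ satisfies Hille--Yosida (ii), hence generates a strongly continuous contraction semigroup $\mathcal{S}$; setting $\T_t:=e^{\lambda^* t}\mathcal{S}_t$ gives a strongly continuous semigroup with $\|\T_t\|=e^{\lambda^* t}\|\mathcal{S}_t\|\leq e^{\lambda^* t}$, and its generator is $B+\lambda^*\mathbbm{1}=A$ by the same generator computation as above; this is (1).

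\textbf{Main obstacle.} There is no real obstacle beyond bookkeeping: the only thing to be careful about is the direction of the shift (whether one adds or subtracts $\lambda^*$) and the matching substitution $\mu=\lambda-\lambda^*$ in the resolvent bound, so that the normalizing factor $(\lambda-\lambda^*)$ in (2) lines up exactly with the factor $\mu$ appearing in Hille--Yosida. One should also double-check that the uniqueness clause ``every generator determines its semigroup uniquely'' (already proved above) is what guarantees that the $\T$ reconstructed from $\mathcal{S}$ in the converse direction is the \emph{same} semigroup one started from, so that the correspondence (1)$\Leftrightarrow$(2) is genuinely an equivalence and not merely an implication in each direction.
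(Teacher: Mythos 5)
Your proof is correct and follows exactly the intended route: the paper states this corollary without a proof, and the standard reduction to the contraction case of Hille--Yosida via the shift $B=A-\lambda^*\mathbbm{1}$, the rescaled semigroup $\mathcal{S}_t=e^{-\lambda^* t}\T_t$ (whose generator is $A-\lambda^*\mathbbm{1}$ by the lemma proved just before the theorem), and the identity $R(\mu,B)=R(\mu+\lambda^*,A)$ with $\mu=\lambda-\lambda^*$ is precisely the argument being invoked. The only superfluous point is your appeal to uniqueness of the generated semigroup in the converse direction: statement (1) only asserts that $A$ generates \emph{some} strongly continuous semigroup obeying the bound, so the generator computation for $\T_t=e^{\lambda^* t}\mathcal{S}_t$ already closes that implication.
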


\begin{theorem}[\textbf{Hille Yosida - General case}]
Let $X$ be a Banach space and $A:\mathrm{dom}(A) \subset X \rightarrow X$ a linear operator. The following assertions are equivalent
\begin{itemize}
\item[(i)] $A$ generates a strongly continuous semigroup such that
\[
\|\T_t\|\leq M e^{\lambda^* |t|}
\]
\item[(ii)] $A$ is closed, $\mathrm{dom}(A)$ is dense and, for all $|\lambda| > \lambda^*$ we have that $\lambda \in \rho(A)$ and
\[
\|(|\lambda| - \lambda^*)^n R(\lambda,A)^n\|\leq M, \;\; \forall n \in \mathbb{N}
\]
\end{itemize}
\end{theorem}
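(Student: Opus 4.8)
The plan is to reduce the general Hille--Yosida theorem to the contraction case already proved, via a renormalization of the norm on $X$. First I would observe that it suffices to handle the case $\lambda^*=0$: given the general operator $A$, replace it by $A-\lambda^*\mathbbm{1}$, which shifts the semigroup by the scalar factor $e^{-\lambda^* t}$ and shifts the spectrum so that the resolvent estimate becomes $\||\lambda|^n R(\lambda,A-\lambda^*)^n\|\le M$ for all $|\lambda|>0$. So I would state and prove the equivalence assuming $\lambda^*=0$, i.e. $A$ closed, densely defined, with $(0,\infty)\subset\rho(A)$ and $\|\lambda^n R(\lambda,A)^n\|\le M$ for all $\lambda>0$ and all $n\in\mathbb{N}$; the semigroup to be produced should satisfy $\|\T_t\|\le M$.

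The direction $(i)\Rightarrow(ii)$ follows from the earlier Lemmas and Propositions on the resolvent: if $A$ generates $\T$ with $\|\T_t\|\le M$, then for $\lambda>0$ the integral $R(\lambda,A)x=\int_0^\infty e^{-\lambda s}\T_s x\,ds$ exists and equals the resolvent, and iterating (differentiating $R(\lambda,A)$ in $\lambda$, or using the convolution/Laplace-transform identity $R(\lambda,A)^n = \tfrac{1}{(n-1)!}\int_0^\infty s^{n-1}e^{-\lambda s}\T_s x\,ds$) gives $\|R(\lambda,A)^n\|\le M\int_0^\infty \tfrac{s^{n-1}}{(n-1)!}e^{-\lambda s}\,ds = M\lambda^{-n}$. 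For $(ii)\Rightarrow(i)$, the key step is to build an equivalent norm making $A$ the generator of a genuine contraction semigroup. Define, for $x\in X$,
\[
\vertiii{x} \;=\; \sup_{n\ge 0,\ \lambda>0}\ \bigl\|\lambda^n R(\lambda,A)^n x\bigr\|,
\]
where the $n=0$ term is just $\|x\|$. By the hypothesis this supremum lies between $\|x\|$ and $M\|x\|$, so $\vertiii{\cdot}$ is a norm equivalent to $\|\cdot\|$, and $(X,\vertiii{\cdot})$ is again a Banach space. The point of this norm is that $\vertiii{\lambda R(\lambda,A)x}\le\vertiii{x}$ for every $\lambda>0$: applying $\mu^m R(\mu,A)^m$ to $\lambda R(\lambda,A)x$ and using the resolvent identity to commute things, one sees the new supremum is dominated by the old one. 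Hence in the renormalized space $A$ satisfies hypothesis (ii) of the contraction Hille--Yosida theorem with constant $1$, so $A$ generates a semigroup $\T$ with $\vertiii{\T_t}\le 1$, and translating back, $\|\T_t x\|\le \vertiii{\T_t x}\le \vertiii{x}\le M\|x\|$, i.e. $\|\T_t\|\le M$. Undoing the earlier shift by $\lambda^*$ recovers $\|\T_t\|\le M e^{\lambda^* t}$, and for the group case one applies the same reasoning to $-A$ to control $t<0$, giving $\|\T_t\|\le M e^{\lambda^*|t|}$.

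The main obstacle is verifying cleanly that the renormalized norm is contractive under $\lambda R(\lambda,A)$, that is, the estimate $\vertiii{\lambda R(\lambda,A)x}\le\vertiii{x}$; this is where the resolvent identity $R(\lambda,A)-R(\mu,A)=(\mu-\lambda)R(\lambda,A)R(\mu,A)$ and the mutual commutativity of $R(\lambda,A)$ and $R(\mu,A)$ must be used carefully, and one has to handle the double supremum over both $n$ and $\lambda$ so that composing resolvents does not increase it. Everything else — the shift reduction, the equivalence of norms, the appeal to the already-proved contraction theorem, and the group case via $-A$ — is routine once that inequality is in hand.
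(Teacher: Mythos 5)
Your proposal is correct and follows essentially the same route as the paper: the direction $(i)\Rightarrow(ii)$ via the integral/derivative representation of $R(\lambda,A)^n$, and $(ii)\Rightarrow(i)$ by renormalizing $X$ with the equivalent norm $\vertiii{x}=\sup_{n,\mu}\|\mu^n R(\mu,A)^n x\|$ (which the paper builds in two stages, first as $\|x\|_\mu=\sup_n\|\mu^n R(\mu,A)^n x\|$ and then as $\sup_\mu\|x\|_\mu$) so that the contraction Hille--Yosida theorem applies. Your preliminary shift to $\lambda^*=0$ and the single double-supremum definition are only cosmetic differences, and the contractivity estimate $\vertiii{\lambda R(\lambda,A)x}\le\vertiii{x}$ that you flag as the main obstacle is verified in the paper exactly as you sketch, via the resolvent identity and the monotonicity of $\|\cdot\|_\mu$ in $\mu$.
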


\begin{proof}
Assume that $(i)$ holds. The upper bound on the powers of the resolvent operators is the only thing that we did not proven yet. We claim that the resolvent satisfies
\[
R(\lambda,A)^nx = \frac{(-1)^{n-1}}{(n-1)!}\frac{d^{n-1}}{d\lambda^{n-1}}R(\lambda,A).
\] 
The proof will follow by induction. That it holds for $n=2$ follows from the identity $R(\lambda,A) - R(\mu,A) =  (\mu-\lambda)R(\lambda,A)R(\mu,A),$ for any $\lambda,\mu \in \rho(A)$. Suppose that our claim is valid for $n$. Then, for $n+1$, it holds
\[
\frac{(-1)^n}{n!}\frac{d^n}{d\lambda^n}R(\lambda,A)x = \frac{-1}{n}\frac{d}{d\lambda}R(\lambda,A)^n x = R(\lambda,A)^{n+1} x. 
\]
Using the integral formula proved in Lemma \ref{l1} with the relation of the powers of the resolvent with its derivatives implies
\[
\|R(\lambda,A)^nx\| \leq \frac{M}{(n-1)!}\int_0^{\infty}s^{n-1} e^{(\lambda^* - \mathrm{Re}(\lambda))s} ds,
\]
calculating the integral on the right-hand side we get the bounds in $(ii)$. Assume that $(ii)$ holds. We will change the topology in $X$ in a way that will be possible to apply the Hille-Yosida theorem. Define the new norm on $X$ by
\[
\|x\|_\lambda = \underset{n \geq 0}{\sup}\|\lambda^n R(\mu,A)^nx\|.
\]
The norm $\|\cdot\|_\lambda$ is well defined for $\lambda>\lambda^*$. These norms are equivalent to the previous norm $\|\cdot\|$ on $X$, since $\|x\| \leq \|x\|_\mu \leq M\|x\|$, where $M$ is the constant in $(ii)$. The norm $\|\cdot\|_\lambda$ have some properties and we list them.
\begin{enumerate}
\item $\|\mu R(\mu,A) \|_\mu \leq 1$.
\item $\| \lambda' R(\lambda',A)\|_\lambda \leq 1$, $\forall \omega<\lambda'\leq \lambda$.
\item $\|(\lambda')^n R(\lambda',A)^n x\| \leq \|(\lambda')^n R(\lambda',A)^n x\|_\lambda \leq \|x\|_\lambda$, for all $\lambda^* < \lambda' \leq \lambda'$ and $n\geq 0$.
\end{enumerate}
The first one follows from
\[
\|\lambda R(\lambda,A)x\|_\lambda = \underset{n \geq 0}{\sup}\|\lambda^{n+1} R(\lambda,A)^{n+1}x\| \leq \underset{n \geq 0}{\sup}\|\mu^n R(\lambda,A)^nx\| = \|x\|_\lambda,
\]
for all $x \in X$. Using the identity $R(\lambda,A) = R(\lambda',A) + (\lambda-\lambda')R(\lambda,A)R(\lambda',A)$ for the resolvent we can show
\[ 
\|R(\lambda',A)x\|_\lambda \leq \|R(\lambda,A)x\|_\lambda +(\lambda-\lambda')\|R(\lambda,A)R(\lambda',A)x\|_\lambda.
\]
The first property with the inequality above yields
\[
\|R(\lambda',A)x\|_\lambda \leq \frac{1}{\lambda}\|x\|_\lambda +\frac{(\lambda-\lambda')}{\lambda}\|R(\lambda',A)x\|_\lambda.
\]
After some algebraic manipulations, we conclude the proof of item $2$. Item $3$ is proved by just iterating property $2$. We can relate the norms with $\lambda^*<\lambda' <\lambda$ by $\|x\|_{\lambda'} \leq \|x\|_\lambda$. Define the new norm $\vertiii{x} = \underset{\mu>\omega}{\sup}\|x\|_\mu $. This norm is equivalent to the original norm in the Banach space $X$ and we have $\vertiii{\lambda R(\lambda,A)} \leq 1$ and applying the Hille-Yosida Theorem for the contraction case yields the desired result.
\end{proof}

\begin{theorem}[\textbf{Hille Yosida for Groups}]
Let $X$ a Banach space and $A:\mathrm{dom}(A) \subset X \rightarrow X$ a linear operator. The following assertions are equivalent:
\begin{enumerate}
\item $A$ generates a strongly continuous group such that
\[
\|\T_t\|\leq M e^{\omega |t|}
\]
\item $A$ and $-A$ are generators of strongly continuous semigroups;
\item $A$ is closed, $\mathrm{dom}(A)$ is dense and, for all $\lambda > \lambda^*$ we have that $\lambda \in \rho(A)$ e
\[
\|(|\lambda|-\lambda^*)R(\lambda,A)^n\|\leq M, \;\; \forall n \in \mathbb{N}
\]
\end{enumerate}
\end{theorem}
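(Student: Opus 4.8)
The plan is to close the triangle of implications $(1)\Rightarrow(2)$, $(2)\Leftrightarrow(3)$ and $(2)\Rightarrow(1)$; the first implication and the equivalence $(2)\Leftrightarrow(3)$ are essentially formal consequences of the semigroup theory already developed (Lemma~\ref{l1}, Lemma~\ref{l2} and the general-case Hille--Yosida theorem), whereas $(2)\Rightarrow(1)$, which reconstructs a group out of the two one-sided semigroups, carries the real content. Throughout I read condition $(3)$ in its evident form: $\|(|\lambda|-\lambda^{*})^{n}R(\lambda,A)^{n}\|\le M$ for all real $\lambda$ with $|\lambda|>\lambda^{*}$ and all $n\in\bbn$.

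For $(1)\Rightarrow(2)$: if $\{\T_{t}\}_{t\in\bbr}$ is a strongly continuous group generated by $A$ with $\|\T_{t}\|\le Me^{\omega|t|}$, then $\{\T_{t}\}_{t\ge 0}$ is by definition a strongly continuous semigroup with generator $A$, while $\mathcal{S}_{t}:=\T_{-t}$ $(t\ge 0)$ is a strongly continuous semigroup (the algebraic relations and strong continuity are immediate) whose generator agrees with $-A$, since $\lim_{h\to 0^{+}}h^{-1}(\T_{-h}x-x)=-\lim_{h\to 0^{+}}(-h)^{-1}(\T_{-h}x-x)=-Ax$ for $x\in\mathrm{dom}(A)$. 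For $(2)\Leftrightarrow(3)$ one applies the general-case Hille--Yosida theorem to $A$ and to $-A$ separately: note $\mathrm{dom}(-A)=\mathrm{dom}(A)$ and that $-A$ is closed iff $A$ is, while for $\mu>0$ one has $\mu\in\rho(-A)\iff-\mu\in\rho(A)$ with $R(\mu,-A)=(\mu+A)^{-1}=-R(-\mu,A)$, hence $R(\mu,-A)^{n}=(-1)^{n}R(-\mu,A)^{n}$ and the bound $\|(\mu-\lambda^{*})^{n}R(\mu,-A)^{n}\|\le M$ for $\mu>\lambda^{*}$ is identical to $\|(|\lambda|-\lambda^{*})^{n}R(\lambda,A)^{n}\|\le M$ for $\lambda<-\lambda^{*}$. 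Thus ``$-A$ generates a strongly continuous semigroup bounded by $Me^{\lambda^{*}t}$'' is exactly the half-line resolvent estimate for $A$ on $\lambda<-\lambda^{*}$; combined with the estimate for $A$ itself on $\lambda>\lambda^{*}$ (enlarging $M$ and $\lambda^{*}$ to common values if needed) this yields $(2)\Leftrightarrow(3)$.

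Now the substantive step $(2)\Rightarrow(1)$. Let $\{\T_{t}^{+}\}_{t\ge 0}$ and $\{\T_{t}^{-}\}_{t\ge 0}$ be the strongly continuous semigroups generated by $A$ and $-A$; by Lemma~\ref{l2} they satisfy $\|\T_{t}^{\pm}\|\le M_{\pm}e^{\omega_{\pm}t}$, and we put $M=\max\{M_{+},M_{-}\}$, $\omega=\max\{\omega_{+},\omega_{-},0\}$. The key claim is that $\T_{t}^{-}=(\T_{t}^{+})^{-1}$ for each $t\ge 0$. To prove it, fix $t>0$, take $x\in\mathrm{dom}(A)=\mathrm{dom}(-A)$, and set $\mathcal{U}_{s}x=\T_{s}^{+}\T_{s}^{-}x$ for $s\in[0,t]$; since each semigroup leaves $\mathrm{dom}(A)$ invariant and satisfies $A\T_{s}^{\pm}y=\T_{s}^{\pm}Ay$ on $\mathrm{dom}(A)$ (Lemma~\ref{l1}), a product-rule computation in the spirit of the uniqueness argument for semigroups above gives $\tfrac{d}{ds}\mathcal{U}_{s}x=A\T_{s}^{+}\T_{s}^{-}x-\T_{s}^{+}A\T_{s}^{-}x=\T_{s}^{+}A\T_{s}^{-}x-\T_{s}^{+}A\T_{s}^{-}x=0$, so $\mathcal{U}_{s}x\equiv\mathcal{U}_{0}x=x$, and by density of $\mathrm{dom}(A)$ we get $\T_{s}^{+}\T_{s}^{-}=\mathbbm{1}$ on $X$; symmetrically $\T_{s}^{-}\T_{s}^{+}=\mathbbm{1}$. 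Define $\T_{t}:=\T_{t}^{+}$ for $t\ge 0$ and $\T_{t}:=\T_{-t}^{-}$ for $t<0$. Using the semigroup laws for $\T^{\pm}$ together with the cancellations $\T_{r}^{+}\T_{r}^{-}=\T_{r}^{-}\T_{r}^{+}=\mathbbm{1}$ one checks $\T_{s+t}=\T_{s}\T_{t}$ in each of the sign cases for $s,t$; moreover $\T_{0}=\mathbbm{1}$, strong continuity holds on each open half-line with the two branches matching continuously at $0$, and $\|\T_{t}\|\le Me^{\omega|t|}$. Since the generator of $\{\T_{t}\}_{t\ge 0}=\{\T_{t}^{+}\}_{t\ge 0}$ is $A$, the operator $A$ generates the strongly continuous group $\{\T_{t}\}_{t\in\bbr}$, which is $(1)$.

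I expect the only genuine obstacle to be the inverse identity $\T_{t}^{+}\T_{t}^{-}=\mathbbm{1}$ inside $(2)\Rightarrow(1)$: to differentiate $s\mapsto\T_{s}^{+}\T_{s}^{-}x$ legitimately one needs strong continuity (to bound $\T_{s+h}^{+}$ uniformly on $[0,t]$ and pass the limit through), the invariance of $\mathrm{dom}(A)$ under both semigroups, and the intertwining relation $A\T_{s}^{\pm}y=\T_{s}^{\pm}Ay$ from Lemma~\ref{l1}, all used together. Once that is in hand, verifying the group law, the exponential bound and the identification of the generator are routine, and $(2)\Leftrightarrow(3)$ is purely formal given the general-case Hille--Yosida theorem and the resolvent identity $R(\mu,-A)=-R(-\mu,A)$.
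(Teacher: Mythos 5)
Your proposal is correct and follows the same route the paper indicates, namely reducing everything to the general-case Hille--Yosida theorem applied to $A$ and to $-A$. The paper's proof is literally the one line ``Just apply the Hille-Yosida for $A$ and $-A$,'' so your contribution is to supply the details it suppresses --- in particular the inverse identity $\T_{t}^{+}\T_{t}^{-}=\T_{t}^{-}\T_{t}^{+}=\mathbbm{1}$ obtained by differentiating $s\mapsto\T_{s}^{+}\T_{s}^{-}x$ and the gluing of the two semigroups into a group --- all of which is accurate.
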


\begin{proof}
Just apply the Hille-Yosida for $A$ and $-A$.
\end{proof}

\begin{definition}
Let $X$ be a Banach space. A linear operator $A: \mathrm{dom}(A) \subset X \rightarrow X$ is called \textbf{dissipative} if, for all $\lambda >0$, it holds that
\[
\|(\lambda-A)x\| \geq \lambda \|x\|
\]
\end{definition}

\begin{lemma}\label{l4}
Let $X$ be a Banach space and $A:\mathrm{dom}(A) \subset X \rightarrow X$ a dissipative operator. Then the following hold
\begin{itemize}
\item[(i)] If $\mathrm{dom}(A)$ is dense, then $A$ is closable and the closure is dissipative.
\item[(ii)] $\lambda-\overline{A}(\mathrm{dom}(\overline{A}))=\overline{\lambda-A(\mathrm{dom}(A))}$, for all $\lambda>0$.
\item[(iii)] If $\lambda-A$ is surjective for some $\lambda$ then it is surjective for all $\lambda>0$.
\end{itemize}
\end{lemma}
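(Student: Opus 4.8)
The plan is to treat the three items in turn, since (ii) and (iii) will lean on (i).

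For (i), assume $\mathrm{dom}(A)$ is dense and let $x_n \in \mathrm{dom}(A)$ with $x_n \to 0$ and $A x_n \to y$; the goal is $y = 0$. The key move is to test dissipativity not on the $x_n$ themselves but on the perturbed vectors $x_n + \mu^{-1} w$, where $w \in \mathrm{dom}(A)$ is arbitrary and $\mu > 0$: from $\|\mu(x_n + \mu^{-1} w) - A(x_n + \mu^{-1} w)\| \geq \mu \|x_n + \mu^{-1} w\|$, letting $n \to \infty$ gives $\|w - y - \mu^{-1} A w\| \geq \|w\|$, and then letting $\mu \to \infty$ gives $\|w - y\| \geq \|w\|$ for every $w \in \mathrm{dom}(A)$. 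By density this inequality extends to all $w \in X$, and taking $w = y$ forces $y = 0$, so $A$ is closable. That $\overline{A}$ is again dissipative is then immediate: given $x \in \mathrm{dom}(\overline{A})$, approximate by $x_n \in \mathrm{dom}(A)$ with $x_n \to x$ and $A x_n \to \overline{A} x$, and pass to the limit in $\|\lambda x_n - A x_n\| \geq \lambda \|x_n\|$.

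For (ii), the main point is that $\lambda - \overline{A}$ has closed range. Indeed, $\overline{A}$ is closed and, by (i), dissipative, so if $(\lambda - \overline{A}) x_n \to z$ then dissipativity gives $\|x_n - x_m\| \leq \lambda^{-1}\|(\lambda - \overline{A})(x_n - x_m)\| \to 0$, hence $x_n \to x$, $\overline{A} x_n = \lambda x_n - (\lambda - \overline{A}) x_n \to \lambda x - z$, and closedness of $\overline{A}$ yields $x \in \mathrm{dom}(\overline{A})$ with $(\lambda - \overline{A}) x = z$. Since $(\lambda - A)(\mathrm{dom}(A)) \subseteq (\lambda - \overline{A})(\mathrm{dom}(\overline{A}))$ and the right-hand set is closed, it contains the closure of the left-hand set; conversely every $(\lambda - \overline{A}) x$ is the limit of $(\lambda - A) x_n = \lambda x_n - A x_n$ along an approximating sequence, which gives the reverse inclusion, hence equality.

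For (iii), suppose $\lambda_0 - A$ is surjective for some $\lambda_0 > 0$. Dissipativity gives injectivity and the bound $\|(\lambda_0 - A)^{-1}\| \leq \lambda_0^{-1}$ on all of $X$, so $(\lambda_0 - A)^{-1}$ is everywhere defined and bounded; in particular $A$ is closed and $\lambda_0 \in \rho(A)$. A Neumann series argument then shows that the open disc $\{\mu : |\mu - \lambda_0| < \lambda_0\}$ lies in $\rho(A)$, and for real $\mu > 0$ in this disc the bound $\|(\mu - A)^{-1}\| \leq \mu^{-1}$ again follows from dissipativity. Hence the set $S = \{\lambda > 0 : \lambda - A \text{ is surjective}\}$ equals $\rho(A) \cap (0,\infty)$, is nonempty, is open, and is closed in $(0,\infty)$: if $\lambda_n \in S$ and $\lambda_n \to \lambda > 0$, then eventually $|\lambda - \lambda_n| < \lambda_n \leq 1/\|(\lambda_n - A)^{-1}\|$, so $\lambda \in \rho(A)$. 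Since $(0,\infty)$ is connected, $S = (0,\infty)$. The main obstacle is item (i): choosing the test vectors $x_n + \mu^{-1} w$ and the order of the two limits ($n \to \infty$ before $\mu \to \infty$), combined with density to promote the final inequality from $\mathrm{dom}(A)$ to all of $X$, is what makes the closability argument work; items (ii) and (iii) are then routine closed-range and connectedness arguments.
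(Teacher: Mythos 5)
Your proof is correct and follows essentially the same route as the paper: dissipativity applied to perturbed vectors plus density of $\mathrm{dom}(A)$ for closability and dissipativity of $\overline{A}$, graph density together with the closed-range consequence of dissipativity for (ii), and the Neumann series around a point of surjectivity with the bound $\|R(\lambda,A)\|\leq 1/\lambda$ for (iii). The only cosmetic difference is that you package the propagation in (iii) as an open-closed-connectedness argument on $(0,\infty)$, where the paper iterates the Neumann-disc step, and your (ii) spells out the closed-range step that the paper leaves implicit.
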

\begin{proof}
$(1)$ Let us define
\[
D = \{x \in X: \exists \{x_n\}_{n \in \mathbb{N}} \in \mathrm{dom}(A)\; \text{such that}\; x_n \rightarrow x \;\text{and}\; Ax_n \rightarrow y\}.
\]
Our goal is to show that the operator $\overline{A}:D \rightarrow X$ defined as
\[
\overline{A}x = \lim_{n\rightarrow \infty} Ax_n, \;\; \text{where} \; \lim_{n\rightarrow\infty} x_n = x,
\]
is a well-defined closed operator that extends $A$. First, let us show that $\overline{A}$ does not depend on the chosen sequence. So, consider $\{x_n\}_{n \in \mathbb{N}},\{z_n\}_{n \in \mathbb{N}}\in \mathrm{dom}(A)$ converging to $x$ such that $Ax_n \rightarrow y$ and $Az_n \rightarrow w$. For all $\lambda>0$, the dissipativity of $A$ implies, for $x' \in \mathrm{dom}(A)$
\[
\|\lambda(\lambda-A)(x_n-z_n) -(\lambda-A)x'\| \geq \lambda\|\lambda(x_n - y_n) - x'\|.
\]
Taking limits on both sides of the inequality above and using the continuity of the norm gives us
\[
\|\lambda( y - w) - \lambda x' + Ax'\| \geq \lambda \| x'\|.
\]
Diving both sides of the inequality above and taking the limit $\lambda \rightarrow \infty$ we get $\|y-w - x'\| \geq \|x'\|$. The domain $\mathrm{dom}(A)$ is dense so we can choose $x'$ arbitrarily close to $y-w$, concluding that $y=w$. We must show that $\overline{A}$ is closed. Consider $\{x_n\}_{n \in \mathbb{N}} \in D$ such that $x_n \rightarrow x$ and $\overline{A}x_n \rightarrow y$. By the definition of the set $D$, there is $\{x_{n,k}\}_{k \in \mathbb{N}} \in \mathrm{dom}(A)$ such that
\[
\lim_{k \rightarrow \infty} x_{n,k} = x_n\;\; \text{and} \;\; \lim_{k \rightarrow \infty} A x_{n,k} = \overline{A} x_n.
\]
Thus for every $n>0$ we can find $k_n$ such that $\|x_n - x_{n,k_n}\| +\|Ax_{n,k_n} - Ax_n\| \leq 1/n$. Define $z_n = x_{n,k_n}$ and we found a sequence in $\mathrm{dom}(A)$ that converges to $x$ and $Az_n$ converges to $y$. Indeed,
\[
\|x - z_n\| + \|y - Az_n\|\leq \|x-x_n\| + \|x_n - z_n\| +\|y - \overline{A}x_n\|+\|Az_n - \overline{A}x_n\| \leq \|x-x_n\| + \|y - \overline{A}x_n\| +\frac{1}{n},
\]
concluding that $\overline{A}$ is closed and extends $A$. Because $\mathrm{dom}(A)$ is dense in the graph norm in $D(\overline{A}) = D$ for all $x$ in $D$ we can find a sequence $x_n \in \mathrm{dom}(A)$ such that $Ax_n$ converges to $\overline{A}x$. This is sufficient to see that the closure will also be dissipative. For item $(ii)$ note that our construction of the closure shows us that the graph of $A$ is dense in the graph of$\overline{A}$. This implies that $\overline{G(A)} = G(\overline{A})$ and, consequently, our assertion. For item $(iii)$ let $\lambda_0$ be the constant such that $\lambda_0 - A$ is surjective. Because $A$ is dissipative, we know that $\lambda_0 \in \rho(A)$. For $\lambda \in \rho(A)$ it follows
\[
\lambda - A = \lambda_0 - A +(\lambda-\lambda_0) \Rightarrow  \lambda - A = (\mathbbm{1} - (\lambda_0-\lambda)R(\lambda_0,A))(\lambda_0 - A).
\]
This identity tells us that $\lambda-A$ is invertible if and only if $(\mathbbm{1} - (\lambda_0-\lambda)R(\lambda_0,A))$ is invertible. This will happen when $|\lambda - \lambda_0| \leq 1/\|R(\lambda_0,A)\|$ because we can write the Neumann series
\[
(\mathbbm{1} - (\lambda_0-\lambda)R(\lambda_0,A))^{-1} = \sum_{n=0}^\infty (\lambda_0-\lambda)^nR(\lambda_0,A)^{n+1}.
\]
Since the operator is dissipative, the resolvent in $\lambda_0$ satisfies $\|R(\lambda_0,A)\|\leq \frac{1}{\lambda_0}$, implying that the radius of convergence of the Neumann series is greater than $\lambda_0$, so $(0,2\lambda_0) \subset \rho(A)$. Proceeding in this way, we can show that the Neumann series converges for all $\lambda>0$.
\end{proof}
\begin{theorem}[\textbf{Lummer-Phillips}]\label{t3}
Let $X$ be a Banach space and $A$ a linear operator densely defined and dissipative. The following assertions are equivalent
\begin{itemize}
\item[(i)] The closure of  $A$ generates a semigroup of contractions;
\item[(ii)] $(\lambda - A)(\mathrm{dom}(A))$ is dense for some $\lambda >0$.
\end{itemize}
\end{theorem}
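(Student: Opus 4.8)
The plan is to establish the two implications separately, in each case reducing to the Hille--Yosida theorem for contraction semigroups stated above, and using the three parts of Lemma~\ref{l4} to pass between $A$ and its closure $\overline{A}$.

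For $(i)\Rightarrow(ii)$, assume $\overline{A}$ generates a strongly continuous semigroup of contractions. By the Hille--Yosida theorem every $\lambda>0$ belongs to $\rho(\overline{A})$, so $(\lambda-\overline{A})(\mathrm{dom}(\overline{A}))=X$ for all such $\lambda$. Part~(ii) of Lemma~\ref{l4} identifies this range with $\overline{(\lambda-A)(\mathrm{dom}(A))}$, hence $(\lambda-A)(\mathrm{dom}(A))$ is dense in $X$ for every $\lambda>0$, which is more than $(ii)$ requires.

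For $(ii)\Rightarrow(i)$, fix $\lambda_0>0$ with $(\lambda_0-A)(\mathrm{dom}(A))$ dense. Since $A$ is densely defined and dissipative, part~(i) of Lemma~\ref{l4} gives that $A$ is closable and $\overline{A}$ is again dissipative; moreover $\overline{A}$ is closed and $\mathrm{dom}(\overline{A})\supseteq\mathrm{dom}(A)$ is dense. By part~(ii) of Lemma~\ref{l4}, $(\lambda_0-\overline{A})(\mathrm{dom}(\overline{A}))=\overline{(\lambda_0-A)(\mathrm{dom}(A))}=X$, so $\lambda_0-\overline{A}$ is surjective; part~(iii) then upgrades this to surjectivity of $\lambda-\overline{A}$ for \emph{all} $\lambda>0$. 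Dissipativity of $\overline{A}$ gives injectivity of each $\lambda-\overline{A}$ directly from $\|(\lambda-\overline{A})x\|\geq\lambda\|x\|$, and writing $y=(\lambda-\overline{A})x$ in the same inequality yields $\|\lambda R(\lambda,\overline{A})\|\leq 1$. Thus $\overline{A}$ satisfies all the hypotheses of the Hille--Yosida theorem and generates a strongly continuous contraction semigroup.

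The delicate point — the main obstacle — is not in this bookkeeping but in the facts from Lemma~\ref{l4} on which it rests: one must know that the \emph{dense} range of $\lambda_0-A$ forces the \emph{full} range of $\lambda_0-\overline{A}$ (closedness alone would give only a closed range, while density plus closedness together give $X$), and that a single surjective $\lambda_0$ propagates to all $\lambda>0$ through the Neumann-series perturbation argument using $\|R(\lambda_0,\overline{A})\|\leq 1/\lambda_0$. Before invoking Hille--Yosida I would also check that its hypothesis list is exactly ``closed, densely defined, $(0,\infty)\subseteq\rho(\overline{A})$, and $\|\lambda R(\lambda,\overline{A})\|\leq 1$'', so that nothing further needs verification.
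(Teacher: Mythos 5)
Your argument is correct and follows essentially the same route as the paper's proof: both directions reduce to the contraction case of the Hille--Yosida theorem, with Lemma~\ref{l4} used to pass between the range of $\lambda-A$ and that of $\lambda-\overline{A}$, to propagate surjectivity from one $\lambda_0$ to all $\lambda>0$, and with dissipativity supplying injectivity and the bound $\|\lambda R(\lambda,\overline{A})\|\leq 1$. If anything, your write-up is slightly more explicit than the paper's about which part of Lemma~\ref{l4} is invoked at each step.
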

\begin{proof}
Assume that $(i)$ holds. By the Hille-Yosida Theorem, we know that all $\lambda >0$ are in the resolvent set $\rho (\overline{A})$. The image of $\lambda-\overline{A}$ is all $X$ because the operator is invertible. But Lemma \ref{l4} implies that $(\lambda - \overline{A})(\mathrm{dom}(\overline{A}))= \overline((\lambda - A)(\mathrm{dom}(A)))=X$ and we conclude that the image is dense. Assume now that $(ii)$ holds. Since the image of $\lambda-A$ is dense we have that the image of $\lambda-\overline{A}$ is all $X$, by Lemma \ref{l4}. Because dissipative operators are clearly injective, this implies that $\lambda-\overline{A}$ is surjective, and this being true, implies that $\lambda \in \rho(A)$. Again, Lemma \ref{l4} gives us that $(0,\infty) \subset \rho(A)$ and the dissipativity gives $\|R(\lambda,A)\| \leq 1/\lambda$. the hypothesis of the contraction case of the Hille-Yosida theorem is satisfied, so $\overline{A}$ generates a strongly continuous semigroup of contractions. 
\end{proof}

We finish this section by stating the Trotter-Kato theorem, an important result about approximating one-parameter semigroups by approximating their generators.
\pagebreak

\begin{theorem}[\textbf{Trotter-Kato}]\label{TrotterKato}
    Let $\T_t$ and $\T_{n,t}$ strongly continuous one-parameter semigroups on a Banach space $X$, with generators $A$ an $A_n$, respectively. Suppose there is a $\lambda^*>0$ such that
    \[
    \max_{n\geq 1}\{ \|\T_t\|,\|\T_{n,t}\|\}\leq e^{\lambda^* t},
    \]
    for all $t \geq 0$. Then consider the following assertions
    \begin{itemize}
        \item [(i)] There exists a nontrivial $D\subset \mathrm{dom}(A)$ such that $A^n D \subset \mathrm{dom}(A)$ for all $n\geq 1$ and  $D\subset \mathrm{dom}(A_n)$ for all $n\in \mathbb{N}$ and $\lim_{n\rightarrow \infty}A_nx = Ax$ for all $x \in D$.
        \item [(ii)] For each $x\in D$, there exists $x_n \in \mathrm{dom}(A_n)$ such that
        \[
        \lim_{n\rightarrow \infty} x_n = x \quad \text{ and } \quad \lim_{n\rightarrow \infty} A_nx_n = Ax.
        \]
        \item [(iii)] $\lim_{n\rightarrow \infty}R(\lambda,A_n)\rightarrow R(\lambda,A)$ for all $\lambda > \lambda^*$.
        \item [(iv)] $\lim_{n\rightarrow \infty}\T_{n,t}x = \T_tx$ for all $x \in X$ uniformily in compacts.
    \end{itemize}
    Then we have the following chain of implications $(i) \Rightarrow (ii) \Leftrightarrow (iii) \Leftrightarrow (iv).$
\end{theorem}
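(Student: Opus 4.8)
The plan is to prove the stated chain of equivalences via the five implications $(i)\Rightarrow(ii)$, $(iii)\Rightarrow(ii)$, $(ii)\Rightarrow(iii)$, $(iv)\Rightarrow(iii)$ and $(iii)\Rightarrow(iv)$. The workhorse is the uniform resolvent bound: for real $\lambda>\lambda^*$ the Proposition above gives $\lambda\in\rho(A)\cap\bigcap_n\rho(A_n)$ together with the Laplace representations $R(\lambda,A)=\int_0^\infty e^{-\lambda s}\T_s\,ds$ and $R(\lambda,A_n)=\int_0^\infty e^{-\lambda s}\T_{n,s}\,ds$, whence $\|R(\lambda,A)\|,\|R(\lambda,A_n)\|\le(\lambda-\lambda^*)^{-1}$; moreover $R(\lambda,A)$ commutes with every $\T_t$. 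I will also use the elementary fact that a uniformly norm-bounded sequence of operators converging strongly on a dense subspace converges strongly on all of $X$, and even uniformly on compact subsets of $X$.

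Two implications are immediate. For $(i)\Rightarrow(ii)$, since $D\subset\mathrm{dom}(A_n)$ take $x_n:=x$: then $x_n\to x$ and $A_nx_n=A_nx\to Ax$. For $(iii)\Rightarrow(ii)$, given $x\in D$ set $x_n:=R(\lambda,A_n)(\lambda-A)x\in\mathrm{dom}(A_n)$; by $(iii)$ applied to $(\lambda-A)x$ we get $x_n\to R(\lambda,A)(\lambda-A)x=x$, and $A_nR(\lambda,A_n)=\lambda R(\lambda,A_n)-\mathbbm{1}$ yields $A_nx_n=\lambda x_n-(\lambda-A)x\to Ax$. For $(ii)\Rightarrow(iii)$, fix $x\in D$, set $y:=(\lambda-A)x$ and $y_n:=(\lambda-A_n)x_n=\lambda x_n-A_nx_n$; then $y_n\to y$ and $R(\lambda,A_n)y=x_n+R(\lambda,A_n)(y-y_n)\to x=R(\lambda,A)y$, so $R(\lambda,A_n)\to R(\lambda,A)$ strongly on $(\lambda-A)D$. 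Since $D$ is a core for $A$ — this is where the hypotheses on $D$ (density together with $A^nD\subset\mathrm{dom}(A)$ for all $n\ge1$) are used — the subspace $(\lambda-A)D$ is dense in $X$, and the uniform bound upgrades the convergence to all of $X$, for this $\lambda$ and then, via the resolvent identity, for every $\lambda>\lambda^*$.

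The implication $(iv)\Rightarrow(iii)$ is short: for $y\in X$, split $\int_0^\infty e^{-\lambda s}(\T_{n,s}-\T_s)y\,ds$ at a large $S$; the tail is $\le 2(\lambda-\lambda^*)^{-1}e^{-(\lambda-\lambda^*)S}\|y\|$ uniformly in $n$, and on $[0,S]$ the integrand tends to $0$ uniformly in $s$ by $(iv)$, so $R(\lambda,A_n)y\to R(\lambda,A)y$. The substantial implication is $(iii)\Rightarrow(iv)$. For $z\in\mathrm{dom}(A)$ I differentiate $s\mapsto\T_{n,s}R(\lambda,A_n)\T_{t-s}z$ on $[0,t]$ — legitimate since $R(\lambda,A_n)\T_{t-s}z\in\mathrm{dom}(A_n)$ while $\T_{t-s}z\in\mathrm{dom}(A)$ — and, using $A_nR(\lambda,A_n)=\lambda R(\lambda,A_n)-\mathbbm{1}$, integrate to the identity
\[
\T_{n,t}R(\lambda,A_n)z-R(\lambda,A_n)\T_t z=\int_0^t\T_{n,s}\Big(R(\lambda,A_n)\T_{t-s}(\lambda-A)z-\T_{t-s}z\Big)ds.
\]
Set $x:=R(\lambda,A)z$. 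Because $R(\lambda,A_n)z\to x$, $R(\lambda,A_n)\T_t z\to\T_tR(\lambda,A)z=\T_tx$ and $\|\T_{n,t}\|\le e^{\lambda^*t}$, the left side equals $\T_{n,t}x-\T_tx+o(1)$, uniformly for $t$ in a compact interval. On the right side $\T_{t-s}z=R(\lambda,A)\T_{t-s}(\lambda-A)z$, so the integrand is $\T_{n,s}\big(R(\lambda,A_n)-R(\lambda,A)\big)\T_{t-s}(\lambda-A)z$, of norm at most $e^{\lambda^*T}\sup_{0\le r\le T}\|(R(\lambda,A_n)-R(\lambda,A))\T_r(\lambda-A)z\|\to0$, since $R(\lambda,A_n)\to R(\lambda,A)$ uniformly on the compact set $\{\T_r(\lambda-A)z:0\le r\le T\}$. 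Hence $\T_{n,t}x\to\T_tx$ uniformly on compact $t$-intervals for every $x$ in the dense subspace $R(\lambda,A)\,\mathrm{dom}(A)$, and the uniform boundedness of $\T_{n,t}$ extends this to all $x\in X$, which is $(iv)$.

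I expect the main obstacle to be twofold. The technical core is the identity in $(iii)\Rightarrow(iv)$ and, above all, promoting the two pointwise-in-$t$ convergences appearing there to convergence uniform on compact $t$-intervals — this is exactly why one needs the ``uniform on compacts'' form of the dense-set principle, not just strong convergence. The conceptual point is the density of $(\lambda-A)D$ in $(ii)\Rightarrow(iii)$: it is the statement that $D$ is a core for $A$, and it is here that the structural assumptions on $D$ in $(i)$ are essential; everything else reduces to bookkeeping with uniformly bounded operators and dominated convergence.
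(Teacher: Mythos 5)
Your proof is correct. The paper does not actually prove this theorem --- it only states it and cites Theorem 4.8 of \cite{En} --- and your argument is essentially the standard one from that reference: the two short implications via explicit choices of $x_n$, the Laplace-transform tail estimate for $(iv)\Rightarrow(iii)$, and, for the main implication $(iii)\Rightarrow(iv)$, the integrated identity
$\T_{n,t}R(\lambda,A_n)z-R(\lambda,A_n)\T_t z=\int_0^t\T_{n,s}\bigl(R(\lambda,A_n)-R(\lambda,A)\bigr)\T_{t-s}(\lambda-A)z\,ds$
combined with the fact that uniformly bounded operators converging strongly converge uniformly on compact sets. The one point worth flagging is that assertion $(i)$ as printed only calls $D$ ``nontrivial,'' while your step $(ii)\Rightarrow(iii)$ needs $(\lambda-A)D$ dense, i.e.\ $D$ a core for $A$; this is how the hypothesis reads in the cited source, and you correctly identify and use that stronger (intended) reading.
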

The proof can be found in Theorem 4.8 of \cite{En}.

\chapter{Poisson Point Processes}
\label{sec:appendix}

\section{Point Processes}
\label{sec:appendix:ppp}

  The main goal of this section is to derive some important properties that we use in Chapter 5 to prove the consistency property for quantum DLR functionals. It is largely based on \cite{Reiss, Last2017, Jan}, and we suggest the interested reader consult them for a more detailed account. For our exposition, we will need to introduce a measure-theoretical construction called a \emph{coproduct space}. 
		
		\begin{definition}
			The \textbf{coproduct} or the \textbf{disjoint union} of countably infinitely many measure spaces $(X_n,\mathscr{A}_n, \mu_n)$ is defined as
			\begin{align*}
				\bigsqcup_{n \in \mathbb{N}}X_n \coloneqq \bigcup_{n\in \mathbb{N}}\{(x,n): x \in X_n\}, \;\;\;\;
				\mathscr{A} \coloneqq \left\{\bigsqcup_{n\in \mathbb{N}} A_n : A_n \in \mathscr{A}_n\right\},
			\end{align*}
			and the measure of each set is given by
			\[
			\;\;\;\;
			\mu\left(\bigsqcup_{n\in \mathbb{N}} A_n\right) \coloneqq \sum_{n \in \mathbb{N}}\mu_n(A_n).
			\]
		\end{definition}
		
		It is easy to check that the set $\mathscr{A}$ is a $\sigma$-algebra and that $\mu$ is a measure. 
  
  \begin{remark}
      The fact that we have a countable collection of spaces in our definition does not mean that one needs to restrict to this case: we could, as well, consider an uncountable family of measurable spaces.
  \end{remark} 
  
  We can define injections $i_n: X_n \rightarrow \bigsqcup_{n \in \mathbb{N}} X_n$ by $i_n(x) = (x,n)$. The maps $i_n$ are measurable since 
		\[
		i_m^{-1}\left(\bigsqcup_{n \in \mathbb{N}}A_n\right) = A_m \in \mathscr{A}_m,
		\] 
		by definition of the disjoint union. Moreover, the image of measurable sets is measurable. The next question would be what kind of measurable functions one can have on the coproduct. If we have a family of measurable functions $f_n: X_n\rightarrow Y$, with $\mathscr{B}$ the $\sigma$-algebra of $Y$, we can define a unique measurable function on the coproduct $f:\sqcup_{n \in \mathbb{N}}X_n\rightarrow Y$ by $$f(i_n(x)) \coloneqq f_n(x).$$ Actually, all measurable functions on the coproduct are of this form. This is summarized in the next proposition.
  \pagebreak
    \begin{proposition}\label{prop_measurable_coproduct}
        Let $(X_n,\mathscr{A}_n)_{n\in \mathbb{N}}$, $(Y,\mathscr{B})$ be measurable spaces and $\bigsqcup_{n\in \mathbb{N}} X_n$ the coproduct. Then for every measurable function $f:\bigsqcup_{n\in \mathbb{N}} X_n\rightarrow Y$ there are $f_n:X_n\rightarrow Y$ such that $f\circ i_n = f_n$. Moreover, if $f_n: X_n\rightarrow Y$ is a family of measurable functions there is a unique function $f$ measurable in the coproduct such that $f\circ i_n = f_n$.
    \end{proposition}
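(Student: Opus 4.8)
The plan is to treat this as the standard universal property of a coproduct of measurable spaces, handling the two assertions separately. For the first one, given a measurable $f:\bigsqcup_{n\in\mathbb{N}}X_n\to Y$, I would simply set $f_n\coloneqq f\circ i_n$. Since each injection $i_n$ was already shown above to be measurable, $f_n$ is a composition of measurable maps, hence measurable, and the required identity $f\circ i_n=f_n$ holds by construction; nothing further is needed here.

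For the second assertion I would first dispose of uniqueness. The images $i_n(X_n)=\{(x,n):x\in X_n\}$ form a partition of $\bigsqcup_{n\in\mathbb{N}}X_n$, so any map out of the coproduct is completely determined by its restrictions to these pieces, that is, by the collection $\{f\circ i_n\}_{n\in\mathbb{N}}$; hence at most one $f$ can satisfy $f\circ i_n=f_n$ for all $n$. For existence, define $f$ on the coproduct by $f(x,n)\coloneqq f_n(x)$, equivalently $f\circ i_n=f_n$. It then remains to check measurability: for $B\in\mathscr{B}$,
\[
f^{-1}(B)=\bigcup_{n\in\mathbb{N}}\{(x,n): x\in f_n^{-1}(B)\}=\bigsqcup_{n\in\mathbb{N}}f_n^{-1}(B),
\]
and since each $f_n$ is measurable we have $f_n^{-1}(B)\in\mathscr{A}_n$, so the right-hand side belongs to $\mathscr{A}$ by the very definition of the $\sigma$-algebra on the coproduct. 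Thus $f$ is measurable, which completes the argument.

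There is no serious obstacle here; the only point requiring a moment of care is the set-theoretic identity $f^{-1}(B)=\bigsqcup_{n\in\mathbb{N}} f_n^{-1}(B)$, which one verifies by chasing an element $(x,n)$ through the definitions of $i_n$ and of $f$. Everything else is bookkeeping with the definition of the disjoint-union $\sigma$-algebra; note in particular that the measures $\mu_n$ play no role in this statement, as it concerns only the measurable structures.
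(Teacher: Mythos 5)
Your proof is correct and follows essentially the same route as the paper: the first part via measurability of the injections $i_n$, and the second by defining $f(x,n)=f_n(x)$ and checking that $f^{-1}(B)$ decomposes as the disjoint union of the $f_n^{-1}(B)$, which lies in $\mathscr{A}$ by definition. Your explicit uniqueness argument via the partition $\{i_n(X_n)\}$ is a small addition the paper leaves implicit, but it is the same idea.
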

    \begin{proof}
        The first part of the proposition is trivial since the injections $i_n$ are measurable. We prove only the second part. Let $f_n:X_n\rightarrow Y$ and define $f$ by $f(x,n) = f_n(x)$. Given $B \in \mathscr{B}$ we have 
		\[
		f^{-1}(B) = \bigcup_{n \in \mathbb{N}}f^{-1}(B)\cap i_n(X_n) = \bigcup_{n \in \mathbb{N}}i_n(f_n^{-1}(B)),
		\]
		hence $f$ is measurable. This shows that the measurable functions on the coproduct as in one-to-one correspondence with sequences $\{f_n\}_{n\geq 1}$ of measurable functions $f_n: X_n\rightarrow Y$. 
	
    \end{proof}
   Let $X$ be a Polish space and $\mathscr{B}(X)$ its Borel $\sigma$-algebra. The space of counting measures on $X$, that we will denote by $\mathbb{N}(X)$ henceforth, is defined as
		\be\label{def_counting_measures}
		\mathbb{N}(X)\coloneqq \Bigg\{\mu = \sum_{k=1}^n \delta_{x_k}, x_k \in X, n \in \overline{\mathbb{N}}_0\Bigg\},
		\ee
		where $\overline{\mathbb{N}}_0=\mathbb{N}\cup\{0,\infty\}$ and the case $n=0$ is the null measure, i.e. $\mu_\emptyset(A)=0$ for every $A \in \mathscr{B}(X)$. Define for each measurable set $A\subset X$ the projection map $\pi_A : \mathbb{N}(X)\rightarrow \overline{\mathbb{N}}_0$ by
		\[
		\pi_A(\mu) = \mu(A),
		\]
		and consider $\mathscr{N}(X)$ be the smallest $\sigma$-algebra on $\mathbb{N}(X)$ such that all projections $\pi_A$ are measurable. 
    
		\begin{definition}
			Let $(\Omega, \mathscr{A}, \mathbb{P})$ be a probability space. A point process is a measurable function $N: \Omega \rightarrow \mathbb{N}(X)$. 
		\end{definition}
  
		A point process is said to be \emph{simple} if for almost every $\omega \in \Omega$ it holds that $N(\omega)(\{x\})<2$, for any $x \in X$. Notice that, since the projections are measurable, we can define new measurable functions using the point process by
		\[
		N_A\coloneqq \pi_A \circ N.
		\]
		One can see $N(A)$ as a random choice of points inside $B$. 
		\begin{proposition}\label{prop_measurable_countmeasure}
			The following two assertions are equivalent
			\begin{enumerate}[label=(\roman*), series=l_after]
				\item $N:\Omega \rightarrow \mathbb{N}(X)$ is a point process;
				\item $N_A: \Omega \rightarrow \overline{\mathbb{N}}_0$ is measurable for every $A \in \mathscr{B}(X)$. 
			\end{enumerate}
		\end{proposition}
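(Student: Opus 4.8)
The plan is to invoke the standard criterion that a map into a measurable space whose target $\sigma$-algebra is generated by a family $\mathcal{G}$ of sets is measurable if and only if the preimage of every member of $\mathcal{G}$ is measurable. Here the natural generating family for $\mathscr{N}(X)$ is
$\mathcal{G} = \{\pi_A^{-1}(B) : A \in \mathscr{B}(X),\ B \in \mathscr{B}(\overline{\mathbb{N}}_0)\}$, and the fact that $\mathcal{G}$ generates $\mathscr{N}(X)$ is precisely the content of the definition of $\mathscr{N}(X)$ as the smallest $\sigma$-algebra on $\mathbb{N}(X)$ making all the projections $\pi_A$ measurable.

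For the implication $(i) \Rightarrow (ii)$, I would simply note that if $N$ is a point process, i.e. $(\mathscr{A}, \mathscr{N}(X))$-measurable, then since each $\pi_A$ is $\mathscr{N}(X)$-measurable by construction, the composition $N_A = \pi_A \circ N$ is $(\mathscr{A}, \mathscr{B}(\overline{\mathbb{N}}_0))$-measurable. This direction is immediate and needs no further work.

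For $(ii) \Rightarrow (i)$, assume $N_A$ is measurable for every $A \in \mathscr{B}(X)$. By the criterion above it is enough to verify $N^{-1}(G) \in \mathscr{A}$ for every $G \in \mathcal{G}$. Writing $G = \pi_A^{-1}(B)$ we compute $N^{-1}(\pi_A^{-1}(B)) = (\pi_A \circ N)^{-1}(B) = N_A^{-1}(B)$, which lies in $\mathscr{A}$ exactly because $N_A$ is measurable. Then the usual $\sigma$-algebra argument applies: the collection $\{E \in \mathscr{N}(X) : N^{-1}(E) \in \mathscr{A}\}$ is a $\sigma$-algebra containing $\mathcal{G}$, hence contains $\sigma(\mathcal{G}) = \mathscr{N}(X)$, so $N$ is measurable.

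The only step requiring a little care — and it is the closest thing to an obstacle — is making explicit that $\mathcal{G}$ generates $\mathscr{N}(X)$ and that testing measurability on generators suffices; but this is just the standard measurability lemma combined with the defining property of $\mathscr{N}(X)$, so no real difficulty arises. One may optionally observe that it even suffices to let $B$ range over the tail sets $\{k, k+1, \dots\}$ with $k \in \mathbb{N}$, since these generate $\mathscr{B}(\overline{\mathbb{N}}_0)$ and thus $N_{\{x\}}$-type conditions can be compared, but this refinement is not needed for the statement as given.
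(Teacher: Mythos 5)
Your proposal is correct and follows essentially the same route as the paper: the forward direction by composition of measurable maps, and the converse by testing $N^{-1}$ on the generating sets $\pi_A^{-1}(B)$ of $\mathscr{N}(X)$ and using the standard good-sets argument. Your write-up is in fact slightly more explicit than the paper's about why checking generators suffices, but there is no substantive difference.
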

		\begin{proof}
			Since the composition of measurable functions is measurable if $\pi_A$ and $N$ then $N_A = \pi_A \circ N$ is measurable. Assume that $(ii)$ holds. For each $C \subset \overline{\mathbb{N}}_0$, we have that $\pi_B^{-1}(C)$ is measurable by definition of the $\sigma$-algebra $\mathscr{N}(X)$. To show that $N$ is a measurable function, it is sufficient to show that $N^{-1}(\pi_B^{-1}(C))$ is measurable. But $N^{-1}(\pi_B^{-1}(C)) = (\pi_B\circ N)^{-1}(C)$ hence it is measurable. 
		\end{proof}
		
		\begin{example}
			Let $(\Omega, \mathscr{A}, \mathbb{P})$ be a probability space and $\xi_i:\Omega \rightarrow X$, $i=1,\dots, n$ random variable. Then,
			\[
			N = \sum_{i=1}^n \delta_{\xi_i}.
			\]
		\end{example}

        Next Lemma shows that there is a way to lift measurable sets in $X^n$ to measurable sets in the $\sigma$-algebra $\mathscr{N}(X)$. There is an intimate relationship between the coproduct space of the product spaces $X^n$, including $X^0\coloneqq \{0\}$ and $X^\infty$ the infinity countable product of copies of $X$ given the cylinder $\sigma$-algebra, and the space of counting measures $\mathbb{N}(X)$, where the later is basically a permutation invariant version of the coproduct. This idea can be seen in a rigorous fashion by defining the function $\varphi:\bigsqcup_{n \in \overline{\mathbb{N}}_0}X^n\rightarrow \mathbb{N}(X)$ be defined by
			\be\label{permutation}
			\varphi_n(x_1,\dots, x_n)= \sum_{i=1}^n \delta_{x_i}.
			\ee
    By Proposition \ref{prop_measurable_coproduct} we just need to show that each $\varphi_n$ is measurable.
    \begin{lemma}
        The functions $\varphi_n$ defined by Equation \eqref{permutation} are measurable.
    \end{lemma}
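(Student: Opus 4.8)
The goal is to show each $\varphi_n\colon X^n\to\mathbb{N}(X)$ is measurable, where $X^n$ carries the product Borel $\sigma$-algebra and $\mathbb{N}(X)$ carries the $\sigma$-algebra $\mathscr{N}(X)$ generated by the evaluation maps $\pi_A$, $A\in\mathscr{B}(X)$. Since $\mathscr{N}(X)$ is generated by the family $\{\pi_A : A\in\mathscr{B}(X)\}$, a standard fact about initial $\sigma$-algebras says that $\varphi_n$ is measurable if and only if $\pi_A\circ\varphi_n\colon X^n\to\overline{\mathbb{N}}_0$ is measurable for every $A\in\mathscr{B}(X)$. So the whole problem reduces to checking measurability of the compositions $\pi_A\circ\varphi_n$.

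First I would compute this composition explicitly. By the definition \eqref{permutation}, for $(x_1,\dots,x_n)\in X^n$ we have
\[
(\pi_A\circ\varphi_n)(x_1,\dots,x_n)=\varphi_n(x_1,\dots,x_n)(A)=\sum_{i=1}^{n}\delta_{x_i}(A)=\sum_{i=1}^{n}\mathbbm{1}_A(x_i).
\]
Thus $\pi_A\circ\varphi_n=\sum_{i=1}^n \mathbbm{1}_A\circ\mathrm{pr}_i$, where $\mathrm{pr}_i\colon X^n\to X$ is the $i$-th coordinate projection. Each $\mathrm{pr}_i$ is measurable by definition of the product $\sigma$-algebra, and $\mathbbm{1}_A$ is measurable because $A\in\mathscr{B}(X)$; hence each summand $\mathbbm{1}_A\circ\mathrm{pr}_i$ is a measurable function $X^n\to\{0,1\}\subset\overline{\mathbb{N}}_0$. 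A finite sum of measurable $\overline{\mathbb{N}}_0$-valued functions is measurable, so $\pi_A\circ\varphi_n$ is measurable for every $A$. Invoking the initial-$\sigma$-algebra criterion once more gives measurability of $\varphi_n$, and then Proposition \ref{prop_measurable_coproduct} assembles the $\varphi_n$ into the single measurable map $\varphi$.

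I do not expect any real obstacle here; the statement is essentially a bookkeeping exercise. The one point that deserves care is the reduction step: one must be sure that $\mathscr{N}(X)$ really is the initial $\sigma$-algebra for the family $\{\pi_A\}$ (which is exactly how it was defined in the excerpt) and recall the elementary lemma that a map into a space equipped with an initial $\sigma$-algebra is measurable as soon as its compositions with the generating maps are measurable. A secondary small point is handling arithmetic in $\overline{\mathbb{N}}_0=\mathbb{N}\cup\{0,\infty\}$, but since we are summing only $n<\infty$ bounded terms, no genuine issue with $\infty$ arises for finite $n$ (the value $\infty$ only becomes relevant for $\varphi_\infty$, which is not part of this lemma).
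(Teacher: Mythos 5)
Your proof is correct and follows essentially the same route as the paper: both reduce measurability of $\varphi_n$ to measurability of $\pi_A\circ\varphi_n$ for each Borel $A$ (the paper invokes Proposition \ref{prop_measurable_countmeasure} for this), and your identity $\pi_A\circ\varphi_n=\sum_{i=1}^n\mathbbm{1}_A\circ\mathrm{pr}_i$ is just a cleaner packaging of the paper's explicit computation of the preimages $\varphi_{n,A}^{-1}(\{m\})$ as unions of product sets. No gaps.
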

    \begin{proof}
        Notice that $\varphi_n$ are point processes according to our definition, thus Proposition \ref{prop_measurable_countmeasure} says that we only need to show that for each $A\in \mathscr{B}(X)$ the map $\varphi_{n,A}:X^n \rightarrow \mathbb{N}_0$ is measurable. It is sufficient to show that the pre-images of the singletons are measurable. Thus
			\[
			\varphi^{-1}_{n,A}(\{m\}) = \begin{cases}
				\emptyset & n>m, \\
				\bigcup_{\sigma \in F} B_\sigma & n\leq m,
			\end{cases}
			\]
			where $F =\{\sigma:\{1,\dots,n\}\rightarrow \{0,1\}: \sum_{i=1}^n \sigma(i) = m\}$, $B_\sigma = \prod_{i=1}^n A_{\sigma(i)}$, where $A_1= A$ and $A_0 = A^c$. The sets $B_\sigma$ are clearly measurable in the product space $X^n$, thus we conclude the proof. 
    \end{proof}	
  
		\begin{lemma}\label{lema_measurable_countmeasure}
            Let $n\geq 1$ and consider $X^n$ be the product space equipped with the product topology. For each $A \in \mathscr{B}(X^n)$, the set 
            \[
            \widetilde{A} = \Bigg\{\mu=\sum_{k=1}^n \delta_{x_k}, (x_1,\dots,x_n) \in A\Bigg\}   
            \]
            is measurable in $\mathbb{N}(X)$.
        \end{lemma}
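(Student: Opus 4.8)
The plan is to write $\widetilde A=\varphi_n(A)$, where $\varphi_n\colon X^n\to\mathbb N(X)$ is the measurable map of the previous lemma, and then to run a monotone class argument on $A$. The delicate point is that taking images does \emph{not} commute with complements, so a $\pi$--$\lambda$ / Dynkin argument is not directly available; what does work is the observation that $\varphi_n$ has \emph{finite fibres}. Indeed, if $\varphi_n(x)=\varphi_n(y)$ for $x,y\in X^n$, then $\sum_i\delta_{x_i}=\sum_i\delta_{y_i}$, so the multisets $\{x_1,\dots,x_n\}$ and $\{y_1,\dots,y_n\}$ coincide, i.e. $y$ is a coordinate permutation of $x$; hence $\varphi_n^{-1}(\{\mu\})$ has at most $n!$ elements for every $\mu$. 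Moreover every element of $\varphi_n(X^n)$ has total mass exactly $n$, so $\varphi_n(X^n)=\mathbb N_n(X):=\pi_X^{-1}(\{n\})\in\mathscr N(X)$.

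The base step is to handle a measurable rectangle $B_1\times\cdots\times B_n$ with $B_i\in\mathscr B(X)$. The claim is
\[
\widetilde{B_1\times\cdots\times B_n}=\Big\{\mu\in\mathbb N_n(X):\ \mu\big(\textstyle\bigcup_{i\in S}B_i\big)\ge|S|\ \text{ for every }S\subseteq\{1,\dots,n\}\Big\}.
\]
To see this, take $\mu$ of total mass $n$ with atoms $y_1,\dots,y_n$ listed with multiplicity; then $\mu=\sum_k\delta_{x_k}$ with $x_k\in B_k$ precisely when the bipartite graph joining each index $k$ to those $y_j$ lying in $B_k$ admits a perfect matching, which by Hall's marriage theorem holds iff $\#\{j:y_j\in\bigcup_{i\in S}B_i\}=\mu(\bigcup_{i\in S}B_i)\ge|S|$ for all $S$. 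The right-hand side is a finite intersection of sets of the form $\{\mu:\mu(C)\ge m\}=\pi_C^{-1}(\{m,m+1,\dots\}\cup\{\infty\})$ with $C\in\mathscr B(X)$, together with $\mathbb N_n(X)$, hence lies in $\mathscr N(X)$. (Alternatively, one may first reduce to rectangles $E_1\times\cdots\times E_n$ whose sides belong to a common finite Borel partition $\{P_1,\dots,P_r\}$ of $X$, for which the image is simply $\{\mu:\mu(P_j)=c_j,\ 1\le j\le r\}$ with $c_j=\#\{k:E_k=P_j\}$.)

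Now set $\mathcal G:=\{A\in\mathscr B(X^n):\widetilde A\in\mathscr N(X)\}$. Since $\varphi_n(A\cup A')=\varphi_n(A)\cup\varphi_n(A')$, the family $\mathcal G$ is stable under finite unions, and since the complement of a rectangle is a finite union of rectangles, the base step shows that $\mathcal G$ contains the algebra $\mathcal R$ of finite unions of measurable rectangles, with $\sigma(\mathcal R)=\mathscr B(X)^{\otimes n}=\mathscr B(X^n)$ as $X$ is Polish, hence second countable. Next, $\mathcal G$ is a monotone class: stability under increasing unions is again $\varphi_n(\bigcup_jA_j)=\bigcup_j\varphi_n(A_j)$, while for a decreasing sequence $A_1\supseteq A_2\supseteq\cdots$ the only nontrivial inclusion $\bigcap_j\varphi_n(A_j)\subseteq\varphi_n(\bigcap_jA_j)$ follows from the finiteness of the fibres: given $\mu$ in the left-hand set, pick $x^{(j)}\in A_j$ with $\varphi_n(x^{(j)})=\mu$; as all $x^{(j)}$ lie in the finite set $\varphi_n^{-1}(\{\mu\})$, some value $x^\ast$ occurs for infinitely many $j$, hence $x^\ast\in A_j$ for all $j$ by monotonicity, and $\mu=\varphi_n(x^\ast)\in\varphi_n(\bigcap_jA_j)$. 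By the Monotone Class Theorem, $\mathcal G\supseteq\sigma(\mathcal R)=\mathscr B(X^n)$, which is exactly the assertion.

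The main obstacle is the rectangle step: one has to describe $\widetilde{B_1\times\cdots\times B_n}$ purely through the projection maps $\pi_C$, and because of possible overlaps among the $B_i$ (and coincidences among the atoms of $\mu$) the correct description is not ``$\mu(B_i)\ge1$ for each $i$'' but the full system of Hall inequalities. The secondary point worth stressing is the failure of the naive Dynkin-system approach (since $\widetilde{A^c}\ne(\widetilde A)^c$) and the fact that it is precisely the finiteness of the fibres of $\varphi_n$ that makes the monotone class argument go through.
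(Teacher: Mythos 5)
Your proof is correct, and it differs from the paper's at both of the key steps. For the rectangle step the paper refines the (possibly overlapping) sides $A_1,\dots,A_n$ into a common finite Borel partition indexed by $T\in\{0,1\}^n$ and writes $\widetilde{A_1\times\cdots\times A_n}$ as a finite union of occupation-number sets $\pi_{B_1}^{-1}(\{n_1\})\cap\cdots\cap\pi_{B_m}^{-1}(\{n_m\})\cap\pi_{(\cup_\ell B_\ell)^c}^{-1}(\{0\})$ --- essentially your parenthetical alternative --- whereas your main route characterizes membership via the Hall inequalities $\mu(\bigcup_{i\in S}B_i)\ge|S|$; both are valid, the partition version being more elementary and yours giving a cleaner closed-form description directly in terms of the maps $\pi_C$. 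The more substantive divergence is in the generation step: the paper asserts that $\mathscr{C}=\{A:\widetilde{A}\in\mathscr{N}(X)\}$ is a $\sigma$-algebra via the identity $\widetilde{A^c}=(\widetilde{A})^c\cap\widetilde{X^n}$, but, as you correctly point out, images do not commute with complements; that identity fails for non-permutation-invariant $A$ (take $n=2$ and $A=B\times C$ with $B,C$ disjoint and nonempty: for $x\in B$, $y\in C$ the measure $\delta_x+\delta_y$ lies in both $\widetilde{A}$ and $\widetilde{A^c}$), so the paper's closure under complements is not actually justified as written. Your replacement --- passing to the algebra of finite unions of measurable rectangles, on which images do behave well, and then running a monotone class argument in which the finiteness of the fibres of $\varphi_n$ supplies closure under decreasing intersections --- is precisely the repair that is needed, at the modest cost of invoking the Monotone Class Theorem and Hall's theorem rather than purely elementary set manipulations.
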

        \begin{proof}
            The case $n=1$ is easy, the sets $\widetilde{A}$ can be written as $\pi_{A}^{-1}(\{1\})\cap \pi_{A^c}^{-1}(\{0\})$. For $n\geq 2$, consider sets of the form $A = A_1 \times \dots \times A_n$. The difficulty in this case relies on the fact that the sets $A_k$ may have a nonempty intersection. For each $A_k$, we can use the remaining $A_j$ to create a partition of it. Define,
            \[
            A_k = \bigcup_{T \in \{0,1\}^n, T(k) = 1} A_T,
            \]
            where each $A_T$ is defined as $$A_T = \bigcap_{k=1}^n A_{k, T(k)},$$ and $A_{k,T(k)}=A_k$ if $T(k) =1$ or $A^c_k$ if $T(k)=0$. Notice that even if we have $k\neq j$ we may get $A_{k,T(k)} = A_{j,T(j)}$ for some $T$. The strategy will be to write all the possibilities that the points to land in $\cup_{k=1}^nA_k$. Thus, first, we select some $T\in \{0,1\}^n$. Then, $m$ be the number of equivalence classes of the set $\{1,\dots,n\}$ separated according to $T$, i.e., $k,j$ are in the same class if  $A_{k,T(k)} = A_{j,T(j)}$. let $B_1,\dots, B_m$ the distinct sets appearing in the sequence $A_{k,T(k)}$. Let $n_\ell=|\{k: A_{k,T(k)}=B_\ell\}|$. Then,
            \[
            \pi_{B_1}^{-1}(\{n_1\})\cap \dots \cap \pi_{B_m}^{-1}(\{n_m\})\cap \pi_{(\cup_{\ell=1}^mB_\ell)^c}^{-1}(\{0\}), 
            \]
            is the set of all counting measures with $n$ points such that $n_1$ points are in $B_1$, $n_2$ points are in $B_2$ etc. This is clearly measurable and by taking the union for all $T$, we get that $\widetilde{A}$ is measurable whenever it is a product set. Consider now the set 
            \[
           \mathscr{C}= \{A \in \mathscr{B}(X^n): \tilde {A} \in \mathscr{N}(X)\}.
            \]
            We claim that $\mathscr{C}$ is a $\sigma$-algebra. Indeed, that $\emptyset \in \mathscr{C}$ is clear and $X^n$, since it is a product set, our previous argument applies. Notice that
            \[
            \widetilde{A^c} =\Bigg\{\mu=\sum_{k=1}^n \delta_{x_k}, (x_1,\dots,x_n) \in A^c\Bigg\}  = (\widetilde{A})^c \cap \widetilde{X^n}.
            \]
            The set $\mathscr{C}$ is obviously closed by countable unions, showing that our claim holds. Since we proved that $\mathscr{C}$ must contain all the product sets $A_1\times \dots \times A_n$, then $\mathscr{C} = \mathscr{B}(X^n)$.
        \end{proof}
		
		A point process  $N:\Omega \rightarrow \mathbb{N}(X)$ is a \emph{Poisson Point Process} with \emph{intensity measure} $\mu$ if for every $A \in \mathscr{B}(X)$ the two following conditions are satisfied
		\begin{itemize}
			\item[(1)] $\mathbb{P}(N_A=k)= \frac{\mu(A)^k}{k!} e^{-\mu(A)}$, for any $k\geq 0$. 
			\item[(2)] For any $B_1, \dots, B_m \in \mathscr{B}(X)$ pairwise disjoint the random variables $N(B_1), \dots, N(B_m)$ are independent.
		\end{itemize}
		
		We will show that a Poisson point process exists and that it also has a representation as an empirical process. Let $\mu$ be a finite measure on $X$ and $N$ be the following point process
		\begin{equation}\label{Poisson_rep}
			N \coloneqq \sum_{i=1}^\tau \delta_{\xi_i},
		\end{equation}
		where $\tau, \xi_1,\xi_2,\dots$ are a countable family of independent random variables such that
		\begin{itemize}
			
			\item $\mathbb{P}(\tau = k) = \frac{\mu(X)^k}{k!}e^{-\mu(X)}$ 
			
			\item $\mathbb{P}(\xi_i \in B) = \frac{\mu(B)}{\mu(X)}$, for any $B\in \mathscr{B}(X)$ and $i=1,2, \dots$.
			
		\end{itemize}
		We will show first that such a family of random variables exists. Consider $\Omega = \mathbb{N}_0 \times \prod_{i\in \mathbb{N}} X$, with the product $\sigma$-algebra. We can define on it the product probability measure $\mathbb{P}$ on the cylinder sets by 
		\[
		\mathbb{P}\Bigg(C \times \prod_{i \in \mathbb{N}} B_i\Bigg) = P_\mu(X)(C)\times \prod_{i\in \mathbb{N}}\frac{\mu(B_i)}{\mu(X)},
		\]
		where $P_\mu(X)$ is the Poisson distribution with parameter $\mu(X)$. That this defines a probability measure on the product space $\Omega$ follows from \cite{Saeki}. Note that, by construction, the projections are independent random variables with the desired distribution.
		
		\begin{proposition}
			Let $X$ be a Polish space with $\mathscr{B}(X)$ its Borel $\sigma$-algebra and $\mu$ a finite measure. Then the point process $N$ defined above is a Poisson Point Process. 
		\end{proposition}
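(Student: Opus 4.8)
The plan is to verify directly, for the concrete process $N$ of \eqref{Poisson_rep}, the two conditions in the definition of a Poisson point process, exploiting the basic fact coming from the preceding product construction that $\tau,\xi_1,\xi_2,\dots$ are independent with $\tau\sim\mathrm{Poisson}(\mu(X))$ and each $\xi_i$ of law $\mu(\cdot)/\mu(X)$. We may and do assume $\mu(X)>0$, since if $\mu\equiv 0$ then $\tau=0$ almost surely, $N$ is the null measure, and it is trivially a Poisson point process with intensity $\mu$.

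First I would check that $N$ really is a point process, i.e.\ a measurable map $\Omega\to\mathbb{N}(X)$. By Proposition \ref{prop_measurable_countmeasure} it suffices to show that $N_A=\sum_{i=1}^{\tau}\mathbbm{1}_A(\xi_i)$ is measurable for every $A\in\mathscr{B}(X)$; writing $N_A=\sum_{i\geq 1}\mathbbm{1}_{\{i\leq\tau\}}\,\mathbbm{1}_A(\xi_i)$ exhibits it as a countable sum of measurable functions of the coordinates of $\Omega$, hence measurable. Since $\tau<\infty$ almost surely, $N$ is almost surely a finite counting measure and every quantity below is finite.

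For condition $(1)$, fix $A\in\mathscr{B}(X)$ and set $p=\mu(A)/\mu(X)\in[0,1]$. Conditionally on $\tau=n$, the variables $\mathbbm{1}_A(\xi_1),\dots,\mathbbm{1}_A(\xi_n)$ are independent $\mathrm{Bernoulli}(p)$, so $N_A$ is $\mathrm{Binomial}(n,p)$; summing over $n$,
\[
\mathbb{P}(N_A=k)=\sum_{n\geq k} e^{-\mu(X)}\frac{\mu(X)^n}{n!}\binom{n}{k}p^k(1-p)^{n-k}
= e^{-\mu(X)}\frac{(\mu(X)p)^k}{k!}\sum_{m\geq 0}\frac{\big(\mu(X)(1-p)\big)^m}{m!},
\]
and recognizing the exponential series together with $\mu(X)p=\mu(A)$, $\mu(X)(1-p)=\mu(X)-\mu(A)$ yields $\mathbb{P}(N_A=k)=e^{-\mu(A)}\mu(A)^k/k!$. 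For condition $(2)$, given pairwise disjoint $B_1,\dots,B_m\in\mathscr{B}(X)$, enlarge the family by $B_0:=X\setminus\bigcup_{j=1}^m B_j$ so that $B_0,\dots,B_m$ partitions $X$, and put $p_j:=\mu(B_j)/\mu(X)$. Conditionally on $\tau=n$ the vector $(N_{B_0},\dots,N_{B_m})$ is multinomial with parameters $n$ and $(p_0,\dots,p_m)$, whence for $k_0,\dots,k_m\geq 0$ and $n:=k_0+\dots+k_m$,
\[
\mathbb{P}(N_{B_0}=k_0,\dots,N_{B_m}=k_m)
= e^{-\mu(X)}\frac{\mu(X)^n}{n!}\cdot\frac{n!}{k_0!\cdots k_m!}\prod_{j=0}^m p_j^{k_j}
=\prod_{j=0}^m e^{-\mu(B_j)}\frac{\mu(B_j)^{k_j}}{k_j!},
\]
using $\sum_{j=0}^m\mu(B_j)=\mu(X)$. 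Thus the joint law of $(N_{B_0},\dots,N_{B_m})$ factorizes into Poisson marginals, so these counts are independent; in particular $N_{B_1},\dots,N_{B_m}$ are independent, which is $(2)$ (and the one-set case re-proves $(1)$).

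There is no serious obstacle: the only points demanding care are the measure-theoretic justification that $\tau$ and the $\xi_i$ are genuinely independent with the stated laws — which is precisely what the product construction preceding the statement provides, via \cite{Saeki} — and the bookkeeping of the absolutely convergent series over $\overline{\mathbb{N}}_0$. The real content is the ``Poissonization'' identity used twice above: splitting a $\mathrm{Poisson}$ number of i.i.d.\ points according to a measurable partition produces independent $\mathrm{Poisson}$ counts with the expected parameters.
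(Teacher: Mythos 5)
Your proof is correct and follows essentially the same route as the paper's: condition on $\tau=n$, observe that the counts are binomial (resp.\ multinomial after adjoining the complement $B_0$), and resum the Poisson series to obtain the factorized Poisson joint law. The paper writes out the binomial/multinomial coefficients via an explicit stars-and-bars count and handles the non-partition case by marginalizing over $N_{X\setminus B}$ at the end, but these are only presentational differences; your added remarks on measurability and the $\mu(X)=0$ case are harmless extras.
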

		\begin{proof}
			Let $B \in \mathscr{B}(X)$. Using the independence of the random variables, we have
			\begin{align*}
				\mathbb{P}(N_B = k) &= \sum_{m\geq k}\mathbb{P}\left(\sum_{i=1}^m \delta_{\xi_i}(B)=k\right)\mathbb{P}(\tau = m) \\
				&= e^{-\mu(X)}\sum_{m\geq k}\frac{\mu(X)^m}{m!}\mathbb{P}\left(\sum_{i=1}^m \delta_{\xi_i}(B)=k\right)
			\end{align*}
			
			Hence
			\begin{align*}
				\mathbb{P}\left(\sum_{i=1}^m \delta_{\xi_i}(B)=k\right) &= \sum_{\substack{a_1+\dots+a_m = k \\ a_i=0,1}}\mathbb{P}\left(\delta_{\xi_i}(B) = a_i, i=1,\dots, m\right) \\
				&= \sum_{\substack{a_1+\dots+a_m = k \\ a_i=0,1}} \prod_{i=1}^m \mathbb{P}(\delta_{\xi_i}(B) = a_i),
			\end{align*} 
			where the last equality is due to the independence of the random variables. By our hypothesis on the random variables $\xi_i$, we have
			\[
			\mathbb{P}(\delta_{\xi_i}(B)= a_i) = \begin{cases}
				\frac{\mu(B^c)}{\mu(X)}, & a_i=0  \\
				\frac{\mu(B)}{\mu(X)}, & a_i = 1.
			\end{cases}
			\] 
			A standard stars and bars argument gives us 
			\[
			\sum_{\substack{a_1+\dots+a_m = k \\ a_i=0,1}} \prod_{i=1}^m \mathbb{P}(\delta_{\xi_i}(B) = a_i) = \binom{m}{k} \frac{\mu(B)^k\mu(B^c)^{m-k}}{\mu(X)^m}.
			\]
			Hence,
			\[
			\mathbb{P}(N_B = k) = \frac{\mu(B)^k}{k!}e^{-\mu(X)}\sum_{m\geq k} \frac{\mu(B^c)^{m-k}}{(m-k)!} = \frac{\mu(B)^k}{k!}e^{-\mu(B)}.
			\]
			
			Consider $B_1, B_2, \dots, B_m$ disjoint measurable sets. In order to show that the random variables $N_{B_1}, \dots, N_{B_m}$ are independent it is sufficient to show that
			\[
			\mathbb{P}(N_{B_i}= n_i, i=1, \dots, m) = \prod_{i=1}^m \mathbb{P}(N_{B_i} = n_i).
			\]
			We will assume that $\cup_{i=1}^m B_i = X$ now and show how to prove the general case later. With this assumption, necessarily we must have $\tau(\omega)= n= \sum_{i=1}^m n_i$. Thus, using independence we get
			\begin{align*}
				\mathbb{P}(N_{B_i}= n_i, i=1, \dots, m) &= \frac{\mu(X)^n}{n!}e^{-\mu(X)}\mathbb{P}\left(\sum_{i=1}^n\delta_{\xi_i}(B)= n_j, j=1, \dots, m\right) \\
				&=\sum_{\substack{a_{1,j}+\dots+a_{n,j} = n_j \\ a_{i,j}=0,1}}\mathbb{P}\left(\delta_{\xi_i}(B)= a_{i,j}, j=1, \dots, m, i=1,\dots, n\right)	
			\end{align*}
			Since the sets $B_1, \dots, B_m$ are disjoint, if $a_{i,j}=1$ for some $j$, then $a_{i,k}=0$ for any other $k\neq j$ since the opposite would imply that there is a point in $B_j \cap B_k$. Thus, using the independence of the random variables $\xi_i$ we get
			\begin{align*}
				\sum_{\substack{a_{1,j}+\dots+a_{n,j} = n_j \\ a_{i,j}=0,1}}\mathbb{P} \Bigg(\begin{array}{@{}c c@{}}
   \delta_{\xi_i}(B)= a_{i,j},
  &
  \begin{matrix}
  j=1, \dots, m \\[2ex]
  i=1,\dots, n
  \end{matrix}
  \end{array}
\Bigg) &= \sum_{\substack{a_{1,j}+\dots+a_{n,j} = n_j \\ a_{i,1}+\dots + a_{i,m}= 1 \\ a_{i,j}=0,1}}\prod_{i=1}^n\mathbb{P}\left(\delta_{\xi_i}(B)= a_{i,j}, j=1, \dots, m\right) \\
				& \sum_{\substack{a_{1,j}+\dots+a_{n,j} = n_j \\ a_{i,1}+\dots + a_{i,m}= 1 \\ a_{i,j}=0,1}}\prod_{i=1}^n\left(\frac{\mu(B_i)}{\mu(X)}\right)^{n_i}.
			\end{align*}
			Consider $a_{i,j}=0,1$ a solution to $a_{1,j}+\dots+a_{n,j} = n_j$, for any $j=1,\dots, m$ such that $a_{i,1}+\dots + a_{i,m}= 1$. The second equation says that for each $i$ there must be only one $j$ with $a_{i,j}\neq 0$. So we proceed in the following way. To produce a solution  to $a_{1,j}+\dots+a_{n,j} = n_j$ satisfying this constraint, we first choose $n_1$ indices $i$ to put as equal to $1$ and the rest we put equals to zero. For $j=2$ we now have $n-n_1$ indices avaible, so we choose $n_2$ of those to put as equal to $1$. We can proceed inductively until we reach the case $j=m$. This reasoning implies that the number of possible solutions $a_{i,j}$ is exactly
			\[
			\binom{n}{n_1}\binom{n-n_1}{n_2}\dots\binom{n-n_1-\dots-n_{m-1}}{n_m} = \frac{n!}{n_1!n_2!\dots n_m!}
			\]
			Hence,
			\[
			\mathbb{P}(N_{B_i}= n_i, i=1, \dots, m) = \frac{e^{-\mu(X)}}{n_1!n_2!\dots n_m!}\prod_{i=1}^m\mu(B_i)^{n_i},
			\]
			rearranging the terms and using that $\mu(X)=\mu(B_1)+\dots+\mu(B_m)$ yields the desired result. Consider now the general case, i.e., any family of disjoint measurable sets $B_1,\dots, B_m$. Write $B=\cup_{i=1}^m B_i$ and using our previous calculations we get
			\begin{align*}
				\mathbb{P}(N_{B_i}= n_i, i=1, \dots, m) &= \sum_{k\geq 0}\mathbb{P}(N_{B_i}= n_i, i=1, \dots, m, N_{X\setminus B}=k) \\
				&=\prod_{i=1}^m \mathbb{P}(N_{B_i} = n_i) \sum_{k\geq 1} \mathbb{P}(N_{X\setminus B} = k) = \prod_{i=1}^m \mathbb{P}(N_{B_i} = n_i)
			\end{align*}

		\end{proof}
		
		Poisson point process also has a uniqueness property in the sense that for any Poisson process with a given intensity measure $\mu$ are equal in distribution. The proof of this fact can be found in Theorem 1.2.1 in \cite{Reiss}. 
  \begin{proposition}\label{Poisson_int}
			Let $f: \mathbb{N}(X)\rightarrow \mathbb{R}$ be a bounded measurable function and $N:\Omega \rightarrow \mathbb{N}(X)$ a Poisson point process. Then, the following holds,
			\[
			\int_{\mathbb{N}(X)} f(\nu) dN(\nu) = f(\mu_\emptyset)e^{-\mu(X)}+e^{-\mu(X)}\sum_{n\geq 1}\frac{1}{n!}\int_{X^n} f(\delta_{x_1}+\dots+\delta_{x_n})d\mu^{\otimes n}(x_1,\dots,x_n),
			\]
			where $\mu^{\otimes n}$ is the $n$-fold product measure. 
		\end{proposition}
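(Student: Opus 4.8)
The plan is to reduce everything to the explicit construction of a Poisson point process given in Equation~\eqref{Poisson_rep}. Since the left-hand side of the claimed identity depends on $N$ only through its law on $\mathbb{N}(X)$, and since any two Poisson point processes with the same intensity measure $\mu$ have the same distribution (the uniqueness fact quoted just above, from Theorem~1.2.1 of \cite{Reiss}), it is enough to verify the formula for the process $N=\sum_{i=1}^{\tau}\delta_{\xi_i}$ built on $\Omega=\mathbb{N}_0\times\prod_{i\in\mathbb{N}}X$, where $\tau$ is Poisson-distributed with parameter $\mu(X)$, the $\xi_i$ are i.i.d.\ with law $\mu(\cdot)/\mu(X)$, and $\tau,\xi_1,\xi_2,\dots$ are independent.

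First I would record that the relevant integrands are measurable: for each $n$ the map $(x_1,\dots,x_n)\mapsto f(\delta_{x_1}+\dots+\delta_{x_n})$ is $f\circ\varphi_n$, which is measurable because $f$ is measurable on $\mathbb{N}(X)$ and the maps $\varphi_n$ were shown to be measurable, so $\omega\mapsto f(N(\omega))\,\mathbbm{1}_{\{\tau=n\}}$ and hence $f\circ N$ are measurable. Then I would split the expectation according to the value of $\tau$. On the slice $\{n\}\times\prod_{i}X$ the random measure $N$ coincides with $\sum_{i=1}^{n}\delta_{\xi_i}$, which depends only on the first $n$ coordinates; integrating out the remaining coordinates $\xi_{n+1},\xi_{n+2},\dots$ (each contributing a factor $1$) by Fubini's theorem yields
\[
\int_{\Omega}f(N)\,d\mathbb{P}=\sum_{n\ge0}\mathbb{P}(\tau=n)\int_{X^n}f(\delta_{x_1}+\dots+\delta_{x_n})\,d\Big(\tfrac{\mu}{\mu(X)}\Big)^{\otimes n}(x_1,\dots,x_n),
\]
with the $n=0$ term equal to $\mathbb{P}(\tau=0)f(\mu_\emptyset)=e^{-\mu(X)}f(\mu_\emptyset)$.

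The exchange of the sum over $n$ with the integral is legitimate because $f$ is bounded: if $|f|\le C$, the $n$-th summand is bounded by $C\,\mathbb{P}(\tau=n)$, which is summable, so dominated convergence (equivalently Fubini--Tonelli with the counting measure on $\mathbb{N}_0$) applies. Substituting $\mathbb{P}(\tau=n)=e^{-\mu(X)}\mu(X)^n/n!$ and pulling the factor $\mu(X)^{-n}$ out of the $n$-fold product measure cancels $\mu(X)^n$, leaving precisely
\[
e^{-\mu(X)}f(\mu_\emptyset)+e^{-\mu(X)}\sum_{n\ge1}\frac{1}{n!}\int_{X^n}f(\delta_{x_1}+\dots+\delta_{x_n})\,d\mu^{\otimes n}(x_1,\dots,x_n).
\]
The argument is largely routine; the only points that really need attention are the measurability of $f\circ\varphi_n$ (which is exactly why the measurability of the maps $\varphi_n$ was established) and the interchange of the infinite sum with integration, where the boundedness hypothesis on $f$ is what makes the rearrangement — and the finiteness of the right-hand side — valid.
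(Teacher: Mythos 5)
Your proof is correct and follows essentially the same route as the paper's: both reduce to the explicit construction $N=\sum_{i=1}^{\tau}\delta_{\xi_i}$, decompose the expectation over the events $\{\tau=n\}$, and use the independence/product structure of the $\xi_i$ to identify each term with $\frac{e^{-\mu(X)}}{n!}\int_{X^n}f(\delta_{x_1}+\dots+\delta_{x_n})\,d\mu^{\otimes n}$. The only (harmless) cosmetic difference is that the paper routes the argument through the coproduct space via the maps $\tilde{N}$ and $\varphi$, whereas you work directly on the slices and integrate out the unused coordinates by Fubini, while also making explicit the appeal to uniqueness in distribution that the paper leaves implicit.
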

		\begin{proof}
			First, let $\tau, \xi_1,\xi_2, \dots$ be the random variables given by the Poisson point process \eqref{Poisson_rep}. Let $X^n$ be the product space, with the product $\sigma$-algebra. For the case $X^0 = \{0\}$ and $X^\infty$ the countable infinity product space with the cylinder $\sigma$-algebra. We will define the function  $\tilde{N}:\Omega\rightarrow \bigsqcup_{n \in \mathbb{N}\cup\{0,\infty\}}X^n$ given by
			\[
			\tilde{N}(\omega) = \begin{cases}
				(\xi_1(\omega),\xi_2(\omega), \dots ,\xi_{\tau(\omega)}(\omega)), & \tau(\omega) \neq 0 \\
				\mu_\emptyset& \tau(\omega)=0.
			\end{cases} 
			\]
			This function is measurable and $\varphi \circ \tilde{N} = N$. Given an integrable function $f:\bigsqcup_{n \in \overline{\mathbb{N}}_0}X^n \rightarrow \mathbb{R}$ and $f_n$ its restrictions to the subspace $X^n$. We have
			\[
			\int_\Omega f \circ \tilde{N}(\omega) d\mathbb{P}(\omega) = f_0(0)e^{-\mu(X)}+\sum_{n\geq 1} \int_{\tau = n} f_n(\xi_1(\omega), \dots, \xi_n(\omega)) d\mathbb{P}(\omega).
			\]
			Suppose that $f_n = \mathbbm{1}_{B_1 \times \dots \times B_n}$, for measurable sets $B_i \in \mathscr{B}(X)$. Independence of the random variables $\xi_i$ yields
			\begin{align*}
				\int_{\tau = n} f_n(\xi_1(\omega), \dots, \xi_n(\omega)) d\mathbb{P}(\omega) &= \mathbb{P}(\tau=n)\prod_{i=1}^n \mathbb{P}(\xi_i \in B_i) \\
				&= \frac{e^{-\mu(X)}}{n!}\int_{X^n} \mathbbm{1}_{B_1 \times \dots \times B_n}(x_1,\dots,x_n) d\mu^{\otimes n}(x_1,\dots,x_n).
			\end{align*}
			Standard measure theoretic techniques allow us to extend the above result for general integral functions. Hence
			\[
			\int_\Omega f \circ \tilde{N}(\omega) d\mathbb{P}(\omega) = f_0(0)e^{-\mu(X)}+\sum_{n\geq 1}\frac{e^{-\mu(X)}}{n!}\int_{X^n}f_n(x_1,\dots,x_n) d\mu^{\otimes n}(x_1,\dots,x_n).
			\]
		 Thus, for any $f:\mathbb{N}(X)\rightarrow \mathbb{R}$, it holds
			\begin{align*}
				\int_{\mathbb{N}(X)} f(\nu) dN(\nu) &=\int_\Omega f \circ(\varphi \circ \tilde{N})(\omega) d\mathbb{P}(\omega) \\
				& = f_0(\mu_\emptyset)e^{-\mu(X)}+e^{-\mu(X)}\sum_{n\geq 1}\frac{1}{n!}\int_{X^n}f(\delta_{x_1}+ \dots + \delta_{x_n}) d\mu^{\otimes n}(x_1,\dots,x_n).
			\end{align*}
		\end{proof}
		In the case that both point processes are independent Poisson point processes, the sum is again a Poisson Point Process, as the following proposition shows.
		\begin{proposition}\label{Poisson_independent}
			Let $N,\tilde{N}:\Omega \rightarrow \mathbb{N}(X)$ be two independent Poisson point processes with intensity measures $\mu$ and $\nu$ respectively. Then the point process $N+\tilde{N}$ is again a Poisson point process with intensity measure $\mu+\nu$.
		\end{proposition}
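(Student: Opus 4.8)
The plan is to check directly that $M := N + \tilde{N}$ satisfies the two defining properties of a Poisson point process with intensity $\mu + \nu$. First one should observe that $M$ is a genuine point process: pointwise $M(\omega) = N(\omega) + \tilde{N}(\omega)$ is again a counting measure in the sense of \eqref{def_counting_measures} (concatenate the two defining sequences of Dirac masses, allowing infinite total mass, which is permitted since the exponents range over $\overline{\mathbb{N}}_0$), and for each $A \in \mathscr{B}(X)$ we have $M_A = N_A + \tilde{N}_A$, a sum of two measurable $\overline{\mathbb{N}}_0$-valued functions, hence measurable; by Proposition \ref{prop_measurable_countmeasure} this makes $M$ a point process.

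For property $(1)$, fix $A \in \mathscr{B}(X)$. Since $N$ and $\tilde{N}$ are independent as $\mathbb{N}(X)$-valued random variables and $N_A = \pi_A \circ N$, $\tilde{N}_A = \pi_A \circ \tilde{N}$ are measurable functions of them, the random variables $N_A$ and $\tilde{N}_A$ are independent; by hypothesis $N_A$ has a Poisson distribution with parameter $\mu(A)$ and $\tilde{N}_A$ with parameter $\nu(A)$. Hence $M_A$ is the sum of two independent Poisson variables and a convolution computation gives, for $k \ge 0$,
\[
\mathbb{P}(M_A = k) = \sum_{j=0}^k \frac{\mu(A)^j}{j!}e^{-\mu(A)}\frac{\nu(A)^{k-j}}{(k-j)!}e^{-\nu(A)} = \frac{e^{-(\mu(A)+\nu(A))}}{k!}\sum_{j=0}^k \binom{k}{j}\mu(A)^j \nu(A)^{k-j},
\]
and the binomial theorem together with $(\mu+\nu)(A) = \mu(A) + \nu(A)$ identifies the right-hand side with $\frac{(\mu+\nu)(A)^k}{k!}e^{-(\mu+\nu)(A)}$. (If either $\mu(A)$ or $\nu(A)$ is infinite the identity degenerates to $\mathbb{P}(M_A = k) = 0$ for all finite $k$, consistent with $(\mu+\nu)(A) = \infty$.)

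For property $(2)$, let $B_1, \dots, B_m \in \mathscr{B}(X)$ be pairwise disjoint. Because $N$ is a Poisson point process the variables $N_{B_1}, \dots, N_{B_m}$ are independent, and likewise $\tilde{N}_{B_1}, \dots, \tilde{N}_{B_m}$ are independent; moreover $(N_{B_1}, \dots, N_{B_m})$ is a measurable function of $N$ and $(\tilde{N}_{B_1}, \dots, \tilde{N}_{B_m})$ a measurable function of $\tilde{N}$, so these two vectors are independent of each other. Combining these facts, the full family $N_{B_1}, \dots, N_{B_m}, \tilde{N}_{B_1}, \dots, \tilde{N}_{B_m}$ is mutually independent, since the joint law factorizes into the product of the $2m$ marginals. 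As $M_{B_i} = N_{B_i} + \tilde{N}_{B_i}$ is a function of the pair $(N_{B_i}, \tilde{N}_{B_i})$, and these pairs constitute disjoint blocks of a mutually independent family, the variables $M_{B_1}, \dots, M_{B_m}$ are independent. Together with $(1)$ this shows $M = N + \tilde{N}$ is a Poisson point process with intensity measure $\mu + \nu$.

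I expect the only point genuinely requiring care is the independence bookkeeping in property $(2)$: one must use that independence of $N$ and $\tilde{N}$ as $\mathbb{N}(X)$-valued random variables passes to independence of arbitrary measurable functionals of them, and then invoke the elementary grouping lemma that disjoint blocks of a mutually independent family are independent. Everything else — measurability of $M$ and the Poisson convolution identity — is routine. One could alternatively run the entire argument through the empirical-process representation \eqref{Poisson_rep} and Proposition \ref{Poisson_int}, but the direct verification above is shorter.
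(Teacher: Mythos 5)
Your proof is correct and takes essentially the same route as the paper: a direct verification of the two defining properties, with the one-dimensional marginals handled by the same Poisson convolution and binomial-theorem computation, and the independence for disjoint $B_1,\dots,B_m$ obtained from the factorization of the joint law of $N$ and $\tilde N$ (the paper carries this out explicitly as a sum over decompositions $k_{i,1}+k_{i,2}=k_i$, which is precisely the grouping argument you invoke more abstractly). Your measurability remark and the infinite-mass caveat are harmless extras the paper does not spell out.
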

		\begin{proof}
			Take $B\in \mathscr{B}(X)$, and consider $N_B+\tilde{N}_{B}$. Then, the independence of $N$ and $\tilde{N}$ imply the independence of $N_B$ and $N_B$, thus
			\begin{align*}
				\mathbb{P}(N_B+ \tilde{N}_{B}=k)&=\sum_{j=0}^k \mathbb{P}(N_B=k-j)\mathbb{P}(\tilde{N}_{B}=j) = \sum_{j=0}^k \frac{\mu(B)^{k-j}\nu(B)^j}{(k-j)!j!}e^{-(\mu(B)+\nu(B))} \\
				& = \frac{e^{-(\mu(B)+\nu(B))}}{k!}\sum_{j=0}^k \binom{k}{j}\mu(B)^{k-j}\nu(B)^j = \frac{(\mu(B)+\nu(B))^ke^{-(\mu(B)+\nu(B))}}{k!}. 
			\end{align*}
			
			Consider $B_1, \dots, B_m$ disjoint measurable sets. 
			\[
			\mathbb{P}(N_{B_i}+ \tilde{N}_{B_i}=k_i, 1\leq i \leq m) = \sum_{\substack{k_{i,1}+k_{i,2}=k_i \\ i=1, \dots, m}} \mathbb{P}(N_{B_i}=k_{i,1}, \tilde{N}_{B_i}=k_{i,2}, i=1, \dots, m)  
			\]
			
			Since $N$ and $\tilde{N}$ are independent and the following holds, we have
			\begin{align*}
				\mathbb{P}\Bigg(N_{B_i}=k_{i,1}, \tilde{N}_{B_i}=k_{i,2}, 1\leq i &\leq m\Bigg)
				=  \mathbb{P}(N \in \pi_{B_i}^{-1}(\{k_{i,1}\}), \tilde{N} \in \pi_{B_i}^{-1}(\{k_{i,2}\}) i=1, \dots, m) \\
				&=  \mathbb{P}(N \in \pi_{B_i}^{-1}(\{k_{i,1}\})i=1, \dots, m)  \mathbb{P}(\tilde{N} \in \pi_{B_i}^{-1}(\{k_{i,2}\}) i=1, \dots, m).
			\end{align*}
			Hence,
			\begin{align*}
				\mathbb{P}(N \in \pi_{B_i}^{-1}(\{k_{i,1}\}), \tilde{N} \in \pi_{B_i}^{-1}(\{k_{i,2}\}) i=1, \dots, m) &=\prod_{i=1}^m\left(\sum_{k_{i,1}+k_{i,2}=k_i} \mathbb{P}(N \in \pi_{B_i}^{-1}(\{k_{i,1}\}))  \mathbb{P}(\tilde{N} \in \pi_{B_i}^{-1}(\{k_{i,2}\}))\right) \\
				&=\prod_{i=1}^m\sum_{k_{i,1}+k_{i,2}=k_i}\mathbb{P}(N_{B_i}=k_{i,1}, \tilde{N}_{B_i}=k_{i,2}).
			\end{align*}
			This yields the desired result. 
		\end{proof}

  In what follows, we will derive some properties of the integration with respect to Poisson point processes.
		\begin{comment}
		The probability that two independent Poisson Point Processes with nonatomic probability measures $\mu, \nu$ will draw the same point is zero. Indeed, let $x\in X$ be a point and $B_{1/n}(x)$ be the balls centered at $x$ with radius $1/n$. Then we have that
		\[
		\mathbb{P}(N_{B_{1/n}(x)},\tilde{N}_{B_{1/n}(x)}\geq 1) = \mathbb{P}(N_{B_{1/n}(x)}\geq 1)\mathbb{P}(\tilde{N}_{B_{1/n}(x)}\geq 1)= (1-e^{-\nu(B_{1/n}(x))})(1-e^{-\mu(B_{1/n}(x))}).
		\]
		Since the measure $\nu$ is finite, we have that $\lim_{n\rightarrow\infty}\nu(B_{\frac{1}{n}}(x)) = \nu(\{x\})=0$, with a similar result holding also for the measure $\mu$. Hence
		\[
		\mathbb{P}(N,\tilde{N} \text{  draws the point   }x)=\lim_{n\rightarrow\infty} \mathbb{P}(N_{B_{1/n}(x)},\tilde{N}_{B_{1/n}(x)}\geq 1) = 0.
		\]
		\end{comment}

\begin{lemma}\label{lemma_ppp_decomp}
		Let $f:\mathbb{N}(X)\rightarrow \mathbb{R}$ be a bounded measurable function and $N_1, N_2$ two Poisson point processes with intensity measures $\mu$ and $\nu$, respectively. It holds
		\be\label{pp_decomp}
		\int_{\mathbb{N}(X)} f(\lambda) d (N_1+N_2)(\lambda) = \int_{\mathbb{N}(X)}\int_{\mathbb{N}(X)}  f(\lambda_1+\lambda_2) dN_1(\lambda_1)d N_2(\lambda_2)
		\ee
	\end{lemma}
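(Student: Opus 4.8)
The plan is to read the symbol $\int_{\mathbb{N}(X)}g(\nu)\,dN(\nu)$, exactly as in Proposition~\ref{Poisson_int}, as the expectation $\mathbb{E}[g(N)]=\int_\Omega g(N(\omega))\,d\mathbb{P}(\omega)$, i.e.\ as the integral of $g$ against the law $(N)_*\mathbb{P}$ of $N$ on $\mathbb{N}(X)$. Under this reading both sides of \eqref{pp_decomp} depend only on the laws of $N_1$ and $N_2$, and (since the nested integral on the right-hand side treats $N_1$ and $N_2$ as independent) the statement is really the two-fold Fubini theorem. Concretely I will assume, as in Proposition~\ref{Poisson_independent}, that $N_1$ and $N_2$ are independent, so that the law of the pair $(N_1,N_2)$ on $\mathbb{N}(X)\times\mathbb{N}(X)$ is the product $(N_1)_*\mathbb{P}\otimes(N_2)_*\mathbb{P}$.

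First I would check that the addition map $S:\mathbb{N}(X)\times\mathbb{N}(X)\to\mathbb{N}(X)$, $S(\lambda_1,\lambda_2)=\lambda_1+\lambda_2$, is measurable: since $\mathscr{N}(X)$ is generated by the projections $\pi_A$, $A\in\mathscr{B}(X)$, it suffices that $(\lambda_1,\lambda_2)\mapsto\pi_A(S(\lambda_1,\lambda_2))=\pi_A(\lambda_1)+\pi_A(\lambda_2)$ be measurable, which it is, being a sum of two measurable $\overline{\mathbb{N}}_0$-valued maps. Hence $(\lambda_1,\lambda_2)\mapsto f(\lambda_1+\lambda_2)=f\circ S$ is a bounded measurable function on the product space, so it is integrable against $(N_1)_*\mathbb{P}\otimes(N_2)_*\mathbb{P}$ and Fubini applies. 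Then
\[
\int_{\mathbb{N}(X)} f(\lambda)\,d(N_1+N_2)(\lambda)
=\int_\Omega f\bigl(N_1(\omega)+N_2(\omega)\bigr)\,d\mathbb{P}(\omega)
=\int_{\mathbb{N}(X)}\!\!\int_{\mathbb{N}(X)} f(\lambda_1+\lambda_2)\,dN_1(\lambda_1)\,dN_2(\lambda_2),
\]
the first equality being the definition of the left-hand side and the second being Fubini for the product law, which is precisely \eqref{pp_decomp}.

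As an alternative, and as a consistency check, one can instead expand both sides explicitly: by Proposition~\ref{Poisson_independent} the process $N_1+N_2$ is Poisson with intensity $\mu+\nu$, so Proposition~\ref{Poisson_int} writes the left-hand side as a series over $X^k$ against $(\mu+\nu)^{\otimes k}$; applying Proposition~\ref{Poisson_int} first to $\lambda_2\mapsto f(\lambda_1+\lambda_2)$ and then to the resulting (bounded, measurable) function of $\lambda_1$ writes the right-hand side as a double series over $X^n\times X^m$ against $\mu^{\otimes n}\otimes\nu^{\otimes m}$; the two coincide after using the binomial identity $(\mu+\nu)^{\otimes k}=\sum_{j=0}^{k}\binom{k}{j}\,\mu^{\otimes j}\otimes\nu^{\otimes(k-j)}$, evaluated on the symmetric integrand $f(\delta_{x_1}+\cdots+\delta_{x_k})$, together with $\binom{k}{j}/k!=1/(j!\,(k-j)!)$ and the reindexing $(n,m)=(j,k-j)$. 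On this route the only step needing real care — and what I would flag as the (mild) main obstacle — is the bookkeeping: justifying the interchange of the sums with the integrals, which is immediate since $\|f\|_\infty<\infty$, $\mu(X),\nu(X)<\infty$ and $\sum_{n,m\ge 0}\mu(X)^n\nu(X)^m/(n!\,m!)=e^{\mu(X)+\nu(X)}<\infty$, and keeping track of the degenerate terms $X^0=\{0\}$ carrying the empty measure $\mu_\emptyset$. For the Fubini route the only non-automatic points are the measurability of $S$ and the independence hypothesis, so I expect no genuine difficulty either way.
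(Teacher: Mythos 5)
Your proposal is correct, and your primary argument is genuinely different from the paper's. The paper proves the identity by brute-force series expansion: it applies Proposition \ref{Poisson_independent} to identify $N_1+N_2$ as Poisson with intensity $\mu+\nu$, expands the left-hand side via Proposition \ref{Poisson_int}, decomposes $(\mu+\nu)^{\otimes n}$ into the measures $\mu\times_C\nu$ indexed by subsets $C\subset\{1,\dots,n\}$, and uses permutation invariance of $f(\delta_{x_1}+\cdots+\delta_{x_n})$ to collapse the terms with $|C|=k$ into a single $\binom{n}{k}\,\mu^{\otimes k}\otimes\nu^{\otimes(n-k)}$ term before reindexing --- this is word for word your ``consistency check'' route, including the caveat that the binomial identity for product measures is only invoked on symmetric integrands. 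Your main route (read $\int f\,dN$ as $\mathbb{E}[f(N)]$, check measurability of the addition map $S$ on generators of $\mathscr{N}(X)$, and apply Fubini to the product law of the independent pair) is shorter, avoids all combinatorial bookkeeping, and proves the identity for arbitrary independent point processes, not just Poisson ones; what it does not give you, and what the paper's computation yields as a by-product, is the explicit series formula \eqref{eq_fd3} relating the $(\mu+\nu)^{\otimes n}$ expansion to the double $\mu^{\otimes k}\otimes\nu^{\otimes m}$ expansion, which is the form actually reused downstream (e.g.\ in Corollary \ref{Corol_ppp}). You are also right to flag independence as a hypothesis: it is not stated in the lemma but is needed by both arguments, and the paper implicitly assumes it by citing Proposition \ref{Poisson_independent}.
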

	\begin{proof}
    By Propositions \ref{Poisson_int} and \ref{Poisson_independent} we know that
    \begin{equation}\label{eq_fd4}
    \begin{split}
    &\int_{\mathbb{N}(X)} f(\lambda) d (N+\tilde{N})(\lambda) = \\
    &f_0(\mu_\emptyset)e^{-\mu(X)-\nu(X)}+e^{-\mu(X)-\nu(X)}\sum_{n\geq 1}\frac{1}{n!}\int_{X^n}f(\delta_{x_1}+\dots +\delta_{x_n})d(\mu+\nu)^{\otimes n}(x_1,\dots,x_n)
    \end{split}
    \end{equation}
  If we have two measures $\mu, \nu$ on $X$, there are many ways to combine them into a product measure in $X^n$. For instance, for each $C\subset \{1,\dots, n\}$, define $\mu \times_C \nu$ as the measure given by
  \[
  \mu \times_C \nu(A) =\prod_{k\in C}\mu(A_k)\prod_{k \not\in C}\nu(A_k)
  \]
  for every cylinder set $A=A_1\times \dots \times A_n$. They related to the $n$-fold product measure of $\mu+\nu$ by
  \[
  (\mu+\nu)^{\otimes n}(A) = \prod_{k=1}^n(\mu(A_k)+\nu(A_k)) = \sum_{C\subset \{1,\dots,n\}}\mu\times_C\nu(A).
  \]
   By standard measure-theoretic arguments, the Equation above implies that for every measurable function $f:X^n\rightarrow \mathbb{C}$ we must have then
  \be\label{eq_fd1}
    \int_{X^n}f(x_1,\dots,x_n)d(\mu+\nu)^{\otimes n}(x_1,\dots,x_n) = \sum_{C\subset \{1,\dots,n\}}\int_{X^n} f(x_1,\dots,x_n) d(\mu\times_C\nu)(x_1,\dots,x_n).
  \ee
    A special case is the measure $\mu\times_C \nu$ when $C=\{1,\dots,k\}$, because it is simply $\mu^{\otimes k}\times \nu^{\otimes (n-k)}$. Each $\mu\times_C \nu$ with $|C|=k$ can be transformed into this one by a permutation. Since permutations have unit determinant, the formula for change of variables when the function $f$ is permutation invariant gives us
    \begin{equation}\label{eq_fd2}
    \begin{split}
&\sum_{\substack{C\subset \{1,\dots,n\} \\ |C|=k}}\int_{X^n} f(x_1,\dots,x_n) d(\mu\times_C\nu)(x_1,\dots,x_n) = \\
&\binom{n}{k} \int_{X^k}\int_{X^{n-k}} f(x_1,\dots,x_n) d\mu^{\otimes n}(x_1,\dots,x_k)d\nu^{\otimes (n-k)}(x_{k+1},\dots,x_n).
\end{split}
    \end{equation}
    Equations \eqref{eq_fd1} and \eqref{eq_fd2} together yield 
    \begin{equation}\label{eq_fd3}
    \begin{split}
    &\frac{1}{n!}\int_{X^n}f(\delta_{x_1}+\dots +\delta_{x_n})d(\mu+\nu)^{\otimes n}(x_1,\dots,x_n) = \\
    &\sum_{k=1}^n\frac{1}{k!}\frac{1}{(n-k)!}
    \int_{X^k}\int_{X^{n-k}} f(\delta_{x_1}+\dots+\delta_{x_n}) d\mu^{\otimes n}(x_1,\dots,x_k)d\nu^{\otimes (n-k)}(x_{k+1},\dots,x_n)
    \end{split}
    \end{equation}
   by renaming the variables $n-k = m$ and using that $\sum_{n\geq 0}\sum_{k+m = n} = \sum_{k\geq 0}\sum_{m\geq 0}$, Equation \eqref{eq_fd3} together with Equation \eqref{eq_fd4} implies the desired result.
  \end{proof}
        The result above implies that when we have a finite number of independent Poisson point processes we can associate to each draw a definite label allowing us to integrate more general functions, that even depend on these labels. 
		\begin{corollary}\label{Corol_ppp}
			Let $N_i$, for $i=1,\dots,M$ be independent Poisson point processes on $[0,1]$ with intensity measures $\lambda_i dt$, for $\lambda_i>0$. Let $N = \sum_{i=1}^N N_i$ and $f:\mathbb{N}([0,1]\times\{1,\dots,M\})\rightarrow \mathbb{R}$ be a bounded measurable function. Then, it holds
			\[
			\int_{\mathbb{N}(X)} f(\nu) dN(\nu) = e^{-\sum_{i=1}^M  \lambda_i}\sum_{n\geq 0} \frac{1}{n!}\int_{[0,1]^n}  \sum_{i_1,\dots,i_n\in [M]} f\left(\sum_{j=1}^n\delta_{t_j,i_j}\right) \prod_{j=1}^n \lambda_j dt^n 
			\]
		\end{corollary}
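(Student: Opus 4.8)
The plan is to reduce everything to Proposition~\ref{Poisson_int} by passing from $[0,1]$ to the \emph{marked} space $X \coloneqq [0,1]\times\{1,\dots,M\}$, on which the sum $N$ in the statement naturally lives once each point of $N_i$ is tagged with its label $i$. So first I would set $\widetilde N_i$ to be the image of $N_i$ under $t\mapsto (t,i)$, a point process on $X$ supported on the slice $[0,1]\times\{i\}$, and observe that $\widetilde N_i$ is a Poisson point process on $X$ with intensity $\mu_i\coloneqq \lambda_i\,dt\otimes\delta_i$: indeed $\widetilde N_i(A)=N_i(A_i)$ for $A_i\coloneqq\{t:(t,i)\in A\}$, so $\widetilde N_i(A)$ is Poisson of parameter $\lambda_i|A_i|=\mu_i(A)$, and since disjoint subsets of $X$ have disjoint slices the independence-on-disjoint-sets property is inherited from $N_i$. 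The $\widetilde N_i$ stay jointly independent because $\widetilde N_i$ is a deterministic function of $N_i$.

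Next I would apply Proposition~\ref{Poisson_independent} inductively over $i=1,\dots,M$: at each step the partial sum $\widetilde N_1+\cdots+\widetilde N_k$ is a measurable function of $(\widetilde N_1,\dots,\widetilde N_k)$ and hence independent of $\widetilde N_{k+1}$, so the proposition yields that $N=\sum_{i=1}^M\widetilde N_i$ is a Poisson point process on the Polish space $X$ with finite intensity $\mu\coloneqq\sum_{i=1}^M\mu_i$, of total mass $\mu(X)=\sum_{i=1}^M\lambda_i$. Proposition~\ref{Poisson_int} then applies directly and gives
\[
\int_{\mathbb{N}(X)} f(\nu)\,dN(\nu) = f(\mu_\emptyset)e^{-\mu(X)} + e^{-\mu(X)}\sum_{n\geq 1}\frac{1}{n!}\int_{X^n} f(\delta_{x_1}+\cdots+\delta_{x_n})\,d\mu^{\otimes n}(x_1,\dots,x_n).
\]

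The only thing left is to unfold $\int_{X^n}\,d\mu^{\otimes n}$. Since $X$ is the disjoint union of $M$ copies of $[0,1]$, the space $X^n$ decomposes as the disjoint union over label tuples $(i_1,\dots,i_n)\in\{1,\dots,M\}^n$ of copies of $[0,1]^n$, and on the cell indexed by $(i_1,\dots,i_n)$ the product measure $\mu^{\otimes n}$ restricts to $\lambda_{i_1}\cdots\lambda_{i_n}\,dt_1\cdots dt_n$ — this is the $M$-term analogue of the $\mu\times_C\nu$ decomposition used inside the proof of Lemma~\ref{lemma_ppp_decomp}, justified by the same standard measure-theoretic argument (first on product sets, then extending). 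Writing $x_j=(t_j,i_j)$, so that $\delta_{x_j}=\delta_{t_j,i_j}$, the $n$-th term becomes $\tfrac{1}{n!}\sum_{i_1,\dots,i_n}\int_{[0,1]^n}f\big(\sum_{j=1}^n\delta_{t_j,i_j}\big)\prod_{j=1}^n\lambda_{i_j}\,dt_1\cdots dt_n$, and absorbing the $n=0$ term (the empty sum of deltas being $\mu_\emptyset$, the empty label sum being the single empty tuple, the empty product being $1$) into the sum over $n\geq 0$ produces exactly the asserted identity.

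There is no genuinely hard step: the two points that need a little care are (i) verifying that the marked process $\widetilde N_i$ really satisfies both defining axioms of a Poisson point process on $X$, and (ii) the bookkeeping identifying $\mu^{\otimes n}$ on $X^n$ with the sum of the weighted Lebesgue measures on the slices $[0,1]^n$; both are routine. An alternative, avoiding the marked space entirely, would be to iterate Lemma~\ref{lemma_ppp_decomp} $M-1$ times to write $\int f\,d(N_1+\cdots+N_M)$ as an $M$-fold iterated integral $\int\cdots\int f(\lambda_1+\cdots+\lambda_M)\,dN_1(\lambda_1)\cdots dN_M(\lambda_M)$ and then invoke Proposition~\ref{Poisson_int} on each factor — this gives the same combinatorics with somewhat more notational overhead.
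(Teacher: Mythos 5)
Your proof is correct, and its main line is a genuinely different --- and more complete --- route than the one the paper intends. The paper gives no written proof: Corollary \ref{Corol_ppp} is presented as an immediate consequence of Lemma \ref{lemma_ppp_decomp}, i.e.\ precisely the iteration you sketch at the end as your ``alternative''. Your primary argument instead constructs the marked processes $\widetilde N_i$ on $[0,1]\times\{1,\dots,M\}$, shows their sum is Poisson with intensity $\sum_i \lambda_i\,dt\otimes\delta_i$ by iterating Proposition \ref{Poisson_independent}, and applies Proposition \ref{Poisson_int} once. This buys two things. First, it makes the statement well-posed: as written, $N=\sum_i N_i$ is a counting process on $[0,1]$, yet $f$ is a function on $\mathbb{N}([0,1]\times\{1,\dots,M\})$, so some marking construction is needed before either side of the identity even makes sense; iterating Lemma \ref{lemma_ppp_decomp} literally would only handle functions of the unlabeled superposition, and one would have to pass to the marked space anyway. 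Second, it replaces the multinomial recombination of $M$ iterated expansions (the $M$-fold analogue of the $\binom{n}{k}$ bookkeeping inside the proof of Lemma \ref{lemma_ppp_decomp}) by the single transparent decomposition of $\mu^{\otimes n}$ over the label cells of $X^n$. The one shared caveat is that Proposition \ref{Poisson_int} is proved for the explicit representation \eqref{Poisson_rep}, so applying it to $\sum_i\widetilde N_i$ tacitly invokes the uniqueness in distribution of Poisson processes with a given intensity; the paper already relies on this in Lemma \ref{lemma_ppp_decomp}, so you are on equal footing. You also silently correct two typos in the statement ($\sum_{i=1}^{N}$ should be $\sum_{i=1}^{M}$, and $\prod_j \lambda_j$ should be $\prod_j \lambda_{i_j}$), which is the intended reading.
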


		Subsequently, we consider $X=[0,1]$. Another important example that we will use to construct random representations for spin systems is the Bernoulli point process. Given a two point process $N, \tilde{N}$ it is straightforward to see that $N+\tilde{N}$ is again a point process. Consider $r \in [0,+\infty)$ and $\{\xi_{i,j}\}_{i\in \mathbb{N}, j =1,\dots, n}$ a sequence of i.i.d variables such that
		\[
		\mathbbm{P}(\xi_{n,j} = 0) = 1 - \mathbbm{P}(\xi_{n,j} = 1) = \frac{r}{n},
		\]
		for $1\leq j \leq n$. These are probabilities for $n$ large enough. Define the point process
		\be\label{BPP}
		N_n(x,B) = \sum_{j=1}^n \xi_{n,j}(x)\delta_{\frac{j}{n}}(B),
		\ee
		where $B\in \mathscr{B}([0,1])$. 
		
		\begin{proposition}\label{integformpoirep:app1}
			Let $N_n$ be the Bernoulli point process defined in \eqref{BPP}. Then, we have 
			\[
			\int_{\mathbb{N}(X)} f(\nu) dN_n(\nu) = \sum_{m\geq 0} \sum_{\substack{j_l \in \{0,\dots,n-1\} \\ 1 \leq l \leq k}} f\left(\delta_{\frac{j_1}{n}}+ \dots +\delta_{\frac{j_m}{n}}\right)\left(1- \frac{r}{n}\right)^{n-m}\left(\frac{r}{n}\right)^m 
			\]
		\end{proposition}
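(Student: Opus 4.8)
The plan is to read the left-hand side as an expectation and then expand it as a finite combinatorial sum. By definition of a point process, $N_n$ is a measurable map $\Omega\to\mathbb{N}(X)$, so $\int_{\mathbb{N}(X)} f(\nu)\,dN_n(\nu)$ stands for the push-forward integral $\int_\Omega f\big(N_n(\omega)\big)\,d\mathbb{P}(\omega)$, exactly as in Proposition~\ref{Poisson_int}. The crucial structural observation is that, by \eqref{BPP}, the random measure $N_n(\omega)$ depends on $\omega$ only through the finite $0$--$1$ vector $\big(\xi_{n,1}(\omega),\dots,\xi_{n,n}(\omega)\big)$: if this vector equals $\varepsilon=(\varepsilon_1,\dots,\varepsilon_n)\in\{0,1\}^n$, then $N_n(\omega)=\sum_{j:\varepsilon_j=1}\delta_{j/n}$. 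Hence $f(N_n)$ is a bounded function of $(\xi_{n,1},\dots,\xi_{n,n})$ alone, and the integral is a genuinely finite sum over $\{0,1\}^n$ with no convergence issue.

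First I would use that the $\xi_{n,j}$, $1\le j\le n$, are independent with $\mathbb{P}(\xi_{n,j}=1)=r/n$ (a genuine probability for $n$ large, as already remarked after \eqref{BPP}), so that the vector $(\xi_{n,1},\dots,\xi_{n,n})$ takes the value $\varepsilon$ with probability $(r/n)^{|\varepsilon|}(1-r/n)^{n-|\varepsilon|}$, where $|\varepsilon|=\sum_{j=1}^n\varepsilon_j$. Integrating $f(N_n)$ against this law gives
\[
\int_{\mathbb{N}(X)} f(\nu)\,dN_n(\nu)=\sum_{\varepsilon\in\{0,1\}^n} f\!\left(\sum_{j:\varepsilon_j=1}\delta_{j/n}\right)\left(\frac{r}{n}\right)^{|\varepsilon|}\left(1-\frac{r}{n}\right)^{n-|\varepsilon|}.
\]
Then I would reorganize the sum by fixing $m:=|\varepsilon|$ first and then the support set $S=\{j:\varepsilon_j=1\}$, which ranges over all $m$-element subsets of the index set; listing $S=\{j_1<\dots<j_m\}$ in increasing order turns $\sum_{j\in S}\delta_{j/n}$ into $\delta_{j_1/n}+\dots+\delta_{j_m/n}$, and the grid points being distinct there is no multiplicity to track. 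Since the inner sum is empty for $m>n$, extending the outer summation to all $m\ge 0$ is harmless, and one recovers precisely the asserted identity (modulo the inessential relabelling of grid points between \eqref{BPP} and the statement).

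I do not expect a real obstacle here: the argument is pure bookkeeping once the notation is unwound. The two points that deserve a word of care are (i) confirming that $\int_{\mathbb{N}(X)} f\,dN_n$ is to be read as $\mathbb{E}[f(N_n)]$, so that the computation above is legitimate and consistent with Proposition~\ref{Poisson_int}; and (ii) checking that the reindexing in the last step (configurations $\varepsilon$ $\to$ active sets $S$ $\to$ increasing tuples $(j_1,\dots,j_m)$) is a bijection, so that \emph{no} symmetry factor $1/m!$ is needed --- in contrast with the Poisson formula, where the unordered configuration is encoded through an ordered product measure on $X^n$ and therefore acquires a $1/n!$.
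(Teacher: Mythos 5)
Your argument is correct and is essentially the same computation the paper performs: the paper merely dresses it in the formalism of Proposition \ref{Poisson_int}, lifting $N_n$ to the coproduct space via the map $\widetilde{N}_n$ and composing with $\varphi$, while you expand $\mathbb{E}[f(N_n)]$ directly over the finite configuration space $\{0,1\}^n$ using independence of the $\xi_{n,j}$ and then regroup by the number $m$ of active sites. Your explicit insistence that the inner sum run over increasing tuples $j_1<\dots<j_m$ (equivalently over supports $S$), so that no $1/m!$ symmetry factor appears, is the correct reading of the displayed identity --- taken literally as an unrestricted sum over tuples it would over-count each configuration $m!$ times --- so on that point your write-up is, if anything, more careful than the paper's.
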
  
		\begin{proof}
			The strategy of this proof will be the same as the one employed in Proposition \ref{Poisson_int}. Let $\tilde{N}: \Omega \rightarrow \bigsqcup_{n \in \mathbb{N}_0}[0,1]^n$ defined by
			\[
			\tilde{N}_n(\omega) = \begin{cases}
				(\frac{j_1}{n}, \dots, \frac{j_k}{n}), & \xi_{n,j_l}=1 \;\; \mathrm{ and } \;\;\xi_{n,j}=0, j\neq j_l, 1\leq l \leq k, \\
				0, & \sum_{j=1}^n \xi_{n,j}=0.
			\end{cases}
			\]
			It holds 
			\[
			\int_\Omega f\circ \tilde{N}_n(\omega) d\mathbb{P}(\omega) = \sum_{m\geq 0} \int_{\sum_j \xi_{n,j}=m} f_m \circ \tilde{N}_n(\omega)d\mathbb{P}(\omega). 
			\]
			It is straightforward to see that the r.h.s of the equation above is equal to
			\[
			\sum_{\substack{j_l \in \{0,\dots,n-1\}\\ 1 \leq l \leq k}}f\left(\frac{j_1}{n}, \dots, \frac{j_k}{n}\right) \left(1-\frac{r}{n}\right)^{n-m}\left(\frac{r}{n}\right)^m.
			\]
			Using the map $\varphi$ defined in Proposition \ref{Poisson_int} yields the desired result. 
		\end{proof}
		
		Suppose that we have a sequence of probability measures $\mu_n$ in the coproduct space $\bigsqcup_{m \in \mathbb{N}_0}[0,1]^m$. Then, one can define measures on $[0,1]^m$ by restriction. Let these restrictions be denoted by $\mu_{n,m}$. Then, if we have that each $\mu_{n,m}$ converges weakly to a $\mu_m$, then the monotone convergence theorem implies that $\mu_n$ converges to $\mu = \sum_m \mu_m$. Let $B_1,\dots, B_m$ be Borel sets in $[0,1]$. Then, for a continuous function $f:[0,1]^m \rightarrow \mathbb{R} $, we have 
		\[
		\sum_{\substack{j_l \in \{0,\dots,n-1\}\\ 1 \leq l \leq k}}f\left(\frac{j_1}{n}, \dots, \frac{j_k}{n}\right) \left(1-\frac{r}{n}\right)^{n-m}\left(\frac{r}{n}\right)^m =\left(1-\frac{r}{n}\right)^{n-m}\int_{[0,1]^m} g_n(x) d\lambda^{\otimes n},
		\]
		where the function $g_n:[0,1]^m \rightarrow \mathbb{R}$ is defined by 
		\[
		g_n(x) = f\left(\frac{j_1}{n}, \dots, \frac{j_l}{n}\right), \quad \textrm{if} \;\; x_l \in \left(\frac{j_l-1}{n}, \frac{j_l}{n}\right],
		\]
		and $dr = r dx$, where $dx$ is the Lebesgue measure. Notice that $\lim_{n\rightarrow \infty} g_n = f$ pointwise. The Lebesgue dominated convergence theorem gives us that,
		\[
		\lim_{n\rightarrow \infty} \left(1-\frac{\lambda}{n}\right)^{n-m}\int_{[0,1]^m}g_n(x) d\lambda^{\otimes m} = e^{-\lambda}\int_{[0,1]^m}f(x)d\lambda^{\otimes m}.
		\]
		Thus, we get that the Bernoulli point processes converge weakly to a Poisson point process with intensity measure $d\lambda$. 
        % associado ao arquivo: 'ape-conjuntos.tex'
%\include{ape-boolean}
%\include{ape-compactification}

% ---------------------------------------------------------------------------- %
% Bibliografia
\backmatter \singlespacing   % espa\c{c}amento simples
\bibliographystyle{hacm}% cita\c{c}\~ao bibliogr\ 'afica alpha
\bibliography{bib-refs}  % associado ao arquivo: 'bibliografia.bib'

% ---------------------------------------------------------------------------- %
% �ndice remissivo
%\index{TBP|see{periodicidade regi\~ao codificante}}
%\index{DSP|see{processamento digital de sinais}}
%\index{STFT|see{transformada de Fourier de tempo reduzido}}
%\index{DFT|see{transformada discreta de Fourier}}
%\index{Fourier!transformada|see{transformada de Fourier}}

%\printindex   % imprime o \'indice remissivo no documento 

\end{document}